\newcommand{\lug}[1]{}
\newcommand{\nas}[1]{}
\newcommand{\ars}[1]{}
\newcommand{\lugtext}[1]{}
\newcommand{\nastext}[1]{}
\newcommand{\arstext}[1]{}
\newcounter{sarrow}
\newcommand\xrsquigarrow[1]{%
	\stepcounter{sarrow}%
	\mathrel{\begin{tikzpicture}[baseline= {( $ (current bounding box.south) + (0,-0.5ex) $ )}]
			\node[inner sep=.5ex] (\thesarrow) {$\scriptstyle #1$};
			\path[draw,<-,decorate,
			decoration={zigzag,amplitude=0.7pt,segment length=1.2mm,pre=lineto,pre length=4pt}] 
			(\thesarrow.south east) -- (\thesarrow.south west);
	\end{tikzpicture}}%
}
\DeclareFontFamily{U}{MnSymbolC}{}
\DeclareFontShape{U}{MnSymbolC}{m}{n}{
	<-6> 	MnSymbolC5 <6-7>	MnSymbolC6	<7-8>		MnSymbolC7
	<8-9>  MnSymbolC8 <9-10>  MnSymbolC9  <10-12> 	MnSymbolC10
	<12->	MnSymbolC12}{}
\DeclareFontShape{U}{MnSymbolC}{b}{n}{
	<-6> 	MnSymbolC-Bold5 <6-7>	MnSymbolC-Bold6	<7-8>		MnSymbolC-Bold7
	<8-9>  MnSymbolC-Bold8 <9-10>  MnSymbolC-Bold9  <10-12> 	MnSymbolC-Bold10
	<12->	MnSymbolC-Bold12}{}
\DeclareSymbolFont{MnSyC}{U}{MnSymbolC}{m}{n}
\DeclareFontFamily{U}{MnSymbolD}{}
\DeclareFontShape{U}{MnSymbolD}{m}{n}{
	<-6> 	MnSymbolD5 <6-7>	MnSymbolD6	<7-8>		MnSymbolD7
	<8-9>  MnSymbolD8 <9-10>  MnSymbolD9  <10-12> 	MnSymbolD10
	<12->	MnSymbolD12}{}
\DeclareFontShape{U}{MnSymbolD}{b}{n}{
<-6> 	MnSymbolD-Bold5 <6-7>	MnSymbolD-Bold6	<7-8>		MnSymbolD-Bold7
<8-9>  MnSymbolD-Bold8 <9-10>  MnSymbolD-Bold9  <10-12> 	MnSymbolD-Bold10
<12->	MnSymbolD-Bold12}{}
\DeclareSymbolFont{MnSyD}{U}{MnSymbolD}{m}{n}
\DeclareMathSymbol{\otherPlus}{\mathrel}{MnSyC}{20}
\DeclareMathSymbol{\otherMinus}{\mathrel}{MnSyC}{16}
\DeclareMathSymbol{\otherEquality}{\mathrel}{MnSyD}{0}
\protected\def\verythinspace{%
\ifmmode \mskip0.5\thinmuskip \else \ifhmode \kern0.08334em \fi \fi}
\newcommand {\inc}[1]{\ensuremath{#1{\otherPlus}}}
\newcommand {\dec}[1]{\ensuremath{#1{\otherMinus}}}
\newcommand {\testz}[1]{\ensuremath{#1\verythinspace{\otherEquality}\verythinspace0}}
\newcommand {\nbdec}[1]{\ensuremath{nb(#1{\otherMinus})}}
\newenvironment{sketch-proof} {\textit{Sketch of proof.}}{\hfill$\square$\\ }
\newcommand{\PP}{\ensuremath{\mathcal{P}}}
\newcommand{\CC}{\ensuremath{\mathcal{C}}}
\newcommand{\Init}{\ensuremath{\mathcal{I}}}
\newcommand{\FinalE}{\ensuremath{\mathcal{F}_{\exists}}}
\newcommand{\FinalA}{\ensuremath{\mathcal{F}_{\forall}}}
\newcommand{\Exec}{\ensuremath{\textsc{Exec}}}
\newcommand{\Read}[1]{\ensuremath{R(#1)}}
\newcommand{\nop}{\ensuremath{\bot}}
\newcommand{\Counters}{X}
\newcommand{\nat}{\mathbb{N}}
\newcommand{\set}[1]{\{#1\}}
\newcommand{\mset}[1]{\Lbag #1 \Rbag}
\newcommand{\nb}[1]{\ensuremath{\mathbf{nb}(#1)}}
\newcommand*{\da@rightarrow}{\mathchar"0\hexnumber@\symAMSa 4B }
\newcommand*{\da@leftarrow}{\mathchar"0\hexnumber@\symAMSa 4C }
\newcommand*{\xdashrightarrow}[2][]{%
  \mathrel{%
    \mathpalette{\da@xarrow{#1}{#2}{}\da@rightarrow{\,}{}}{}%
  }%
}
\newcommand*{\da@xarrow}[7]{%
  % #1: below
  % #2: above
  % #3: arrow left
  % #4: arrow right
  % #5: space left 
  % #6: space right
  % #7: math style 
  \sbox0{$\ifx#7\scriptstyle\scriptscriptstyle\else\scriptstyle\fi#5#1#6\m@th$}%
  \sbox2{$\ifx#7\scriptstyle\scriptscriptstyle\else\scriptstyle\fi#5#2#6\m@th$}%
  \sbox4{$#7\dabar@\m@th$}%
  \dimen@=\wd0 %
  \ifdim\wd2 >\dimen@
    \dimen@=\wd2 %   
  \fi
  \count@=2 %
  \def\da@bars{\dabar@\dabar@}%
  \@whiledim\count@\wd4<\dimen@\do{%
    \advance\count@\@ne
    \expandafter\def\expandafter\da@bars\expandafter{%
      \da@bars
      \dabar@ 
    }%
  }%  
  \mathrel{#3}%
  \mathrel{%   
    \mathop{\da@bars}\limits
    \ifx\\#1\\%
    \else
      _{\copy0}%
    \fi
    \ifx\\#2\\%
    \else
      ^{\copy2}%
    \fi
  }%   
  \mathrel{#4}%
}
\newcommand{\qinit}{\ensuremath{q_\textit{in}}}
\newcommand{\ellinit}{\ensuremath{\ell_\textit{in}}}
\newcommand\stackrqarrow[1]{\overset{#1}{\rightsquigarrow}}
\newcommand{\arrowa}[1]{\ensuremath{\Rightarrow}}
\newcommand{\arrowP}[1]{\ensuremath{\xrightarrow{}_{#1}}}
\newcommand{\arrowPlab}[2]{\ensuremath{\xrightarrow{#2}_{#1}}}
\newcommand{\arrowPa}[1]{\ensuremath{\Rightarrow_\PP}}
\newcommand{\Ptime}{\textsc{Ptime}}
\newcommand{\Expspace}{\textsc{Expspace}}
\newcommand{\Loc}{\ensuremath{\text{Loc}}}
\newcommand{\cpt}{\mathtt{x}}
\newcommand{\CAct}[1]{\mathsf{CAct}(#1)}
\newcommand{\Cover}{\ensuremath{\textsc{SCover}}}
\newcommand{\CCover}{\ensuremath{\textsc{CCover}}}
\newcommand{\CMCover}{\ensuremath{\textsc{Cover}}}
\newcommand{\Target}{\ensuremath{\textsc{Synchro}}}
\newcommand{\Term}{\ensuremath{\textsc{Term}}}
\newcommand{\NRCM}{\text{\normalfont{NB-R-CM}}}
\newcommand{\NBCM}{\text{\normalfont{NB-CM}}}
\newcommand{\NBVAS}{\text{\normalfont{NB-VAS}}}
\newcommand{\transVas}{\ensuremath{\rightsquigarrow}}
\newcommand{\transVasZ}{\ensuremath{\rightharpoondown}}
\newcommand{\xtransVasZ}[1]{\ensuremath{\xrightharpoondown[]{#1}}}
\newcommand{\xtransVas}[1]{\ensuremath{\stackrqarrow{#1}}}
\newcommand{\transCM}{\ensuremath{\rightsquigarrow}}
\newcommand{\transNbCM}{\ensuremath{\rightsquigarrow}}
\newcommand{\xtransNbCM}[1]{\ensuremath{\ensuremath{\stackrqarrow{#1}}}}
\newcommand{\xtransCM}[1]{\ensuremath{\stackrqarrow{#1}}}
\newcommand{\testfreeCM}{test-free CM}
\newcommand{\iflong}[1]{#1}
\newcommand{\constantM}{\ensuremath{N}}
\newcommand{\Toks}{\ensuremath{\textit{Toks}}}
\newcommand{\Rec}[1]{\text{Rec}(#1)}
\newcommand{\mst}{\mathsf{st}}
\newcommand{\qone}{\ensuremath{q_1}}
\newcommand{\qtwo}{\ensuremath{q_2}}
\newcommand{\qthree}{\ensuremath{q_3}}
\newcommand{\qfour}{\ensuremath{q_4}}
\newcommand{\qfive}{\ensuremath{q_5}}
\newcommand{\qsix}{\ensuremath{q_6}}
\newcommand{\cpty}{\ensuremath{\mathtt{y}}}
\newcommand{\cptz}{\ensuremath{\mathtt{z}}}
\newcommand{\cpts}{\ensuremath{\mathtt{s}}}
\newcommand{\Interp}[1]{\llbracket #1 \rrbracket}
\newcommand{\ifccover}[1]{#1}
\newcommand{\ifterm}[1]{}
\newcommand{\starg}[1]{\mst(\mathit{{\kern-1pt}#1})}
\newcommand{\problemtitle}[1]{\gdef\@problemtitle{#1}}% Store problem title
\newcommand{\probleminput}[1]{\gdef\@probleminput{#1}}% Store problem input
\newcommand{\problemquestion}[1]{\gdef\@problemquestion{#1}}% Store problem question
  \par\addvspace{0.8\baselineskip}
    \normalsize \textbf{Input:} & \normalsize \@probleminput \\[0.5ex]% Input
    \normalsize \textbf{Question:} & \normalsize \@problemquestion% Question
  \par\addvspace{0.8\baselineskip}
\title{Safety Analysis of Parameterised Networks with Non-Blocking Rendez-Vous}
\author{Lucie Guillou}{IRIF, CNRS, Universit\'e Paris Cit\'e,
  France}{}{}{}
\author{Arnaud Sangnier}{IRIF, CNRS, Universit\'e Paris Cit\'e,
  France}{}{}{}
\author{Nathalie Sznajder}{LIP6, CNRS, Sorbonne Universit\'e,
  France}{}{}{}
\authorrunning{L. Guillou and A. Sangnier and N. Sznajder}
\keywords{Parameterised verification, Coverability, Counter machines}
\newif\iflong\longtrue\longfalse
\newif\iftable\tabletrue%\tablefalse
\begin{document}

\maketitle            
 \begin{abstract}
We consider  networks of processes that all execute the same
finite-state protocol and communicate via a rendez-vous
mechanism. When a process requests a rendez-vous, another process can
    respond to it and they both change their control states
    accordingly. We focus here on a specific semantics, called
    non-blocking, where the process requesting a rendez-vous can
    change its state even if no process can respond to it. In this context, we study the
    parameterised coverability problem of a configuration, which consists in determining whether there is an initial
    number of processes and an execution allowing to reach a
    configuration bigger than a given one. We show
    that this problem is EXPSPACE-complete and can be solved in polynomial time if the protocol is 
    partitioned into two sets of states,
    the states from which a process can request a rendez-vous and the
    ones from which it can answer one. We also prove that the
    problem of the existence of an execution bringing all the
    processes in a final state is undecidable in our context. These
    two problems can be solved in polynomial time with the classical rendez-vous semantics.
\end{abstract}

 \section{Introduction}

\emph{Verification of distributed/concurrent systems.} Because of their ubiquitous use in applications we rely on constantly, the development of formal methods to guarantee the correct behaviour of distributed/concurrent systems has become one of the most important research directions in the field of computer systems verification in the last two decades. Unfortunately, such systems are difficult to analyse for several reasons. Among others, we can highlight two aspects that make the verification process tedious. First, these systems often generate a large number of different executions due to the various interleavings generated by
 the concurrent behaviours of the entities involved. Understanding how these interleavings interact is a complex task which can often lead to errors at the design-level or make the model of these systems very complex. Second, in some cases, the number of participants in a distributed system may be unbounded and not known a priori. To fully guarantee the correctness of such systems, the analysis would have to be performed for all possible instances of the system, i.e., an infinite number
  of times. 
 % As a consequence, one may have to verify the behavior of such
%systems for different instantiations of this number and
%in order to fully guarantee the correctness of the systems, it would
%be required to perform such an analysis for all the possible
%instantiations, i.e. an infinite number of times, which is in practice
%not feasible. 
%
%phrase suivante pas indispensable à mon avis
%Of course, other difficulties raise if one assumes that the
%processes manipulate data belonging to an  infinite domain or if they
%use complex communication mechanisms. 
As a consequence, classical
techniques to verify finite state systems, like testing or
model-checking, cannot be easily adapted to distributed systems and
it is often necessary to develop new techniques.

\noindent\emph{Parameterised verification.} When designing systems with an
unbounded number of participants, one often provides a schematic program (or protocol)
intended to be implemented by multiple identical processes, parameterised by
the number of participants. In general, even if the verification problem is
decidable for a given instance of the parameter, verifying all possible instances
is undecidable~(\cite{apt86limits}).
% in a distributed systems, one method
%consists in considering models where each participant executes the
%same program or protocol and where the number of participants is  a parameter to
%be instantiated.
% The goal being then to ensure the correctness of the
%system for all possible instantiations. In general, even if given an
%instantiation the verification problem is decidable, establishing this
%property for all the cases is an undecidable problem as shown in
%\cite{apt86limits}. 
However, several settings come into play that can be 
 adjusted to allow automatic verification.
%adjusting  a bit the context allows in
%some cases to design automatic verification techniques. 
One key aspect
to obtain decidability  is to assume that the processes do not manipulate
identities \iflong in the protocols\fi and use simple communication mechanisms like
pairwise synchronisation (or rendez-vous) \cite{german92}, broadcast of
a message to all the entities \cite{esparza-verif-lics99} (which can as well be lossy in order to
simulate mobility \cite{delzanno-complexity-fsttcs12}), shared register containing values of a finite
set \cite{esparza-param-cav13}, and so on (see~\cite{esparza14} for a survey).  
In every aforementioned case, all the entities execute
the same protocol given by a finite state automaton.
% whose edges are
%labeled with some communication actions. 
Note that parameterised verification, when decidable like in the above models, is
also sometimes surprisingly easy, compared to the same problem with a fixed number of participants.
%with a complexity lower than the same problem with a fixed
%number of participants. 
%these simple
%contexts not only allow to obtain decidability results for the parameterised
%verification, but surprisingly in some cases the
%complexity of verification process is relatively low and in some
%cases lower than the problem with a fixed number of participants. 
For
instance, liveness verification  of parameterised systems with
shared memory is \textsc{Pspace}-complete for  a fixed number of
processes and in \textsc{NP} when parameterised~%this number is a parameter
\cite{durand-model-fmsd17}.

\noindent\emph{Considering rendez-vous communication.} In one
of the seminal papers for
the verification of parameterised networks~\cite{german92}, German and Sistla 
(and since then \cite{bala21finding,horn20deciding})
assume that the entities communicate by ``rendez-vous'', 
a synchronisation mechanism in which two processes (the \emph{sender} and the \emph{receiver}) 
agree on a common action by which they jointly change their local state. 
This mechanism is synchronous and symmetric, meaning that if no process is ready to receive a message, the sender cannot send it. 
However, in some applications, such as Java Thread programming, this is not exactly the primitive that is implemented. When a Thread is suspended in a waiting state, it is woken up by the reception of a message \texttt{notify} sent by another Thread. However, the sender is not blocked if there is no suspended Thread waiting for 
its message; in this case, the sender sends the \texttt{notify} anyway and the message is simply lost.
%
%communicate thanks to rendez-vous: the
%communication is pairwise and at any instant if one process requests
%a rendez-vous (the set of possible rendez-vous being given by a finite
%alphabet) and another process is ready to accept the same
%rendez-vous, then they both change simultaneously their control state
%according to the transition relation of the protocol. 
%One of the
%justification to consider this rendez-vous mechanism is that it
%corresponds quite well to some synchronisation mechanisms used in
%concurrent program languages. Indeed, in \textsc{Java} for instance, a thread can
%wait on a object, using the instruction \texttt{wait} and another
%thread will be able to wake up such a thread using the instruction
%\texttt{notify} on the same object. 
%However as noticed in
%\cite{delzanno-towards-tacas02}, the rendez-vous  does not
%faithfully correspond to the semantics of such  mechanism, as a thread
%performing a \texttt{notify} will wake up a waited process if there is
%one, but if it is not the case, it will anyway perform the instruction
%\texttt{wait} and not be blocked. 
This is the reason why Delzanno et. al.
have introduced \emph{non-blocking}
rendez-vous in~\cite{delzanno-towards-tacas02}
 a communication primitive in which the sender of a message is not blocked
 if no process receives it. One of the problems of interest in
parameterised verification is the coverability problem: is it
possible that, starting from an initial configuration, (at least)
one process reaches a bad state?
%, where a process requesting a rendez-vous will either move
%together with another process accepting the rendez-vous or will move
%alone in there is no such process.
In \cite{delzanno-towards-tacas02}, and later in~\cite{raskin03petrinets},
the authors introduce variants of Petri nets to handle this type
of communication. 
In particular, the authors investigate in~\cite{raskin03petrinets}
the coverability problem for an extended class of Petri nets
with
non-blocking arcs, and show that for this model the coverability
problem is decidable using the techniques of Well-Structured
Transitions Systems
\cite{abdulla-general-lics96,abdulla-algorithmic-ic00,finkel-well-tcs01}.
However, since
their model is an extension of Petri nets, the latter problem is
\Expspace-hard \cite{lipton76reachability} (no upper bound
is given). Relying on Petri nets to obtain algorithms for
parameterised networks is not always a good option.
%
% Also, Petri nets are a generalization
% of rendez-vous protocols, and
% this translation is not always desirable in terms of complexity.
In fact, the
coverability problem for parameterised networks with rendez-vous \iflong can
be solved in polynomial time\else is in \textsc{P}\fi \cite{german92}, while it is \Expspace-complete for Petri 
nets~\cite{rackoff78covering,lipton76reachability}. Hence, no upper
bound or lower bound can be directly deduced for the verification of
networks with non-blocking rendez-vous from \cite{raskin03petrinets}.

\noindent\emph{Our contributions.} We show that the coverability
problem for parameterised networks  with \emph{non-blocking
rendez-vous communication} over a finite alphabet is
\Expspace-complete. To obtain this result, we consider an
extension of counter machines (without zero test) where we add non-blocking decrement
actions and \iflong some restore mechanism, i.e.\fi edges that can bring back
the machine to its initial location at any moment. We show that
the coverability problem for these extended counter machines is
\Expspace-complete~(\cref{sec:cover-nb-machines}) and that it is equivalent to our problem
over parameterised networks~(\cref{sec:cover-rdv-protocols}). 
%je dirais que les détails techniques ne sont pas indispensables ici
%To get the upped bound we adapt, in a
%quite straightforward way, the
%\Expspace~proof of Rackoff \cite{rackoff78covering} for VASS. For the lower
%bound, we rely  on the \Expspace-hardness proof for VASS
%provided by Lipton \cite{lipton76reachability}, but here the changes in
%the construction are much more tricky. 
We consider then a subclass of parameterised networks -- 
\emph{wait-only protocols} --  in which
no state can allow to both request a rendez-vous and wait for one. This
restriction is very natural to model concurrent programs since when a thread
is waiting, it cannot perform any other action. We show that coverability problem 
can then be solved in polynomial time~(\cref{sec:wo}).
%show that if in the
%considered networks, a process cannot be in state where it can both
%request a rendez-vous and wait for one, then the coverability
%problem can be solved in polynomial time. Note that this last
%restriction makes senses to model concurrent
%programs where when a thread is waiting, it cannot perform any
%other action. 
Finally, we show that the synchronization problem, where we look for a reachable configuration with all the processes in a given state,
%which consists in seeking for an initial configuration leading to a configuration in which all processes are in a given state, 
is undecidable in our framework, even for wait-only protocols~(\cref{sec:target}).
Due to lack of space, some proofs are only given in the appendix.

\section{Rendez-vous Networks with Non-Blocking Semantics}
\label{section:definition-rdv}
%\subsection{Preliminaries}
%For a finite alphabet $\Sigma$, we let $\Sigma^*$ and $\Sigma^\omega$ to denote
%respectively the sets of finite and infinite sequences over $\Sigma$ (or words). 
For a finite alphabet $\Sigma$, we let $\Sigma^*$ denote the set of finite sequences over $\Sigma$ (or words). 
%Given $w\in\Sigma^*\cup\Sigma^\omega$, 
%we let $|w|$ denote its length: if $w=w_0\dots w_{n-1}\in \Sigma^*$, then $|w|=n$.
%and if $w\in \Sigma^\omega$, $|w|=\omega$. \nas{je pense que $\Sigma^\omega$ inutile maintenant}
Given $w\in\Sigma^*$, we let $|w|$ denote its length: if $w=w_0\dots w_{n-1}\in \Sigma^*$, then $|w|=n$.
We write $\nat$ to denote the set of natural numbers and $[i,j]$ to
represent the set $\set{k\in \nat \mid i\leq k \mbox{ and } k \leq
  j}$ for $i,j \in \nat$. For a finite set $E$, the set $\nat^E$ represents
the multisets over $E$. For two elements $m,m' \in  \nat^E$, we denote
$m+m'$ the multiset such that $(m+m')(e) = m(e) +m'(e)$ for
all $e \in E$. We say that $m \leq m'$ if and only if $m(e) \leq
m'(e)$ for all $e \in E$. If $m \leq m'$, then $m'-m$ is the multiset
such that  $(m'-m)(e) = m'(e)-m(e)$ for
all $e \in E$. Given a subset $E' \subseteq E$ and $m \in \nat^E$, we denote by $||m||_{E'}$ the sum $\Sigma_{e\in E'}m(e)$ of elements of $E'$ present in $m$. The size of a multiset $m$ is given by
$||m|| =||m||_E$. For $e \in E$, we use sometimes the
notation $\mset{e}$ for the multiset $m$ verifying $m(e)=1$ and
$m(e')=0$ for all $e' \in E\setminus\set{e}$ and, to represent for instance the multiset
with four elements $a, b,b$ and $c$, we will also use 
the notations $\mset{a, b, b, c}$ or $\mset{a, 2\cdot b, c}$.

%%%%%%%%%%%%%%%%%%%%%
% Arn 9/2/2023
%%%%%%%%%%%%%%%%%%%%%
%% Let $\Sigma$ be a finite alphabet, we note $!\Sigma$ (resp. $?\Sigma$) for the set 
%% $\{ !a~|~a\in \Sigma\}$ (resp. $\{?a~|a\in \Sigma\}$). For $n \in \mathbb{N}$, we note $[n]$ for the set $\{1, \dots, n \}$.
%
%% \paragraph{Multisets.} Let $E$ be a finite set. A multiset over $E$ is a function $m : E \rightarrow \mathbb{N}$, the size of a multiset $m$ is the number $||m|| = \sum_{e \in E} m(e)$. We denote multisets using a set-like notation, e.g $m = \mset a, 2\cdot b, c\Rbag$ denotes the multiset given by $m(a) = 1$, $m(b) = 2$, $m(c) = 1$ and for all $e \not \in \{a,b,c\}$, $m(e) = 0$. 

%% Given two multisets $v,v'$, we let $v + v'$ be the multiset given by $(v+v')(e) = v(e) + v'(e)$
%% for all $e \in E$ and $v-v'$ given by $(v-v')(e) = v(e) -v'(e)$ for all $e \in E$ if 
%% for all $e \in E$, $v(e) \geq v'(e)$. 

\subsection{Rendez-Vous Protocols}

We can now define our model of networks. We assume that all processes in the network follow the same protocol. Communication in the network is pairwise and is performed by \emph{rendez-vous} through a finite communication alphabet $\Sigma$. Each process can either perform an internal action using the primitive $\tau$, or request a rendez-vous by sending the message $m$ using the primitive $!m$ or answer to a rendez-vous by receiving the message $m$ using the primitive $?m$ (for $m \in \Sigma$). Thus, the set of primitives used by our protocols is $RV(\Sigma)=\set{\tau} \cup \set{?m,!m \mid m \in \Sigma}$. %\nas{Je ne sais pas si c'est indispensable, ici. J'enleverais cette (longue) phrase, on le dira avant, et on le redit apres.}We shall see in the sequel that we consider  a \emph{non-blocking} semantics for a process requesting a rendez-vous, i.e. when a process requests a rendez-vous with $!m$ if there is no process to answer to it, the requesting process will anyway be able to take the corresponding transition of the protocol.

\begin{definition}[Rendez-vous protocol]
	A \emph{rendez-vous protocol} (shortly protocol) is a tuple $\PP = (Q, \Sigma, \qinit, q_f, T)$ where $Q$ is a finite set of states, $\Sigma$ is a finite alphabet, $\qinit \in Q$ is the initial state, $q_f \in Q$ is the final state and $T \subseteq Q \times RV(\Sigma) \times Q$ is the finite set of transitions.
\end{definition}

%% A transition can be a sending transition (labeled with $!m$ for some $m \in \Sigma$), 
%% a receiving transition (labeled with $?m$ for some $m \in \Sigma$), or an internal 
%% transition (labeled with the symbol $\tau$). We note $R(a)$ the set of states $q$ such that there exists $q'$ with $(q, ?a, q') \in T$.
%%  \emph{configuration} of the protocol $\PP$ is a multiset $C \in \nat^Q$ for which  $c(q)$ 
%% denotes the number of processes in the state $q$ and $||c|| = \sum_{q \in Q} c(q)$ is the number of processes in the configuration $c$. For $Q' \subseteq Q$ (resp. $q_1, \dots, q_n \in Q$), we note $c(Q')$  (resp. $c(q_1, \dots, q_n)$) for the number $\sum_{q \in Q'} c(q)$  (resp. $\sum_{1 \leq i \leq n} c(q_i)$). 

For a message $m \in \Sigma$, we denote by $\Read{m}$ the set of states $q$ from which the message $m$ can be received, i.e.\ states $q$ such that there is a transition $(q, ?m, q') \in T$ for some $q' \in Q$. 
%For a transiton $t = (q, a, q')\in T$, we use $\textsf{src}(t)$ to represent the source state $q$ of the transition and $\textsf{tgt}(t)$ to represent its target state $q'$.\arstext{Is this used somewhere ?}
%\lugtext{We use it in the part wait-only once, and otherwise it only serves for proofs so not in this document}

A \emph{configuration} associated to the protocol $\PP$ is a non-empty multiset $C$ over $Q$ for which $C(q)$ 
denotes the number of processes in the state $q$ and $||C||$ denotes the total number of processes in the configuration $C$.  
%For a subset of states $S$, we write $\processin{C}{S}$ for the number of processes in the set $S$.
A configuration $C$ is said to be \emph{initial} if and only if $C(q)=0$ for all $q \in Q\setminus\set{\qinit}$. We denote by $\CC(\PP)$ the set of configurations and by $\Init(\PP)$ the set of initial configurations. Finally for  $n \in \nat\setminus\set{0}$, we use the notation $\CC_n(\PP)$ to represent the set of configurations of size $n$, i.e.\ $\CC_n(\PP)=\set{C \in \CC(\PP) \mid ||C||=n}$. When the protocol is made clear from the context, we shall write $\CC$, $\Init$ and $\CC_n$.

%% Let $n \in \mathbb{N}$, we note $\CC_n(\PP)$ the set of configurations with $n$ 
%% processes and $\CC(\PP)$ the set $\bigcup_{n \in \mathbb{N}} \CC_n(\PP)$.
%% We define the set of \emph{initial configurations} as the set 
%% $\Init(\PP) := \{c \in \CC(\PP) ~|~c(Q \setminus \{q_0\}) = 0\}$. 
%% When the context is clear, we will write $\CC$, $\CC_n$, and $\Init$.

%The size of a protocol \PP~is noted $|\PP|$ and is equal to $|T| \times 2\log (|Q|) (\log(|\Sigma|) + 2)$.

%\paragraph{Semantics and executions.}
We explain now the semantics associated with a protocol. For this
matter we define the relation ${\arrowP{\PP}} \subseteq {\bigcup_{n\geq
  1} \CC_n \times \big(\set{\tau} \cup \Sigma \cup \set{\mathbf{nb}(m)
  \mid m \in \Sigma}\big) \times \CC_n}$ as follows (here $\nb{\cdot}$ is a special symbol)\color{black}. Given $n \in
\nat\setminus \set{0}$ and $C,C' \in \CC_n$ and $m \in \Sigma$, we have:
\vspace*{-0.2cm}
\begin{enumerate}
	\item $C \arrowPlab{\PP}{\tau} C'$ iff there exists $(q, \tau, q') \in T$ 
	such that $C(q) > 0$ and  
	%if $q \not = q'$, $c'(q) = c(q) - 1$, $c'(q') = c(q') + 1$, 
	%and for all $p \not = q, q'$, $c'(p) = c(p)$, otherwise if $q =q'$, $c' = c$;
	$C' = C - \Lbag q \Rbag + \Lbag q' \Rbag$ \textbf{(internal)};
	\item $C \arrowPlab{\PP}{m} C'$ iff there exists $(q_1, !m, q_1') \in T$ and
	$(q_2, ?m, q_2')\in T$ such that $C(q_1)>0$ and $C(q_2)>0$ and $C(q_1)+C(q_2)\geq 2$ (needed when $q_1 = q_2$) \color{black}and
	  $C' = C - \mset{q_1, q_2} + \mset{q_1', q_2'}$  \textbf{(rendez-vous)};
	\item $C \arrowPlab{\PP}{\mathbf{nb}(m)} C'$ iff  there exists $(q_1, !m, q_1') \in T$, such 
	that $C(q_1)>0$ and $(C-\mset{q_1})(q_2)=0$ for all $(q_2, ?m, q_2') \in T$ and 
	$C' = C - \mset{q_1} + \mset{q'_1}$ \textbf{(non-blocking request)}.
	\vspace*{-0.1cm}
\end{enumerate}

Intuitively, from a configuration $C$, we allow the following
behaviours: either a process takes an internal transition
  (labeled by $\tau$), or two processes synchronize over a
rendez-vous $m$, or a process requests a rendez-vous  to which no
process can answer (non-blocking sending).

This allows us to define $S_\PP$   the transition system $ (\CC(\PP),
\arrowP{\PP})$ associated to $\PP$. We will write $C \arrowP{\PP} C'$ when there exists $a \in \set{\tau} \cup \Sigma \cup \set{\mathbf{nb}(m)
  \mid m \in \Sigma}$ such that $C \arrowPlab{\PP}{a} C'$ and denote by
$\arrowP{\PP}^\ast$ the reflexive and transitive closure of $
\arrowP{\PP}$. Furthermore, when made clear from the context, we might
simply write $\arrowP{}$ instead of $\arrowP{\PP}$.
An \emph{execution} is a finite sequence of configurations
$\rho = C_0C_1\dots$ such that, for all $0\leq i< |\rho|$, 
$C_i\arrowP{\PP} C_{i+1}$. The execution is said to be initial if $C_0\in \Init(\PP)$. 

	\begin{figure}[t]
%		\begin{minipage}[c]{0.5\textwidth}
			
%		\end{minipage}
%\begin{minipage}[c]{0.5\textwidth}

		\begin{center}

		\resizebox*{6cm}{!}{
			\begin{tikzpicture}[->, >=stealth', shorten >=1pt,node distance=2cm,on grid,auto, initial text = {}] 
				\node[state, initial] (q0) {$\qinit$};
				\node[state] (q1) [ right = of q0, accepting] {$q_1$};
				\node[state] (q5) [below   =  1.5 of q0] {$q_5$};
				\node[state] (q3) [above left = 1.5 of q5, xshift = -27] {$q_3$};
				\node[state] (q4) [ left =  of q5] {$q_4$}; 
				\node[state] (q6) [ right = of q5] {$q_6$};

				\node[state] (q2) [right = of q1] {$q_2$};
				
				\path[->] 
				(q0) edge node {$!a$} (q5)
				edge node {$?b$} (q1)
				(q1) edge node {$!c$} (q2)
				(q5) edge [bend right = 15] node [above] {$?a$} (q3)
				edge   node [above] {$?b$} (q4)
				edge   node  {$!b$} (q6)
				(q6) edge [bend right] node {$?c$} (q2)
				%					 edge  [bend right] node [below left] {$?d$} (q6)
				;

			\end{tikzpicture}

	}
		\end{center}
		\caption{Example of a rendez-vous protocol $\PP$}
		\label{fig-rdv2}
%		
%
%\end{minipage}
\end{figure}

\begin{example}\label{example-exec}
	Figure \ref{fig-rdv2} provides an example of a rendez-vous protocol
	where $\qinit$ is the initial state and $\qone$ the final state. A
	configuration associated to this protocol is for instance the multiset
	$\Lbag 2 \cdot q_1, 1\cdot q_4, 1 \cdot q_5\Rbag$ and the following
	sequence represents an initial execution: $\Lbag 2 \cdot \qinit \Rbag \arrowPlab{}{\mathbf{nb}(a)} \Lbag \qinit, \qfive \Rbag \arrowPlab{}{b} \Lbag \qone, \qsix \Rbag \arrowPlab{}{c} \Lbag 2 \cdot \qtwo \Rbag$.
\end{example}

% We give an example of  in \cref{fig-rdv}\iflong{, the set of states is $\{ q_0, q_1, q_2, q_3, p_1, p_2\}$, the initial state is $q_0$, the final state is $q_3$}. A configuration of the rendez-vous protocol \PP~is for example the multiset $\Lbag 2 \cdot q_1, 1\cdot p_1, 1 \cdot p_3\Rbag$.

\begin{remark} When we only allow behaviours of type \textbf{(internal)}
	and \textbf{(rendez-vous)}, this semantics corresponds to the classical rendez-vous semantics (\cite{german92,bala21finding,horn20deciding}). 
	In opposition, we will refer to the semantics defined here as the
	\emph{non-blocking semantics} where a process is not
	\emph{blocked} if it requests a rendez-vous and no process can
	answer to it. 
	Note that all behaviours possible in the classical rendez-vous semantics are as well possible in the non-blocking semantics but the converse is false.
\end{remark}

\subsection{Verification Problems}

We now present the problems studied in this work. For this matter,
given a protocol $\PP = (Q, \Sigma, \qinit, q_f, T)$, we define two
sets of final configurations. The first one  $\FinalE(\PP) = \{ C \in
\CC(\PP) ~\mid~C(q_f)> 0\}$
characterises the configurations where one
of the processes is in the final state. The second one  $\FinalA(\PP) = 
\{ C \in \CC(\PP) ~\mid~C(Q \setminus \{q_f\})= 0\}$ represents the
configurations where all the  processes are in the final state. Here
again, when the protocol is clear from the context, we might use the
notations $\FinalE$ and $\FinalA$. 
\ifterm{We study three problems: the
\emph{coverability problem} (\Cover), the synchronization problem
(\Target) and the termination problem (\Term) which all takes as input
a protocol $\PP$ and which can be stated as follows:}
\ifccover{We study three problems: the
	\emph{state coverability problem} (\Cover), the \emph{configuration coverability}  problem (\CCover) and 
	the \emph{synchronization problem}
	(\Target), which all take as input
	a protocol $\PP$~%, and for \CCover, a configuration as well, 
	and can be stated as follows:}
\begin{center}
  \begin{tabular}{|c|l|}
    \hline
    {\bf Problem name} & {\bf Question}\\
    \hline
    \hline
    \Cover & Are there $C_0 \in \Init$ and $C_f \in \FinalE$, such
      that $C_0 \arrowP{}^\ast C_f$?\\
    \hline
    \ifccover{
    	\CCover & Given $C \in \CC$, are there $C_0 \in \Init$ and $C' \geq C$, such that $C_0 \arrowP{}^\ast C'$?\\
    \hline
}
    \Target & Are there $C_0 \in \Init$ and $C_f \in \FinalA$, such
      that $C_0 \arrowP{}^\ast C_f$?\\
    \hline
    \ifterm{
    \Term & Does $\Exec_\infty (S_\PP) = \emptyset $?\\
    \hline
	}
    
\end{tabular}
\end{center}
\Cover~expresses a safety property: if $q_f$ is an error state and the answer is negative, then for any number of processes, no process will ever be in that error state. \Term, in another hand, is a liveness property: if $q_f$ is a deadlock state (a state in which no action is possible), and the answer is negative, then for any number of processes, all processes together are never blocked at the same time.
\color{black}
\begin{remark}
The difficulty in solving these problems lies in the fact that we
are seeking for an initial configuration allowing a specific execution
but the set of initial configurations is infinite. The difference
between \Cover~and \Target~ is that in the first one we ask for at least one
process to end up in the final state whereas the second one requires all the processes to end in this state.
\ifccover{Note that \Cover~is an instance of \CCover~but \Target~is not.}
\end{remark}

\begin{example}\label{example-verif-pbs}
	The rendez-vous protocol of Figure \ref{fig-rdv2} is a positive instance of \Cover, as shown in~\cref{example-exec}.
	%following execution putting one process in $\qone$: $\mset{2\cdot \qinit} \arrowPlab{}{\nb{a}} \mset{\qinit, \qfive} \arrowPlab{}{b} \mset{\qone, \qsix}$. 
	However, this is not the case for \Target: if an execution
    brings a process in $\qtwo$, this process cannot be brought afterwards to $\qone$. If $\qtwo$ is the final state, \PP~is now a positive instance of \Target~(see Example \ref{example-exec}).
	Note that if the final state is $\qfour$, $\PP$ is not a positive
    instance of \Cover~anymore. In fact, the only way to reach a
    configuration with a process in $\qfour$ is to put (at least) two
    processes in state $\qfive$ as this is the only state from which
    one process can send the message $b$. %This can not be done: let us see why. From any initial configuration, a process can only reach $\qfive$ by doing a non-blocking sending of message $a$. For example, from the initial configuration with 2 processes: $\mset{2 \cdot \qinit} \arrowPlab{}{\nb{a}} \mset{\qinit, \qfive}$. If another process sends the message $a$ from $\qinit$ to reach $\qfive$, the message is necessary received by the process on $\qfive$, and the first process leaves state $\qfive$. For example with two processes: $\mset{\qinit, \qfive} \arrowPlab{}{a} \mset{\qfive, \qthree}$. As a consequence, it is not possible to reach a configuration with two or more processes on state $\qfive$, and so the rendez-vous $\mset{2\cdot \qfive} \arrowPlab{}{b} \mset{\qsix, \qfour}$ will never occur (note however that state $\qsix$ can be covered).
	However, this cannot happen, since from an initial configuration,
    the only available action consists in sending the message $a$ as a non-blocking request. Once
	there is one process in state $q_5$, any other attempt to put another process in this state will induce a reception of message $a$ by the process already in $q_5$, which will hence leave $q_5$. 
	Finally, note that for any $n \in \mathbb{N}$, the configuration $\mset{n \cdot \qthree}$ is coverable, even if $\PP$ with $\qthree$ as final state is not a positive instance of \Target.
	%: for example for $n =2$: $\mset{3\cdot \qinit} \arrowPlab{}{\nb{a}} \mset{2\cdot \qinit, \qfive} \arrowPlab{}{a} \mset{ \qinit, \qfive, \qthree} \arrowPlab{}{a} \mset{ \qfive, 2 \cdot \qthree}$. 
%	However, if $\qthree$ is the final state, $\PP$ is not a positive instance of \Target.
	
\end{example}

\label{sec:def}

 	%\section{Coverability for specific counter machines}
	\section{Coverability for Non-Blocking Counter Machines}\label{sec:cover-nb-machines}
	
%In the sequel, we shall see that some of the verification problems we define for protocols are \emph{undecidable}, in order to show that, we will use a classical technique: a reduction from the reachabilty problem for Minksy machine. We define those machines in this section, along with another type of machine that we will use for the \Cover~problem. We call those \emph{non-blocking} counter machines. It has a set of counters, a set of locations, and to go from a location to another, it can either decrement one counter (if its value is greater than 0), increment it, or do a \emph{non-blocking} decrements: i.e decrement it if its value is greater than 0, or leave it to 0 otherwise. We will also define a particular class of non-blocking counter machines, called non-blocking counter machine with restores. \nas{on verra si on garde ce paragraphe tel quel...}\ars{Oui il faut réécrire ce texte}
%
We first detour into new classes of counter machines, which we call \emph{non-blocking counter machines} and \emph{non-blocking counter machines with restore}, in which a new way of decrementing the counters is added to the classical one: a non-blocking decrement, which is an action that can always
be performed. If the counter is strictly positive, it is decremented; otherwise it is let to 0. We show that the coverability of a control 
state in this model is \Expspace-complete, and use this result to solve coverability problems in rendez-vous protocols.

To define counter machines, given a set of integer variables (also called counters) $\Counters$, we use the notation $ \CAct{\Counters}$ to represent the set of associated actions given by $\set{\inc{\cpt},\dec{\cpt},\testz{\cpt} \mid \cpt \in \Counters} \cup \set{\nop}$. Intuitively, $\inc{\cpt}$ increments the value of the counter $\cpt$, while $\dec{\cpt}$ decrements it and $\testz{\cpt} $ checks if it is equal to $0$. We are now ready to state the syntax of this model.

\begin{definition}
	A \emph{counter machine} (shortly CM) is a tuple $M = (\Loc, \Counters, \Delta, \ellinit)$ such that $\Loc$ is a finite set of locations, $\ellinit \in \Loc$ is an initial location, $\Counters$ is a finite set of counters, and $\Delta \subseteq \Loc \times \CAct{\Counters} \times \Loc$ is finite set of transitions. 
  \end{definition}

 We will say that a CM is test-free (shortly \testfreeCM) whenever $\Delta \cap \Loc \times \{\testz{\cpt} \mid \cpt \in \Counters\} \times \Loc = \emptyset$.  A configuration of a CM $M = (\Loc, \Counters, \Delta, \ellinit)$ is a pair $(\ell, v)$ where $\ell \in \Loc$  specifies the current location of the CM and $v\in \nat^\Counters$ associates to each counter a natural value. The size of a CM $M$ is given by $|M|= |\Loc| + |\Counters| + |\Delta|$. %The initial configuration is the pair $(\ell_{in},\mathbf{0})$ where $\mathbf{0}(\cpt)=0$ for all $\cpt \in \Counters$.  
 Given two configurations $(\ell, v)$ and $(\ell',v')$ and a transition $\delta \in \Delta$, we define $(\ell, v) \xtransCM{\delta}_M (\ell', v')$ if and only if $\delta = (\ell, op, \ell')$ and one of the following holds:\\
\begin{minipage}[t]{7cm}
\begin{itemize}[itemindent=-0.2cm]
	\item $op = \nop$ and $v =v'$;
	\item $op = \inc{\cpt}$ and $v'(\cpt) = v(\cpt) + 1$ and \\ $v'(\cpt') = v(\cpt')$ for all $\cpt' \in \Counters \setminus \set{\cpt}$;
\end{itemize}
\end{minipage}
\begin{minipage}[t]{7cm}
\begin{itemize}
	 \item $op = \dec{\cpt}$ and $v'(\cpt) = v(\cpt) - 1$ and $v'(\cpt') = v(\cpt')$ for all $\cpt' \in \Counters \setminus \set{\cpt}$;
	 \item $op = \testz{\cpt}$ and $v(\cpt) = 0$ and  $v'= v$.
  \end{itemize}
\end{minipage}
\\

%We note $\transCM$ for $\bigcup_{\delta \in \Delta} \xtransCM{\delta}$.

 In order to simulate the non-blocking semantics of our rendez-vous protocols with counter machines, we extend the class of test-free CM with non-blocking decrement actions.
 
\begin{definition}
	A \emph{non-blocking test-free counter machine} (shortly \NBCM)~is a tuple $M=(\Loc, \Counters, \Delta_b, \Delta_{nb}, \ellinit)$ such that $(\Loc, \Counters, \Delta_b, \ellinit)$ is a \testfreeCM~and $\Delta_{nb} \subseteq \Loc \times \{\nbdec{\cpt} \mid \cpt \in \Counters\} \times \Loc$ is a finite set of \emph{non-blocking} transitions. 
%	When $\Delta_{nb}=\emptyset$, we say that it is a \emph{counter machine with zero test (0-CM)}, and we might describe $\Delta$ as $\Delta_b$ only. If, in addition, $C = \{c_1, c_2\}$, we say that it is a \MinskyMachine, this is actually a Minsky Machine.
%	
%	
%	If there is no zero test transitions (i.e $\Delta \cap \{\cpt \testZ \mid \cpt \in \Counters \}  = \emptyset$), we say that it is a \emph{non-blocking counter machine (\NBCM)}.
  \end{definition}
  
  Observe that in a \NBCM, both blocking and non-blocking decrements are possible, according to the definition of the transition relation. \color{black}
Again, a configuration is given by a pair $(\ell,v)\in \Loc\times\nat^\Counters$.
%, and the initial configuration is $(\ellinit, \mathbf{0})$. 
Given two configurations $(\ell, v)$ and $(\ell, v')$ and $\delta\in\Delta_b\cup\Delta_{nb}$, we extend the transition relation $(\ell,v)\xtransNbCM{\delta}_M (\ell,v')$
over the set $\Delta_{nb}$ in the following way: for $\delta= (\ell, \nbdec{\cpt}, \ell') \in \Delta_{nb}$, we have $(\ell,v) \xtransNbCM{\delta}_M (\ell',v')$ if and only if  $v'(\cpt) = \max(0, v(\cpt) - 1)$, and  $v'(\cpt') = v(\cpt')$ for all $\cpt' \in \Counters \setminus \set{\cpt}$. 

We say that $M$ is an \NBCM~\emph{with restore} (shortly \NRCM) when $(\ell, \nop, \ellinit) \in \Delta$ for all $\ell\in \Loc$, i.e. from each location, there is a transition leading to the initial location with no effect on the counters values.

 For a CM $M$ with set of transitions $\Delta$ (resp. an \NBCM~ with sets of transitions $\Delta_b$ and $\Delta_{\textit{nb}}$),
 we will write $(\ell, v) \xtransCM{}_M (\ell', v')$ whenever there exists $\delta \in \Delta$ (resp. $\delta \in \Delta_b\cup\Delta_{\textit{nb}}$) such that $(\ell, v) \xtransCM{\delta}_M (\ell', v')$ and use  $\xtransCM{}^\ast_M$ to represent the reflexive and transitive closure of $\xtransCM{}_M$. When the context is clear we shall write $\transCM$ instead of $\transCM_M$. We let $\mathbf{0}_\Counters$ be the valuation such that $\mathbf{0}_\Counters(\cpt)=0$ for all $\cpt\in \Counters$. 
An execution is a finite sequence of configurations  $(\ell_0, v_0) \transNbCM (\ell_1, v_1) \transNbCM \ldots \transNbCM(\ell_k, v_k)$.
 It is said to be initial if $(\ell_0,v_0)=(\ellinit, \mathbf{0}_\Counters)$. A configuration $(\ell,v)$ is called reachable if  $(\ellinit, \mathbf{0}_\Counters) \transCM^\ast (\ell,v)$.
%We say that an execution is initial if it starts with the configuration $(\ell_0, \mathbf{0}_{\Counters})$.

We shall now define the coverability problem for (non-blocking test-free) counter machines, which asks whether a given location can be reached from the initial configuration. We denote this problem \CMCover[$\mathcal{M}$], for $\mathcal{M}\in \{\textrm{CM}, \textrm{\testfreeCM}, \textrm{\NBCM}, \textrm{\NRCM}\}$. It takes as input a machine $M$ in $\mathcal{M}$  (with initial location $\ellinit$ and working over a set $\Counters$ of counters) and a location $\ell_f$ and it checks whether there is a valuation $v \in \mathbb{N}^\Counters$ such that $(\ellinit, \mathbf{0}_\Counters) \transNbCM^*(\ell_f, v)$.
In the rest of this section, we will prove that \CMCover[\NRCM]~is \Expspace-complete. To this end, we first establish that \CMCover[\NBCM]~is in \Expspace, by an adaptation of Rackoff's proof which shows that coverability in Vector Addition 
Systems is in \textsc{Expspace} \cite{rackoff78covering}. This gives also the upper bound for $\NRCM$, since any \NRCM~is a \NBCM. This result is established by the following theorem, whose proof is omitted due to lack of space.\color{black}
%We will in fact introduce a model in which \Cover~problem is \emph{as hard as} the analogous problem in \NBCM~called Non-blocking Vector Addition Systems (\NBVAS). We use this model in order to have a simple adaptation of the proof of \cite{rackoff78covering}, then, we will prove the link between the two models for the \Cover~problem.

%\nas{du coup, mettre ici la definition de \NBVAS.}\lug{fait}

\begin{theorem}\label{thm:cover-nbcm-in-expspace}
	$\CMCover[\NBCM]$ and $\CMCover[\NRCM]$~are in \Expspace.
\end{theorem}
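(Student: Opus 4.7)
The plan is to adapt Rackoff's classical doubly-exponential length bound for VAS coverability~\cite{rackoff78covering} to the setting of $\NBCM$, and then to observe that any $\NRCM$ is syntactically an $\NBCM$, so the bound transfers for free. Once such a length bound on a minimal covering witness is established, an \Expspace{} procedure follows the standard route: guess a covering execution step by step, storing only the current configuration. Since counter values along any minimal witness can be kept bounded by a doubly-exponential number, they are encodable in exponentially many bits, yielding an \NPspace{} algorithm, which collapses to \Expspace{} by Savitch's theorem.

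The core technical work is the adaptation of Rackoff's inductive argument. I would define $f(i)$ as the supremum, over all $\NBCM$s $M$ of size bounded by the input, all source configurations, and all target locations $\ell_f$, of the length of the shortest execution reaching $\ell_f$ in which the first $i$ counters of some fixed ordering are required to remain non-negative while the remaining counters are ``free'' (unconstrained, ranging a priori over $\mathbb{Z}$). The aim is then to prove $f(i) \leq g(|M|, i)$ for a doubly-exponentially bounded $g$, by induction on $i$, mirroring Rackoff's short-path argument: if the first $i$ counters never exceed some threshold $B$ derived from $f(i-1)$, then short-cutting over the finite set of possible sub-configurations bounds the length; otherwise one chops the execution at the first step where some constrained counter exceeds $B$, and uses the induction hypothesis on the tail, where that counter is reclassified as free.

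The key new ingredient is the treatment of the non-blocking decrement $\nbdec{\cpt}$. The point is that at any step, such a transition either (i)~behaves exactly as a classical decrement, when $\cpt$ is strictly positive, or (ii)~behaves exactly as a $\nop$, when $\cpt = 0$. For the first $i$ (constrained) counters, this case analysis lets one normalize a minimal covering execution: every non-blocking decrement that is fired can be relabelled either as a proper decrement or as a $\nop$ without altering the traversed configurations or the length. For the free counters beyond index $i$, where Rackoff relaxes to executions over $\mathbb{Z}$, one over-approximates each non-blocking decrement by a genuine $-1$ step; this can only make the free counter smaller, so it preserves the non-negativity of the constrained counters along the execution, which is all that the induction needs.

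The main obstacle I anticipate is precisely this reconciliation of the non-blocking semantics with Rackoff's $\mathbb{Z}$-relaxation: one must verify that swapping non-blocking decrements for regular decrements on the free counters does not create spurious sub-configurations on the constrained counters, nor shorten a covering execution in a way that breaks the inductive length bound. The short-cutting step and the Dickson-style monotonicity arguments must also be re-checked under the possibility that decrements silently clamp at $0$. Once this is settled, the bound for $\CMCover[\NRCM]$ follows without further work: every transition $(\ell, \nop, \ellinit)$ added by the restore mechanism is a genuine $\nop$ transition, so every $\NRCM$ is literally an $\NBCM$, and the length bound on $f$ applies verbatim, delivering the \Expspace{} upper bound in both cases.
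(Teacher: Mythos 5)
Your proposal is correct in outline and follows the same route as the paper: an adaptation of Rackoff's induction on the number of constrained counters, with the shortcut argument when all constrained counters stay below a threshold and the drop-a-large-counter argument otherwise, followed by a guess-and-check \NPspace{} procedure and Savitch; the restore transitions are indeed plain $\nop$s, so the \NRCM{} case is immediate. The two places where the paper diverges from you are worth noting. First, the paper does not run Rackoff directly on the \NBCM{}: it first reduces $\CMCover[\NBCM]$ to coverability in an intermediate model of \emph{non-blocking VAS}, where each transition is a pair $(t_b,t_{nb})$ of a blocking vector in $\mathbb{Z}^d$ and a non-negative non-blocking vector, and locations are encoded as extra coordinates. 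Second, and more substantively, the paper does not use your $\mathbb{Z}$-relaxation for the free counters. Instead it relaxes the semantics so that \emph{every} coordinate is clamped at $0$ (i.e.\ $v'(j)=\max(0,(v+t_b-t_{nb})(j))$ for all $j$), and a path is $J$-correct only if the blocking part keeps the coordinates of $J$ non-negative. This choice sidesteps exactly the reconciliation issue you flag: there is no need to reinterpret a non-blocking decrement as a genuine $-1$ on free coordinates, and the relaxed semantics restricted to $[1,d]$-correct paths coincides with the real one by definition. The price is that the shortcut step must explicitly recompute the free coordinates of the tail under the clamped semantics, which the paper does by constructing the modified suffix by hand. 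Your variant can be made to work (on the dropped coordinate $j$ the genuine $-1$ and the clamped semantics agree along the short tail because $j$ stays at least $\constantM$ throughout), but you must argue this explicitly, and you should also state the induction quantity as a maximum over \emph{all} subsets $J$ of size $i$ rather than over the first $i$ counters of a fixed ordering, since the coordinate that exceeds the threshold and gets reclassified as free is not under your control.
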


%As any \NRCM~is a \NBCM, we get the following corollary.
%\begin{corollary}\label{coro:cover-nrcm-in-expspace}
%	
%\end{corollary}
%

%\subsection{Lower bound for \CMCover[\NRCM]}\label{subsec:cover-lower-bound}
%\nastext{dans toutes les figures de cette section, il faudrait peut-être rajouter le symbole $\nop$ sur les transitions vides, pour coller avec nos définitions}
%\lugtext{fait}
%\lugtext{Fait le 21 avril :
%
%- macros pour les compteurs y/z et s : \cptz, \cpty, \cpts
%
%- échange des roles de $\overline \cpt$ et $\cpt$ comme discuté sur zulip
%
%- $\overline Y_n$: emptyset, plus de machine $M'$, plus de $\mathtt{Inc}_n$.}
To obtain the lower bound, inspired by Lipton's proof showing that coverability in Vector Addition Systems is \Expspace-hard \cite{ esparza98decidability,lipton76reachability}, we rely on 2\textsc{Exp}-bounded \testfreeCM. We say that a CM $M = (\Loc,\Counters, \Delta,\ellinit)$ is 2\textsc{Exp}\emph{-bounded} if there exists $n \in O(|M|)$ such that any reachable configuration $(\ell, v)$ satisfies $v(\cpt) \leq 2^{2^n}$ for all $\cpt \in \Counters$.
We use then the following result.
\begin{theorem}[\cite{ esparza98decidability,lipton76reachability}]\label{th:expspace-hard-lipton}
\CMCover[2\textsc{Exp}-bounded \testfreeCM] is \Expspace-hard.
\end{theorem}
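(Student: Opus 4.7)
The plan is to reduce from the word problem for a deterministic Turing machine $M$ using at most $2^n$ tape cells on inputs of length $n$; this problem is \Expspace-hard by definition of the class (modulo a suitable padding argument). Given $M$ and $w$ with $|w|=n$, I build in polynomial time a \testfreeCM~$N$ and a target location $\ell_f$ such that $(\ellinit,\mathbf{0}_\Counters)\transCM^\ast(\ell_f,v)$ for some $v$ iff $M$ accepts $w$, with the additional guarantee that every counter in every reachable configuration of $N$ stays below $2^{2^{O(n)}}$.

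First I encode a tape configuration of $M$ succinctly. For each tape symbol $s\in\Sigma$ I maintain a counter $x_s$ whose value stores in binary the indicator vector of the positions holding $s$, namely $x_s=\sum_{j : \text{cell } j \text{ holds } s}2^j$. This bounds $x_s$ by $2^{2^n}$. A separate counter, bounded by $2^n$, tracks the head position, and the finite control of $M$ is embedded in the locations of $N$. Simulating one step of $M$ then amounts to inspecting bit $h$ of some $x_s$, clearing it, setting the corresponding bit in the new symbol counter, and updating the head counter by $\pm 1$. These bit operations can all be expressed by a constant number of guarded arithmetic manipulations, provided one has equality-to-zero tests on counters bounded by $2^{2^n}$.

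The crux of the reduction is Lipton's trick for simulating such tests inside a \testfreeCM. For every ``real'' counter $x$ bounded by some $B$ I keep a complementary counter $\bar{x}$ subject to the invariant $x+\bar{x}=B$; testing $x=0$ is then equivalent to testing $\bar{x}=B$, which is done by first subtracting $B$ from $\bar{x}$ and then testing equality to $0$ at a strictly smaller bound. Iterating this idea yields a hierarchy of level-$k$ gadgets that simulate zero tests for counters bounded by $B_k=2^{2^k}$, using only level-$(k-1)$ gadgets as subroutines. Each level increases the size of the gadgets by a polynomial factor, so $n$ levels of recursion still give a construction of size polynomial in $|M|$ and $n$, with the largest counter bounded by $2^{2^{O(n)}}$.

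The principal obstacle is \emph{faithfulness}: because $N$ is nondeterministic and lacks real zero tests, a ``cheating'' execution could skip a gadget without properly resetting auxiliary counters and still pretend to reach $\ell_f$. The standard remedy is to design the gadgets so that any deviation from the intended protocol leaves some auxiliary counter with a nonzero residue; a matching balancing phase on the way out of the gadget then attempts to decrement this residue to $0$, which in a cheating run either forces a blocking decrement or strands the execution in an auxiliary location from which $\ell_f$ is unreachable. Verifying this invariant at every level of the recursion, together with the arithmetic bounds showing that counters never exceed $2^{2^{O(n)}}$, is the technical heart of the proof and is the step I would spend most of the effort writing out carefully.
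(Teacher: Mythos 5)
The paper does not actually prove this statement---it imports it from Lipton and Esparza---and your sketch is essentially the standard argument behind that citation: simulate an exponential-space Turing machine by a polynomial-size counter program whose counters stay below $2^{2^{O(n)}}$, then eliminate the zero tests with Lipton's recursive complementary-counter gadgets ($x+\bar{x}=2^{2^{k}}$ at each level) while preserving the double-exponential bound, which is exactly the $\mathtt{TestSwap}_i$/$\mathtt{Inc}_i$ machinery the paper itself reuses in \cref{sec:cover-nb-machines} for its own \Expspace-hardness reduction. Your plan is therefore sound and matches the intended proof; the one detail to state more carefully is that faithfulness in Lipton's gadgets comes from a \emph{blocking} decrement occurring inside the exact-count loop when the guess is wrong (together with the maintained invariant $x+\bar{x}=2^{2^{k}}$), rather than from a separate residue-balancing phase on exit, but you correctly identify that verification as the technical core to be written out.
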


We now  show how to simulate a 2\textsc{Exp}-bounded \testfreeCM~by a \NRCM, by carefully handling restore transitions that may occur at any point in the
execution. We will ensure that each restore transition is followed by a reset of the counters, so that we can always extract from an execution of the
\NRCM~a correct initial execution of the original \testfreeCM. The way we enforce resetting of the counters is inspired by the way Lipton simulates 0-tests of a CM in a \testfreeCM. 
%In order to show that, we will show how to simulate the \Cover~problem in 2EXP-bounded Counter Machine.  This problem is shown to be \Expspace-hard in \cite{esparza98decidability,lipton76reachability}. The proof of this reduction is actually inspired by the proof of \Expspace-hardness for 2EXP-bounded CM.\nastext{je n'ai pas compris ce paragraphe. Pour moi la preuve de EXPSPACE-hardness est que : Turing Machine bornée exponentiellement -> 2EXP-bounded CM -> VASS. La preuve de Lipton parle de la deuxième flèche. Esparza dit que la première flèche est "bien connue". Est-ce que tu es d'accord Lucie, ou j'ai mal compris?}
%
%
As in \cite{lipton76reachability,esparza98decidability}, we will describe the final \NRCM~by means of several submachines. To this end, we define \emph{procedural non-blocking counter machines} that are \NBCM~ with several identified \emph{output states}: formally, a procedural-\NBCM~is a tuple $N = (\Loc, \Counters, \Delta_b, \Delta_{nb}, \ell_{in}, L_{\textit{out}})$ such that $(\Loc, \Counters, \Delta_b, \Delta_{nb}, \ell_{in})$ is a \NBCM, $L_{\textit{out}} \subseteq \Loc$, and there is no outgoing transitions from states in $L_{\textit{out}}$.\color{black}

%In \cite{esparza98decidability,lipton76reachability}, the authors present a proof for the \Expspace-hardness of the Cover problem for CM. In order to do so, they define several \emph{Procedural} Non-blocking Counter Machines. A Procedural \NBCM~is a tuple $N = (\Loc, \Counters, \Delta_b, \Delta_{nb}, \ell_{in}, L_{out})$ such that $(\Loc, \Counters, \Delta_b, \Delta_{nb}, \ell_{in})$ is a \NBCM~and $L_{out} \subseteq \Loc$. (Actually, the authors define procedural counters machines and not procedural \NBCM~but we can adapt the definitions to \NBCM~with $\Delta_{nb} = \emptyset$).

\begin{figure}[t]
		\resizebox{!}{1.5cm}{%
	\begin{tikzpicture}[->, >=stealth', shorten >=0.5pt,node distance=2cm,on grid,auto, initial text = {}] 
	\tikzstyle{initial}= [initial by arrow,initial text=]
	\node[state,initial, minimum width=0.1pt] (0) at (0,0) {$\ellinit'$};
	\node[state] [right of=0] (1) {$\ell_a$};
	%\node[state] [right of=0, yshift=-20] (y) {};
	\node[state] [right = 2.5 of 1] (2) {$\ell_{b}$};
	\node[state] [right of=2] (3) {$\ellinit$};
	\node[state] [right =5 of 3] (4) {$\ell_{f}$};
	
	\node[draw, fill = yellow, fill opacity = 0.2, text opacity = 1, fit=(1) (2), text height=0.04 \columnwidth] (RstInc) {$\mathtt{RstInc}$};
	\node[draw, fill = pink, fill opacity = 0.2, text opacity = 1, fit=(3) (4), text height=0.04 \columnwidth] (CM) {Counter Machine $M$};

%	\path (6) -- node[auto=false]{\ldots} (7);

	\path[->] 	
	(0) edge node {$\nop$} (1)
	
	(2) edge node {$\nop$} (3)
	
	;
	
	\draw[->, dashed] (RstInc.south) |- ++(0,-0.2) -|(0.south); 
	\draw[->, dashed] (CM.south) |- ++(0,-0.2) -| (0.south);
	\node[anchor=east, xshift = -40, yshift=-12] at (CM.south) {\small Restore transitions};
%	\node[anchor=north, yshift=-5] at (CM.north) {\small Counter machine $M$};
%	\node[anchor=north, yshift=-15] at (CM.north) {\small machine $M$};
\end{tikzpicture}
	}
	\caption{The \NRCM~$N$} \label{fig:nrvass}
\end{figure}

Now fix a 2\textsc{Exp}-bounded \testfreeCM~$M = (\Loc,\Counters, \Delta,\ellinit)$, $\ell_f\in\Loc$ the location to be covered. There is some $c$, such that, any reachable configuration $(\ell, v)$ satisfies $v(\cpt) < 2^{2^{c |M|}}$ for all $\cpt \in \Counters$, fix $n = c|M|$.
We build a \NRCM~$N$ as pictured in~\cref{fig:nrvass}. The goal
of the procedural \NBCM~$\mathtt{RstInc}$ is to ensure that all counters in $\Counters$ are reset. Hence, after each restore transition, we are sure that we start over a fresh execution of the \testfreeCM~$M$.
%In the following, we explain how we build the procedural-\NBCM~$\mathtt{RstInc}$.
%The non-blocking transitions of a \NBCM~allow us to reset the counters as long as we know that they are bounded, with as many non-blocking decrements as needed. However, in order to keep a number of states polynomial in the size of $M$, we will implement these non-blocking decrement transitions within loops, making use of other counters to make sure that the loops are taken
%the right number of times. Hence 
We will need the mechanism designed by Lipton to test whether a counter is equal to 0. \iflong For a counter $\cpt$ bounded by some value $K$, this is done by duplicating $\cpt$ into $\overline{\cpt}$ and ensure along any execution that the sum of $\cpt$ and $\overline\cpt$ is equal to $K$.\fi
% To be able to reset the counters of the \NRCM~ we build, we will duplicate each counter
%$\cpt$ in a copy $\overline{\cpt}$. 
So, we define two families of sets of counters $(Y_i)_{0\leq i \leq n}$ and $(\overline{Y_i})_{0\leq i\leq n}$ as follows. Let $Y_i = \{\cpty_i, \cptz_i, \cpts_i \}$ and 
$\overline Y_i = \{\overline \cpty_i, \overline \cptz_i, \overline \cpts_i\}$ for all $0\leq i < n$ and $Y_n = \Counters$ and $\overline{Y}_n = \emptyset$ and $\Counters'=\bigcup_{0\leq i\leq n} Y_i\cup \overline Y_i$. All the machines we will describe from now on will work over the set of counters $\Counters'$.

\smallskip

\noindent\emph{Procedural-\NBCM~$\mathtt{TestSwap}_i(\cpt)$.}
We use a family of procedural-\NBCM~defined in~\cite{lipton76reachability, esparza98decidability}: for all $0\leq i <n$, for all $\overline\cpt\in \overline Y_i$, $\mathtt{TestSwap}_i(\overline\cpt)$ is a procedural-\NBCM~with an initial location $\ellinit^{\mathtt{TS},i,\cpt}$, and two output locations $\ell^{\mathtt{TS},i,\cpt}_{z}$ and 
$\ell^{\mathtt{TS},i,\cpt}_{\textit{nz}}$.  It tests if the value of $\overline{ \cpt}$ is equal to 0, using the fact that the sum of the values of $\cpt$ and $\overline\cpt$ is equal to $2^{2^i}$. If $\overline\cpt=0$, it swaps the values of $\cpt$ and $\overline\cpt$, and the execution ends in the output location $\ell^{\mathtt{TS},i,\cpt}_z$. Otherwise, counters values are left unchanged and the execution ends in $\ell^{\mathtt{TS},i,\cpt}_{nz}$. In any case, other counters are not modified by the execution. Note that
$\mathtt{TestSwap}_i(\cpt)$ makes use of variables in $\bigcup_{1\leq j< i} Y_i\cup\overline Y_i$.

\iflong
Formally, these machines have the following property:
We use this proposition. 
\begin{proposition}[\cite{lipton76reachability,esparza98decidability}]\label{prop:test-swap}
Let $0\leq i < n$, and $\overline\cpt\in \overline Y_i$.
For all $v,v'\in\mathbb{N}^{X'}$, for $\ell\in\{\ell^{\mathtt{TS},i,\cpt}_{\textit{z}},\ell^{\mathtt{TS},i,\cpt}_{\textit{nz}}\}$, 
we have $(\ellinit^{\mathtt{TS},i},v)\transCM^*(\ell,v')$ in ${\mathtt{TestSwap}_i(\overline\cpt)}$
if and only if :
\begin{itemize}
\item (PreTest1): for all $0 \leq j < i$, for all $\overline\cpt_j \in \overline Y_j$, $v(\overline\cpt_j) = 2^{2^j}$ and for all $\cpt_j \in  Y_j$, $v(\cpt_j) = 0$;
\item (PreTest2): $v(\overline \cpts_i) = 2^{2^i}$ and $ v( \cpts_i) = 0$;
	\item (PreTest3): $v(\cpt) + v(\overline\cpt) = 2^{2^i}$;
\item (PostTest1): For all $\cpty\notin\{\cpt,\overline\cpt\}$, $v'(\cpty) = v(\cpty)$; 
	\item (PostTest2): either $(i)$ $v(\overline\cpt) = v'(\cpt) = 0$, $v(\cpt) = v'(\overline\cpt)$ and $\ell = \ell^i_z$, or $(ii)$ $v'(\overline \cpt) = v(\overline \cpt) >0$, $v'(\cpt) = v(\cpt)$ and $\ell = \ell^{\mathtt{TS},i,\cpt}_{nz}$.
\end{itemize}
Moreover, if for all $0 \leq j \leq n$, and any counter $\cpt \in Y_j \cup \overline Y_j$, $v(\cpt)\leq 2^{2^j}$, then for all $0 \leq j \leq n$, and any counter $\cpt\in Y_j \cup \overline Y_j$, the value of $\cpt$ will never go above $2^{2^j}$ during the execution. 
\end{proposition}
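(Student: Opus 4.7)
The plan is to construct the family of machines $\mathtt{TestSwap}_i(\overline\cpt)$ by induction on $i$, following the classical Lipton construction as refined by Esparza and Nielsen, while maintaining the stated invariants (PreTest1)--(PostTest2) as an inductive hypothesis.

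For the base case $i = 0$, precondition (PreTest3) forces $v(\cpt) + v(\overline\cpt) = 2$, so in particular $v(\overline\cpt) \in \{0,1,2\}$. We implement the gadget by non-deterministically guessing whether $v(\overline\cpt) = 0$ or $v(\overline\cpt) > 0$. In the first branch we (trivially) swap and route to $\ell_z^{\mathtt{TS},0,\cpt}$; in the second branch we validate the guess by performing a blocking decrement on $\overline\cpt$ followed by a matching increment before routing to $\ell_{\textit{nz}}^{\mathtt{TS},0,\cpt}$. A wrong guess does not produce a spurious execution, since the validating blocking decrement has no witness when $\overline\cpt = 0$, and (PostTest1)--(PostTest2) are immediate from the construction.

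For the inductive step at level $i$, we need to swap $\cpt$ and $\overline\cpt$ exactly when $\overline\cpt = 0$. By (PreTest3) this amounts to transferring $2^{2^i}$ units of value from $\cpt$ to $\overline\cpt$. We implement this transfer by a doubly nested loop of total length $2^{2^i} = (2^{2^{i-1}})^2$, using $\cpty_i, \cptz_i$ and their complements $\overline\cpty_i, \overline\cptz_i$ as loop counters with the invariants $\cpty_i + \overline\cpty_i = \cptz_i + \overline\cptz_i = 2^{2^{i-1}}$, which are exactly the preconditions required by the inductively available gadgets $\mathtt{TestSwap}_{i-1}$. These inner calls are used to detect the end of a pass of the inner loop and to reset the corresponding loop counter between iterations, while $\cpts_i$ (paired with $\overline\cpts_i$) certifies that the correct total number of iterations has occurred. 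The preconditions required by each recursive invocation are restored by (PostTest1) applied to the previous call, so the inductive hypothesis is applicable throughout.

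The principal obstacle is the bookkeeping: verifying that the two nested loops transfer exactly $2^{2^i}$ units, that the non-zero branch leaves every counter unchanged, and that the inner invocations of $\mathtt{TestSwap}_{i-1}$ never violate their own preconditions. These points are handled as in the cited works, exploiting the fact that each gadget moves value only between a counter and its complement, preserving the pair-sum invariants, and leaves counters outside its scope untouched by (PostTest1). The final boundedness claim then follows by an immediate induction on $i$: no value at level $j$ can exceed $2^{2^j}$, because every increment on such a counter is either part of a matched transfer with its complement (preserving the sum $2^{2^j}$) or bounded by a loop counter that is itself bounded by $2^{2^{j-1}}$.
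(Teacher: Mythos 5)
You should first be aware that the paper does not prove this proposition at all: it is imported verbatim from Lipton and Esparza--Nielsen (\cite{lipton76reachability,esparza98decidability}) and used as a black box, so there is no in-paper argument to compare against. Your attempt to actually reconstruct the gadget and its correctness proof is therefore already a different (and more ambitious) route. In outline it is faithful to the classical construction: non-deterministic guess of zero/non-zero, validation of the non-zero branch by a blocking decrement--increment of $\overline\cpt$, and realisation of the zero branch as a transfer of $2^{2^i}=(2^{2^{i-1}})^2$ units from $\cpt$ to $\overline\cpt$ driven by two nested loops whose termination is detected by recursive calls to $\mathtt{TestSwap}_{i-1}$.

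Two points in your sketch need repair. First, an indexing slip: the loop counters controlling the level-$i$ transfer must be the level-$(i-1)$ auxiliaries $\cpty_{i-1},\cptz_{i-1}$ (with pair-sums $\cpty_{i-1}+\overline\cpty_{i-1}=\cptz_{i-1}+\overline\cptz_{i-1}=2^{2^{i-1}}$, matching the preconditions of $\mathtt{TestSwap}_{i-1}$); as written you use $\cpty_i,\cptz_i$ with sum $2^{2^{i-1}}$, which contradicts (PreTest2) at level $i$, where $\overline\cpts_i$ must equal $2^{2^i}$, and is inconsistent with the paper's convention that $\mathtt{TestSwap}_i$ only touches variables of strictly smaller levels. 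Second, and more substantively for the ``only if'' direction of the equivalence: the zero branch is not validated ``trivially''. A wrong guess of zero (i.e.\ entering that branch with $\overline\cpt>0$, hence $v(\cpt)<2^{2^i}$ by (PreTest3)) must be unable to reach $\ell_z^{\mathtt{TS},i,\cpt}$, and the only thing preventing it is that the transfer consists of $2^{2^i}$ \emph{blocking} decrements of $\cpt$, which can complete exactly when $v(\cpt)=2^{2^i}$. You state the symmetric fact for the non-zero branch but leave the zero branch unjustified; without it, (PostTest2)$(i)$ and the soundness of the zero test are unsupported. With those two corrections the sketch matches the cited construction, and the final boundedness claim does follow by the induction you indicate, since every increment at level $j$ is matched against a decrement of its complement and the pair-sum $2^{2^j}$ is preserved.
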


Note that for a valuation $v\in\mathbb{N}^{X'}$ that meets the requirements (PreTest1), (PreTest2) and (PreTest3), there is only one configuration
 $(\ell,v')$ with $\ell \in \{\ell^{\mathtt{TS},i,\cpt}_{\textit{z}},\ell^{\mathtt{TS},i,\cpt}_{\textit{nz}}\}$ such that $(\ell_{in},v) \transNbCM^* (\ell,v')$.
 \fi
 
 \smallskip
 
\noindent\emph{Procedural \NBCM~$\mathtt{Rst}_i$.} 
 We use these machines to define a family of procedural-\NBCM~($\mathtt{Rst}_i)_{0\leq i\leq n}$ that reset the counters in $Y_i\cup\overline{Y_i}$, assuming that 
 their values are less than or equal to $2^{2^i}$.
 %\nas{cas spécial pour $X$?}. %lucie : géré avec le Y_i 
 Let $0\leq i\leq n$, we let $\mathtt{Rst}_i=(\Loc^{\mathtt{R},i}, \Counters',\Delta_b^{\mathtt{R},i},\Delta^{\mathtt{R},i}_{nb}, \ell^{\mathtt{R},i}_{in}, \{\ell_{out}^{\mathtt{R},i}\})$. The machine $\mathtt{Rst}_0$ is pictured Figure~\ref{fig:rst}.
%\begin{figure}[H]
%	\includegraphics[width=14cm]{pic/rst.png}
%	\centering
%	\caption{Description of the procedural-\NRCM~family} \label{png:rst}
%\end{figure}
For all $0\leq i< n$, the machine $\mathtt{Rst}_{i+1}$ uses counters from $Y_i\cup\overline{Y_i}$ and procedural-\NBCM~$\mathtt{Testswap}_i(\overline \cptz_i)$ and 
$\mathtt{Testswap}_i(\overline \cpty_i)$ to control the number of times variables from $Y_{i+1}$ and $\overline{Y}_{i+1}$ are decremented. It is pictured Figure~\ref{fig:rst2}. Observe that since $Y_n=\Counters$, and $\overline{Y_n}=\emptyset$, the machine $\mathtt{Rst}_n$ will be a bit different from the picture: there will only be non-blocking decrements over counters from $Y_n$, that is over counters $\Counters$ from the initial \testfreeCM~$M$. If $\overline \cpty_i$, $\overline \cptz_i$ (and $\overline{\mathtt{s}}_i$) are set to $2^{2^i}$ and ${\cpty_i}$, ${\cptz_i}$ (and $\mathtt{s}_i$) are set to 0, then each time 
this procedural-\NBCM~ takes an outer loop, the variables of $Y_{i+1}\cup\overline Y_{i+1}$ are decremented (in a non-blocking fashion) $ 2^{2^i}$ times.
This is ensured by \iflong Proposition~\ref{prop:test-swap}\else the properties of $\mathtt{TestSwap}_i(\cpt)$\fi. Moreover, the location $\ell^{\mathtt{TS}, i, \cpty}_z$ will only be reached when the counter $\overline \cpty_i$ 
is set to 0, and this will happen after $2^{2^i}$ iterations of the outer loop, again thanks to \iflong Proposition~\ref{prop:test-swap}\else the properties of $\mathtt{TestSwap}_i(\cpt)$\fi. So, all in all, variables from $Y_i$ 
and $\overline{Y}_{i+1}$ will take a non-blocking decrement $2^{2^i}.2^{2^i}$ times, that is $2^{2^{i+1}}$. 

\begin{figure}[t]
	\resizebox{14cm}{!}{%
	\begin{tikzpicture}[->, >=stealth', shorten >=1pt,node distance=2cm,on grid,auto, initial text = {}] 
	\tikzstyle{initial}= [initial by arrow,initial text=,initial
	distance=.7cm]
	\node[state,initial, minimum width=0.1pt] (0) at (0,0) {$\ellinit^{\mathtt{R},0}$};
	\node[state] [right of=0, xshift=10] (1) {};
	%\node[state] [right of=0, yshift=-20] (y) {};
	\node[state] [right of=1, xshift=10] (2) {};
	\node[state] [right of=2, xshift=10] (3) {};
	\node[state] [right of=3, xshift=10] (4) {};
	\node[state] [right of=4, xshift=10] (5) {};
	\node[state] [right of=5, xshift=10] (6) {$\ell_{out}^{\mathtt{R},0}$};
%	\node [right of=4, xshift = 2] (do) {...};
%	\node[state] [right of=y, xshift=10] (2) {$q_2$};
	
%	\path[->, bend left=10] 	
%	(0) edge node[above, xshift=-10] {${init}{1}$} (x)
%	;
%	\path[->, bend right=10] 	
%	(0) edge node[below, xshift=-10] {${init}{2}{}$} (y)
%	;
	\path (4) -- node[auto=false]{\ldots} (5);
	%\draw (4) -- (5);
	\path[->] 	
	(0) edge node {$\nbdec{\mathtt{y}_0}$} (1)
	(1) edge node {$\nbdec{\mathtt{y}_0}$} (2)
	(2) edge node {$\nbdec{ \mathtt{\bar y}_0}$} (3)
	(3) edge node {$\nbdec{ \mathtt{\bar y}_0}$} (4)
%	(4) edge node {} (5)
	(5) edge node {$\nbdec{ \mathtt{\bar s}_0}$} (6)
	;

\end{tikzpicture}
}
	\caption{Description of $\mathtt{Rst_0}$} \label{fig:rst}
\end{figure}

\begin{figure}[t]
	\resizebox{14cm}{!}{%
	\begin{tikzpicture}[->, >=stealth', shorten >=1pt,node distance=2cm,on grid,auto, initial text = {}] 
	\tikzstyle{initial}= [initial by arrow,initial text=]
	\node[state,initial, minimum width=0.1pt] (0) at (0,0) {$\ellinit^{\mathtt{R},i+1}$};
	\node[state] [right of=0] (1) {$\ell^{\mathtt{R},i+1}_1$};
	%\node[state] [right of=0, yshift=-20] (y) {};
	\node[state] [right of=1] (2) {$\ell_2^{\mathtt{R},i+1}$};
	\node[state] [right of=2] (3) {$\ell_3^{\mathtt{R},i+1}$};
	\node[state] [right of=3] (4) {$\ell_4^{\mathtt{R},i+1}$};
	\node[state] [below = 2.2 of 1, xshift=40] (5) {$\ell_5^{\mathtt{R},i+1}$};
	\node[state] [right of=5, xshift=20] (6) {$\ell_6^{\mathtt{R},i+1}$};
	\node[state] [right of=6, xshift=14] (7) {$\ell_r^{\mathtt{R},i+1}$};
	\node[state] [right of=7] (linz) {$\ellinit^{\mathtt{TS},i, \cptz}$};
	\node[state] [above =   of linz] (lnzz) {$\ell_{nz}^{\mathtt{TS},i,\cptz}$};
	\node[state] [right = 2  of linz] (lzz) {$\ell_{z}^{\mathtt{TS},i,\cptz}$};\\
	\node[state] [right of=lzz] (liny) {$\ellinit^{\mathtt{TS},i,\cpty}$};
	\node[state] [above =   of liny] (lnzy) {$\ell_{nz}^{\mathtt{TS},i,\cpty}$};
	\node[state] [right = 2 of liny] (lzy) {$\ell_{z}^{\mathtt{TS},i,\cpty}$};
	\node[state] [right = 1.8  of lzy] (lout) {$\ell_{out}^{\mathtt{R},i+1}$};
	\node[draw, fill = teal, fill opacity = 0.2, text opacity = 1, fit=(lnzz) (linz) (lzz)] (Testz) {$\mathtt{TestSwap_i}(\overline \cptz_i)$};
	\node[draw, fill = teal, fill opacity = 0.2, text opacity = 1, fit=(lnzy) (liny) (lzy)] (Testy) {$\mathtt{TestSwap_i}(\overline \cpty_i)$};
	%	\node [right of=4, xshift = 2] (do) {...};
	%	\node[state] [right of=y, xshift=10] (2) {$q_2$};
	
	%	\path[->, bend left=10] 	
	%	(0) edge node[above, xshift=-10] {${init}{1}$} (x)
	%	;
	%	\path[->, bend right=10] 	
	%	(0) edge node[below, xshift=-10] {${init}{2}{}$} (y)
	%	;
	\path (6) -- node[auto=false]{\ldots} (7);
	%\draw (4) -- (5);
	\path[->] 	
	(0) edge node {$\dec{\overline\cpty_i}$} (1)
	(1) edge node {$\inc{\cpty_i}$} (2)
	(2) edge node {$\dec{\overline\cptz_i}$} (3)
	(3) edge node {$\inc{\cptz_i}$} (4)
	(4) edge [out=-95, in=50] node  [above]{$\nbdec{\mathtt{y}_{i+1}}$} (5)
	(5) edge node {$\nbdec{\mathtt{\bar y}_{i+1}}$} (6)
	(7) edge node {$\nop$} (linz)
	(lzz) edge node {$\nop$} (liny)
	(lzy) edge node {$\nop$} (lout)
	(lnzz.north) edge [out=160, in=15] node[above] {$\nop$} (2.north)
	(lnzy.north) edge [out=170, in=15] node[above] {$\nop$} (0.north)
	;
	
%	\path[->, bend right=25] 	
%	(lnzz.north) edge node[below] {$\nop$} (2.north)
%	;
%	
%	\path[->, bend right=25] 	
%	(lnzy.north) edge node[above] {$\nop$} (0.north)
%	;

	%\draw[color=orange] (A11) ..controls +(0.5,0) and +(0,0.5).. (3,4.5); % arc 1

\end{tikzpicture}
	}
	\caption{Description of $\mathtt{Rst_{i+1}}$} \label{fig:rst2}	
\end{figure}
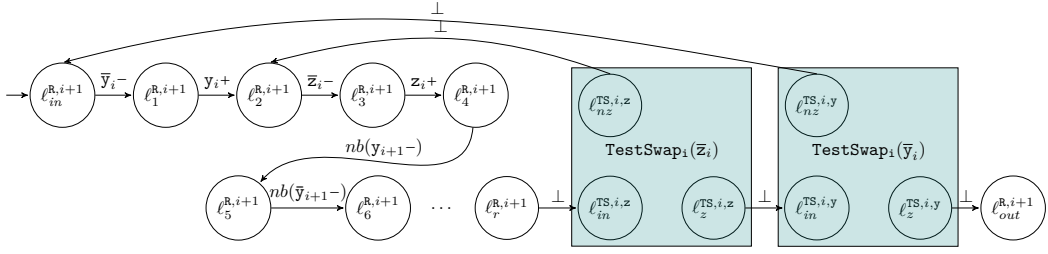

\iflong
 These properties are formalized in the following proposition.
\begin{proposition}\label{lemma:cover:expspace-hard:nbcm:rst-spec}
	%Let  $0\leq i \leq n$ and $\ellinit^{\mathtt{R},i}$ (resp. $\ell^{\mathtt{R},i}_{out}$) be the entering (resp. exiting) state of the procedural-\NBCM~$\mathtt{Rst_i}$.
	 For all $0\leq i\leq n$, for all $v\in \mathbb{N}^{\Counters'}$ such that 
	 \begin{itemize}
		\item (PreRst1): for all $0 \leq j < i$, for all $\overline \cpt \in \overline Y_j$, $v(\overline \cpt) = 2^{2^j}$ and for all $\cpt \in Y_j$, $v(\cpt) = 0$,
	\end{itemize}
	for all $v' \in \mathbb{N}^{\Counters'}$, if $(\ellinit^{\mathtt{R},i}, v) \transNbCM^* (\ell^{\mathtt{R},i}_{out},v')$ in $\mathtt{Rst_i}$ then
	%then, there for all paths $(\ellinit^{\mathtt{R},i}, v) \transNbCM^+ (\ell^{\mathtt{R},i}_{out},v')$ in $\mathtt{Rst_i}$, $v'$ satisfies the following:
	\begin{itemize}
		\item (PostRst1): for all $\cpt \in Y_i \cup \overline Y_i$, $v'(\cpt) = \max(0, v(\cpt) - 2^{2^i})$,
		\item (PostRst2): for all $\cpt \not \in Y_i \cup \overline Y_i$, $v'(\cpt) = v(\cpt)$.
		\end{itemize}
 \end{proposition}
\fi

For all $\cpt\in\Counters'$, we say that
$\cpt$ is \emph{initialized} in a valuation $v$ if $\cpt\in Y_i$ for some $0\leq i\leq n$ and $v(\cpt)=0$, or $\cpt\in \overline Y_i$ for some
$0\leq i\leq n$ and $v({\cpt})=2^{2^i}$. 
For $0\leq i\leq n$, we say that a valuation $v\in\mathbb{N}^{\Counters'}$ is \emph{$i$-bounded} if for all $\cpt \in Y_i \cup \overline Y_i$, $v(\cpt) \leq 2^{2^i}$.
%We get the immediate corollary:
%\begin{lemma}\label{lemma:cover:expspace-hard:nbcm:rst-spec-2}
%Let $0\leq i\leq n$, and $v\in\mathbb{N}^{\Counters'}$ satisfying (PreRst1) for $\mathtt{Rst}_i$. If $v$ is $i$-bounded, then 
%	the unique configuration such that $(\ell^{\mathtt{R},i}_{in},v) \transNbCM^+ (\ell^{\mathtt{R},i}_{out}, v')$ in $\mathtt{Rst}_i$ is defined $v'(\cpt) = 0$ for all 
%	$\cpt \in Y_i \cup \overline Y_i$ and $v'(\cpt) = v(\cpt)$ for all $\cpt \notin Y_i \cup \overline Y_i$.
%	
%%	Moreover, for all $0 \leq j \leq n$, and any counter $\cpt_j \in Y_j \cup \overline Y_j$, its value never goes above $2^{2^j}$ during the execution. \lug{rajouté ici pour ne pas avoir une propriété juste pour ça}
%\end{lemma}	

\iflong
The procedural-\NBCM~$\mathtt{Rst}_i$ is taking care of resetting counters in $Y_i\cup\overline{Y}_i$.

The following lemma states that no counter in $Y_j\cup\overline{Y_j}$,
for $1\leq j\leq n$,
will be increased over $2^{2^j}$ during this process, and that it reset properly counters in $Y_i \cup \overline{Y_i}$. 

%\begin{proposition}\label{lem:rst-bounded}
%	Let $0\leq i \leq n$, and let $v\in\mathbb{N}^{\Counters'}$ satisfying (PreRst1) %such that (PreRst1) holds 
%	%and (PreRst2) holds 
%	for $\mathtt{Rst_i}$. If for all $0\leq j \leq n$, $v$ is $j$-bounded, then  for all $(\ell,v')\in\Loc^{\mathtt{R},i}\times\mathbb{N}^{\Counters'}$ such that 
%	$(\ell^{\mathtt{R},i}_{in},v) \transNbCM^* (\ell, v')$ in $\mathtt{Rst}_i$, $v'$ is $j$-bounded for all $0\leq j \leq n$.
%\end{proposition}

\begin{lemma}\label{lem:rst-spec-bounded}
	Let $0\leq i \leq n$, and let $v\in\mathbb{N}^{\Counters'}$ satisfying (PreRst1) for $\mathtt{Rst_i}$. If for all $0\leq j \leq n$, $v$ is $j$-bounded, then  for all $(\ell,v')\in\Loc^{\mathtt{R},i}\times\mathbb{N}^{\Counters'}$ such that 
	$(\ell^{\mathtt{R},i}_{in},v) \transNbCM^* (\ell, v')$ in $\mathtt{Rst}_i$, $v'$ is $j$-bounded for all $0\leq j \leq n$. Furthermore, the unique configuration such that $(\ell^{\mathtt{R},i}_{in},v) \transNbCM^* (\ell^{\mathtt{R},i}_{out}, v')$ in $\mathtt{Rst}_i$ is defined by $v'(\cpt) = 0$ for all 
	$\cpt \in Y_i \cup \overline Y_i$ and $v'(\cpt) = v(\cpt)$ for all $\cpt \notin Y_i \cup \overline Y_i$.
\end{lemma}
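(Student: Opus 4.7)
The approach is to split on $i = 0$ versus $i \geq 1$ and use the specification of $\mathtt{TestSwap}_{i-1}$ (Proposition~\ref{prop:test-swap}) as the main black-box. For $i = 0$, the machine of \cref{fig:rst} is a straight line that performs exactly $2 = 2^{2^0}$ non-blocking decrements of each counter in $Y_0 \cup \overline Y_0$; non-blocking decrements never increase any counter, so $j$-boundedness is preserved trivially at every step, and by $0$-boundedness of $v$ each counter in $Y_0 \cup \overline Y_0$ hits $0$ before control reaches $\ell^{\mathtt{R},0}_{out}$. The run is deterministic (no branches), so the reached valuation is uniquely determined.

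For $i \geq 1$, I would read \cref{fig:rst2} (with $i{-}1$ in place of $i$) as two nested loops: an outer loop driven by the pair $(\cpty_{i-1}, \overline\cpty_{i-1})$ and an inner loop driven by $(\cptz_{i-1}, \overline\cptz_{i-1})$, with an innermost segment that non-blocking decrements each counter of $Y_i \cup \overline Y_i$ once per inner iteration. The precondition (PreRst1) yields $v(\overline\cpty_{i-1}) = v(\overline\cptz_{i-1}) = 2^{2^{i-1}}$, $v(\cpty_{i-1}) = v(\cptz_{i-1}) = 0$ and the (PreTest) requirements at all smaller indices. The invariant to maintain is that at each entry to a $\mathtt{TestSwap}_{i-1}$ submachine, (PreTest1)--(PreTest3) hold; by Proposition~\ref{prop:test-swap} each successful swap exactly re-establishes the initial shape of the tested pair $(\cpt, \overline\cpt)$, so the next call is enabled. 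Counting then gives exactly $2^{2^{i-1}}$ inner iterations per outer iteration and $2^{2^{i-1}}$ outer iterations before $\ell^{\mathtt{R},i}_{out}$ is reached, so every counter of $Y_i \cup \overline Y_i$ undergoes $2^{2^{i-1}} \cdot 2^{2^{i-1}} = 2^{2^i}$ non-blocking decrements; combined with the $i$-boundedness of $v$ this delivers the postcondition $v'(\cpt) = 0$ for $\cpt \in Y_i \cup \overline Y_i$ and $v'(\cpt) = v(\cpt)$ otherwise.

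For $j$-boundedness along the full run I would argue uniformly: counters in $Y_j \cup \overline Y_j$ with $j \leq i-2$ are only touched inside $\mathtt{TestSwap}_{i-1}$ and stay bounded by $2^{2^j}$ thanks to the second clause of Proposition~\ref{prop:test-swap}; $\cpty_{i-1}$ and $\cptz_{i-1}$ are incremented at most $2^{2^{i-1}}$ times between two swaps that zero them out, so they never exceed $2^{2^{i-1}}$; $\overline\cpty_{i-1}, \overline\cptz_{i-1}$ and all counters of $Y_i \cup \overline Y_i$ are only decremented; counters in $Y_j \cup \overline Y_j$ with $j > i$ are never touched. Uniqueness of $v'$ falls out because every branching location of $\mathtt{Rst}_i$ sits at the exit of a $\mathtt{TestSwap}_{i-1}$ and its outcome is fully determined by the current counter values. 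The main point that needs careful bookkeeping is the transition between consecutive tests: after a swap on $(\cptz_{i-1}, \overline\cptz_{i-1})$, one must check that the remainder of the outer iteration leaves the pair $(\cpty_{i-1}, \overline\cpty_{i-1})$ in the shape required by the next test on it, and this is exactly the place where the full strength of (PreRst1) is consumed.
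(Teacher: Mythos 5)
Your proposal is correct and follows essentially the same route as the paper: the paper proves the output specification by two nested inductions over the inner ($\cptz$) and outer ($\cpty$) loops, invoking the $\mathtt{TestSwap}$ specification as a black box exactly as you do, and it establishes $j$-boundedness of every intermediate valuation by an induction over the visits to the $\mathtt{TestSwap}$ entry locations that maintains the same sum invariants your loop invariant carries. The only loose phrase is the claim that $\overline{\cpty}_{i-1}$, $\overline{\cptz}_{i-1}$ are \emph{only} decremented \textemdash\ the swaps restore them to $2^{2^{i-1}}$ \textemdash\ but their bound follows anyway from the invariant $\cpt + \overline{\cpt} = 2^{2^{i-1}}$ (condition (PreTest3)) that you already require at each entry to a $\mathtt{TestSwap}$ submachine.
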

\else
The construction ensures that when one enters $\mathtt{Rst}_i$ with a valuation $v$ that is $i$-bounded, and in which all variables in $\bigcup_{0\leq j<i} Y_j\cup\overline Y_j$ are initialized, the location $\ell^{\mathtt{R},i}_{out}$ is reached with a valuation $v'$ such that: $v'(\cpt) = 0$ for all 
	$\cpt \in Y_i \cup \overline Y_i$ and $v'(\cpt) = v(\cpt)$ for all $\cpt \notin Y_i \cup \overline Y_i$. Moreover, if $v$ is $j$-bounded for all $0\leq j\leq n$, then 
	any valuation reached during the execution remains $j$-bounded for all $0\leq j\leq n$. 
%\begin{lemma}\label{lem:rst-spec-bounded}
%	Let $0\leq i \leq n$, and let $v\in\mathbb{N}^{\Counters'}$ such that all the variables in $\bigcup_{0\leq j <i} Y_j\cup\overline Y_j$ are initialized. 
%	If $v$ is $i$-bounded, then 
%	the unique configuration such that $(\ell^{\mathtt{R},i}_{in},v) \transNbCM^+ (\ell^{\mathtt{R},i}_{out}, v')$ in $\mathtt{Rst}_i$ is defined by $v'(\cpt) = 0$ for all 
%	$\cpt \in Y_i \cup \overline Y_i$ and $v'(\cpt) = v(\cpt)$ for all $\cpt \notin Y_i \cup \overline Y_i$.
%	Moreover, if, for all $0\leq j \leq n$, $v$ is $j$-bounded, then  for all $(\ell,v')\in\Loc^{\mathtt{R},i}\times\mathbb{N}^{\Counters'}$ such that 
%	$(\ell^{\mathtt{R},i}_{in},v) \transNbCM^* (\ell, v')$ in $\mathtt{Rst}_i$, $v'$ is $j$-bounded for all $0\leq j \leq n$. \end{lemma}
%
\fi

\smallskip

\noindent{\emph{Procedural \NBCM~$\mathtt{Inc}_i$.}
The properties we seek for $\mathtt{Rst}_i$ are ensured whenever the variables in $\bigcup_{0\leq j<i}Y_j\cup \overline Y_j$ are initialized. This is taken care of by
a family of procedural-\NBCM~introduced in~\cite{lipton76reachability,esparza98decidability}. For all $0\leq i< n$, ${\mathtt{Inc}_i}$ is a procedural-\NBCM~with initial location $\ellinit^{\mathtt{Inc}, i}$, and unique output location $\ell^{\mathtt{Inc}, i}_\textit{out}$. %Fix $n \in \mathbb{N}$, and define two families of counters $Y_i = \{y_i, z_i, s_i \}$ and $\bar Y_i = \{\bar y_i, \bar z_i, \bar s_i\}$ for any $0 \leq i < n$. We also define $Y_n$ and $\bar Y_n$ as two arbitrary set of counters. All the procedural-\NBCM~presented below have as counters' set $\Counters' := \bigcup_{0 \leq j \leq n} (Y_i \cup \bar{Y_i})$.
%In \cite{esparza98decidability,lipton76reachability}, they present a family of procedural-\NBCM~$\mathtt{(Inc_i  )_{0\leq i\leq n}}$, which does the following: given a set of counters, it increases those counters'values of $2^{2^i}$ for some $i$. 
They enjoy the following property: \iflong\else for $0\leq i<n$, when one enters $\mathtt{Inc}_i$ with a valuation $v$ in which all the variables in $\bigcup_{0\leq j<i} Y_j\cup\overline Y_j$
are initialized and $v(\cpt)=0$ for all $\cpt\in \overline Y_i$, then the location $\ell^{\mathtt{Inc}_i}_{out}$ is reached with a valuation $v'$ such that $v'(\cpt)=2^{2^i}$
for all $\cpt\in\overline{Y}_i$, and $v'(\cpt)=v(\cpt)$ for all other $\cpt\in\Counters'$. Moreover, if $v$ is $j$-bounded for all $0\leq j\leq n$, then 
	any valuation reached during the execution remains $j$-bounded for all $0\leq j\leq n$. 
\fi
\iflong
\begin{proposition}[\cite{lipton76reachability,esparza98decidability}]\label{proposition:inc}
For all $0\leq i< n$, for all $v,v'\in\mathbb{N}^{\Counters'}$, $(\ellinit^{\mathtt{Inc}, i},v) \transNbCM^* (\ell_{out}^{\mathtt{Inc}, i}, v')$ in $\mathtt{Inc}_i$ if and only if:
\begin{itemize}
	\item (PreInc1) for all $0 \leq j < i$, for all $\cpt \in \overline Y_j$, $v(\cpt) = 2^{2^j}$ and for all $\cpt \in  Y_j$, $v(\cpt) = 0$;
	\item (PreInc2) for all $\cpt \in \overline Y_i$, $v( \cpt) = 0$,
	\item (PostInc1) for all $ \cpt \in \overline Y_i$, $v'(\cpt) = 2^{2^i}$;
	\item (PostInc2) for all $\cpt \not \in Y_i $, $v'(\cpt) = v(\cpt)$.
\end{itemize}	
Moreover, if for all $0\leq j \leq n$, $v$ is $j$-bounded, then  for all $(\ell,v'')$ such that 
	$(\ell^{\mathtt{Inc},i}_{in},v) \transNbCM^* (\ell, v'')$ in $\mathtt{Inc}_i$, then $v''$ is $j$-bounded for all $0\leq j\leq n$.
\end{proposition}
\fi
%More formally, for any configuration $(\ell_{in^{\mathtt{Inc}, i}, v)$ where $\ell_{in}^{\mathtt{Inc}, i}$ is the initial state of the procedural-\NBCM~and $L_{out} = \{\ell_{out}\}$, there exists a path $(\ell_{in}^{\mathtt{Inc}, i},v) \transNbCM^* (\ell_{out}^{\mathtt{Inc}, i}, v')$ if and only if $v$ is such that:
%Note that for a valuation $v\in\mathbb{N}^{X'}$ that meets the requirements (PreTest1), (PreTest2) and (PreTest3), there is only one configuration
%$(\ell,v')$ with $\ell \in \{\ell^i_{\textit{z}},\ell^i_{\textit{nz}}\}$ such that $(\ell_{in},v) \transNbCM^* (\ell,v')$.
%Note that for a function $v$ respecting (PreInc1) and (PreInc2), there is an only $v'$ such that $(\ell_{in}^{\mathtt{Inc}, i},v) \transNbCM^+ (\ell_{out}^{\mathtt{Inc}, i},v')$.
%Furthermore, for any $0 \leq j \leq n$, and any counter $x \in Y_j \cup \bar Y_j$, its value never goes above $2^{2^j}$ during the execution. 
%Note also that, for any function in which one of the (PreInc1), (PreInc2) does not hold, there is simply no path exiting the procedural-\NBCM.\nas{utile?}

\smallskip

\noindent\emph{Procedural \NBCM~$\mathtt{RstInc}$.}
%Now that we know how to initialize all the counters, we can explain how we define the procedural-\NBCM~$\mathtt{RstInc}$, which makes use of the other ones we have already introduced. %
Finally, let $\mathtt{RstInc}$ be a procedural-\NBCM~ with initial location $\ell_a$ and output location $\ell_b$, over the set of counters $\Counters'$ and built as an 
alternation of $\mathtt{Rst}_i$  and $\mathtt{Inc}_i$ for $0\leq i<n$, finished by $\mathtt{Rst}_n$. It is depicted in \cref{fig:cover:expspace-hard:rstinc}. Thanks to the
properties of the machines $\mathtt{Rst}_i$ and $\mathtt{Inc}_i$, in the output location of each $\mathtt{Inc}_i$ machine, the counters in $\overline{Y}_i$ are set
to $2^{2^i}$, which allow counters in $Y_{i+1}\cup \overline Y_{i+1}$ to be set to 0 in the output location of $\mathtt{Rst}_{i+1}$. Hence, in location $\ell^{\mathtt{Inc},n}_{out}$, counters in $Y_n=\Counters$ are set to 0.
 %Properties of 
%all the procedural-\NBCM~that constitute it ensure that the counters are correctly reset, and that the valuation stays $i$-bounded, for all $0\leq i\leq n$. 
\begin{figure}[t]
	\resizebox{14cm}{!}{%
	\begin{tikzpicture}[->, >=stealth', shorten >=0.5pt,node distance=1.7cm,on grid,auto, initial text = {}] 
	\tikzstyle{initial}= [initial by arrow,initial text=]
	\node[state,initial, minimum width=0.1pt] (0) at (0,0) {$\ell_{a}$};
	\node[state] [right of=0] (1) {$\ellinit^{\mathtt{R},0}$};
	%\node[state] [right of=0, yshift=-20] (y) {};
	\node[state] [right of=1] (2) {$\ell_{out}^{\mathtt{R},0}$};
	\node[state] [right of=2] (3) {$\ellinit^{\mathtt{Inc},0}$};
	\node[state] [right of=3] (4) {$\ell_{out}^{\mathtt{Inc},0}$};
	\node[state] [right of=4,] (5) {$\ellinit^{\mathtt{R},1}$};
	\node[state] [right of=5] (6) {$\ell_{out}^{\mathtt{R},1}$};
	\node[state] [right of=6, xshift=14] (7) {$\ellinit^{\mathtt{Inc},n}$};
	\node[state] [right = 1.8 of 7] (8) {$\ell_{out}^{\mathtt{Inc},n}$};
	\node[state] [right = of 8] (lout) {$\ell_b$};

	\node[draw, fill = orange, fill opacity = 0.2, text opacity = 1, fit=(1) (2), text height=0.08 \columnwidth] (Rst0) {$\mathtt{Rst_0}$};
	\node[draw, fill = cyan, fill opacity = 0.2, text opacity = 1, fit=(3) (4), text height=0.09 \columnwidth] (Inc0) {$\mathtt{Inc_0}$};
	\node[draw, fill = orange, fill opacity = 0.2, text opacity = 1, fit=(5) (6), text height=0.08 \columnwidth] (Rst1) {$\mathtt{Rst_1}$};
	\node[draw, fill = orange, fill opacity = 0.2, text opacity = 1, fit=(7) (8), text height=0.09 \columnwidth] (Incn) {$\mathtt{Rst_n}$};
	
%	\path (6) -- node[auto=false]{\ldots} (7);
	\draw[->] (6) --node[fill=white,inner sep=1mm]{$\ldots$} (7);

	\path[->] 	
	(0) edge node {$\nop$} (1)

	(2) edge node {$\nop$} (3)

	(4) edge node {$\nop$} (5)
	(8) edge node {$\nop$} (lout)
	;

\end{tikzpicture}
	}
	\caption{$\mathtt{RstInc}$} \label{fig:cover:expspace-hard:rstinc}
\end{figure}

\iflong

\noindent\textbf{The reduction.}
To build the final \NRCM~$N$, we compose the procedural \NBCM~$\mathtt{RstInc}$ with the \testfreeCM~ $M$ in the way described \cref{fig:nrvass}, and
we add to every location $\ell$ of $\mathtt{RstInc}$ and $M$ a restore transition $(\ell, \emptyset,\ellinit')$ which is represented in the figure in an abstract way with dashed arrows, for readability's sake.
\fi
From \cite{lipton76reachability,esparza98decidability}, each procedural machine $\mathtt{TestSwap}_i(\cpt)$ and $\mathtt{Inc}_i$ has size at most $C \times n^2$ for some constant $C$. Hence, observe that $N$ is of size at most $B$ for some $B\in O(|M|^3)$.
One can show that  $(\ellinit, \mathbf{0}_\Counters) \transCM^*_M (\ell_f, v)$ for some $v\in\mathbb{N}^\Counters$, if and  only if $(\ellinit', \mathbf{0}_{\Counters'}) \transNbCM^*_N (\ell_f, v')$ for some $v'\in\mathbb{N}^{\Counters'}$. Using~\cref{th:expspace-hard-lipton}, we obtain:

\begin{theorem}\label{th:expspace-hard}
\CMCover[\NRCM] is \Expspace-hard.
\end{theorem}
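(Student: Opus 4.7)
My plan is to reduce \CMCover[2\textsc{Exp}-bounded \testfreeCM] to \CMCover[\NRCM] in polynomial time, which combined with \cref{th:expspace-hard-lipton} yields \Expspace-hardness. Given a 2\textsc{Exp}-bounded \testfreeCM~$M = (\Loc, \Counters, \Delta, \ellinit)$ and a target location $\ell_f$, I construct $N$ exactly as depicted in \cref{fig:nrvass}: a fresh initial location $\ellinit'$, a $\nop$-transition to the entry $\ell_a$ of $\mathtt{RstInc}$, a $\nop$-transition from its exit $\ell_b$ to $\ellinit$, the transitions of $M$ copied verbatim, and, from every location $\ell$, a restore transition $(\ell, \nop, \ellinit')$. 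The size of $N$ is $O(|M|^3)$, since each $\mathtt{TestSwap}_i$ and $\mathtt{Inc}_i$ submachine has size $O(n^2)$ with $n \in O(|M|)$, and $\mathtt{RstInc}$ chains $O(n)$ of them. The forward direction is immediate: starting from $(\ellinit', \mathbf{0}_{\Counters'})$, the preconditions of the successive submachines of $\mathtt{RstInc}$ are satisfied inductively, so by the stated properties we reach $\ell_b$ with $\Counters = Y_n$ all zero and each $\overline{Y_j}$ at $2^{2^j}$; we then simulate $M$'s covering run via the copied transitions to reach $\ell_f$ without ever taking a restore.

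\textbf{Backward direction.} Suppose $(\ellinit', \mathbf{0}_{\Counters'}) \transNbCM^*_N (\ell_f, v')$. Since $\ell_f$ lies in the $M$-part of $N$ and the only way to leave that part is a restore back to $\ellinit'$, I consider the suffix of the execution starting at the \emph{last} visit to $\ellinit'$ (or the whole execution if there is none). That suffix starts at $(\ellinit', v_0)$ for some $v_0$, traverses $\mathtt{RstInc}$ in its entirety, enters $\ellinit$, and then remains inside the $M$-part until it reaches $\ell_f$ with no further restore. The key claim is that upon entering $\ellinit$, the $\Counters$-counters are all zero; granted this, since only $M$'s copied transitions are used afterwards, projecting the suffix onto the pair (location, restriction of the valuation to $\Counters$) yields a genuine initial execution of $M$ reaching $\ell_f$.

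\textbf{The main obstacle} is establishing that claim, because a mid-$\mathtt{RstInc}$ restore may leave the auxiliary counters in arbitrary states, so $\mathtt{RstInc}$ must be robust to such starting valuations $v_0$. I would first prove by induction on execution length the global invariant that every reachable configuration $(\ell, v)$ of $N$ is $j$-bounded for all $0 \leq j \leq n$: increments of a counter in $Y_j \cup \overline{Y_j}$ with $j<n$ occur only inside $\mathtt{Inc}_j$ or inner blocks of $\mathtt{Rst}_{j'}$ submachines that preserve $j$-boundedness by the properties recalled earlier, non-blocking decrements cannot break it, and $M$'s transitions do not either since $Y_n = \Counters$ stays below $2^{2^n}$ by the 2\textsc{Exp}-boundedness of $M$. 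Under this invariant $v_0$ is $j$-bounded for every $j$, so $\mathtt{Rst}_0$, consisting only of non-blocking decrements, zeroes $Y_0 \cup \overline{Y_0}$ regardless of $v_0$; the precondition $\overline{Y_0} = 0$ of $\mathtt{Inc}_0$ now holds, so it initializes $\overline{Y_0}$ to $2^{2^0}$; and inductively each $\mathtt{Rst}_{i+1}$ operates with its preconditions on $Y_j, \overline{Y_j}$ ($j \leq i$) satisfied, while its non-blocking decrements iterated $2^{2^{i+1}}$ times suffice to zero $Y_{i+1} \cup \overline{Y_{i+1}}$ thanks to $(i{+}1)$-boundedness. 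After $\mathtt{Rst}_n$ we obtain $Y_n = \Counters$ zero at location $\ell_b$, completing the backward direction.
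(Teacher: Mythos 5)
Your construction, size analysis, forward direction, and overall backward-direction strategy coincide with the paper's: same machine $N$, same reliance on the stated properties of $\mathtt{Rst}_i$ and $\mathtt{Inc}_i$, and the same idea that the suffix after the last restore yields a genuine initial run of $M$ once one knows that entering $\ellinit$ forces all counters in $\Counters$ to be zero.

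There is, however, one step that does not go through as you have ordered it. You propose to \emph{first} establish, by a standalone induction on execution length, that every reachable configuration of $N$ is $j$-bounded for all $j$, and in that induction you justify the $M$-part by saying that $Y_n=\Counters$ stays below $2^{2^n}$ ``by the 2\textsc{Exp}-boundedness of $M$.'' But 2\textsc{Exp}-boundedness of $M$ only bounds configurations reachable in $M$ \emph{from} $(\ellinit,\mathbf{0}_\Counters)$; to invoke it for a segment of the run of $N$ inside the $M$-part you must already know that the preceding entry into $\ellinit$ carried $\Counters=\mathbf{0}$ (and the auxiliary counters initialized), which is exactly the reset-correctness fact you only derive afterwards, and only for the final suffix. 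As stated, the two claims are circular. The fix is the one the paper uses: a single (mutual) induction on the successive visits to $\ellinit$, proving simultaneously that each such visit carries the fully initialized valuation $v_0$ and that $j$-boundedness persists through the following $M$-segment (using 2\textsc{Exp}-boundedness, now legitimately, since that segment starts from $v_0$) and through the next traversal of $\mathtt{RstInc}$ after a restore. With that reordering your argument is complete; all the remaining ingredients (in particular the behaviour of $\mathtt{Rst}_0$ on arbitrary $0$-bounded valuations and the inductive chaining of $\mathtt{Inc}_i$ and $\mathtt{Rst}_{i+1}$) match the paper's proof.
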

%First, we argue that this machine is of polynomial size on $n$.
%\begin{lemma}
%	$N$ is of size at most $B$ for some $B \in O(|M|^3)$.
%\end{lemma}\nas{je n'ai pas relu ca}
%\begin{proof}
%	\cref{remark:expspace-hard:size-lipton-machines} allows us to bound size of each machine $\mathtt{Rst}_i$ by some $c \times n^2$ for some constant $c$. As a result, we can bound the machine $\mathtt{RstInc}$ by a some $d \times n^3$ for some constant $d$, which concludes the proof as $n \in O(|M|)$ and $|M'|\in O(|M|)$.
%\end{proof}
%

%\begin{figure}[t]
%	\includegraphics[width=14cm]{pic/nbrcm.png}
%	\centering
%	\caption{$N$}
%\end{figure}

%Together with \cref{thm:cover-nbcm-in-expspace}, it yields the main result of this section.
%\begin{theorem}\label{thm:cover-nrcm-expspace-complete}
%	\CMCover[\NRCM] is \Expspace-complete.
%\end{theorem}

    \section{Coverability for Rendez-Vous Protocols}\label{sec:cover-rdv-protocols}
    %\subsection{Cover problem}
In this section we prove that \Cover~and \CCover~problems are both \Expspace-complete for rendez-vous protocols. To this end, we present the following  reductions: \CCover~reduces to \CMCover[\NBCM] and \CMCover[\NRCM] reduces to \Cover.
%a reduction from \CCover~to \CMCover[\NRCM], and another one from \CMCover[\NRCM] to \Cover. 
This will prove that \CCover~is in \Expspace~and \Cover~is \Expspace-hard (from \cref{thm:cover-nbcm-in-expspace} and \cref{th:expspace-hard}). As \Cover~is an instance of \CCover, the two reductions suffice to prove \Expspace-completeness for both problems.

%The proof of this theorem is based on a two-senses reduction with the \Cover~problem for \NRCM.
%The rest of the section is devoted to the proof of this theorem. In fact we show that \Cover~over Rendez-Vous Protocols and \CMCover[\NRCM] are inter-reducible.
\subsection{From Rendez-vous Protocols to \NBCM}\label{subsec:rdv-to-nrcm}
\iftrue
\begin{figure}
	\begin{minipage}[t]{0.2\textwidth}
		
		\resizebox{1.5cm}{!}{%
			\begin{tikzpicture}[->, >=stealth', shorten >=0.5pt,node distance=2cm,on grid,auto, initial text = {}] 
	\tikzstyle{initial}= [initial by arrow,initial text=]
	\node[state,initial, minimum width=0.1pt] (0) at (0,0) {$\ellinit$};
%	\node[state] [right of=0] (1) {};
	%\node[state] [right of=0, yshift=-20] (y) {};
	%	\node[state] [right of=1] (2) {};
	%	\node[state] [right of=2] (3) {};
	%	\node[state] [right = of 3] (4) {};
	%	\node[state] [below right = of 0] (5) {$\ell_f$};
	
%	\path[->, bend left=20] 	
%	(0) edge node {$\dec{q}$} (1)
%	(1) edge node {$\inc{q'}$} (0)
%	;
	
		\path[->] 	
		(0) edge [loop above] node {$\inc{\qinit}$} (0)
		%	(0) edge [bend right =30] node [below left] {$\dec{q_f}$} (5)
		
		;
	
	%	\draw[->, dashed] (RstInc.south) |- ++(0,-1) -|(0.south); 
	%	\draw[->, dashed] (CM.south) |- ++(0,-1) -| (0.south);
	%	\node[anchor=east, xshift = -20, yshift=-40] at (CM.south) {Restore transitions};

\end{tikzpicture}
		}
		\caption{\\Incrementing $\qinit$} \label{fig:cover:translation-RDVtoMC-qinit}
	\end{minipage}
	\begin{minipage}[t]{0.3\textwidth}
		
	\resizebox{2.8cm}{!}{%
	\begin{tikzpicture}[->, >=stealth', shorten >=0.5pt,node distance=2cm,on grid,auto, initial text = {}] 
	\tikzstyle{initial}= [initial by arrow,initial text=]
	\node[state,initial, minimum width=0.1pt] (0) at (0,0) {$\ellinit$};
	\node[state] [right of=0] (1) {};
	%\node[state] [right of=0, yshift=-20] (y) {};
%	\node[state] [right of=1] (2) {};
%	\node[state] [right of=2] (3) {};
%	\node[state] [right = of 3] (4) {};
	%	\node[state] [below right = of 0] (5) {$\ell_f$};
	
	\path[->, bend left=20] 	
	(0) edge node {$\dec{q}$} (1)
	(1) edge node {$\inc{q'}$} (0)
	;
	
%	\path[->] 	
%	(0) edge [loop above] node {$\inc{\qinit}$} (0)
%	%	(0) edge [bend right =30] node [below left] {$\dec{q_f}$} (5)
%	
%	;
	
	%	\draw[->, dashed] (RstInc.south) |- ++(0,-1) -|(0.south); 
	%	\draw[->, dashed] (CM.south) |- ++(0,-1) -| (0.south);
	%	\node[anchor=east, xshift = -20, yshift=-40] at (CM.south) {Restore transitions};

\end{tikzpicture}
	}
	\caption{Transitions for\\$(q, \tau, q') \in T$} \label{fig:cover:translation-RDVtoMC-tau}
	\end{minipage}
\begin{minipage}[t]{0.5\textwidth}
	\resizebox{6cm}{!}{%
		\begin{tikzpicture}[->, >=stealth', shorten >=0.5pt,node distance=2cm,on grid,auto, initial text = {}] 
	\tikzstyle{initial}= [initial by arrow,initial text=]
	\node[state,initial, minimum width=0.1pt] (0) at (0,0) {$\ellinit$};
	\node[state] [right of=0] (1) {};
	%\node[state] [right of=0, yshift=-20] (y) {};
	\node[state] [right of=1] (2) {};
	\node[state] [right of=2] (3) {};
	%\node[state] [right = of 3] (4) {};
%	\node[state] [below right = of 0] (5) {$\ell_f$};
	
	\path[->, bend left=20] 	
	(0) edge node {$\dec{q}$} (1)
	(1) edge node {$\dec{p}$} (2)
	(2) edge node {$\inc{q'}$} (3)
	(3) edge node {$\inc{p'}$} (0)
	
	%(4) edge node {} (0)
	;
	
%	\path[->] 	
%	(0) edge [loop above] node {$\inc{\qinit}$} (0)
%%	(0) edge [bend right =30] node [below left] {$\dec{q_f}$} (5)
%	
%	;
%	
%	\draw[->, dashed] (RstInc.south) |- ++(0,-1) -|(0.south); 
%	\draw[->, dashed] (CM.south) |- ++(0,-1) -| (0.south);
%	\node[anchor=east, xshift = -20, yshift=-40] at (CM.south) {Restore transitions};

\end{tikzpicture}
	}	
	\caption{Transitions for a rendez-vous\\ $(q, !a, q')$, $(p, ?a, p') \in T$} \label{fig:cover:translation-RDVtoMC-rdv}
	
\end{minipage}

\vspace*{0.2cm}
\begin{minipage}[t]{0.5\textwidth}
	
	\resizebox{7cm}{!}{%
		\begin{tikzpicture}[->, >=stealth', shorten >=0.5pt,node distance=2cm,on grid,auto, initial text = {}] 
	\tikzstyle{initial}= [initial by arrow,initial text=]
	\node[state,initial, minimum width=0.1pt] (0) at (0,0) {$\ellinit$};
	\node[state] [right of=0] (1) {};
	%\node[state] [right of=0, yshift=-20] (y) {};
	\node[state] [right of=1] (2) {};
	\node[state] [right of=2] (3) {};
	\node[state] [right = of 3] (4) {};
%	\node[state] [below right = of 0] (5) {$\ell_f$};
	
	\path[->, bend left=20] 	
	(0) edge node {$\dec{q}$} (1)
	(1) edge node {$\nbdec{p_1}$} (2)
	(3) edge node {$\nbdec{p_k}$} (4)
	;
	\path[->, bend left = 18] 
	(4) edge node {$\inc{q'}$} (0)
	;
	
%	\path[->] 	
%	(0) edge [loop above] node {$\inc{\qinit}$} (0)
%%	(0) edge [bend right =30] node [below left] {$\dec{q_f}$} (5)
%	
%	;
%	
	\path (2) -- node[auto=false]{\ldots} (3);
	
	%	\draw[->, dashed] (RstInc.south) |- ++(0,-1) -|(0.south); 
	%	\draw[->, dashed] (CM.south) |- ++(0,-1) -| (0.south);
	%	\node[anchor=east, xshift = -20, yshift=-40] at (CM.south) {Restore transitions};

\end{tikzpicture}
	}
	\caption{Transitions for a non-blocking \\sending $(q, !a, q') \in T$ and $R(a) = \{p_1 \dots p_k\}$} \label{fig:cover:translation-RDVtoMC-nb}
\end{minipage}
\begin{minipage}[t]{0.45\textwidth}
	
	\resizebox{7cm}{!}{%
		\begin{tikzpicture}[->, >=stealth', shorten >=0.5pt,node distance=2cm,on grid,auto, initial text = {}] 
	\tikzstyle{initial}= [initial by arrow,initial text=]
	\node[state,initial, minimum width=0.1pt] (0) at (0,0) {$\ellinit$};
	\node[state] [right of=0] (1) {};
	%\node[state] [right of=0, yshift=-20] (y) {};
	\node[state] [right of=1] (2) {};
	\node[state] [right of=2] (3) {};
	\node[state] [right = of 3] (4) {$\ell_f$};
%	\node[state] [ right = of 4] (5) {$\ell_f$};
	
	\path[->, bend left=20] 	
	(0) edge node {$\dec{\mathbf{q}_1}$} (1)
	(1) edge node {$\dec{\mathbf{q}_2}$} (2)
	(3) edge node {$\dec{\mathbf{q}_s}$} (4)

	;
	
%	\path[->] 	
%	(0) edge [loop above] node {$\inc{\qinit}$} (0)
%	%	(0) edge [bend right =30] node [below left] {$\dec{q_f}$} (5)
%	
%	;
%	
	\path (2) -- node[auto=false]{\ldots} (3);
	
	%	\draw[->, dashed] (RstInc.south) |- ++(0,-1) -|(0.south); 
	%	\draw[->, dashed] (CM.south) |- ++(0,-1) -| (0.south);
	%	\node[anchor=east, xshift = -20, yshift=-40] at (CM.south) {Restore transitions};

\end{tikzpicture}
	}	\caption{Verification for the coverability of $C_F = \mset{\mathbf{q}_1} + \mset{\mathbf{q}_2} + \dots + \mset{\mathbf{q}_s}$} \label{fig:cover:translation-RDVtoMC-verif}
\end{minipage}

\end{figure}
\else
\begin{figure}
\resizebox{7cm}{!}{
	\begin{tikzpicture}[->, >=stealth', shorten >=0.5pt,node distance=2cm,on grid,auto, initial text = {}] 
	\tikzstyle{initial}= [initial by arrow,initial text=]
	\node[state,initial, minimum width=0.1pt] (0) at (0,0) {$\ellinit$};
	\node[state] [right of=0] (1) {};
	%\node[state] [right of=0, yshift=-20] (y) {};
	\node[state] [right of=1] (2) {};
	\node[state] [right of=2] (3) {};
	\node[state] [right = of 3] (4) {};
%	\node[state] [below right = of 0] (5) {$\ell_f$};
	
	\path[->, bend left=20] 	
	(0) edge node {$\dec{q}$} (1)
	(1) edge node {$\nbdec{p_1}$} (2)
	(3) edge node {$\nbdec{p_k}$} (4)
	;
	\path[->, bend left = 18] 
	(4) edge node {$\inc{q'}$} (0)
	;
	
%	\path[->] 	
%	(0) edge [loop above] node {$\inc{\qinit}$} (0)
%%	(0) edge [bend right =30] node [below left] {$\dec{q_f}$} (5)
%	
%	;
%	
	\path (2) -- node[auto=false]{\ldots} (3);
	
	%	\draw[->, dashed] (RstInc.south) |- ++(0,-1) -|(0.south); 
	%	\draw[->, dashed] (CM.south) |- ++(0,-1) -| (0.south);
	%	\node[anchor=east, xshift = -20, yshift=-40] at (CM.south) {Restore transitions};

\end{tikzpicture}
	}
		\caption{Transitions for a non-blocking sending $(q, !a, q') \in T$ and $R(a) = \{p_1 \dots p_k\}$ in $M$} \label{fig:cover:translation-RDVtoMC-nb}
	\end{figure}
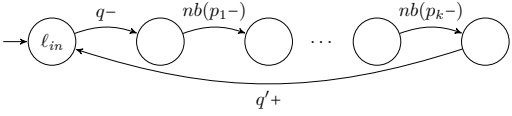
	\fi
%
%In fact, we consider \NBCM~rather than \NRCM, recall that in \cref{subsec:in-expspace}, we not only proved that \CMCover[\NRCM]~is in \Expspace~but that this is also the case for \CMCover[\NBCM]. 
%In this subsection, we present a reduction from \CCover~in rendez-vous protocols to \CMCover[\NBCM]. 
%
Let $\PP = (Q, \Sigma, \qinit, q_f, T)$ a rendez-vous protocol and $C_F$ a configuration of \PP~to be covered. We shall also decompose $C_F$ as a sum of multisets $\mset{\mathbf{q}_1} + \mset{\mathbf{q}_2} + \dots + \mset{\mathbf{q}_s}$. Observe that there might be $\mathbf{q}_i=\mathbf{q}_j$ for $i\neq j$. 
We build the \NBCM~$M = (\Loc, \Counters, \Delta_b, \Delta_{nb}, \ellinit)$ \iflong described  in \cref{fig:cover:translation-RDVtoMC-rdv,fig:cover:translation-RDVtoMC-nb,fig:cover:translation-RDVtoMC-tau,fig:cover:translation-RDVtoMC-verif}. Here, \else with \fi $\Counters=Q$. A configuration $C$ of $\PP$ is meant to be represented in $M$
by $(\ellinit,v)$, with $v(q)=C(q)$ for all $q\in Q$. The only meaningful location of $M$ is then $\ellinit$. The other ones are here to ensure correct updates of the 
counters when simulating a transition. 
We let $\Loc = \{\ellinit\}\cup \{\ell_{(t,t')}^1, \ell_{(t,t')}^2,\ell_{(t,t')}^3\mid t=(q,!a,q'), t'=(p,?a,p')\in T\}\cup\{\ell_t, \ell_{t,p_1}^a,\cdots,\ell_{t,p_k}^a\mid
t=(q,!a,q')\in T, \Read{a}=\{p_1,\dots, p_k\}\}\cup\{\ell_q\mid t=(q,\tau,q')\in T\} \cup \{\ell_1 \dots \ell_{s}\}$, with final location $\ell_f = \ell_s$, where $\Read{m}$ for a message $m \in \Sigma$ has been defined in \cref{section:definition-rdv}. The sets $\Delta_b$ and $\Delta_{nb}$ are shown \cref{fig:cover:translation-RDVtoMC-qinit,fig:cover:translation-RDVtoMC-rdv,fig:cover:translation-RDVtoMC-nb,fig:cover:translation-RDVtoMC-tau,fig:cover:translation-RDVtoMC-verif}. Transitions pictured \cref{fig:cover:translation-RDVtoMC-qinit,fig:cover:translation-RDVtoMC-rdv,fig:cover:translation-RDVtoMC-tau,fig:cover:translation-RDVtoMC-verif} show how to
simulate a rendez-vous protocol with the classical rendez-vous mechanism. %
The non-blocking
rendez-vous are handled by the transitions pictured~\cref{fig:cover:translation-RDVtoMC-nb}\iflong (where the
only non-blocking transitions of the \NBCM~occur): to simulate the occurrence of $(q,!a,q')$, the \NBCM~M decrements the value of $q$ by
a transition of the form $(3)$. It then takes a sequence of non-blocking decrements for each state in $\Read{a}$. The last transition of the simulation of a non-blocking rendez-vous is to increment the counter $q'$ by a transition of the form (3).\else. \fi If the \NBCM~$M$ faithfully simulates $\PP$, then this loop of non-blocking
decrements is taken when the values of the counters in $\Read{a}$ are equal to 0, and the configuration reached still corresponds to a configuration in $\PP$. 
However, it could be that this loop is taken in $M$ while some counters in $\Read{a}$ are strictly positive. In this case, a blocking rendez-vous has to be taken in $\PP$, e.g.\ $(q,!a,q')$ and $(p,?a,p')$ if the counter $p$ in $M$ is strictly positive. Therefore, the value of the reached configuration $(\ellinit, v)$ and the corresponding configuration $C$ in $\PP$
will be different: first, $C(p')>v(q')$, since the process in $p$ has moved in the state $p'$ in $\PP$ when there has been no increment of $p'$ in $M$. Furthermore, 
all other non-blocking decrements of counters in $\Read{a}$ in $M$ may have effectively decremented the counters, when in $\PP$ no other process has left a state of $\Read{a}$. However, this ensures that $C\geq v$. The reduction then ensures that if $(\ellinit, v)$ is reachable in $M$, then a configuration $C\geq v$ is reachable in $\PP$. \color{black} Then, if it is possible to reach a configuration $(\ellinit, v)$ in $M$ whose counters are high enough to cover $\ell_F$, then 
the corresponding initial execution in $\PP$ will reach a configuration $C\geq v$, which \color{teal} hence \color{black} covers $C_F$. 

\begin{theorem}\label{cor:ccover-expspace}
\CCover~over rendez-vous protocols is in \Expspace.
\end{theorem}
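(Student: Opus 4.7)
The plan is to verify that the construction given just above the statement is a correct polynomial reduction from \CCover\ to \CMCover[\NBCM], and to invoke \cref{thm:cover-nbcm-in-expspace}. Given a protocol $\PP=(Q,\Sigma,\qinit,q_f,T)$ and a target multiset $C_F$, the machine $M$ has $|Q|$ counters and a number of locations polynomial in $|T|+||C_F||$: three locations per ordered pair of matching send/receive transitions, a chain of $|\Read{a}|+2$ locations per non-blocking sending transition $(q,!a,q')$, two locations per internal transition, one self-loop on $\ellinit$, and a chain of $||C_F||$ locations for the verification gadget. Hence $|M|$ is polynomial in $|\PP|+||C_F||$ and, by \cref{thm:cover-nbcm-in-expspace}, coverability of $\ell_f$ in $M$ is decidable in \Expspace, which transfers to \CCover\ once the reduction is shown correct.

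For the completeness direction, starting from an initial execution $C_0 \arrowP{\PP}^\ast C'$ of $\PP$ with $C' \geq C_F$, I would first fire the self-loop $\inc{\qinit}$ of \cref{fig:cover:translation-RDVtoMC-qinit} exactly $||C_0||$ times, so that the valuation at $\ellinit$ coincides with $C_0$ viewed as an element of $\nat^Q$. I would then replay each $\PP$-step by the corresponding gadget, each of which preserves the equality between the $M$-valuation at $\ellinit$ and the current $\PP$-configuration; for a non-blocking request step, the enabling condition in $\PP$ forces every counter in $\Read{a}$ to be $0$, so the non-blocking decrements in \cref{fig:cover:translation-RDVtoMC-nb} are vacuous and the simulation remains exact. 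After the full simulation, the valuation at $\ellinit$ equals $C' \geq C_F = \sum_{i=1}^s \mset{\mathbf{q}_i}$, so the verification chain of \cref{fig:cover:translation-RDVtoMC-verif} can fire and reach $\ell_f$.

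The main obstacle is the soundness direction, since the non-blocking gadget in $M$ may decrement counters corresponding to non-empty receiver states, which is forbidden in $\PP$. I would establish by induction on the number of completed gadgets the invariant: whenever $(\ellinit,\mathbf{0}_\Counters) \transNbCM^\ast_M (\ellinit,v)$, there is an initial execution $C_0 \arrowP{\PP}^\ast C$ of $\PP$ with $C \geq v$. The internal, classical rendez-vous, and $\inc{\qinit}$ cases are immediate. For the non-blocking gadget of $(q,!a,q')$ with $\Read{a}=\{p_1,\dots,p_k\}$, which produces $v'=v-\mset{q}+\mset{q'}$ together with $v'(p_i)=\max(0,v(p_i)-1)$, one distinguishes two cases. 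If $v(p_i)>0$ for some $i$, then by induction $C(p_i)\geq v(p_i)>0$, so $\PP$ can fire the \emph{blocking} rendez-vous with $(p_i,?a,p_i')$, yielding $C'=C-\mset{q,p_i}+\mset{q',p_i'}$; a coordinate-wise check then gives $C'\geq v'$, the extra $+1$ on $p_i'$ in $\PP$ absorbing the fact that $v'(p_i')=v(p_i')$ was not incremented in $M$, and the inequality on the other coordinates following from $C\geq v$. If $v(p_i)=0$ for every $i$, then either $C(p_i)=0$ for every $i$ (fire a non-blocking request in $\PP$) or some $C(p_j)>0$ (fire the blocking rendez-vous with $p_j$); in both subcases $C'\geq v'$ follows from $v'(p_i)=0$ and $C\geq v$.

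Applying the invariant at the last visit to $\ellinit$ before $M$ enters the verification chain yields $v \geq \sum_{i=1}^s \mset{\mathbf{q}_i} = C_F$, hence $C \geq v \geq C_F$, so $C_F$ is coverable in $\PP$. Combined with \cref{thm:cover-nbcm-in-expspace} this gives the claimed \Expspace\ upper bound.
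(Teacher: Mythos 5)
Your overall route is the same as the paper's: the same reduction to $\CMCover[\NBCM]$, the same exact replay for completeness, and the same invariant ``$(\ellinit,v)$ reachable in $M$ implies some $C\geq v$ reachable in $\PP$'' for soundness, with the non-blocking gadget handled by the same case split (forced blocking rendez-vous when some counter in $\Read{a}$ is positive). The treatment of that gadget is right, and it is indeed the conceptually interesting case.

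There is, however, a genuine gap: you dismiss the $\inc{\qinit}$ case as ``immediate,'' and it is not --- it is the step the paper spends the most care on. When $M$ fires the self-loop at $\ellinit$, the valuation becomes $v+\mset{\qinit}$, but a protocol execution has a fixed number of processes, so you cannot extend the witnessing execution $C_0\arrowP{}^\ast C$; you must restart from a larger initial configuration $C_0'$. The naive fix of adding one extra process to $C_0$ fails because the non-blocking semantics is not monotone in the way you need: if $\qinit\in\Read{m}$ for some message $m$ that was sent as a non-blocking request $\nb{m}$ in the original run, the surplus process sitting in $\qinit$ makes that step no longer fireable as a non-blocking request, and replaying it as a blocking rendez-vous consumes the surplus. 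The general monotonicity lemma only preserves $C'_j\geq C_j$, not the surplus at $\qinit$. The paper's fix is quantitative: take $C_0'(\qinit)=C_0(\qinit)+N+1$ where $N$ is the length of the witnessing execution, and prove by induction the strengthened invariant $C'_j\geq C_j$ \emph{and} $C'_j(\qinit)>C_j(\qinit)+N-j$, since each step can absorb at most one unit of the surplus. Without this (or an equivalent argument), your induction does not go through, so you should spell out this case rather than wave it off.
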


\subsection{From \NRCM~to Rendez-Vous Protocols}\label{subsec:nrcm-to-rdv}
%In this part, we prove that, from a \NRCM~and a final location, one can build a rendez-vous protocol such that the protocol is a positive instance for the \Cover~problem iff the \NRCM~is one aswell. As the \Cover~problem is \Expspace-hard for \NRCM, this shall prove that the \Cover~problem is \Expspace-hard for rendez-vous protocols aswell.
%
%In this section we present a reduction from \CMCover[\NRCM] to \Cover (in rendez-vous protocols). The reduction mainly relies on the mechanism which can isolate a process on a set of states as we have seen on the protocol of \cref{fig-rdv2}~for state $\qfive$. 
%The protocol shall ensure that for a set of states and for every reachable configuration, there is at most one process in this set of states. This process will simulate the machine execution, as some other (unbounded) set of processes shall simulate counters values. We now present the reduction.
%
The reduction from \CMCover[\NRCM]~to \Cover~in rendez-vous protocols mainly relies on the mechanism that can ensure that at most one process evolves in some given set of states, as explained in~\cref{example-verif-pbs}. This will allow to somehow select a ``leader'' among the processes that will simulate the behaviour
of the \NRCM~whereas other processes will simulate the values of the counters. 
Let  $M = (\Loc, \Counters, \Delta_b, \Delta_{nb}, \ellinit)$ a \NRCM~and $\ell_f \in \Loc$ a final target location.
We build the rendez-vous protocol $\PP$ pictured in \cref{fig:cover:protocol-MCtoRDV}, where $\PP(M)$ is the part that will simulate the \NRCM~$M$. The locations
$\{1_\cpt\mid \cpt\in\Counters\}$ will allow to encode the values of the different counters during the execution: for a configuration $C$, $C(1_\cpt)$ will represent
the value of the counter $\cpt$.
We give then $\PP(M)=(Q_M,\Sigma_M,\ellinit,\ell_f,T_M)$ with
\iflong
\begin{align*}
Q_M &= \Loc\cup\{\ell_\delta\mid \delta\in\Delta_b\}\\
\Sigma_M&= \{\textrm{inc}_\cpt,\overline{\textrm{inc}}_\cpt, \textrm{dec}_\cpt, \overline{\textrm{dec}}_\cpt, \textrm{nbdec}_\cpt\mid \cpt\in\Counters\}\\
T_M&=\{(\ell_i,!\textrm{inc}_\cpt, \ell_\delta), (\ell_\delta, ?\overline{\textrm{inc}}_\cpt, \ell_j)\mid \delta=(\ell_i, \inc{\cpt}, \ell_j)\in\Delta_b\}\\
	 &\cup \{(\ell_i, !\textrm{dec}_\cpt, \ell_\delta), (\ell_\delta, ?\overline{\textrm{dec}}_\cpt, \ell_j)\mid \delta = (\ell_i, \dec{\cpt}, \ell_j)\in\Delta_b\}\\
	 &\cup\{(\ell_i, !\textrm{nbdec}_\cpt, \ell_j)\mid (\ell_i, \nbdec{x},\ell_j)\in\Delta_{nb}\}\\
	 &\cup\{(\ell_i, \tau, \ell_j)\mid (\ell_i, \nop,\ell_j)\in\Delta_{b}, \ell_j \ne \ellinit\}
\end{align*}
\else
$Q_M=\Loc\cup\{\ell_\delta\mid \delta\in\Delta_b\}$, $\Sigma_M= \{\textrm{inc}_\cpt,\overline{\textrm{inc}}_\cpt, \textrm{dec}_\cpt, \overline{\textrm{dec}}_\cpt, \textrm{nbdec}_\cpt\mid \cpt\in\Counters\}$, and $T_M=\{(\ell_i,!\textrm{inc}_\cpt, \ell_\delta),(\ell_\delta, ?\overline{\textrm{inc}}_\cpt, \ell_j)\mid \delta=(\ell_i, \inc{\cpt}, \ell_j)\in\Delta_b\}\cup \{(\ell_i, !\textrm{dec}_\cpt, \ell_\delta), (\ell_\delta, ?\overline{\textrm{dec}}_\cpt, \ell_j)\mid$ $ \delta = (\ell_i, \dec{\cpt}, \ell_j) \linebreak[0]\in\Delta_b\}\cup\{(\ell_i, !\textrm{nbdec}_\cpt, \ell_j)\mid (\ell_i, \nbdec{\cpt},\ell_j)\in\Delta_{nb}\}\cup\{(\ell_i, \tau, \ell_j)\mid (\ell_i, \nop,\ell_j)\in\Delta_{b}\}$. Here, the reception of a message $\overline{\textrm{inc}}_\cpt$
(respectively $\overline{\textrm{dec}}_\cpt$) works as an acknowledgement, ensuring that a process has indeed received the message $\textrm{inc}_\cpt$ (respectively $\textrm{dec}_\cpt$), and that the corresponding counter has been incremented (resp.\ decremented). For 
non-blocking decrement, obviously no acknowledgement is required. 
%given by the smallest set that contains :
%\begin{itemize}
%\item for all $\delta=(\ell_i, \inc{\cpt}, \ell_j)\in\Delta_b$, $(\ell_i,!\textrm{inc}_\cpt, \ell_\delta)$ and $(\ell_\delta, ?\overline{\textrm{inc}}_\cpt, \ell_j)\in T_M$, with the 
%reception of a message $?\overline{\textrm{inc}}_\cpt$ working as an acknowledgement, ensuring that a process has indeed received the message $!\textrm{inc}_\cpt$, and that the counter $\cpt$ has been incremented.
%\item for all $\delta = (\ell_i, \dec{\cpt}, \ell_j)\in\Delta_b$, $(\ell_i, !\textrm{dec}_\cpt, \ell_\delta)$ and $(\ell_\delta, ?\overline{\textrm{dec}}_\cpt, \ell_j)\in T_M$,
%with, again, the 
%reception of a message $?\overline{\textrm{dec}}_\cpt$ ensuring that a process has indeed received the message $!\textrm{dec}_\cpt$, and that the counter $\cpt$ has been decremented.
%\item for all $(\ell_i, \nbdec{\cpt},\ell_j)\in\Delta_{nb}$, $(\ell_i, !\textrm{nbdec}_\cpt, \ell_j)\in T_M$. Here, no acknowledgement is needed since we implement a 
%non-blocking decrement.
%\item for all $(\ell_i, \nop,\ell_j)\in\Delta_{b}$, $(\ell_i, \tau, \ell_j)\in T_M$.
%\end{itemize}
\fi 
\iflong We define $\PP=(Q,\Sigma,T,\qinit, \ell_f)$ as follows.
\begin{align*}
Q &= Q_M\cup \{1_\cpt, q_\cpt, q'_\cpt\mid \cpt\in\Counters\}\cup \{\qinit, q, q_\bot\}\\
\Sigma &= \Sigma_M\cup \{L, R\}\\
T&=T_M\cup \{(\qinit, !L, q), (q, !R, \ellinit), (q, ?L, q_\bot)\} \cup \{(\ell, ?L, q_\bot)\mid \ell\in Q_M\}\\
 & \cup \{(\qinit, ?\textrm{inc}_\cpt, q_\cpt), (q_\cpt, !\overline{\textrm{inc}}_\cpt, 1_\cpt), (1_\cpt, ?\textrm{dec}_\cpt, q'_\cpt), (q'_\cpt, !\overline{\textrm{dec}}_\cpt, \qinit), (1\cpt, ?\textrm{nbdec}_\cpt, \qinit)\mid
 \cpt\in \Counters\}\\
& \{(q_\cpt, ?R, \qinit), (q'_\cpt, ?R, \qinit)\mid\cpt\in\Counters\}
\end{align*}
\fi
\iflong\else The protocol $\PP=(Q,\Sigma,\qinit, \ell_f,T)$ is then defined with $Q= Q_M\cup \{1_\cpt, q_\cpt, q'_\cpt\mid \cpt\in\Counters\}\cup \{\qinit, q, q_\bot\}$,
$\Sigma=\Sigma_M\cup \{L, R\}$ and $T$ is the set of transitions $T_M$ along with the transitions pictured in~\cref{fig:cover:protocol-MCtoRDV}. Note 
that there is a transition $(\ell,?L,q_\bot)$ for all $\ell\in Q_M$.\fi
%\begin{figure}[t]
%	\includegraphics[width=12cm]{pic/cover-protocol-MC-RDV.png}
%	\centering
%	\caption{$\PP$}
%\end{figure}
\begin{figure}[t]
	
	\resizebox{!}{2.8cm}{%
		\begin{tikzpicture}[->, >=stealth', shorten >=0.5pt,node distance=2cm,on grid,auto, initial text = {}] 
	\tikzstyle{initial}= [initial by arrow,initial text=]
	\node[state,initial, minimum width=0.1pt] (0) at (0,0) {$\qinit$};
	\node[state] [below left of=0, xshift = -10, yshift = 10] (1) {$q_\cpt$};
	%\node[state] [right of=0, yshift=-20] (y) {};
	\node[state] [below right of=0, xshift = 10, yshift = 10] (2) {$q'_\cpt$};
	\node[state] [below =2 of 0] (3) {$1_\cpt$};
	\node[state] [right =4 of 0] (q) {$q$};
	\node[state] [right = of q] (li) {$\ellinit$};
	\node[state] [right = of li] (lf) {$\ell_f$};
	\node[state] [below = 1.5 of li] (d) {$q_\bot$};
	
	\node[draw, fill = teal, fill opacity = 0.1, text opacity = 1, fit=(li) (lf), text height=0.06 \columnwidth] (P) {$\PP(M)$};
	
	%	\path (6) -- node[auto=false]{\ldots} (7);

	\path[->] 	
	(0) edge node {$!L$} (q)
	(q) edge node {$!R$} (li)
	(q) edge [bend right] node {$?L$} (d)
	(P) edge [bend left] node {$?L$} (d)
	
	(0) edge [bend right= 13] node [above left ] {$?\textrm{inc}_\cpt$} (1)
	(1) edge [bend right= 13] node [below  ] {$?R$} (0)
	(2) edge [bend left= 13] node [below  ] {$?R$} (0)
	(2) edge [bend right = 13] node [right  ] {$!\overline{\textrm{dec}}_\cpt$} (0)
	(1) edge [bend right= 13] node [left , yshift = -5] {$!\overline{\textrm{inc}}_\cpt$} (3)
	(3) edge [bend right= 13] node [right, yshift = -5 ] {$?\textrm{dec}_\cpt$} (2)
%	
%	(2) edge node {} (3)
%	
	;
		\draw[->] ( [xshift = 2, yshift = -6]3.west) -| ++(-2.5,0) |- ++ (0, +2.9)-|(0.north); 
%	\draw[->, dashed] (CM.south) |- ++(0,-1) -| (0.south);
	\node[anchor=east, xshift = -80, yshift=70] at (3.south) {$?\textrm{nbdec}_\cpt$};

\end{tikzpicture}
	}
	\caption{The rendez-vous protocol $\PP$ built from the \NRCM~$M$. Note that there is one gadget with states $\{q_\cpt$,
	$q'_\cpt$, $1_\cpt\}$ for each counter $\cpt\in\Counters$.}
 \label{fig:cover:protocol-MCtoRDV}
 \end{figure}

With two non-blocking transitions on $L$ and $R$ at the beginning, protocol $\PP$ can faithfully simulate the \NRCM~$M$ without further ado, provided that the initial configuration contains enough processes to simulate all the counters values during the execution: after having sent a process in state $\ellinit$, any transition of $M$ can be simulated in $\PP$\color{black}. Conversely, an initial execution of $\PP$ can send multiple processes into
the $\mathcal{P}(M)$ zone, which can mess up the simulation. However, each new process entering $\mathcal{P}(M)$ will send the message $L$, which will send the process already in
$\{q\}\cup Q_M$ in the deadlock state $q_\bot$, and send the message $R$, which will be received by any process in $\{q_\cpt,q'_\cpt\mid \cpt\in\Counters\}$. Moreover, the construction of the protocol ensures that there can only be one process in the set of states 
$\{q_\cpt,q'_\cpt\mid \cpt\in\Counters\}$. Then, if we have reached a configuration simulating the configuration $(\ell, v)$ of $M$, sending a new process in the $\mathcal{P}(M)$ zone will lead to a configuration $(\ellinit, v)$, and hence simply mimicks a restore transition of $M$. \color{black} So every initial execution of $\PP$ corresponds to an initial execution of $M$.

\begin{theorem}\label{th:ccover-expspace-complete}
	\Cover~\ifccover{and \CCover} over rendez-vous protocols are \Expspace~complete.
\end{theorem}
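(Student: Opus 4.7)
The plan is to combine the pieces already developed. For the upper bound, \CCover~is in \Expspace~by~\cref{cor:ccover-expspace}, which reduced it to \CMCover[\NBCM] and invoked~\cref{thm:cover-nbcm-in-expspace}. Since \Cover~is the special case of \CCover~obtained by taking $C=\mset{q_f}$, \Cover~is in \Expspace~as well. For the lower bound it suffices to handle \Cover, since any \Cover~instance is a \CCover~instance. The lower bound for \Cover~will be established by proving correctness of the reduction from \CMCover[\NRCM] to \Cover~given by the construction of $\PP$ in~\cref{fig:cover:protocol-MCtoRDV}, and then invoking~\cref{th:expspace-hard}. The construction is clearly polynomial in $|M|$, so the only work left is the simulation lemma.

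The first direction to verify is that every initial execution of $M$ reaching a configuration $(\ell_f,v)$ lifts to an initial execution of $\PP$ covering $\ell_f$. I would pick an initial configuration with $n$ processes, where $n$ is large enough to cover the maximum total value $\|v'\|$ of any counter valuation appearing along the chosen execution of $M$, plus one for the leader. I would let one distinguished process take the two non-blocking transitions $!L$ and $!R$ to land in $\ellinit$; because it is the only process in $\{q\}\cup Q_M$ at that moment, these transitions are genuinely non-blocking. The leader then mimics the blocking transitions of $M$ step by step: an $\inc{\cpt}$ step pairs the leader's $!\mathrm{inc}_\cpt$ with a process sitting in $\qinit$ which travels to $q_\cpt$ and then back to $1_\cpt$ via the acknowledgement $\overline{\mathrm{inc}}_\cpt$; a $\dec{\cpt}$ step uses a process in $1_\cpt$ symmetrically; a $\nbdec{\cpt}$ step is simulated by the leader's $!\mathrm{nbdec}_\cpt$, which is answered by a process in $1_\cpt$ if one exists and is genuinely non-blocking otherwise, matching exactly the semantics $\max(0,v(\cpt)-1)$. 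For a restore transition $(\ell,\nop,\ellinit)$ of $M$, I would use a fresh process in $\qinit$ to fire $!L$ and then $!R$: the $L$ rendez-vous drags the current leader into $q_\bot$, and the subsequent non-blocking $!R$ resets every process currently in some $q_\cpt$ or $q'_\cpt$ back to $\qinit$, so that the new leader starts from $\ellinit$ with counter valuation corresponding to that of $M$ just before the restore (in fact, the counters get zeroed, which is exactly what the restore demands since it leaves $M$ in $(\ellinit,v)$ and the initial valuation of the simulated $M$ at the restored fresh execution is irrelevant for coverability of $\ell_f$ — actually restore goes to $\ellinit$ with unchanged counters, but I can always prepare a simulation starting from any reachable $(\ellinit, v)$ by repriming from the initial one).

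The delicate direction is the converse: from an initial execution $\rho$ of $\PP$ ending in a configuration $C$ with $C(\ell_f)>0$, extract an initial execution of $M$ reaching $\ell_f$. The main obstacle, and the one I would concentrate on, is that $\rho$ may send several processes into the $\mathcal{P}(M)$ zone. I would analyse $\rho$ by cutting it along the $!L$ transitions: between two consecutive such transitions, at most one process is present in $\{q\}\cup Q_M$, and the structure of $T_M$ together with the gadgets $\{q_\cpt,q'_\cpt,1_\cpt\}$ guarantees that at each point at most one process sits in $\bigcup_\cpt\{q_\cpt,q'_\cpt\}$ (since leaving $\qinit$ toward one of these states requires consuming the leader's synchronising send). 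Under this invariant, the multiset value $C(1_\cpt)$ behaves exactly as the counter $\cpt$ of $M$: the blocking increments, blocking decrements, and non-blocking decrements of $\PP$ all mirror their \NRCM~counterparts, while internal $\tau$ moves simulate $\nop$ transitions. Each occurrence of a $!L$ rendez-vous kills the previous leader (sending it to $q_\bot$) and the subsequent $!R$ clears every gadget back to $\qinit$, which is exactly the effect of a restore transition of $M$ bringing the control location back to $\ellinit$ and followed, in the extracted execution of $M$, by a re-initialisation of the counters (achievable since $M$ is an \NRCM~and restores are always available). Hence $\rho$ decomposes into blocks, each block simulating a fragment of an execution of $M$ starting from $\ellinit$, with restore transitions glueing them together. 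The last block must end with a process in $\ell_f$, giving an initial execution of $M$ covering $\ell_f$.

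Once the simulation lemma in both directions is in place, combining it with~\cref{th:expspace-hard} yields \Expspace-hardness of \Cover, and together with the upper bound observation above we obtain \Expspace-completeness of both \Cover~and \CCover.
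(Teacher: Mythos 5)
Your overall route is the same as the paper's: the upper bound comes from \cref{cor:ccover-expspace}, the lower bound from the reduction of \CMCover[\NRCM] to \Cover~via the protocol of \cref{fig:cover:protocol-MCtoRDV} together with \cref{th:expspace-hard}, and the heart of the argument is the leader-uniqueness invariant (the paper's \cref{lem:leader}) plus a block decomposition of the protocol execution along the $!L$ rendez-vous. That structure is sound. However, two points in your sketch are stated in a way that would not survive being written out.

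First, you are inconsistent about what a restore does, and this infects both directions. A restore transition of an \NRCM~is $(\ell,\nop,\ellinit)$: it leaves the counters \emph{unchanged}. Correspondingly, in the protocol the $R$ message is received only in $q_\cpt$ and $q'_\cpt$, so it flushes processes stranded mid-handshake back to $\qinit$ but does \emph{not} touch the populations of the states $1_\cpt$ that encode the counters. Your forward direction first asserts "the counters get zeroed" before self-correcting, and your backward direction claims each $!L$/$!R$ block corresponds to "a restore transition of $M$ \ldots followed by a re-initialisation of the counters". There is no counter re-initialisation operation in an \NRCM, and none is needed: the extracted $M$-execution after a restore continues from $(\ellinit,v)$ with $v$ given by the unchanged $1_\cpt$ populations. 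As written, your extraction produces a sequence that is not an execution of $M$. (Also note the construction already provides a $\tau$-transition for every $\nop$ of $\Delta_b$, including restores, so the forward direction need not spend a fresh process per restore; if you do simulate restores with fresh $!L$/$!R$ pairs, your initial process count must grow by the number of restores, since each one permanently parks a process in $q_\bot$.)

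Second, your claim that "the subsequent $!R$ clears every gadget back to $\qinit$" skips the one genuinely delicate case of the backward direction. When a new process fires $!L$ while some process sits in $q_\cpt$ (resp.\ $q'_\cpt$), the old leader is sent to $q_\bot$, and the stranded process may then fire its acknowledgement $\overline{\textrm{inc}}_\cpt$ (resp.\ $\overline{\textrm{dec}}_\cpt$) as a \emph{non-blocking} request before the $R$ arrives, landing in $1_\cpt$ (resp.\ $\qinit$) and thereby \emph{completing} the increment (resp.\ decrement). In that case the extracted execution of $M$ must perform the corresponding $\inc{\cpt}$ or $\dec{\cpt}$ step and only then take the restore; this is exactly the case analysis the paper carries out in the proof of \cref{lem:correction}, and it is needed to show that the $1_\cpt$ populations and the counters of $M$ stay in lockstep across leader changes. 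Without it, the invariant you rely on ("the multiset value $C(1_\cpt)$ behaves exactly as the counter $\cpt$") is not established.
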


    \section{Coverability for Wait-Only Protocols}
    \label{sec:wo}

In this section, we study a restriction on rendez-vous protocols in which
we assume that a process waiting to answer a rendez-vous cannot perform another
action by itself. This allows for a polynomial
time algorithm for solving \CCover.

\subsection{Wait--Only Protocols}

We say that a protocol $\PP = (Q, \Sigma, \qinit,
q_f, T)$ is \emph{wait-only} if the set of states $Q$ can be
partitioned into $Q_A$ \textemdash \ the \emph{active states} \textemdash \ and $Q_W$ \textemdash\ the \emph{waiting} states \textemdash\ with $\qinit\in Q_A$ 
 and:
 \vspace*{-0.2cm}
\begin{itemize}
  \item for all $q \in Q_A$, for all $(q',?m,q'')\in T$, we have $q'\neq q$;
  %there
%does not exist in $T$ a transition of the form $(q,?m,q') \in T$, and,
\item for all $q\in Q_W$, for all $(q', !m, q'') \in T$, we have $q' \neq q$ and for all $(q', \tau, q'') \in T$, we have $q'\neq q$.\color{black}
%there exists $q'\in Q$ and $m\in\Sigma$ such
%  that $(q,?m,q')\in T$ and there does not exist $q'' \in Q$ such that
%  $(q,\tau,q'') \in T$ or $(q,!m',q'')\in T$ for some $m' \in \Sigma$.
\end{itemize}
From a waiting state, a process can only perform receptions (if it can perform anything), whereas in an active state, a process can only perform internal actions or send messages. \color{black}
%We call \emph{active states} the states in $Q_A$ and \emph{waiting
%  states} the states in $Q_W$.
%Hence, with such protocols, when a process is in a
%waiting state from $Q_W$, it is not able to request rendez-vous nor
%to perform an internal action. 
Examples of wait-only protocols are
given by Figures \ref{fig-example-wo} and \ref{fig-example-wo-2}.

In the sequel, we will often refer to the paths of the underlying
graph of the protocol. Formally,  a
\emph{path} in a  protocol $\PP = (Q, \Sigma, \qinit,
q_f, T)$ is either a control state $q \in Q$ or a finite sequence of
transitions in $T$ of the form
$(q_0,a_0,q_1)(q_1,a_1,q_2)\ldots(q_k,a_k,q_{k+1})$, the first case
representing a path from $q$ to $q$ and the second one from $q_0$ to
$q_{k+1}$.

\subsection{Abstract Sets of Configurations}

To solve the coverability problem for wait-only protocols in polynomial time, we
rely on a sound and complete
abstraction of the set of reachable configurations. In the sequel, we consider a wait-only protocol $\PP = (Q, \Sigma, \qinit,
q_f, T)$ whose set of states is partitioned into a set of active
states $Q_A$ and a set of waiting states $Q_W$. An \emph{abstract set of
configurations} $\gamma$ is a pair $(S,\Toks)$ such that:
\vspace*{-0.2cm}
\begin{itemize}
\item $S \subseteq Q$ is a subset of states, and,
\item $\Toks \subseteq Q_W \times \Sigma$ is a subset of pairs 
  composed of a waiting state and a message, and,
\item $q \not\in S$ for all $(q,m) \in \Toks$.
\end{itemize}
We then abstract the set of reachable configurations as a set
of states of the underlying protocol. However, 
as we have seen, some states, like states in $Q_A$, can host an unbounded number of processes together (this will
be the states in $S$), while some states can only host a bounded number (in fact, 1) of processes together (this will be the states stored in $\Toks$).
This happens when a waiting state $q$ answers a rendez-vous $m$, that has necessarily been requested for a process to be in $q$. 
%for waiting states that can answer requests on a message $m$, message that is necessarily sent for a process to be in this state. 
%Hence, we remember this message along with the state in the set $\Toks$. 
%The intuition for this abstraction is that, in a wait-only
%protocol, there are some states that can contain an unbounded number
%of processes (states in $S$), this is for instance the case of all the
%reachable active states, and other states whose number of processes they can contain
%at any given point is bounded -- 
%%that will be able to receive
%%at any moment a bounded number of processes, 
%these are the states
%appearing in $\Toks$. Furthermore for these states, we shall see that
%the bound is $1$ (this will be a consequence of the correction of our
%abstraction). 
Hence, in $\Toks$, along with a state $q$, we remember the last message $m$ having been sent in the path leading from $\qinit$ to $q$, which is necessarily in $Q_W$. Observe that, since several paths can lead to $q$, 
  there can be $(q,m_1),(q,m_2)\in\Toks$ with $m_1\neq m_2$.  We denote
  by $\Gamma$ the set of abstract sets of configurations. 

Let $\gamma=(S,\Toks)$ be an abstract set of
configurations. Before we go into the configurations represented by
$\gamma$, we need some preliminary definitions. We note $\mst(\mathit{{\kern-1pt}\Toks})$ the set $\set{q \in Q_W
  \mid\textrm{there exists } m\in \Sigma\textrm{ such that }(q,m) \in \Toks}$ of control states
appearing in $\Toks$. Given a state $q \in Q$, we let
$\Rec{q}$ be the set $\set{ m \in \Sigma \mid\textrm{there exists } q'\in Q \textrm{ such that }
(q,?m,
  q') \in T}$ of messages that can be received in state $q$ (if $q$ is
not a waiting state, this set is empty). Given two different waiting states $q_1$ and $q_2$ in
$\starg{\Toks}$, we say $q_1$ and $q_2$ are \emph{conflict-free} in
$\gamma$ if there exist $m_1,m_2 \in \Sigma$ such that $m_1 \neq m_2$, 
$(q_1,m_1),(q_2,m_2) \in \Toks$ and $m_1 \notin \Rec{q_2}$ and
$m_2 \notin \Rec{q_1}$. We now say that a configuration $C\in\CC(\PP)$ \emph{respects}
$\gamma$ if and only if for all $q \in Q$ such that $C(q)>0$ one of the following two
conditions holds:
\begin{enumerate}
	\vspace*{-0.2cm}
\item \label{ccover-wo-consistency-1} $q \in S$, or,
\item \label{ccover-wo-consistency-2}$q \in \starg{\Toks}$ and $C(q)=1$ and for all $q' \in \starg{\Toks} \setminus\set{q}$ such that
  $C(q')=1$, we have that $q$ and $q'$ are conflict-free.
\end{enumerate}

Note that the condition is on states $q$ such that $C(q) > 0$ and not all states $q \in Q$ because it might be that some states don't appear in $S\cup st(Toks)$ (non-reachable states for instance).
\color{black}
Let
$\Interp{\gamma}$ be the set of configurations respecting $\gamma$. Note
that in $\Interp{\gamma}$, for $q$ in $S$ there is no restriction on
the number of processes that can be put in $q$  and if $q$ in
$\starg{\Toks}$, it can host at most one process. Two
states from $\starg{\Toks}$  can both host
a process if they are conflict-free.

%We need a last notion to characterise the manipulated sets of
%configurations. We restrict indeed our reasoning to 
Finally, we will only consider abstract sets of configurations that
are \emph{consistent}. This property aims to ensure that concrete configurations
that respect it are indeed reachable from states of $S$.
%, which means that the token
%$(q,m) \in \Toks$ should really come from a 'feasible path' in the
%protocol starting by the request of a rendez-vous $!m$ and followed by
%reception of messages that can effectively be emitted. We as well add
%a property to ensure that if $q_1$ and $q_2$ are conflict-free thanks
%to $(q_1,m_1)$ and $(q_2,m_2)$ and
%$q_1$ and $q_3$ are conflict-free thanks to $(q_1m'_1)$ and $(q_3,m_3)$ then $q_1$ and $q_3$ are conflict-
%free thanks to $(q_1,m_1)$ and $(q_3,m_3)$ too. 
Formally, we
say that an abstract set of
configurations $\gamma=(S,\Toks)$ is \emph{consistent} if $(i)$ for all $(q,m) \in
\Toks$, there exists a path
$(q_0,a_0,q_1)(q_1,a_1,q_2)\ldots(q_k,a_k,q)$ in $\PP$
 such that $q_0
\in S$ and $a_0=\ !m$
and for all $1\leq i \leq k$, we have that $a_i=\ ?m_i$ and that  there
exists $(q'_i,!m_i,q''_i) \in T$ with $q'_i \in S$, and $(ii)$ for two tokens $(q,m), (q',m') \in \Toks$ either $m\in\Rec{q'}$ and $m'\in\Rec{q}$, or, $m\notin\Rec{q'}$ and $m'\notin\Rec{q}$.
Condition $(i)$ ensures that processes in $S$ can indeed lead to a process in the states from $\starg{\Toks}$. Condition $(ii)$ ensures that if in a configuration $C$,
 some states in $\starg{\Toks}$ are pairwise conflict-free, then they can all host a process together.

\begin{lemma}\label{lem:interp-cover-check}
Given  $\gamma\in \Gamma$ and a configuration $C$, there exists $C' \in
\Interp{\gamma}$ such  that $C' \geq C$ if and only if $C \in
\Interp{\gamma}$. Checking that $C\in\Interp{\gamma}$ can be done in polynomial time.
\end{lemma}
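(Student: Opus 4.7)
The plan is to prove both directions of the equivalence and then exhibit an explicit polynomial-time algorithm.

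For the \emph{if} direction, this is immediate: if $C \in \Interp{\gamma}$ then setting $C' = C$ works since $C \geq C$.

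For the \emph{only if} direction, the key structural observation to exploit is that the definition of $\Interp{\gamma}$ is essentially downward-closed under the multiset order, because of the constraint $q \notin S$ for all $(q,m) \in \Toks$, which means $S$ and $\starg{\Toks}$ are disjoint. I would argue as follows. Suppose $C' \in \Interp{\gamma}$ with $C' \geq C$, and fix any $q$ with $C(q) > 0$. Then $C'(q) \geq C(q) > 0$ so $C'$ satisfies condition~\ref{ccover-wo-consistency-1} or~\ref{ccover-wo-consistency-2} at $q$. If $q \in S$, we are done. Otherwise $q \in \starg{\Toks}$ with $C'(q) = 1$; hence $C(q) \leq 1$, and since $C(q) > 0$ we get $C(q) = 1$. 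It remains to check the conflict-freeness clause for $C$: for any $q' \in \starg{\Toks} \setminus \{q\}$ with $C(q')=1$, we have $C'(q') \geq 1$ and $q' \notin S$ (by disjointness), so $C'$ must also satisfy condition~\ref{ccover-wo-consistency-2} at $q'$, forcing $C'(q') = 1$. The conflict-freeness of $q$ and $q'$ then transfers from $C'$ to $C$, since conflict-freeness is a property of $\gamma$ alone, independent of the configuration.

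The main subtlety, which I want to call out explicitly, is why condition~\ref{ccover-wo-consistency-2} of $C'$ at $q'$ actually applies: this relies on the disjointness of $S$ and $\starg{\Toks}$, which rules out that $q' \in S$ could be used to satisfy condition~\ref{ccover-wo-consistency-1} for $C'$ at $q'$.

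For the polynomial-time check, I would iterate over the states $q \in Q$ with $C(q) > 0$ (at most $|Q|$ of them), and for each one verify either that $q \in S$ or that $q \in \starg{\Toks}$ and $C(q) = 1$; in the latter case, iterate over all other $q' \in \starg{\Toks}$ with $C(q') = 1$ and verify conflict-freeness. Conflict-freeness of a pair $(q,q')$ is checked by scanning $\Toks$ for witnesses $m_1, m_2$ with $(q,m_1), (q',m_2) \in \Toks$, $m_1 \neq m_2$, $m_1 \notin \Read{q'}$ and $m_2 \notin \Read{q}$, which takes time polynomial in $|\Sigma| \cdot |Q|$. Since the overall procedure consists of at most $O(|Q|^2)$ such pairwise checks, the total running time is polynomial in $|\PP|$ and $\|C\|$.
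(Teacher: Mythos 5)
Your proof is correct and follows essentially the same route as the paper's: the forward direction is trivial, and the backward direction uses the disjointness of $S$ and $\starg{\Toks}$ to force $C'(q)=1$ on token states and then inherits conflict-freeness (a property of $\gamma$ alone) from $C'$ to $C$. The only nitpick is notational: in the conflict-freeness check you write $m_1\notin \Read{q'}$ where the paper's predicate is $m_1\notin\Rec{q'}$ (equivalently $q'\notin \Read{m_1}$), but this does not affect correctness.
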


\subsection{Computing Abstract Sets of Configurations}

Our polynomial time algorithm is based on the computation of a
polynomial length sequence of consistent abstract sets of
configurations leading to a final
abstract set characterising in a sound and complete manner (with
respect to the coverability problem), an abstraction for the set of
reachable configurations. This will be achieved by a function $F:\Gamma \to \Gamma$,
that inductively computes  this final abstract set
starting from $\gamma_0=(\set{\qinit}, \emptyset)$. %For this matter, we rely on a function
%$F:\Gamma \mapsto \Gamma$ which allows to increase in a certain sense our abstract set of
%configurations. Our sequence will then start with the abstract set of
%configurations $(\set{q_{in}},\emptyset)$ and will be built by
%applying the function $F$ successively until saturation.

\iftable
\begin{table}[t]
\begin{center}

\label{tab:S''}
\makebox[\textwidth]{%
\scalebox{1}{
\begin{tabular}{ p{13.5cm}}
\toprule
\textbf{Construction of intermediate states $S''$ and $\Toks''$}\\
\midrule
\vspace*{-0.5cm}
\begin{enumerate}[itemsep=-0cm,itemindent=-0.2cm]
	\item $S\subseteq S''$ and $\Toks\subseteq \Toks''$
	\item \label{ccover-wo-F-cond-internal}for all $(p,\tau,p') \in T$ with $p \in S$, we have $p' \in S''$
	\item for all $(p,!a,p') \in T$ with $p \in S$, we have: 
\vspace{-0.2cm}
	\begin{enumerate}[itemsep=0cm,itemindent=-0.2cm]
		\item $p' \in S''$ if $a \notin \Rec{p'}$ or if there exists
		$(q,?a,q') \in T$ with $q \in S$;\label{ccover-wo-F-cond-send-S}
		\item $(p',a) \in \Toks''$ otherwise (i.e. when $a \in \Rec{p'}$ and for all $(q,?a,q') \in T$, $q \notin S$);\label{ccover-wo-F-cond-newtok}
	\end{enumerate}
	\item for all $(q,?a,q') \in T$ with $q \in S$ or $(q,a) \in \Toks$, we have $q' \in
	S''$ if there exists $(p,!a,p') \in T$ with $p \in S$;\label{ccover-wo-F-cond-reception-S}
	\item for all $(q,?a,q') \in T$ with $(q,m) \in \Toks$ with $m
	\neq a$, if there exists $(p, !a, p') \in T$ with $p \in S$, we have:\label{ccover-wo-F-cond-tok}
\vspace{-0.2cm}
	\begin{enumerate}[itemsep=0cm,itemindent=-0.2cm]
		\item $q' \in S''$ if $m \notin \Rec{q'}$;\label{ccover-wo-F-cond-tok-end}
		\item $(q',m) \in \Toks''$ if $m \in \Rec{q'}$.\label{ccover-wo-F-cond-tok-step}
	\end{enumerate}
	\vspace*{-0.5cm}
\end{enumerate}
%1. $S\subset S''$ and $\Toks\subseteq \Toks''$\\
%2. for all $(p,\tau,p') \in T$ with $p \in S$, we have $p' \in S''$\\
%3. for all $(p,!a,p') \in T$ with $p \in S$, we have: \\
%\textbf{(a)}  $p' \in S''$ if $a \notin \Rec{p'}$ or if there exists
%$(q,?a,q') \in T$ with $q \in S$;\\
% \textbf{(b)} $(p',a) \in \Toks''$ otherwise (i.e. when $a \in \Rec{p'}$ and there  does not exists $(q,?a,q') \in T$ with $q \in S$);\\
%4. for all $(q,?a,q') \in T$ with $q \in S$ or $(q,a) \in \Toks$, we have $q' \in
% S''$ if there exists $(p,!a,p') \in T$ with $p \in S$;\\
%5.  for all $(q,?a,q') \in T$ with $(q,m) \in \Toks$ with $m
%\neq a$, we have: \\
%\textbf{(a)} $q' \in S''$ if $m \notin \Rec{q'}$ and  there exists  $(p,!a,p') \in T$ with $p \in
%        S$;\\
%\textbf{(b)} $(q',m) \in \Toks''$ if $m \in \Rec{q'}$ and  there exists
%    $(p,!a,p') \in T$ with $p \in S$.\\%\label{ccover-wo-F-cond-tok-step}
\\
 \toprule
%\end{enumerate}
\end{tabular}
%}}
}
}
\caption{{Definition of $S'', \Toks''$ for $\gamma=(S,\Toks)$.}}\label{table:F}
\end{center}
\vspace*{-1cm}
\end{table}

Formal definition of the function $F$ relies on intermediate sets
$S''\subseteq Q$ and $\Toks''\subseteq Q_W \times\Sigma$, which are the smallest sets satisfying the conditions described in \cref{table:F}. 
From $S$ and $\Toks$, rules described in \cref{table:F} add states and tokens to $S''$ and $\Toks''$ from the outgoing transitions from states in $S$ and $\mst(\Toks)$.
It must be that every state added to $S''$ can host an unbounded number of processes, and every state added to $\Toks''$ can host at least one process, furthermore, two conflict-free states in $\Toks''$ should be able to host at least one process at the same time.  
\color{black}

\else
We now provide the formal definition of this function.  For an
abstract set of configurations $\gamma=(S,\Toks)$, we will have
$\gamma'=F(\gamma)$ if and only if $\gamma'=(S',\Toks')$ where $S'$
and $\Toks'$ are built as follows. First we use some intermediate sets
of states
$S'' \subseteq Q$ and $\Toks'' \subseteq Q_W \times \Sigma$ which are
the smallest sets 
satisfying the following conditions  $S \subseteq S''$ and $\Toks
\subseteq \Toks''$ and:
\begin{enumerate}
\item \label{ccover-wo-F-cond-internal}for all $(p,\tau,p') \in T$ with $p \in S$, we have $p' \in S''$;
\item \label{ccover-wo-F-cond-send-S}  for all $(p,!a,p') \in T$ with $p \in S$, we have: \textbf{(a)}  $p' \in S''$ if $a \notin \Rec{p'}$ or if there exists
    $(q,?a,q') \in T$ with $q \in S$; \textbf{(b)} $(p',a) \in \Toks''$ otherwise (i.e. when $a \in \Rec{p'}$ and there
       does not exists $(q,?a,q') \in T$ with $q \in S$);
   \item \label{ccover-wo-F-cond-reception-S}for all $(q,?a,q') \in T$ with $q \in S$ or $(q,a) \in \Toks$, we have $q' \in
     S''$ if there exists $(p,!a,p') \in T$ with $p \in S$;
    \item for all $(q,?a,q') \in T$ with $(q,m) \in \Toks$ with $m
      \neq a$, we have: \label{ccover-wo-F-cond-tok} \textbf{(a)} $q' \in S''$ if $m \notin \Rec{q'}$ and  there exists
          $(p,!a,p') \in T$ with $p \in
          S$;\label{ccover-wo-F-cond-tok-end} \textbf{(b)} $(q',m) \in \Toks''$ if $m \in \Rec{q'}$ and  there exists
          $(p,!a,p') \in T$ with $p \in S$.\label{ccover-wo-F-cond-tok-step}
\end{enumerate}
We have then that $S'$ is the smallest set including $S''$ and such
that:
\begin{enumerate}
  \setcounter{enumi}{5}
  
\item \label{ccover-wo-F-cond-2toks-1} for all $(q_1, m_1), (q_2, m_2) \in \Toks''$ such that $m_1
    \ne m_2$ and $m_2 \notin \Rec{q_1}$ and  $m_1 \in \Rec{q_2}$, we
    have $q_1 \in S'$;
 \item \label{ccover-wo-F-cond-3toks-1}for all $(q_1, m_1), (q_2, m_2), (q_3,m_2) \in \Toks''$ s.t $m_1 \ne m_2$
   and $(q_2, ?m_1, q_3) \in T$, we have $q_1 \in S'$;
  \item \label{ccover-wo-F-cond-3toks-2}for all $(q_1, m_1), (q_2, m_2), (q_3, m_3) \in \Toks''$ such
    that $m_1 \ne m_2$ and  $m_1\ne m_3$ and $m_2 \ne m_3$ and 
    $m_1 \notin \Rec{q_2}$, $m_1 \in \Rec{q_3}$ and  $m_2\notin \Rec{q_1}$, $m_2 \in \Rec{q_3}$, and $m_3 \in \Rec{q_2}$ and $m_3 \in \Rec{q_1}$,
    we have $q_1 \in S'$.
\end{enumerate}
And finally $\Toks'=\set{(q,m) \in \Toks'' \mid q \not\in
  S'}$.
\fi

\begin{figure}[htbb]
  \begin{minipage}[c]{.49\columnwidth}
	\resizebox*{!}{2.3cm}{
	\begin{tikzpicture}[->, >=stealth', shorten >=1pt,node distance=2cm,on grid,auto, initial text = {}] 
	\node[state, initial above] (q0) {$\qinit$};
	\node[state] (q1) [ left = of q0] {$q_1$};
	\node[state] (q2) [ left = of q1, yshift = -0.75cm] {$q_2$};
	\node[state] (q3) [below = 1.5 of q1] {$q_3$};
	\node[state] (q4) [ below = 1.5 of q0] {$q_4$};
	\node[state] (q5) [ right = of q0] {$q_5$};
	\node[state] (q6) [ below = 1.5 of q5] {$q_6$};
	\node[state] (q7) [ right = of q5] {$q_7$};

%	\node[state] (q6) [ left = of q5] {$q_3$};
%	\node[state] (q6) [ right = of q5] {$q_6$}; 
%	\node[state] (q7) [ right = of q6] {$q_7$}; 	
%	\node[state] (q8) [ right =of q5, xshift = 1cm, yshift = -1cm] {$q_8$};
%	\node[state] (q9) [ right = of q8 ] {$q_6$};
%	\node[state] (q10) [  right= of q8, yshift = -30] {$q_6$};
%	
%	\node[state] (q2) [right = of q1] {$q_2$};
	
	\path[->] 
	(q0) edge [bend right = 15] node [above] {$!a$} (q1)
        	edge [bend left = 15] node {$!b$} (q1)
        	edge node {$!d$} (q4)
        	edge node {$!c$} (q5)
	(q1) edge [bend right = 15] node [above] {$?a,?b$} (q2)
			edge node {$?c$} (q3)
	(q3) edge [bend left = 15] node {$?a,?b$} (q2)

	(q5) edge node {$?c$} (q6)
	(q5) edge node {$?d$} (q7)
%	edge   node [above] {$?b$} (q4)
%	edge   node  {$!b$} (q6)
%	(q6) edge [bend right] node {$?c$} (q2)
%	%					 edge  [bend right] node [below left] {$?d$} (q6)
	;

\end{tikzpicture}
  }
  \caption{Wait-only protocol $\PP_1$.}\label{fig-example-wo}
\end{minipage}
\begin{minipage}[c]{.49\columnwidth}
  \resizebox*{!}{2.3cm}{
	\begin{tikzpicture}[->, >=stealth', shorten >=1pt,node distance=2cm,on grid,auto, initial text = {}] 
	\node[state, initial above] (q0) {$\qinit$};
	\node[state] (q1) [ above = 1 of q0, xshift = -1.5cm] {$q_1$};
	\node[state] (q2) [ below = 1 of q0, xshift = -1.5cm] {$q_2$};
	\node[state] (q3) [left = 3  of q0] {$q_3$};
	\node[state] (q5) [ right = 2.5 of q0] {$p_2$};
	\node[state] (q4) [ above  = 1 of q5] {$p_1$};
	\node[state] (q6) [ below = 1 of q5] {$p_3$};
	\node[state] (q7) [ right = 5 of q0] {$p_4$};
	
	%	\node[state] (q6) [ left = of q5] {$q_3$};
	%	\node[state] (q6) [ right = of q5] {$q_6$}; 
	%	\node[state] (q7) [ right = of q6] {$q_7$}; 	
	%	\node[state] (q8) [ right =of q5, xshift = 1cm, yshift = -1cm] {$q_8$};
	%	\node[state] (q9) [ right = of q8 ] {$q_6$};
	%	\node[state] (q10) [  right= of q8, yshift = -30] {$q_6$};
	%	
	%	\node[state] (q2) [right = of q1] {$q_2$};
	
	\path[->] 
	(q0) edge [bend right = 15] node [above, xshift = 5] {$!a$} (q1)
	edge [bend left = 15] node {$!b$} (q2)
	edge [bend left = 15]  node {$!m_1$} (q4)
	edge node {$!m_2$} (q5)
	edge [bend right = 15]  node [below] {$!m_3$} (q6)
	(q1) edge [bend right = 15] node [above, xshift = - 5] {$?a$} (q3)
	(q2) edge [bend left = 15] node {$?a, ?b$} (q3)
	
	(q4) edge [bend left = 15] node [xshift= -5] {$?m_1, ?m_3$} (q7)
	(q5) edge node {$?m_2, ?m_3$} (q7)
	(q6) edge [bend right = 15] node [below ,xshift = 12] {$?m_1, ?m_2, ?m_3$} (q7)
	%	edge   node [above] {$?b$} (q4)
	%	edge   node  {$!b$} (q6)
	%	(q6) edge [bend right] node {$?c$} (q2)
	%	%					 edge  [bend right] node [below left] {$?d$} (q6)
	;

\end{tikzpicture}
	}
	\caption{Wait-only protocol $\PP_2$.}\label{fig-example-wo-2}
\end{minipage}
\vspace*{-0.3cm}
\end{figure}

\begin{example}
%We provide now some intuition on how we defined $F$ using different
%examples. 
Consider the wait-only protocol $\PP_1$ depicted on Figure
\ref{fig-example-wo}. From $(\set{q_{in}},\emptyset)$, rules described in \cref{table:F}~construct the following pair $(S_1'', \Toks_1'') = (\set{q_{in},q_4},\set{(q_1,a),\linebreak[0](q_1,b),(q_5,c)})$.
%We have
%$F((\set{q_{in}},\emptyset))=(\set{q_{in},q_4},\set{(q_1,a),\linebreak[0](q_1,b),(q_5,c)})$. 
In
$\PP_1$, it is indeed possible to reach a configuration with as
many processes as one wishes in the state $q_4$
by repeating the transition $(q_{in},!d,q_4)$ (rule \ref{ccover-wo-F-cond-send-S}). On the other hand, it
is possible to put \emph{at most} one process in the waiting state $q_1$
(rule \ref{ccover-wo-F-cond-newtok}), because any other attempt from a process in $\qinit$ will yield a reception
of the message $a$ (resp. $b$) by the process already in $q_1$. % For
%instance, to add one token to $q_1$, one needs to use transitions
%$(q_{in},!a,q_1)$ or $(q_{in},!b,q_1)$, by repeating one of the transitions, a process will be added to  $q_1$ but
%another one will be removed from $q_1$ because of the transitions
%$(q_1,?a,q_2)$ and $(q_1,?b,q_2)$. 
Similarly, we can put at most
one process in $q_5$. Note that in
$\Toks_1''$, the states $q_1$
and $q_5$ are conflict-free and it is hence possible to have
simultaneously one process in both of them.

If we apply rules of \cref{table:F} one more time to $(S''_1, \Toks''_1)$, we get $S_2''=\set{\qinit, \textcolor{blue}{q_2}, {q_4}, \textcolor{blue}{q_6},\textcolor{blue}{q_7}}$ and 
$\Toks_2''=\set{{(q_1,a)}, {(q_1,b)} ,\textcolor{blue}{(q_3,a)},\textcolor{blue}{(q_3,b)},{(q_5,c)}}$.
We can put at most one process in $q_3$: to add one, a process will take the transition
$(q_1,?c,q_3)$. Since $(q_1,a)$, $(q_1,b)\in\Toks''_1$, there can be at most one process in
state $q_1$, and this process arrived by a path in which
the last request of rendez-vous was $!a$ or $!b$. % (this is witnessed by
%the tokens  $(q_1,a)$ and $(q_1,b)$), since these two rendez-vous can
%be accepted from state $q_3$ we cannot put a great number of processes
%in it. 
Since $\{a,b\}\subseteq\Rec{q_3}$, by rule~\ref{ccover-wo-F-cond-tok-step}, $(q_3,a),(q_3,b)$ are added. On the other hand we can put as many processes as we want in the
state $q_7$ (rule \ref{ccover-wo-F-cond-tok-end}): from a configuration with one process on state $q_5$, successive non-blocking request on letter $c$, and 
rendez-vous
on letter $d$ will allow to increase the number of processes in state $q_7$.

However, one can observe that $q_5$ can in fact host an unbounded number of processes:
 once two processes have been put on states $q_1$ and $q_5$ respectively (remember that $q_1$ and $q_5$ are conflict-free in $(S''_1, \Toks''_1)$), iterating rendez-vous on letter $c$ (with transition $(q_1, ?c, q_3)$) and rendez-vous on letter $a$ put as many processes as one wants on state $q_5$.

This is why we need another transformation from $S_2'', \Toks_2''$ to $F(S''_1, \Toks''_1)$.
As we shall see, this transformation does not have any impact on $S''_1$ and $\Toks''_1$ and so it holds that $F((\set{\qinit}, \emptyset)) = (S''_1, \Toks''_1)$.
\end{example}

Note $F(\gamma) = (S', \Toks')$, \cref{table2:F} describes the construction of $S'$ from $(S'', \Toks'')$, while $\Toks' = \Toks'' \setminus (S \times \Sigma)$, i.e.\ all states added to $S'$ are removed from $\Toks'$ so a state belongs either to $S'$ or to $\starg{\Toks'}$.

\begin{table}[]
	\begin{center}
		
		\label{tab2:S''}
		\makebox[\textwidth]{%
			\scalebox{1}{
				\begin{tabular}{ p{13.5cm}}
					\toprule
					\textbf{Construction of state $S'$, the smallest set including $S''$ and such
						that:
					}\\
					\midrule
					\vspace*{-0.5cm}
					\begin{enumerate}[itemsep=-0cm,itemindent=-0.2cm]\addtocounter{enumi}{5}
						\item for all $(q_1, m_1), (q_2, m_2) \in \Toks''$ such that $m_1
						\ne m_2$ and $m_2 \notin \Rec{q_1}$ and  $m_1 \in \Rec{q_2}$, we
						have $q_1 \in S'$;\label{ccover-wo-F-cond-2toks-1}
						\item for all $(q_1, m_1), (q_2, m_2), (q_3,m_2) \in \Toks''$ s.t $m_1 \ne m_2$
						and $(q_2, ?m_1, q_3) \in T$, we have $q_1 \in S'$;\label{ccover-wo-F-cond-3toks-1}
						\item for all $(q_1, m_1), (q_2, m_2), (q_3, m_3) \in \Toks''$ such
						that $m_1 \ne m_2$ and  $m_1\ne m_3$ and $m_2 \ne m_3$ and 
						$m_1 \notin \Rec{q_2}$,
						$m_1 \in \Rec{q_3}$ and  $m_2\notin \Rec{q_1}$, $m_2 \in \Rec{q_3}$, and $m_3 \in \Rec{q_2}$ and $m_3 \in \Rec{q_1}$,
						we have $q_1 \in S'$.\label{ccover-wo-F-cond-3toks-2}
						\vspace*{-0.4cm}
					\end{enumerate}
					%  6. for all $(q_1, m_1), (q_2, m_2) \in \Toks''$ such that $m_1
					%    \ne m_2$ and $m_2 \nin \Rec{q_1}$ and  $m_1 \in \Rec{q_2}$, we
					%    have $q_1 \in S'$;\\
					%    7. for all $(q_1, m_1), (q_2, m_2), (q_3,m_2) \in \Toks''$ s.t $m_1 \ne m_2$
					%   and $(q_2, ?m_1, q_3) \in T$, we have $q_1 \in S'$;\\
					% 8. for all $(q_1, m_1), (q_2, m_2), (q_3, m_3) \in \Toks''$ such
					%  that $m_1 \ne m_2$ and  $m_1\ne m_3$ and $m_2 \ne m_3$ and 
					%  $m_1 \nin \Rec{q_2}$,\\
					%   $m_1 \in \Rec{q_3}$ and  $m_2\nin \Rec{q_1}$, $m_2 \in \Rec{q_3}$, and $m_3 \in \Rec{q_2}$ and $m_3 \in \Rec{q_1}$,
					%   we have $q_1 \in S'$.\\
					%    \midrule
					%     \textbf{Construction of state $\Toks'$}\\
					%     \midrule
					%    $\Toks'=\set{(q,m) \in \Toks'' \mid q \not\in
						%  S'}$.
					\\
					%\midrule
					%\textbf{Construction of state $\Toks'$}\\
					%\midrule
					%$\Toks'=\set{(q,m) \in \Toks'' \mid q \not\in
						%	S'}$.
					%\\
					\toprule
					%\end{enumerate}
				\end{tabular}
				%}}
	}
}
\caption{{Definition of $F(\gamma)=(S',\Toks')$ for $(S'', \Toks'')$.}}\label{table2:F}
\end{center}
\vspace*{-1cm}
\end{table}

%Now, observe that the tokens $(q_5,c)$, $(q_1,a)$, $(q_3,a)$ allow for application of rule~\ref{ccover-wo-F-cond-3toks-1}, since $(q_1,?c,q_3)\in T$, and yields $q_5$ in $S'$. Once two processes have been put on states $q_1$ and $q_5$ respectively (remember that $q_1$ and $q_5$ are conflict-free in $F(\gamma)$), iterating rendez-vous on letter $c$ (with transition $(q_1, ?c, q_3)$) and rendez-vous on letter $a$ put as many processes as one wants on state $q_5$.
%Finally, $F({F(\set{q_{in}},\emptyset)})=(\set{{q_{in}}, q_2,{q_4}, q_5, q_6,q_7},\linebreak[0]\set{{(q_1,a)}, {(q_1,b)} ,(q_3,a),(q_3,b)})$. Since $q_1$ and $q_3$ are not
%conflict-free, they won't be reachable together in a configuration. % from a configuration with a
%great number of processes on state $\qinit$ and one process on state
%$q_5$, then, one can put a great number of processes on state $q_7$ by
%doing successively rendez-vous with letter $d$ and a non-blocking
%request on letter $c$.
%\end{example}

%Observe that it might be that a state is both added to $S''$ and $\Toks''$; in that case, it will be removed from $\Toks'$ by application of the last rule of $F$. Hence, 
%a state belongs either to $S'$ or to $\starg{\Toks'}$. 

\color{black}
\begin{example}

Now the case of state $q_5$ evoked in the previous example leads to application of  rule~\ref{ccover-wo-F-cond-3toks-1}, since $(q_5,c)$, $(q_1,a) \in \Toks''_2$, and $(q_3,a)$ $(q_1,?c,q_3)\in T$.
Finally, $F({F(\set{q_{in}},\emptyset)})=(\set{{q_{in}}, q_2,{q_4}, q_5, q_6,q_7},\linebreak[0]\set{{(q_1,a)}, {(q_1,b)} ,(q_3,a),(q_3,b)})$. Since $q_1$ and $q_3$ are not
conflict-free, they won't be reachable together in a configuration. 
\color{black}

We consider now the wait-only protocol $\PP_2$ depicted on Figure
\ref{fig-example-wo-2}. In that case, to compute
$F((\set{q_{in}},\emptyset))$ we will first have $S''=\set{q_{in}}$
and $\Toks''=\set{(q_1,a),(q_2,b),(p_1,m_1),(p_2,m_2),\linebreak[0](p_3,m_3)}$
(using rule \ref{ccover-wo-F-cond-newtok}), to finally get $F((\set{q_{in}},\emptyset))=(\set{q_{in},q_1,p_1},\set{(q_2,b),(p_2,m_2),\linebreak[0](p_3,m_3)}))$.
Applying rule \ref{ccover-wo-F-cond-2toks-1}~to tokens $(q_1,
a)$ and $(q_2, b)$ from $\Toks''$, we obtain that $q_1\in S'$: whenever one manages
to obtain one process in state $q_2$, this process can answer the requests on message $a$ instead of processes in state $q_1$, allowing one
to obtain as many processes as desired in state $q_1$.  
% appear on the tokens set, then applying $F$ should
%add $q_1$ to the set of unbounded states. 
%Indeed, take the reachable
%configuration with one process on state $q_1$, many processes on state
%$\qinit$ and no process on state $q_2$. With successive non-blocking
%requests on letter $b$, and rendez-vous on letter $a$ with transitions
%$(\qinit, !a, q_1)$ and $(q_2, ?a, q_3)$, we can reach a configuration
%with many processes on state $q_1$.
%
Now since $(p_1,m_1)$, $(p_2, m_2)$ and $(p_3, m_3)$ are in $\Toks''$
and respect the conditions of rule \ref{ccover-wo-F-cond-3toks-2}, $p_1$ is added to the set $S'$ of unbounded states.
% (we have indeed
%$m_1 \nin \Rec{p_2}$, $m_1 \in \Rec{p_3}$ and  $m_2\nin \Rec{p_1}$, $m_2 \in \Rec{p_3}$, and $m_3 \in \Rec{p_2}$ and $m_3 \in \Rec{p_1}$).
This case is a generalisation of the previous one, with 3 processes. Once one process has
been put on state $p_2$ from $\qinit$, iterating the following actions: rendez-vous over $m_3$, rendez-vous over $m_1$, non-blocking request of $m_2$,
will ensure as many processes as one wants on state $p_1$.
%
%To justify this, take the reachable configuration with one process on
%state $p_1$, one on state $p_2$ and many processes on state
%$\qinit$. By doing successively: rendez-vous on letter $m_3$ with
%transitions $(\qinit, !m_3, p_3)$ and $(p_2, ?m_3, p_4)$, rendez-vous
%on letter $m_1$ with transitions $(\qinit, !m_1, p_1)$ and $(p_3,
%?m_1, p_4)$ and non-blocking request on message $m_2$ (note that $m_2
%\nin \Rec{p_1}$), we can reach a configuration with many processes on
%state $p_1$. 
Finally applying successively $F$, we get in this case
the abstract set $(\set{q_{in},q_1,q_3,p_1,p_2,p_3,p_4},\set{(q_2,b)})$.
\end{example}

 We show that $F$ satisfies the following properties.
\begin{lemma}\label{lem:function-F}
\begin{enumerate}
\item \label{lem:function-F-poly}$F(\gamma)$ is consistent and can be computed in polynomial time
  for all consistent $\gamma \in \Gamma$.
\item \label{lem:function-F-included}If $(S',\Toks')=F(S,\Toks)$ then $S \neq S'$ (and $S \subseteq S'$) \color{black}or
  $\Toks \subseteq \Toks'$.
  %\nas{pourquoi un $\subsetneq$ et un $\subseteq$?}
 \item \label{lem:function-F-a-step}For all consistent $\gamma \in \Gamma$, if $C \in
   \Interp{\gamma}$ and $C \arrowP{}C'$ then $C' \in
   \Interp{F(\gamma)}$.
 \item\label{lem:function-F-bck-step} For all consistent $\gamma \in \Gamma$, if $C' \in
 \Interp{F(\gamma)}$, then there exists $C'' \in \CC$ and $C \in
 \Interp{\gamma}$ such that $C'' \geq C'$ and $C \arrowP{}^\ast C''$.
\end{enumerate}
\end{lemma}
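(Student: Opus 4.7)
The plan is to prove the four items in order, relying mostly on structural induction and careful case analysis on the rules of Tables~1 and~2.

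For item \ref{lem:function-F-poly}, I would first argue termination and complexity. The sets $S''$ and $\Toks''$ are subsets of $Q$ and $Q_W\times \Sigma$, both of polynomial size. Rules 1--5 are strictly monotone and each rule check only inspects $T$, $\Rec{\cdot}$, and the current sets, so saturating to a fixpoint takes polynomially many iterations, each running in polynomial time. Rules 6--8 are then applied in one pass to produce $S'$, and finally $\Toks'=\Toks''\setminus (S'\times\Sigma)$. For consistency, the path condition $(i)$ is proved by induction on when a token is added to $\Toks''$: tokens from $\Toks$ inherit their path from the consistency of $\gamma$, tokens added by rule \ref{ccover-wo-F-cond-newtok} get a one-transition path from the corresponding send in $S$, and tokens added by rule \ref{ccover-wo-F-cond-tok-step} prepend a matching $?a$ transition (with a $!a$ send from $S$ available) to an existing path, keeping the last sent message equal to $m$. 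Condition $(ii)$ is preserved because rules 6--8 promote to $S'$ exactly those states that would otherwise violate the pairwise conflict-free clause, and the leftover tokens either share all their mutual receptions or none (which is the combinatorial content of these rules).

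Item \ref{lem:function-F-included} follows by bookkeeping: $\Toks\subseteq \Toks''$ by rule 1, and $\Toks'=\Toks''\setminus (S'\times\Sigma)$. If every $(q,m)\in \Toks$ has $q\notin S'$, then $\Toks\subseteq \Toks'$; otherwise some $q\in \starg{\Toks}$ has been inserted in $S'$, but $q\notin S$ (by the third clause of an abstract set), so $S\subsetneq S'$ and in particular $S\ne S'$.

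For item \ref{lem:function-F-a-step}, I would do a case analysis on the type of the transition $C\arrowP{} C'$. Because $C\in \Interp{\gamma}$, every source state of the transition is either in $S$ or in $\starg{\Toks}$ with multiplicity 1. An \textbf{(internal)} transition from $p\in S$ is covered by rule \ref{ccover-wo-F-cond-internal}; a \textbf{(rendez-vous)} with both endpoints in $S$ is covered by rules \ref{ccover-wo-F-cond-send-S} and \ref{ccover-wo-F-cond-reception-S}; a \textbf{(rendez-vous)} with receiver in $\starg{\Toks}$ is handled by rules \ref{ccover-wo-F-cond-reception-S} or \ref{ccover-wo-F-cond-tok}, depending on whether the receiving token's memorized message $m$ still can be received after the move; and a \textbf{(non-blocking request)} from $p\in S$ is handled by rule \ref{ccover-wo-F-cond-send-S} with case (b) excluded since there is no receiver. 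In every case one verifies that the resulting states lie in $S''\cup \starg{\Toks''}$, and that the cardinality constraints of $\Interp{F(\gamma)}$ (single occupant for tokens, pairwise conflict-freeness) still hold — the latter being precisely the reason rules 6--8 push certain token states into $S'$.

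Item \ref{lem:function-F-bck-step} will be the main obstacle, because one has to actually construct an execution from a configuration in $\Interp{\gamma}$ that overapproximates $C'$. My plan is to take $C$ to have sufficiently many copies of $\qinit$ (equivalently, to put the whole multiset needed on the states of $S$, which is legitimate since $\Interp{\gamma}$ imposes no bound there) and to build $C''\geq C'$ by an explicit pumping schedule. Each state added to $S'$ by rules 1--5 is reached by a path from $S$ that can be replayed arbitrarily many times (internal and send-to-$S$ moves), possibly after routing through an existing conflict-free token. The real work is to justify rules 6--8: in each of them one exhibits a concrete loop producing an extra process in $q_1$. For rule \ref{ccover-wo-F-cond-2toks-1}, we pump a process in $q_2$, use the request $!m_1$ that leads to $(q_1,m_1)$ to simultaneously feed a process into $q_1$ while the old occupant of $q_2$ receives it (vacating $q_2$ so it can be refilled via its consistency path); because $m_2\notin\Rec{q_1}$, earlier fillings of $q_1$ are not disturbed when we later refill $q_2$. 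Rule \ref{ccover-wo-F-cond-3toks-1} uses the third token $(q_3,m_2)$ as a ``bin'' into which the current occupant of $q_2$ is evacuated by the $?m_1$ move, again freeing $q_2$ to be refilled. Rule \ref{ccover-wo-F-cond-3toks-2} combines both ideas with three tokens. Finally, for the tokens in $\Toks'$, the conflict-freeness preserved from $\Toks''$ plus condition $(ii)$ of consistency guarantees that one can route a single process to each simultaneously. Lemma~\ref{lem:interp-cover-check} then lets us conclude $C''\geq C'$ from $C''\in \Interp{F(\gamma)}$ and the construction.
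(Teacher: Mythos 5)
Your proposal follows essentially the same route as the paper's proofs: saturation plus a case analysis on which rule introduced each token for consistency, the contrapositive bookkeeping for item 2, a case analysis on the transition type (with rules 6--8 doing the work of preserving conflict-freeness) for item 3, and per-rule pumping loops combined over all of $S'$ and then the token states for item 4. The only caveat is that the two genuinely delicate steps — showing after a rendez-vous that at most one of the two target states survives as a token and is conflict-free with the other occupied tokens (the paper's long subcase analysis via conditions 7 and 8), and the argument that the individual pumping schedules can be merged into a single execution (the paper's auxiliary reachability lemma, which relies on monotonicity) — are only gestured at in your sketch, though the ideas you name are exactly the ones the paper uses to close them.
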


\iflong
Point 1. and 2, ensures us that if we apply successively the function
$F$ to $(\set{q_{in}},\emptyset)$ then the computation will reach a consistent
abstract set $\gamma_f$ such  that $\gamma_f=F(\gamma_f)$ and it will
take a polynomial time. Points 3. ensures that the computed abstraction
is complete whereas Point 4. guarantees its soundness.
\fi

\subsection{Polynomial Time Algorithm}

We now present our polynomial time algorithm to solve \CCover~for
wait-only protocols. %It consists in computing a finite sequence of
%abstract sets of configurations using the fonction $F$ iteratively. 
We
define the sequence $(\gamma_n)_{n \in \nat}$ as follows:
$\gamma_0=(\set{\qinit},\emptyset)$ and $\gamma_{i+1}=F(\gamma_i)$ for all
$i \in \nat$. First note that $\gamma_0$ is consistent and that $\Interp{\gamma_0}=\Init$ is the set of
initial configurations. Using Lemma \ref{lem:function-F}, we deduce
that $\gamma_i$ is consistent for all $i \in \nat$. Furthermore, each time we apply $F$ to an
abstract set of configurations $(S,\Toks)$ either $S$  or $\Toks$
increases, or $(S, \Toks)$ stabilises. Hence for all $n \geq |Q|^2*|\Sigma|$, we have
$\gamma_{n+1}=F(\gamma_n)=\gamma_n$. Let
$\gamma_f=\gamma_{|Q|^2*|\Sigma|}$. Using Lemma \ref{lem:function-F},
we get:

\begin{lemma}\label{lem:correct-abstraction}
Given $C \in \CC$, there exists $C_0 \in \Init$ and $C' \geq C$
such that $C_0 \arrowP{}^\ast C'$ if and only if there exists $C'' \in
\Interp{\gamma_f}$ such  that $C'' \geq C$.
\end{lemma}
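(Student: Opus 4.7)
The plan is to prove both directions using the four properties of $F$ established in Lemma~\ref{lem:function-F}, combined with the fact that $\gamma_f$ is a fixed point of $F$, reached in $N := |Q|^2 \cdot |\Sigma|$ iterations (so $\gamma_f = F^N(\gamma_0)$). The forward direction ($\Rightarrow$) will be a direct induction on the length of the witnessing execution using the soundness property (item~3 of Lemma~\ref{lem:function-F}) applied to the fixed point $\gamma_f$. The backward direction ($\Leftarrow$) will peel off $F$ layer by layer using the completeness property (item~4), and chain together the obtained step-by-step executions using monotonicity of the non-blocking transition relation.

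For the forward direction, I would first observe that $\Init \subseteq \Interp{\gamma_f}$: every initial configuration has all its processes in $\qinit$, and since $\qinit \in S_0 \subseteq S_f$ (by item~2 of Lemma~\ref{lem:function-F}, $S$ never shrinks), such configurations trivially respect $\gamma_f$. Then, given an initial execution $C_0 \arrowP{}^* C'$ with $C' \geq C$, I would apply item~3 of Lemma~\ref{lem:function-F} with $\gamma = \gamma_f$ inductively along the execution: from $C_i \in \Interp{\gamma_f}$ and $C_i \arrowP{} C_{i+1}$, we obtain $C_{i+1} \in \Interp{F(\gamma_f)} = \Interp{\gamma_f}$. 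Hence $C' \in \Interp{\gamma_f}$, and taking $C'' := C'$ concludes this direction.

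For the backward direction, starting from $C'' \in \Interp{\gamma_f}$ with $C'' \geq C$, I would trace back through the finite chain $\gamma_0, \gamma_1, \ldots, \gamma_N = \gamma_f$. Applying item~4 of Lemma~\ref{lem:function-F} to $\gamma_{N-1}$ yields $D_1 \in \Interp{\gamma_{N-1}}$ and $E_1 \geq C''$ with $D_1 \arrowP{}^* E_1$. Applying it to $\gamma_{N-2}$ yields $D_2 \in \Interp{\gamma_{N-2}}$ and $E_2 \geq D_1$ with $D_2 \arrowP{}^* E_2$. To glue these two executions into a single one, I would invoke a monotonicity lemma for the non-blocking semantics, namely: if $D \arrowP{}^* E$ and $D' \geq D$, then $D' \arrowP{}^* E'$ for some $E' \geq E$. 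Applied here, $E_2 \geq D_1$ gives $E_2 \arrowP{}^* E_1'$ with $E_1' \geq E_1 \geq C''$, so $D_2 \arrowP{}^* E_1'$ and $E_1' \geq C''$. Iterating this $N$ times yields $D_N \in \Interp{\gamma_0} = \Init$ together with $D_N \arrowP{}^* F$ for some $F \geq C'' \geq C$, so $C_0 := D_N$ witnesses the left-hand side.

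The main obstacle is the monotonicity lemma used to splice executions, which is not stated explicitly in the excerpt. I would establish it by induction on the length of $D \arrowP{}^* E$ and case analysis on each step. Internal actions and blocking rendez-vous remain enabled in any configuration larger than $D$, so they propagate directly. The only delicate case is a non-blocking request of $a$ from a state $q$: if $D'$ happens to contain processes in receiver states of $a$ that $D$ did not, the non-blocking send is no longer enabled in $D'$; the remedy is to fire the corresponding blocking rendez-vous on $a$ instead, and a short componentwise check confirms that the configuration reached from $D'$ still dominates the one reached from $D$. Modulo this lemma, the rest is a routine application of Lemma~\ref{lem:function-F} together with the observation that initial configurations are in $\Interp{\gamma_f}$ and that $\gamma_f$ stabilizes after $N$ iterations.
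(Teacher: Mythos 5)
Your proof is correct and follows essentially the same route as the paper's: soundness (item~3 of Lemma~\ref{lem:function-F}) pushed step by step along the execution for the forward direction, and completeness (item~4) unwound through the chain $\gamma_N,\ldots,\gamma_0$ and glued together by monotonicity for the backward direction. The splicing lemma you re-derive is exactly the paper's Lemma~\ref{lem:monotonicity} (proved in the appendix, including the delicate non-blocking-request case you flag), and your use of the fixed point $F(\gamma_f)=\gamma_f$ in the forward direction is an immaterial variant of the paper's observation that $\Interp{\gamma_\ell}\subseteq\Interp{\gamma_f}$.
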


We need  to iterate $|Q|^2*|\Sigma|$ times the function $F$ to
compute $\gamma_f$ and each computation of $F$ can be done in
polynomial time. Furthermore checking whether there exists $C'' \in
\Interp{\gamma_f}$ such  that $C'' \geq C$ for a configuration $C \in
\CC$ can be done in polynomial time by Lemma
\ref{lem:interp-cover-check}, hence using the previous lemma we obtain the desired result.

\begin{theorem}
 \CCover~and \Cover~restricted to wait-only protocols are in \Ptime.
\end{theorem}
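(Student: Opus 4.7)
The plan is essentially to assemble the pieces that have already been set up: given that the function $F$ has been shown to preserve consistency, compute in polynomial time, monotonically enlarge its argument, and correctly over- and under-approximate one-step reachability (Lemma \ref{lem:function-F}), together with the soundness/completeness statement of Lemma \ref{lem:correct-abstraction} and the polynomial-time membership test of Lemma \ref{lem:interp-cover-check}, almost all the work is done. My plan is to build the polynomial-length sequence $(\gamma_n)$ defined before the theorem, show it stabilizes quickly at a consistent fixpoint $\gamma_f$, and reduce \CCover~to a single polynomial-time membership test against $\Interp{\gamma_f}$.

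First I would argue by induction on $n$, using Lemma \ref{lem:function-F}.\ref{lem:function-F-poly}, that every $\gamma_n$ is consistent; the base case $\gamma_0 = (\set{\qinit},\emptyset)$ is consistent since $\Toks = \emptyset$ makes conditions (i) and (ii) vacuous. Next, I would bound the stabilization time: by Lemma \ref{lem:function-F}.\ref{lem:function-F-included}, whenever $\gamma_{i+1}\neq\gamma_i$ either $S$ or $\Toks$ strictly grows, and since $S\subseteq Q$ and $\Toks\subseteq Q_W\times\Sigma$, this can happen at most $|Q|+|Q|\cdot|\Sigma|\le |Q|^2\cdot|\Sigma|$ times. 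Hence $\gamma_f:=\gamma_{|Q|^2\cdot|\Sigma|}$ satisfies $F(\gamma_f)=\gamma_f$, and it is computable in polynomial time since $F$ is (Lemma \ref{lem:function-F}.\ref{lem:function-F-poly}) and the number of iterations is polynomial.

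Then I would invoke Lemma \ref{lem:correct-abstraction} to reduce \CCover~ on input $C\in\CC$ to checking the existence of $C''\in\Interp{\gamma_f}$ with $C''\geq C$. By Lemma \ref{lem:interp-cover-check}, this is equivalent to $C\in\Interp{\gamma_f}$, and this membership test runs in polynomial time. For \Cover, the same argument applies by taking the target $C=\mset{q_f}$, so \Cover~is solved polynomially as well.

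There is no real obstacle left at this stage; the conceptual difficulty has all been packaged into Lemma \ref{lem:function-F} (establishing that $F$ is sound, complete, monotone, and polynomial) and Lemma \ref{lem:correct-abstraction} (that the fixpoint is the right abstraction). The only mild care needed is to state the polynomial bound on the number of iterations correctly, and to observe that Lemma \ref{lem:correct-abstraction} follows directly by an induction on the length of an execution witnessing coverability, combining Lemma \ref{lem:function-F}.\ref{lem:function-F-a-step} (completeness of the abstraction along single steps) with Lemma \ref{lem:function-F}.\ref{lem:function-F-bck-step} (soundness), and using the stability $F(\gamma_f)=\gamma_f$ to close the induction.
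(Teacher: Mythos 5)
Your proof is correct and follows essentially the same route as the paper: iterate $F$ from $(\set{\qinit},\emptyset)$, bound the number of iterations by $|Q|^2\cdot|\Sigma|$ via Lemma~\ref{lem:function-F}, and reduce both problems to a polynomial-time membership test in $\Interp{\gamma_f}$ via Lemma~\ref{lem:correct-abstraction} and Lemma~\ref{lem:interp-cover-check}. The only cosmetic caveat is that $\Toks$ need not grow monotonically over the whole sequence (tokens are deleted when their state is promoted into $S'$), so the additive count $|Q|+|Q|\cdot|\Sigma|$ is not literally justified by item~\ref{lem:function-F-included}; the bound should instead be read as the product $|Q|\cdot|Q|\cdot|\Sigma|$, since $\Toks$ increases monotonically between two consecutive strict growths of $S$ --- which is exactly the bound the paper itself uses.
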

%
%We get the immediate corollary.
%\begin{corollary}
%	\Cover~restricted to wait-only protocols is in \Ptime.
%\end{corollary}

%	\input{old-wo}

    %\section{About synchronization: undecidability and open questions}\nas{open questions : on en parle?}
    \section{Undecidability of \Target}\label{sec:target}

	%We will see in this section that the synchronization and
%termination problems are undecidable. To obtain this result,
%we rely in the both cases on a reduction from the coverability
%problem for Minsky machines.

It is known that \CMCover[CM] is undecidable in its full generality~\cite{Minsky67}. This result holds for a very restricted class of counter machines, namely Minsky machines (Minsky-CM for short), which are CM over 2 counters, $\cpt_1$ and $\cpt_2$.
 Actually, it is already
undecidable whether there is an execution $(\ellinit,\mathbf{0}_{\{\cpt_1,\cpt_2\}})\transCM^* (\ell_f, \mathbf{0}_{\{\cpt_1,\cpt_2\}})$. 
Reduction from this last problem gives the following result.
%
%For readability's sake, when it comes to Minsky Machines, we will often write a configuration $(\ell, v)$ as $(\ell, v_1, v_2)$ where $v_1 = v(\cpt_1)$ and $v_2 = v(\cpt_2)$ (and $\Counters = \{\cpt_1, \cpt_2\})$.
%We will consider \emph{deterministic} \MinskyMachine, i.e machines satisfying the following condition: for all $\ell \in \Loc$, there exists a transition $(\ell, \dec{\cpt}, \ell') \in \Delta$ iff there exists a transiton $(\ell, \testz{\cpt}, \ell'') \in \Delta$ for some $\cpt \in \Counters$ and $\ell', \ell''\in \Loc$.\lug{je ne sais pas c'est nécessaire à définit?}

%\subsection{\TTarget~problem}
%
%This subsection presents a proof for the undecidability of the \tTarget~problem for the non-blocking semantics in rendez-vous protocols.

\begin{theorem} \label{target:thm}
\Target~is undecidable, even for wait-only protocols. 
	%The \TTarget~problem for rendez-vous protocols is undecidable and it is already the case for wait-only protocols.
\end{theorem}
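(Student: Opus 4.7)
The plan is to reduce the undecidable reachability problem $(\ellinit,\mathbf{0})\transCM^*(\ell_f,\mathbf{0})$ for deterministic two-counter Minsky machines to \Target. Given a Minsky machine $M$ I construct a wait-only protocol $\PP$ in which a distinguished process — the \emph{leader} — simulates the control flow of $M$, while the remaining processes serve as a common pool of tokens for the counters $\cpt_1$ and $\cpt_2$.

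The state space of $\PP$ contains: the active initial state $\qinit$; an active copy $L_\ell$ of every Minsky location $\ell$; waiting states used as acknowledgement-rendez-vous points; a waiting worker state $W$; waiting token states $c_1$ and $c_2$; a dead-end state $\textit{bad}$ with no path to $q_f$; and the final state $q_f$. Each Minsky operation is simulated by a dedicated gadget. An increment $\ell\xrightarrow{\inc{\cpt_i}}\ell'$ is encoded by a two-phase handshake: the leader sends $!\mathrm{inc}_i$ and enters a waiting ack-state; any worker in $W$ receives the message, briefly moves to an active state to send back an acknowledgement, and settles in $c_i$. If no worker is available, the initial send is non-blocking and the leader is irrevocably stuck in the ack-state. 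A decrement is encoded symmetrically, so a missing token in $c_i$ blocks the leader. A zero-test $\ell\xrightarrow{\testz{\cpt_i}}\ell'$ is encoded by a single $!\mathrm{zero}_i$ send: if $c_i$ is empty the send is non-blocking and the leader proceeds; otherwise a token in $c_i$ receives the message and is routed to $\textit{bad}$. Every state has either only internal or send outgoing transitions, or only receive outgoing transitions, so $\PP$ is wait-only.

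Termination uses a draining phase: once the leader reaches $L_{\ell_f}$, it repeatedly sends $!\mathrm{done}$, each send rendez-vousing with a remaining worker in $W$ (moved to $q_f$) or, eventually, being a harmless non-blocking send; a final internal transition takes the leader itself to $q_f$. Crucially, $c_1$, $c_2$, and $\textit{bad}$ admit no $?\mathrm{done}$ transition, so any leftover token or any process routed to $\textit{bad}$ permanently blocks \Target. For the forward direction, given a Minsky computation of maximum counter value $k$ reaching $(\ell_f,\mathbf{0})$, I start with $k+1$ processes — one as leader, $k$ as workers — and simulate step by step; all processes end in $q_f$.

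The backward direction is where the main difficulty lies: I must show that any synchronizing execution of $\PP$ forces $M$ to have a valid computation. I equip the protocol with a leader-election gadget at the very start that routes any second candidate leader to $\textit{bad}$ via a waiting witness state, ensuring that only one process effectively simulates $M$. The trace of this unique leader — constrained by the absence of any process in $\textit{bad}$ and of any leader stuck in an ack-state — must then reflect a genuine Minsky execution: each decrement sees $c_i$ non-empty (so the counter is positive) and each zero-test sees $c_i$ empty (so the counter is zero), exactly matching Minsky semantics. The main technical challenge is the design of this election gadget under the wait-only restriction: since active states cannot receive messages, the check for a competing leader has to be routed through an auxiliary waiting state, alternating with the leader's active sending states to preserve the strict active/waiting partition.
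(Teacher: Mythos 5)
Your reduction is, in all its essential components, the same as the paper's: the same source problem (deciding whether $(\ellinit,\mathbf{0})\transCM^*(\ell_f,\mathbf{0})$ in a Minsky machine), counters encoded as populations of processes parked in waiting states, increments and decrements implemented as two-message acknowledgement handshakes so that the leader blocks forever in a waiting ack-state when no token answers, and zero-tests implemented as a single send whose reception---possible exactly when the counter is non-empty---dooms a token to a dead-end state. The draining phase you add is harmless but unnecessary: the paper instead sends from $\qinit$ exactly one process per increment into the token gadget and routes each token to $\ell_f$ through the decrement acknowledgement, so that ``both counters are zero at the end'' becomes literally equivalent to ``all token processes reached $\ell_f$''.

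The genuine gap is the leader-uniqueness gadget, which you correctly identify as the crux but leave unconstructed, and whose stated behaviour---``routes any second candidate leader to $\textit{bad}$''---is not achievable as described. Under the wait-only restriction the leader spends its sending phases in active states, which cannot receive, so an existing leader cannot be notified of a competitor; a competitor cannot detect an existing leader either, since under the non-blocking semantics a sender learns nothing from whether its message was answered; and a witness parked in a waiting state cannot emit the message that would expel anyone. The paper resolves this differently: it does not \emph{prevent} a second simulation, it makes one fatal a posteriori. To reach $\ellinit$, a candidate must go through a forced $\textrm{init}$/$\textrm{ackinit}$ exchange that promotes exactly one partner process to a witness state $w'$; each witness can only reach $\ell_f$ by sending a message $\textrm{w}$; and $\ell_f$ carries a reception $?\textrm{w}$ leading to the dead state $\frownie$, so the second witness's send necessarily ejects whoever already sits in $\ell_f$. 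Hence any synchronizing execution admits at most one witness, and therefore at most one leader. Without this (or an equivalent counting argument) the backward direction of your reduction does not go through: two interleaved leaders could steal each other's acknowledgements and desynchronize the token populations from the counter values, so a synchronizing execution would no longer certify a genuine Minsky computation.
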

%We prove that one can simulate a Minsky machine with a rendez-vous protocol, such that, there exists an execution of the machine from an initial configuration to a final configuration iff there exists an initial execution of the protocol leading to a configuration for which all processes are in the final state.
%
%
%
%Let $M$ be a Minsky machine. We define a rendez-vous protocol \PP~such that there exists an execution from $(\ell_0, 0, 0)$ from $(\ell_f, 0, 0)$ in the Minsky machine if and only if there exists $c \in \Init(\PP)$, $c' \in \FinalA(\PP)$ such that $c \arrowPnb{}^* c'$. 
%
%
Fix $M = (\Loc, \ell_0, \{ \cpt_1, \cpt_2\}, \Delta )$ with $\ell_f \in \Loc$ the final state. W.l.o.g., we assume that there is no outgoing transition from state $\ell_f$ in the machine. 
The protocol \PP~is described in \cref{wait-only:fig:target,wait-only:fig:target:incr,wait-only:fig:target:decr}. 
\begin{figure}
		\begin{minipage}[c]{.55\columnwidth}
				\resizebox*{!}{3.7cm}{
				\begin{tikzpicture}[->, >=stealth',  shorten >=1pt,node distance=1.5cm,on grid,auto, initial text = {}] 
					\node[state, initial] (q0) {$\qinit$};
					
					\node[state] (w) [above = 0.6 of q0, xshift = 40] {$w$};
					\node[state] (wp) [ right = 2.5 of w] {$w'$};
					\node[state, accepting] (qf) [right = 2 of wp] {$\ell_f$};
%					\node[state] (b) [right =of qf] {$\frownie$};
					
					\node[state] (q1) [below = 0.8 of q0, xshift = 40] {$q_1$};
					\node[state] (q2) [right = 2 of q1] {$q_2$};
					\node[state] (l0) [below = 1.4 of qf] {$\ellinit$};
					
					\node[state] (0) [above = 2 of q0] {$0_i$};
					\node[state] (01) [ right =of 0] {$p_i$};
					\node[state] (1) [ right = 2.5of 01] {$1_i$};
					\node[state] (10) [ right = 2 of 1] {$p'_i$};
%					\node[state, accepting] (lf) [right = 2.5of 10] {$\ell_f$};
					
					\node[state] (b1) [right =3.5 of 1] {$\frownie$};
					\node[draw, fill = teal, opacity = 0.1, text opacity = 1, fit=(l0) (qf), text height=0.12 \columnwidth] (P) {};

					\path[->]
					(q0) edge node {$\tau$} (q1)
					edge node [xshift = 5]{\small $!\textrm{init}$} (w)
					edge node {$\tau$} (0)
					
					(q1) edge node {\small $?\textrm{init}$} (q2)
					(q2) edge node {\small $!\textrm{ackinit}$} (l0)
					
					(w) edge node {\small $?\textrm{ackinit}$} (wp)
					(wp) edge node {\small $!\textrm{w}$} (qf)
					(qf) edge [in = -60, out = -30]  node  [below] {\small $?\textrm{w}$} (b1)
					
					(0) edge node {\small $?\textrm{inc}_i$} (01)
					(01) edge node {$!\textrm{ackinc}_i$} (1)
					(1) edge node {$?\textrm{dec}_i$} (10)
					edge [in = 140, out = 40]  node  {$?\textrm{zero}_i$} (b1)
					(10) edge node  [yshift = 5]{$!\textrm{ackdec}_i$} (qf)
					;
					
				\end{tikzpicture}
			}

			\caption{The protocol $\PP$ -- The coloured zone \\contains transitions pictured in~\cref{wait-only:fig:target:incr,wait-only:fig:target:decr,wait-only:fig:target:test}}\label{wait-only:fig:target}
		\end{minipage}
		\begin{minipage}[c]{.41\columnwidth}
\resizebox*{3.7cm}{!}{
				\begin{tikzpicture}[->, >=stealth',  shorten >=1pt,node distance=1.9cm,on grid,auto, initial text = {}] 
					\node[state] (l) {$\ell$};
					\node[state] (q) [right =of l] {};
					\node[state] (l') [right = 2.2 of q] {$\ell'$};
					
					\path[->]
					(l) edge node {$!\textrm{inc}_i$} (q)
					(q) edge node {$?\textrm{ackinc}_i$} (l');
					
				\end{tikzpicture}
}
				\caption{Translation of $(\ell, \inc{\cpt_i}, \ell')$.}\label{wait-only:fig:target:incr}
				\resizebox*{3.7cm}{!}{
				\begin{tikzpicture}[->, >=stealth',  shorten >=1pt,node distance=1.9cm,on grid,auto, initial text = {}] 
					\node[state] (l2) [] {$\ell$};
					\node[state] (q2) [right =of l2] {};
					\node[state] (l2') [right = 2.2 of q2] {$\ell'$};

					\path[->]
					(l2) edge node {$!\textrm{dec}_i$} (q2)
					(q2) edge node {$?\textrm{ackdec}_i$} (l2')
					;

				\end{tikzpicture}
			}
				\caption{Translation of $(\ell, \dec{\cpt_i}, \ell')$.}\label{wait-only:fig:target:decr}
	\resizebox*{2cm}{!}{
		\begin{tikzpicture}[->, >=stealth',  shorten >=1pt,node distance=1.9cm,on grid,auto, initial text = {}] 
			\node[state] (l2) [] {$\ell$};
			\node[state] (l2') [right = of l2] {$\ell'$};

			\path[->]
			(l2) edge node {$!\textrm{zero}_i$} (l2');

		\end{tikzpicture}
	}
	\caption{Translation of $(\ell, \testz{\cpt_i}, \ell')$.}\label{wait-only:fig:target:test}
	
		\end{minipage}
		
\end{figure}
The states $\{0_i,p_i,1_i,p'_i\mid i=1,2\}$ will be visited by processes simulating values of counters, while the states in $\Loc$ will be visited by a process simulating
the different locations in the Minsky-CM. If at the end of the computation, the counters are equal to 0, it means that each counter has been incremented and 
decremented the same number of times, so that all processes simulating the counters end up in the state $\ell_f$. The first challenge is to appropriately check when 
a counter equals 0. This is achieved thanks to the non-blocking semantics: the process sends a message $!\textrm{zero}_i$ to check if the counter $i$ equals 0. If it is does not, the message will be received by a process that will end up in the deadlock state $\frownie$. The second challenge is to ensure that only
one process simulates the Minsky-CM in the states in $\Loc$. This is ensured by the states $\{w, w'\}$. Each time a process arrives in the $\ellinit$ state, another must arrive in the $w'$ state, as a witness that the simulation has begun. This witness must reach $\ell_f$ for the computation to be a testifier of a positive instance of \Target, but
it should be the first to do so, otherwise a process already in $\ell_f$ will receive the message ``$\textrm{w}$'' and reach the deadlock state $\frownie$. Thus, if two processes simulate the Minsky-CM, there will be two witnesses, and they won't be able to reach $\ell_f$ together.

	\section{Conclusion}
	We have introduced the model of parameterised networks communicating by non-blocking rendez-vous, and showed that safety analysis of such networks becomes
much harder than in the framework of classical rendez-vous. Indeed, \CCover~and \Cover~become \Expspace-complete and \Target~undecidable in our 
framework, while these problems are solvable
in polynomial time in the framework of~\cite{german92}. We have introduced a natural restriction of protocols, in which control states
are partitioned between \emph{active} states (that allow requesting of rendez-vous) and \emph{waiting}  states (that can only answer to rendez-vous) and showed that \CCover~can then be solved in polynomial time. Future work includes finding further restrictions that would yield decidability of
\Target. A candidate would be protocols in which waiting states can only receive \emph{one} message. Observe that in that case, the reduction of~\cref{sec:target} can be adapted to simulate a \testfreeCM, hence \Target~for this subclass of protocols is as hard as reachability in Vector Addition Systems with States, i.e. non-primitive recursive \cite{leroux-reachability-focs21}. Decidability remains open though. 
\bibliography{main}

\newpage
\appendix

\section{Proofs of \cref{sec:cover-nb-machines}}

We present here the omitted proofs of \cref{sec:cover-nb-machines}.
\subsection{ Proof of~\cref{thm:cover-nbcm-in-expspace}}
We will in fact prove the \Expspace~upper bound for a more general model: Non-Blocking Vector Addition Systems (\NBVAS).
A \NBVAS~is composed of a set of transitions over vectors of dimension $d$, sometimes called counters, and an initial vector of $d$ non-negative integers, like in VAS. However, in a \NBVAS, a transition is a pair of vectors: one is a vector of $d$ integers and is called the \emph{blocking} part of the transition and the other one is a vector of $d$ \emph{non-negative} integers and is called the \emph{non-blocking} part of the transition. %Formally it is defined as follows.
\begin{definition}
	Let $d \in \mathbb{N}$. A Non-blocking Vector Addition System (\NBVAS) of dimension $d$ is a tuple $(T, v_0)$ such that $T \subseteq \mathbb{Z}^d \times \mathbb{N}^d$ and $v_{init}\in \mathbb{N}^d$.
\end{definition}

Formally, for two vectors $v, v' \in \mathbb{N}^d$, and a transition $t=(t_b, t_{nb}) \in T$, we write $v \xtransVas{t} v'$ if there exists $v'' \in \mathbb{N}^d$ such that $v'' = v + t_b$ and, for all $i \in [1,d]$, $v'(i) = \max(0, v''(i) - t_{nb}(i))$. We write $\transVas$ for $\bigcup_{t \in T} \xtransVas{t}$. We define an execution as a sequence of vectors $v_1 v_2 \dots v_k$ such that for all $1 \leq i < k$, $v_i \transVas v_{i+1}$.  

Intuitively, the blocking part $t_b$ of the transition has a strict semantics: to be taken, it needs to be applied to a vector large enough so no value goes below 0. The non-blocking part $t_{nb}$ can be taken even if it decreases some component below 0: the corresponding component will simply be set to 0. 

We can now define what is the \Cover~problem on \NBVAS.

\begin{definition}
\Cover~problem for a \NBVAS~$V = (T,v_{init})$ of dimension $d \in \mathbb{N}$~and a target vector $v_f$, asks if there exists $v\in \mathbb{N}^d$, such that $v \geq v_f$ and $v_{init} \transVas^\ast v$.
\end{definition}

%We are ready to adapt proof of \Expspace~membership of \Cover~problem in VAS ( \cite{rackoff78covering}) to our problem: \Cover~in \NBVAS.

%This proof	is an adaptation of the proof of [rackoff]\lug{do refs} for VAS (Vector Addition System) to our model. 
%To this end, we will use \NBVAS~(Non-Blocking Vector Addition System).

Adapting the proof of \cite{rackoff78covering} to the model of \NBVAS~yields the following result.

\begin{lemma}\label{lemma:cover:nbvas-expspace}
	The \Cover~problem for \NBVAS~is in \Expspace.
\end{lemma}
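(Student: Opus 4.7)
The plan is to adapt Rackoff's classical EXPSPACE argument for coverability in ordinary VAS. The crucial property needed is monotonicity: if $v \le v'$ and $v \xtransVas{t} w$ for some transition $t=(t_b,t_{nb})$, then $v'\xtransVas{t} w'$ for some $w' \ge w$. Indeed, $v'+t_b \ge v+t_b \ge 0$ so $t$ is firable from $v'$, and for each coordinate $\max(0,(v'+t_b)(i)-t_{nb}(i)) \ge \max(0,(v+t_b)(i)-t_{nb}(i))$. So NBVAS form a well-structured transition system, which is all Rackoff's bound really uses.

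Following Rackoff, I would define for each $0 \le i \le d$ an \emph{$i$-covering execution} from $v_{init}$ as an execution reaching a vector that dominates $v_f$ on its first $i$ coordinates (ignoring the last $d-i$). Let $f(i)$ be the smallest integer such that whenever an $i$-covering execution exists, one of length at most $f(i)$ exists whose first-$i$ projections stay bounded by $f(i)$. The proof proceeds by induction on $i$ to show $f(d) \le 2^{2^{p(n)}}$ for some polynomial $p$ in the size $n$ of the input. For the inductive step I would split a shortest $(i{+}1)$-covering execution at the first configuration where some coordinate exceeds $f(i)$: the prefix keeps all coordinates small, so a pigeonhole argument on configurations bounded by $f(i)$ gives length at most $(f(i)+1)^{i+1}$; if no such split point ever occurs the execution is itself $i$-bounded and the induction hypothesis applies directly. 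After the split point, monotonicity lets us ``pump'' the large coordinate: any suffix firable from a vector is firable from any larger vector, so we can replace the remaining suffix by one that covers the target, and recurse. Once $f(d)$ is doubly-exponentially bounded, a nondeterministic procedure guesses the execution step by step, storing only the current configuration (of exponential bit-length) and a counter for the remaining length, giving NPSPACE and hence EXPSPACE by Savitch.

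The main delicate point in the adaptation is verifying that non-blocking decrements do not break the pumping argument. For a plain VAS, increasing a coordinate of a reachable configuration preserves the firing sequence identically on that coordinate (addition commutes with the shift). For NBVAS, a non-blocking transition with $t_{nb}(i) > 0$ may ``throw away'' a larger amount from a pumped configuration than from the original, so the exact difference between the two parallel executions on coordinate $i$ is not preserved. However, by the monotonicity observation above, the pumped execution still reaches a vector pointwise greater than or equal to the one reached by the original execution, which is all that the covering argument requires. The bounds in the inductive definition of $f(i)$ therefore carry through unchanged; only the statement of the pumping lemma has to be weakened from ``identical difference on pumped coordinates'' to ``componentwise domination'', and Rackoff's counting of intermediate configurations remains intact. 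This yields $f(d) \le 2^{2^{p(n)}}$ and the claimed EXPSPACE upper bound.
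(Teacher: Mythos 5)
You follow the same route as the paper: adapt Rackoff's induction to NBVAS, bound the length of a shortest covering execution doubly exponentially via a function $f(i)$, then guess-and-check in nondeterministic exponential space and apply Savitch. Your observation that non-blocking decrements preserve componentwise domination (monotonicity), and that domination rather than exact difference is all the covering argument needs, is correct and is indeed the reason the adaptation works.

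There is, however, a gap at the technical core. Rackoff's induction is not carried out on genuine executions but on \emph{relaxed} pseudo-executions in which only a chosen set $J$ of coordinates is required to stay nonnegative; the paper makes this explicit via the relation $\transVasZ$ and the notion of $J$-correct paths. Your write-up keeps genuine executions and tries to substitute monotonicity for this relaxation, but monotonicity points the wrong way in the bounded case: when two configurations $v_k$ and $v_\ell$ (with $k<\ell$) of the prefix agree on the tracked coordinates and you splice out the loop between them, the untracked coordinates of $v_k$ may be \emph{smaller} than those of $v_\ell$, so the remaining suffix need not be firable from $v_k$ --- a blocking decrement on an untracked coordinate can fail. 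The paper resolves this by recomputing the untracked coordinates under the relaxed $\max(0,\cdot)$ update and only ever demanding $J$-correctness; without that device neither the pigeonhole step nor the recursion after the split point goes through as stated. Two smaller points: the threshold at which a coordinate is declared large must be $N\cdot f(i)$ with $N \geq |v_f| + \max_{(t_b,t_{nb})\in T}(|t_b|+|t_{nb}|)$, not $f(i)$, so that the coordinate both stays nonnegative for the $\leq f(i)$ remaining steps and still exceeds $v_f$ at the end; and the coordinate that blows up need not be coordinate $i+1$, so the induction has to range over arbitrary subsets $J\subseteq[1,d]$ of size $i$, as in the paper, rather than over prefixes of coordinates.
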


\begin{proof}
	Fix a \NBVAS~$(T,v_{init})$ of dimension $d$, we will extend the semantics of \NBVAS~to a slightly \emph{relaxed} semantics: let $v,v' \in \mathbb{N}^d$ and $t = (t_b, t_{nb}) \in T$, we will write $v \xtransVasZ{t} v'$ when for all $1\leq j \leq d$, $v'(j) = \max(0, (v+t_b -t_{nb})(j))$. 
	
	Note that $v \xtransVas{t} v'$ implies that $v \xtransVasZ{t} v'$ but the converse is false: consider an \NBVAS~ of dimension $d = 2$, with $t = (t_b, t_{nb}) \in T$
	such that $t_b =(-3, 0)$ and $t_{nb} = (0, 1)$, and let $v = (1, 2)$ and $v' =(0, 1)$. One can easily see that there does not exist $v'' \in \mathbb{N}^2$ such that $v'' = v + t_b$, as $1 - 3<0$. So, $t$ cannot be taken from $v$ and it is not the case that $v\xtransVas{t} v'$, however, $v \xtransVasZ{t} v'$. 
	
	We use $\transVasZ$ for $\bigcup_{t \in T} \xtransVasZ{t}$.
	
	Let $J \subseteq [1,d]$, a path $v_0\transVasZ v_1 \transVasZ \dots \transVasZ v_m$ is said to be \emph{$J$-correct} if for all $v_i$ such that $i < m$, there exists $t = (t_b, t_{nb}) \in T$, such that $v_i \xtransVasZ{t} v_{i+1}$ and for all $j \in J$, $(v_i + t_b)(j) \geq 0$. We say that the path is correct if the path is $[1,d]$-correct.
	
	It follows from the definitions that for all $v,v'\in\mathbb{N}^d$, $v\transVas^* v'$ if and only if there exists a correct path between $v$ and $v'$.

	Fix a target vector $v_f \in \mathbb{N}^d$, and define $\constantM= |v_f| + \max_{(t_b, t_{nb})\in T}(|t_b| + |t_{nb}|)$, where $|\cdot|$ is the norm 1 of vectors in $\mathbb{Z}^d$. Let $\rho = v_0 \transVasZ v_1 \transVasZ \dots \transVasZ v_m$ and $J \subseteq [1,d]$. We say the path $\rho$ is \emph{$J$-covering} if it is $J$-correct and for all $j \in J$, $v_m(j) \geq v_f(j)$. Let $r \in \mathbb{N}$, we say that $\rho$ is $(J,r)$-bounded 
	%\nas{n'a t il pas besoin d'être aussi $J$-correct?}\lug{pas comme utilisé dans la preuve}
	if for all $v_i$, for all $j \in J$, $v_i(j) < r$. Let $v \in \mathbb{N}^d$, we define $m(J,v)$ as the length of the shortest $J$-covering path starting with $v$, 0 if there is none.
	
	Note $\mathcal{J}_i = \{J\subseteq [1,d]\mid |J| = i \}$ and define the function $f$ as follows: for $1 \leq i \leq d$, $f(i) = \max \{m(J_i, v) \mid J_i \in \mathcal{J}_i, v\in \mathbb{N}^d\}$. We will see that $f$ is always well defined, in $\nat$. 
	
	\begin{claim}
		$f(0) = 1$.
	\end{claim}
	\begin{proof}
		From any vector $v \in \mathbb{N}^d$, the path with one element $v$ is $\emptyset$-covering.
	\end{proof}
	
	\begin{claim}
		For all $0 \leq i < d$, $f(i+1) \leq (\constantM\cdot f(i))^{i+1} + f(i)$.
	\end{claim}
	\begin{proof}
		Let $J \in \mathcal{J}_{i+1}$ and $v\in \mathbb{N}^d$ such that there exists a $J$-covering path starting with $v$. Note $\rho = v_0\xtransVasZ{t^1} \dots \xtransVasZ{t^m}v_m$ the shortest such path.\\
		
		\textbf{First case: $\rho$ is $(J, \constantM.f(i))$-bounded.} %We are going to bound the length of $\rho$ by $(Mf(i))^{i+1}$. In order to do this, we will prove that there is no two indices $k < \ell$ such that for all $j \in J$, $v_k(j) = v_\ell(j)$.
		Assume, for sake of contradiction, that for some $k < \ell$, for all $j\in J$, $v_k(j)=v_\ell(j)$. Then we show that $v_0\transVasZ\dots v_k\transVasZ \overline{v}_{\ell+1}\dots \transVasZ \overline{v}_m$ is also a $J$-correct path, with  
		the vectors $(\overline{v}_{\ell'})_{\ell< \ell'\leq m}$, defined as follows. 
		$$\overline{v}_{\ell+1}(j)=\begin{cases}v_{\ell+1}(j) & \textrm{for all $j\in J$}\\
			\max(0,(v_k(j)+t^{\ell+1}_b(j)-t^{\ell+1}_{nb}(j))) & \textrm{otherwise.}\end{cases}$$ 
		And for all $\ell + 1< \ell'\leq m$, 
		$$\overline{v}_{\ell'}(j)=\begin{cases}v_{\ell'}(j) &  \textrm{ for all $j\in J$}\\
			\max(0, (\overline{v}_{\ell'-1}(j)+t_b^{\ell'}(j)-t_{nb}^{\ell'}(j))) & \textrm{ otherwise.}\end{cases}$$ 
		
		Then $v_0\transVasZ\dots v_k\transVasZ \overline{v}_{\ell+1}\dots \transVasZ \overline{v}_m$ is also a $J$-correct path. Indeed, since $v_k(j)=v_\ell(j)$
		for all $j\in J$, we have that $\overline{v}_{\ell+1}(j)=v_{\ell+1}(j)=\max(0,(v_\ell(j) + t^{\ell+1}_b(j) - t^{\ell+1}_{nb}(j)))=\max(0,(v_k(j) + t^{\ell+1}_b(j) - t^{\ell+1}_{nb}(j)))$. 
		Moreover, for $j\in J$, since $v_{\ell}(j)+t^{\ell+1}_b(j)\geq 0$, we get that $v_k(j)+ t^{\ell+1}_b(j)\geq 0$. 
		By definition, for $j\notin J$, $\overline{v}_{\ell+1}(j)=\max(0,(v_k(j) + t^{\ell+1}_b(j) - t^{\ell+1}_{nb}(j)))$. Hence, $v_k\transVasZ^{t^{\ell+1}} \overline{v}_{\ell+1}$, and 
		$v_0\transVasZ^{t^1} \dots v_k\transVasZ^{t^{\ell+1}} \overline{v}_{\ell+1}$ is $J$-correct. Now let $\ell<\ell'<m$.
		By definition, for $j\in J$, $\overline{v}_{\ell'+1}(j)=v_{\ell'+1}(j)$. Then, $\overline{v}_{\ell'+1}(j)=\max(0,(v_{\ell'}(j)+t^{\ell'+1}_b(j) - t^{\ell'+1}_{nb}(j))) = \max(0,(\overline{v}_{\ell'}(j)+t^{\ell'+1}_b(j) - t^{\ell'+1}_{nb}(j)))$. Again, since $\rho$ is $J$-correct, we deduce that for $j\in J$, $v_{\ell'}(j)+t^{\ell'+1}_b(j)\geq 0$, hence $\overline{v}_{\ell'}(j)+t^{\ell'+1}_b(j)\geq 0$. For $j\notin J$, $\overline{v}_{\ell'+1}(j)=\max(0, (\overline{v}_{\ell'}(j)+t_b^{\ell'+1}(j)-t_{nb}^{\ell'+1}(j)))$. So $\overline{v}_{\ell'}\transVasZ^{t^{\ell'+1}} \overline{v}_{\ell'+1}$, and $v_0\transVasZ^{t^1} \dots v_k\transVasZ^{t^{\ell'+1}} \overline{v}_{\ell'+1}$ is $J$-correct.
		
		Then, $\rho'=v_0\transVasZ\dots v_k\transVasZ \overline{v}_{\ell+1}\dots \transVasZ \overline{v}_m$ is a $J$-correct path, and since $\overline{v}_m(j)=v_m(j)$ for all 
		$j\in J$, it is also $J$-covering, contradicting the fact that $\rho$ is minimal.

		Hence, for all $k < \ell$, there exists $j \in J$ such that $v_k(j) \not = v_\ell(j)$. The length of such a path is at most  $(\constantM.f(i))^{i+1}$, so $m(J,v)\leq (\constantM.f(i))^{i+1}\leq (\constantM.f(i))^{i+1}+f(i)$.\\
		
		\textbf{Second case: $\rho$ is not $(J, \constantM.f(i))$-bounded.} We can then split $\rho$ into two paths $\rho_1 \rho_2$ such that $\rho_1$ is $(J,\constantM.f(i))$-bounded and $\rho_2 = v'_0 \dots v'_n$ is such that $v'_0(j) \geq \constantM.f(i)$ for some $j \in J$.
		As we have just seen, $|\rho_1|\leq (\constantM.f(i))^{i+1}$.
		
		Note $J' = J \setminus \{j\}$ with $j$ such that $v'_0(j) \geq \constantM.f(i)$. Note that $\rho_2$ is $J'$-covering, therefore, by definition of $f$, there exists 
		a $J'$-covering execution $\overline{\rho} = w_0 \dots w_k$ with $w_0=v'_0$, and such that $|\overline{\rho}|\leq f(i)$. Also, by definition of $\constantM$, for all $1\leq j' \leq d$, for all $(t_b,t_{nb})\in T$,  $\constantM \geq |t_b(j')|+|t_{nb}(j')|$, then $t_b(j')\geq -\constantM$, and $t_b(j')-t_{nb}(j')\geq -\constantM$. Hence, for all $v\in\nat^d$,
		$1\leq j'\leq d$, and $c\in\nat$ such that $v(j')\geq \constantM + c$, for all $(t_b,t_{nb})\in T$, $(v+t_b)(j') \geq c$ and $(v+t_b-t_{nb})(j') \geq c$.
		Now, since $w_0 = v'_0$, we get $w_0(j)\geq \constantM.f(i)$.
		We deduce two things: first, for all $0 \leq \ell < k$, if $t=(t_b,t_{nb})\in T$ is such that $w_\ell \transVasZ^t w_{\ell+1}$, 
		it holds that $(w_{\ell} + t_b)(j)\geq \constantM.(f(i)- \ell - 1)$. Since $k = f(i) - 1$, it yields that $\overline{\rho}$ is $J$-correct.			
		Second, for all $0 \leq \ell \leq k$, $w_\ell(j)\geq \constantM(f(i) - \ell)$. Again, $k = f(i) - 1$, so $w_k(j) \geq\constantM\geq v_f(j)$. Hence 
		$\overline{\rho}$ is also $J$-covering. 
		
		Since $\rho$ is the shortest $J$-covering path, we conclude that $|\rho|\leq (\constantM.f(i))^{i+1} + f(i)$, and so $m(J,v)\leq (\constantM.f(i))^{i+1} + f(i)$.
	\end{proof}
	We define a function $g$ such that $g(0) = 1$ and $g(i+1) = (\constantM+1)^d(g(i))^d$ for $0 \leq i < d$; then $f(i)\leq g(i)$ for all $1 \leq i \leq d$. Hence, $f(d) \leq g(d) \leq (\constantM+1)^{d^{d+1}} \leq 2^{2^{cn\log n}}$ for some $n \geq \max( d,  \constantM, |v_{init}|)$ and a constant $c$ which does not depend on $d$, $v_0$, nor $v_f$ or the \NBVAS. Hence, we can cover vector $v_f$ from $v_{init}$ if and only if there exists a path (from $v_{init}$) of length $\leq 2^{2^{cn \log n}}$ which covers $v_f$.  Hence, there is a non-deterministic procedure that guesses a path of length $\leq 2^{2^{cn \log n}}$, checks if it is a valid path and accepts it if and only if it covers $v_f$. As $|v_{init}|\leq n$, $|v_f| \leq n$ and for all $(t_b, t_{nb}) \in T$, $|t_b| + |t_{nb}| \leq n$, this procedure takes an exponential space in the size of the protocol. By Savitch theorem, there exists a deterministic procedure in exponential space for the same problem.
\end{proof}

We are now ready to prove that the \Cover~problem for \NBVAS~is as hard as the \Cover~problem for \NBCM. 

%To this end, given a \NBCM, we will build a \NBVAS~for which one is a positive instance of \Cover~iff the other one is. The reduction goes as follows: we build a \NBVAS~of dimension $|\Counters|+|\Loc|$ where $\Counters$ is the set of counters of the machine and $\Loc$ the set of locations, this way, we can associate to each component a counter or a location of the machine. The initial vector should be the vector with the only non-null component is the one corresponding to $\ellinit$ and is equal to 1. Each transition of the machine is translated in one transition of the \NBVAS~in the following way: if the transition goes from $\ell$ to $\ell'$, the blocking part of the \NBVAS~transition has a -1 on the component corresponding to $\ell$ and a +1 on the one corresponding to $\ell'$. If the operation on counters is an increment (resp. a decremnt) on a counter $\cpt$, the blocking part also has a +1 (resp. -1) on the corresponding component. If it is a non-blocking decrement, then the non-blocking part has a -1 on the corresponding counter.\nas{je pense que ce paragraphe est inutile. Tout le monde sait ce que c'est qu'une réduction. L'intuition de la réduction peut être donnée dans la preuve. J'ajoute d'ailleurs des intuitions.}

\begin{lemma}\label{lemma:cover:nbvas-nbcm}
	\CMCover[\NBCM] reduces to \Cover~in \NBVAS.
\end{lemma}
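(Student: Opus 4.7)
The plan is to perform the standard VASS-to-VAS style encoding, adapted to the non-blocking setting, by turning each location of the \NBCM~into a fresh coordinate of the \NBVAS~ which behaves as an indicator for ``the current location.'' Given an instance of \CMCover[\NBCM] consisting of $M = (\Loc, \Counters, \Delta_b, \Delta_{nb}, \ellinit)$ and a target location $\ell_f$, I would build a \NBVAS~ of dimension $d = |\Counters| + |\Loc|$, where the first $|\Counters|$ coordinates encode the counter values and the remaining $|\Loc|$ coordinates are indexed by locations. The initial vector $v_{init}$ has a single $1$ at coordinate $\ellinit$ and $0$ elsewhere, and the target vector $v_f$ has a single $1$ at coordinate $\ell_f$ and $0$ elsewhere.

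The transitions are built so that each step of the \NBCM~ is simulated by exactly one \NBVAS~ transition. For a blocking transition $\delta=(\ell,op,\ell')\in\Delta_b$, I would create a transition whose blocking part is $-e_\ell + e_{\ell'} + c_{op}$ (with $c_{op}= e_\cpt$, $-e_\cpt$, or $0$ depending on whether $op$ is $\inc{\cpt}$, $\dec{\cpt}$, or $\nop$) and whose non-blocking part is $0$. For a non-blocking decrement $\delta=(\ell,\nbdec{\cpt},\ell')\in\Delta_{nb}$, the blocking part is $-e_\ell + e_{\ell'}$ (merely moving the location indicator) and the non-blocking part is $e_\cpt$ (effecting the saturating decrement of $\cpt$). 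Since $M$ is test-free there are no zero-tests to simulate, which is the only counter action that could not be encoded in a plain \NBVAS.

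Correctness is proved by induction on the length of executions, via the obvious bijection between \NBCM~configurations $(\ell,v)$ and \NBVAS~ vectors $w$ satisfying $w(\ell)=1$, $w(\ell'')=0$ for every other location $\ell''$, and $w(\cpt)=v(\cpt)$ for each counter. A key invariant is that the sum of the location coordinates is conserved at $1$ throughout every execution starting from $v_{init}$, because each transition subtracts $1$ from one location coordinate and adds $1$ to another, with no non-blocking effect on location coordinates. The blocking requirement $v+t_b\geq 0$ on the location part then enforces that a transition departing from $\ell$ can only fire when the location indicator is currently on $\ell$, exactly mirroring the \NBCM~semantics; and on the counter part it enforces the blocking behaviour of $\dec{\cpt}$, while for a non-blocking decrement the $\max(0,\cdot)$ in the \NBVAS~semantics matches the $\max(0,v(\cpt)-1)$ in the \NBCM~semantics.

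Finally, coverability of $v_f$ in the \NBVAS~ is equivalent to reaching location $\ell_f$ in $M$: if $w\geq v_f$ is reachable, then $w(\ell_f)\geq 1$, and the invariant forces $w(\ell_f)=1$ with all other location coordinates at $0$, so the corresponding \NBCM~ configuration has location $\ell_f$; conversely any initial execution of $M$ ending in $\ell_f$ can be simulated step-by-step by the constructed transitions, producing a reachable vector that covers $v_f$. The construction has polynomial size, so this gives the desired reduction. The main obstacle to keep careful track of is aligning the $\max(0,\cdot)$ in both semantics and showing that the location-indicator invariant is preserved by both blocking and non-blocking transitions, so that partial ($\geq$) coverings of $v_f$ in the \NBVAS~ cannot ``cheat'' by piling mass into several location coordinates simultaneously; this is exactly what the conservation of the location-sum rules out.
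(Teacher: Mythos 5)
Your construction is essentially the one the paper uses: one coordinate per counter and one indicator coordinate per location, blocking parts that move the indicator and apply the counter update, a non-blocking part $e_\cpt$ only for $\nbdec{\cpt}$ transitions, the obvious initial and target vectors, and correctness resting on the invariant that the location coordinates always sum to $1$. There is, however, one step that fails as written: the claim that the blocking requirement ``enforces that a transition departing from $\ell$ can only fire when the location indicator is currently on $\ell$'' is false for self-loops. If $\delta=(\ell,\inc{\cpt},\ell)$, the blocking part $-e_\ell+e_\ell+e_\cpt=e_\cpt$ does not touch the location coordinates at all, so the corresponding \NBVAS~transition can fire no matter where the indicator sits. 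Concretely, with locations $\ellinit,\ell_2,\ell_f$ and transitions $(\ell_2,\inc{\cpt},\ell_2)$ and $(\ellinit,\dec{\cpt},\ell_f)$, the location $\ell_f$ is not coverable in the \NBCM, yet your \NBVAS~covers $v_f$ by first firing the spurious self-loop increment and then the decrement, so the reduction would be unsound. The paper forestalls this by assuming, without loss of generality, that the \NBCM~has no self-loops (each self-loop is replaced by a two-step cycle through a fresh location using an extra $\nop$ transition); adding that normalisation step to your reduction makes the rest of your argument go through unchanged.
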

\begin{proof}
	
Let a \NBCM~$M = (\Loc, \Counters, \Delta_b,\Delta_{nb}, \ellinit)$, for which we assume wlog that it does not contain any self-loop 
(replace a self loop on a location by a cycle
using an additional internal transition and an additional location).
We note $\Counters = \{\cpt_1, \dots, \cpt_m\}$, and $\Loc = \{\ell_1\dots \ell_{k}\}$, with $\ell_1=\ellinit$ and $\ell_k=\ell_f$, and let $d = k+m$.
We define the \NBVAS~$V = (T, v_{init})$ of dimension $d$ as follows: it has one counter by location of the \NBCM, and one counter by
counter of the \NBCM. The transitions will ensure that the sum of the values of the counters representing the locations of $M$ will always be equal to 1, hence
a vector during an execution of $V$ will always represent a configuration of $M$. First, for a transition $\delta = (\ell_i, \textit{op}, \ell_{i'})\in\Delta$, we define $(t_\delta, t'_\delta)\in \mathbb{Z}^d\times\mathbb{N}^d$ by $t_\delta(i) = -1,
t_\delta(i')= 1$ and,
\begin{itemize}
	\item if $\textit{op}=\nop$, then 
	$t_\delta(y)= 0 \textrm{ for all other $1\leq y\leq d$}$, and $t'_\delta=\mathbf{0}_d$ (where $\mathbf{0}_d$ is the null vector of dimension $d$), i.e. no other modification is made on the counters.	

	\item if $\textit{op}=\inc{\cpt_j}$, then $t_\delta(k+j)=1$, and 
	$t_\delta(y)= 0 \textrm{ for all other $1\leq y\leq d$}$, and $t'_\delta=\mathbf{0}_d$, i.e. the blocking part of the transition ensures the increment of the corresponding counter, while the non-blocking part does nothing.
	
	\item if $\textit{op}=\dec{\cpt_j}$, then $t_\delta(k+j)=-1$, and 
	$t_\delta(y)= 0 \textrm{ for all other $1\leq y\leq d$}$, and $t'_\delta=\mathbf{0}_d$, i.e. the blocking part of the transition ensures the decrement of the corresponding counter, while the non-blocking part does nothing.
.
	
	\item if $\textit{op}=\nbdec{\cpt_j}$, then  
	$t_\delta(y)= 0 \textrm{ for all other $1\leq y\leq d $}$, and $t'_\delta(k+j)=1$ and $t'_\delta(y)=0$ for all other $1\leq y\leq d$,
	%$t'_\delta(y)=\begin{cases}-1 & \textrm{ if $y=k+j$}\\
		%0 & \textrm{otherwise.}\end{cases}$, 
		i.e. the blocking part of the transition only ensures the change in the location, and the non-blocking decrement of the 
		counter is ensured by the non-blocking part of the transition.
\end{itemize}

We then let $T=\{t_\delta\mid \delta\in\Delta\}$,
%	\begin{align*}
	%	T & =  \{(t_{\delta},\mathbf{0}_d) \mid \delta = (\ell_i, \nop, \ell_{i'}) \in \Delta \wedge t_{\delta}(m + i) = -1 \wedge  t_{\delta}(m  +i') = 1  \wedge \forall k \not = m+i, m+i', t_{\delta}(k) = 0\}\\
	%	&  \cup \{(t_{\delta},\mathbf{0}_d) \mid \delta = (\ell_i, c_j++, \ell_{i'}) \in \Delta \wedge t_{\delta}(m + i) = -1 \wedge  t_{\delta}(m  +i') = 1 \wedge t_{\delta}(j) = 1 \wedge \forall k \not = m+i, m+i', j, t_{\delta}(k) = 0\}\\
	%	& \cup  \{(t_{\delta},\mathbf{0}_d) \mid \delta = (\ell_i, c_j--, \ell_{i'}) \in \Delta \wedge t_{\delta}(m+i) = -1 \wedge  t_{\delta}(m+i') = 1 \wedge t_{\delta}(j) = -1 \wedge \forall k \not = m+i, m+i', j, t_{\delta}(k) = 0\}\\
	%	& \cup  \{(t_{\delta}, t'_{\delta}) \mid \delta = (\ell_i, nb(\cpt_j--), \ell_{i'}) \in \Delta \wedge t_{\delta}(m+i) = -1 \wedge  t_{\delta}(m+i') = 1 \wedge t'_{\delta}(j) = -1\wedge \forall k \not = m+i, m+i' ,t_{\delta}(k) = 0\\& \ \ \  \wedge \forall k \not = j, t'_{\delta}(k) = 0\}
	%	\end{align*}
and $v_0$ is defined by $v_{init}(1)=1$ and $v_{init}(y)=0$ for all $2\leq y\leq d$. %= \begin{cases} 1 \textrm{ if $y=1$}\\ 0 &\textrm{otherwise.}\end{cases}$
We also fix $v_f$ by $v_f(k)=1$, and $v_f(y)=0$ for all other $1\leq y\leq d$.
One can prove that $v_f$ is covered in $V$ if and only if $\ell_f$ is covered in $M$.
\iflong
\begin{claim}
	If there exists $w \in \mathbb{N}^\Counters$ such that $(\ellinit, \mathbf{0}_\Counters) \transNbCM^* (\ell_f, w)$, then there exists $v \in \mathbb{N}^d$ such that $v_0 \transVas^* v$ and $v \succeq v_f$.
\end{claim}
\begin{sketch-proof}
Any configuration $(\ell,w)$ of $M$ can be turned into a valuation $v(\ell_i,w)$ of $T$ such that $v(\ell_i,w)(i)=1$, for all $1\leq i\leq m$, $v(\ell_i,w)(k+i)=w(\cpt_i)$
and for all other $1\leq y\leq k$, $v(\ell_i,w)(y)=0$. Observe that $v(\ellinit,\mathbf{0}_\Counters)=v_0$. It follows from the definitions that $(\ell_i,w)\transCM(\ell_{i'},w')$ if and only if $v(\ell_i,w)\transCM v(\ell_{i'},w')$. Hence, $v_0\transCM^*v(\ell_f,w)\geq v_f$. 
%	Starting from $v_0$, one can mimick the execution of the machine as follows: a configuration of the machine $ \gamma = (\ell_i, w)$ is translated as a vector $v_\gamma$ such that, $v_\gamma(i) = 1$, $v_\gamma(k+j) = w(\cpt_j)$ for all $1 \leq j \leq m$ and for all other counters $1\leq i'\leq k$, if $i'\neq i$, 
%	$v_\gamma(i') = 0$.
%	Note $(\ellinit, \mathbf{0}_\Counters) = \gamma_0 \transNbCM \gamma_1 \transNbCM \dots \transNbCM \gamma_n = (\ell_f, w)$. One can easily prove by induction that $v_{\gamma_0} \transVas v_{\gamma_1} \transVas \dots \transVas v_{\gamma_n}$, which concludes the proof.
\end{sketch-proof}

\begin{claim}
	If there exists $v \in \mathbb{N}^d$ such that $v_0 \transVas^* v$ and $v \succeq v_f$, then there exists $w \in \mathbb{N}^\Counters$ such that $(\ellinit, \mathbf{0}_\Counters) \transNbCM^* (\ell_f, w)$.
\end{claim}
\begin{sketch-proof}
	One can prove by induction that every vector $v$ reachable from $v_0$ is such that there exists only one $1 \leq i \leq k$ such that $v(i) = 1$ and for all $1 \leq i' \leq k$ such that $i \neq i'$, $v(i') = 0$. Hence, given a reachable vector $v$, one can define $\gamma_v$ a machine configuration as $(\ell_i, w)$ where $i$ is the unique index $1\leq i\leq k$ such that $v(i) = 1$ and, for all $1 \leq j \leq m$, $w(\cpt_j) = v(k+j)$. Note $v_0  \transVas v_1 \transVas \dots \transVas v_n = v$, and observe that $\gamma_{v_n} = (\ell_f, w)$ for some $w \in \mathbb{N}^\Counters$. Again, by a simple induction, one can prove that $\gamma_{v_0} \transNbCM \gamma_{v_1} \transNbCM \dots \transNbCM \gamma_{v_n}$, which concludes the proof.
\end{sketch-proof}	\fi
\end{proof}

Putting together Lemma \ref{lemma:cover:nbvas-expspace} and Lemma \ref{lemma:cover:nbvas-nbcm}, we obtain the proof of~\cref{thm:cover-nbcm-in-expspace}.

%\subsection{Proofs of \Expspace-hardness of \Cover[\NRCM]}
\subsection{Proof of \cref{th:expspace-hard}}
In this subsection, we prove \cref{th:expspace-hard}~by proving that the \Cover[\NRCM] problem is \Expspace~hard. Put together with \cref{thm:cover-nbcm-in-expspace}, it will prove the \Expspace-completeness of \Cover[\NRCM].

%We start by formalize the properties ensured by the procedural \NBCM~given in \cite{lipton76reachability,esparza98decidability}.
\subsubsection{Proofs on the Pocedural \NBCM~Defined in \cref{sec:cover-nb-machines}}
We formalize some properties on the procedural \NBCM~presented in \cref{sec:cover-nb-machines}~used in the proof.

%\paragraph*{Procedural \NBCM~$\mathtt{TestSwap}_i$.}
As for the procedural \NBCM~$\mathtt{TestSwap}_i$, we use this proposition from \cite{lipton76reachability,esparza98decidability}. 
\begin{proposition}[\cite{lipton76reachability,esparza98decidability}]\label{prop:test-swap}
Let $0\leq i < n$, and $\overline\cpt\in \overline Y_i$.
For all $v,v'\in\mathbb{N}^{X'}$, for $\ell\in\{\ell^{\mathtt{TS},i,\cpt}_{\textit{z}},\ell^{\mathtt{TS},i,\cpt}_{\textit{nz}}\}$, 
we have $(\ellinit^{\mathtt{TS},i},v)\transCM^*(\ell,v')$ in ${\mathtt{TestSwap}_i(\overline\cpt)}$
if and only if:
\begin{itemize}
\item (PreTest1): for all $0 \leq j < i$, for all $\overline\cpt_j \in \overline Y_j$, $v(\overline\cpt_j) = 2^{2^j}$ and for all $\cpt_j \in  Y_j$, $v(\cpt_j) = 0$;
\item (PreTest2): $v(\overline \cpts_i) = 2^{2^i}$ and $ v( \cpts_i) = 0$;
	\item (PreTest3): $v(\cpt) + v(\overline\cpt) = 2^{2^i}$;
\item (PostTest1): For all $\cpty\notin\{\cpt,\overline\cpt\}$, $v'(\cpty) = v(\cpty)$; 
	\item (PostTest2): either $(i)$ $v(\overline\cpt) = v'(\cpt) = 0$, $v(\cpt) = v'(\overline\cpt)$ and $\ell = \ell^i_z$, or $(ii)$ $v'(\overline \cpt) = v(\overline \cpt) >0$, $v'(\cpt) = v(\cpt)$ and $\ell = \ell^{\mathtt{TS},i,\cpt}_{nz}$.
\end{itemize}
Moreover, if for all $0 \leq j \leq n$, and any counter $\cpt \in Y_j \cup \overline Y_j$, $v(\cpt)\leq 2^{2^j}$, then for all $0 \leq j \leq n$, and any counter $\cpt\in Y_j \cup \overline Y_j$, the value of $\cpt$ will never go above $2^{2^j}$ during the execution. 
\end{proposition}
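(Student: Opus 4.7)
The plan is to prove this specification of~\cite{lipton76reachability,esparza98decidability} by induction on the level $i$, using a standard Lipton-style construction for $\mathtt{TestSwap}_i(\overline\cpt)$: the machine attempts $2^{2^i}$ blocking decrements of $\cpt$, each paired with a matching increment of $\overline\cpt$, with progress tracked by the auxiliary pair $(\cpts_i,\overline{\cpts}_i)$. Since (PreTest3) fixes $v(\cpt)+v(\overline\cpt)=2^{2^i}$, this loop runs to completion exactly when $v(\overline\cpt)=0$ initially, in which case the full swap required by (PostTest2)(i) is realised and control reaches $\ell^{\mathtt{TS},i,\cpt}_z$. If instead some blocking decrement of $\cpt$ blocks before the loop finishes, control branches into a symmetric recovery phase that undoes the partial swap and emerges at $\ell^{\mathtt{TS},i,\cpt}_{nz}$, yielding (PostTest2)(ii).

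Counting to $2^{2^i}$ is what forces the nested structure. By (PreTest1) and (PreTest2), the counters $\cpts_i$, $\overline{\cpts}_i$ and all lower-level auxiliaries are in their initialised configuration, so every call to $\mathtt{TestSwap}_{i-1}(\overline{\cpts}_i)$ can, by the induction hypothesis, detect exactly when $\overline{\cpts}_i$ has been drained. Two nested loops then produce $(2^{2^{i-1}})^2=2^{2^i}$ iterations in total. The base case $i=0$ is immediate: $2^{2^0}=2$, so two decrement-increment pairs together with a trivial recovery branch suffice.

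For the equivalence, the forward direction is a direct trace analysis: once the preconditions hold, the construction is deterministic up to the swap/non-swap branching, and each branch satisfies exactly one of the two clauses of (PostTest2). For the converse, if any of (PreTest1)--(PreTest3) fails, then either a nested $\mathtt{TestSwap}_{i-1}$ gets stuck by the inductive hypothesis, or the outer loop leaves $\cpt,\overline\cpt$ in a state forbidden by (PostTest2), so no execution can reach either output location. (PostTest1) follows because the only counters outside $\{\cpt,\overline\cpt\}$ touched by the machine lie at strictly lower levels, and by the inductive hypothesis every recursive call restores them to their initialised form; the boundedness claim propagates in the same way, since the outer iteration never pushes $\cpt$ or $\overline\cpt$ past $2^{2^i}$ thanks to (PreTest3).

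The main obstacle I anticipate is formalising the recovery branch cleanly. When the blocking decrement of $\cpt$ fails after $k<2^{2^i}$ successful iterations, the machine has added $k$ to $\overline\cpt$ and subtracted $k$ from $\cpt$, and it must now transfer these $k$ units back while also restoring $(\cpts_i,\overline{\cpts}_i)$ and all lower-level auxiliaries to their initialised state; any residual value would violate (PostTest1) and break the induction. The cleanest way to handle this is to design the recovery as a mirror image of the forward loop, reusing the same nested level-$(i-1)$ step counters but running them in reverse, so that the bookkeeping automatically closes on exit.
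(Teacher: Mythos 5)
The paper does not actually prove this proposition: both the machines $\mathtt{TestSwap}_i(\overline\cpt)$ and their specification are imported wholesale from \cite{lipton76reachability,esparza98decidability} and used as a black box, so there is no proof of the paper's to compare yours against. Judged on its own terms, your sketch reproduces the right overall shape for the zero branch (nested level-$(i-1)$ loops forcing exactly $2^{2^i}$ transfers from $\cpt$ to $\overline\cpt$, which can complete only when $v(\overline\cpt)=0$ given (PreTest3)), but the way you handle the nonzero branch contains a genuine error.

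You describe the $\ell^{\mathtt{TS},i,\cpt}_{nz}$ exit as a ``recovery phase'' entered \emph{when a blocking decrement of $\cpt$ blocks}. A counter machine cannot do this: a transition whose guard fails simply cannot be taken, and there is no mechanism by which control ``branches into'' anything upon failure --- the run is just stuck. (The whole point of the surrounding construction is that test-free machines cannot observe failure; that is why zero-tests have to be simulated at all.) The standard Lipton construction instead makes an up-front nondeterministic choice: either assert $\overline\cpt=0$ and run the full swap loop (which completes only if the assertion is true), or assert $\overline\cpt>0$ and certify it by a single $\dec{\overline\cpt}$ followed by $\inc{\overline\cpt}$, which succeeds exactly when $v(\overline\cpt)>0$ and leaves every counter unchanged. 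Note that this is precisely what (PostTest2)(ii) demands --- $v'(\cpt)=v(\cpt)$ and $v'(\overline\cpt)=v(\overline\cpt)>0$ --- whereas your design reaches $\ell^{\mathtt{TS},i,\cpt}_{nz}$ only after $k$ partial swap iterations and must then undo them and re-initialise all lower-level auxiliaries. You correctly flag this undo bookkeeping as the main obstacle, but it is not an obstacle to be overcome: it is a sign the architecture is wrong, since the construction you would need cannot even detect the point at which to start recovering. Replacing the recovery branch by the standard guess-and-verify structure repairs the argument and makes the ``only if'' direction (stuck runs when the preconditions fail) go through as you intend.
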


Note that for a valuation $v\in\mathbb{N}^{X'}$ that meets the requirements (PreTest1), (PreTest2) and (PreTest3), there is only one configuration
 $(\ell,v')$ with $\ell \in \{\ell^{\mathtt{TS},i,\cpt}_{\textit{z}},\ell^{\mathtt{TS},i,\cpt}_{\textit{nz}}\}$ such that $(\ell_{in},v) \transNbCM^* (\ell,v')$.

\paragraph*{Procedural \NBCM~$\mathtt{Rst}_i$.}

We shall now prove that the procedural \NBCM s~we defined and displayed in \cref{sec:cover-nb-machines}~meet the desired requirements.
For all $0\leq i\leq n$, any procedural \NBCM~$\mathtt{Rst}_i$ has the following property:
\begin{proposition}\label{lemma:cover:expspace-hard:nbcm:rst-spec}
	%Let  $0\leq i \leq n$ and $\ellinit^{\mathtt{R},i}$ (resp. $\ell^{\mathtt{R},i}_{out}$) be the entering (resp. exiting) state of the procedural-\NBCM~$\mathtt{Rst_i}$.
	 For all $0\leq i\leq n$, for all $v\in \mathbb{N}^{\Counters'}$ such that 
	 \begin{itemize}
		\item (PreRst1): for all $0 \leq j < i$, for all $\overline \cpt \in \overline Y_j$, $v(\overline \cpt) = 2^{2^j}$ and for all $\cpt \in Y_j$, $v(\cpt) = 0$,
	\end{itemize}
	for all $v' \in \mathbb{N}^{\Counters'}$, if $(\ellinit^{\mathtt{R},i}, v) \transNbCM^* (\ell^{\mathtt{R},i}_{out},v')$ in $\mathtt{Rst_i}$ then
	%then, there for all paths $(\ellinit^{\mathtt{R},i}, v) \transNbCM^+ (\ell^{\mathtt{R},i}_{out},v')$ in $\mathtt{Rst_i}$, $v'$ satisfies the following:
	\begin{itemize}
		\item (PostRst1): for all $\cpt \in Y_i \cup \overline Y_i$, $v'(\cpt) = \max(0, v(\cpt) - 2^{2^i})$,
		\item (PostRst2): for all $\cpt \not \in Y_i \cup \overline Y_i$, $v'(\cpt) = v(\cpt)$.
		\end{itemize}
 \end{proposition}

\begin{proof}[Proof of \cref{lemma:cover:expspace-hard:nbcm:rst-spec}]
	For $\mathtt{Rst_0}$, (PreRst1) trivially holds, and it is easy to see that (PostRst1) and (PostRst2) hold. Now fix $0 \leq i < n$, and consider the procedural-\NBCM~$\mathtt{Rst_{i+1}}$. Let $v_0 \in \mathbb{N}^{\Counters'}$ such that for all $0 \leq j < i+1$, for all $\overline \cpt \in \overline Y_j$, $v_0(\overline \cpt) = 2^{2^j}$ and for all $\cpt \in Y_j$, $v_0( \cpt) = 0$, and let $v_f$ such that $(\ellinit^{\mathtt{R},i}, v_0) \transNbCM^+ (\ell^{\mathtt{R},i}_{out},v_f)$ in $\mathtt{Rst}_i$.
	
	First, we show the following property.
	
	\noindent\emph{Property $(\ast)$}: if there exist $v,v'\in\mathbb{N}^{\Counters'}$ such that $v(\overline \cptz_i)=k$, %$v(\overline\mathtt{z}_i)=2^{2^i}-k$ and
	$(\ellinit^{\mathtt{TS},i,\cptz},v)\transCM^*(\ell^{\mathtt{TS},i,\cptz}_z,v')$ with no other visit of
	$\ell^{\mathtt{TS},i,\cptz}_z$ in between, then $v'(\overline \cptz_i)=2^{2^i}$, $v'({\cptz_i})=0$, for all $\cpt\in Y_{i+1}\cup\overline Y_{i+1}$, $v'(\cpt)=\max(0, v(\cpt)-k)$, and $v'(\cpt)=v(\cpt)$ for all other $\cpt\in \Counters'$.\\

	If $k=0$, then Proposition~\ref{prop:test-swap} ensures that $v'(\overline \cptz_i)=2^{2^i}$, $v'({\cptz_i})=0$, and for all other $\cpt\in \Counters'$, 
	$v'(\cpt)=v(\cpt)$. Otherwise, assume that the property holds for some $k\geq 0$ and consider $(\ellinit^{\mathtt{TS},i,\overline \cptz},v)\transCM^*(\ell^{\mathtt{TS},i,\overline \cptz}_z,v')$ with no other visit of
	$\ell^{\mathtt{TS},i,\cptz}_z$ in between, and $v(\overline \cptz_i)=k+1$. Here, since $v(\overline \cptz_i)=k+1$, Proposition~\ref{prop:test-swap} and the construction
	 of the procedural-\NBCM~ensure that $(\ellinit^{\mathtt{TS},i,\cptz},v)\transCM^*(\ell^{\mathtt{TS},i,\cptz}_{nz},v)\transCM(\ell^{\mathtt{R},i+1}_2,v)\transCM^*(\ellinit^{\mathtt{TS},i,\cptz},v_1)$ with $v_1(\overline \cptz_i)=k$, $v_1({\cptz}_i)=v({\cptz}_i)+1$, for all $\cpt\in Y_{i+1}\cup\overline Y_{i+1}$, 
	$v_1(\cpt)=\max(0, v(\cpt)-1)$, and for all other $\cpt\in \Counters'$, $v_1(\cpt)=v(\cpt)$. Induction hypothesis tells us that $(\ellinit^{\mathtt{TS},i,\cptz},v_1)\transCM^* (\ell^{\mathtt{TS},i,\cptz}_z,v')$ with  $v'(\overline \cptz_i)=2^{2^i}$, $v'({\cptz_i})=0$, for all $\cpt\in Y_{i+1}\cup\overline Y_{i+1}$, $v'(\cpt)=\max(0, v(\cpt)-k-1)$, and $v'(\cpt)=v(\cpt)$ for all other $\cpt\in \Counters'$. 
	
	Next, we show the following. 
	
	 \noindent\emph{Property $(\ast\ast)$}: if there exist $v,v'\in\mathbb{N}^{\Counters'}$ such that $v(\overline \cpty_i)=k$, $v(\overline \cptz_i)=2^{2^i}$, $v({\cptz_i})=0$,
	%$v(\overline\mathtt{z}_i)=2^{2^i}-k$ and
	and $(\ellinit^{\mathtt{TS},i,\cpty},v)\transCM^*(\ell^{\mathtt{TS},i,\cpty}_z,v')$ with no other visit of
	$\ell^{\mathtt{TS},i,\cpty}_z$ in between, then $v'(\overline \cpty_i)=2^{2^i}$, $v'(\overline{y_i})=0$, for all $\cpt\in Y_{i+1}\cup\overline Y_{i+1}$, $v'(\cpt)=\max(0, v(\cpt)- k.2^{2^i})$, and $v'(\cpt)=v(\cpt)$ for all other $\cpt\in \Counters'$. \\
	
	 If $k=0$, then Proposition~\ref{prop:test-swap} ensures that
	$v'(\overline \cpty_i)=2^{2^i}$, $v'({\cpty_i})=0$, and $v'(\cpt)=v(\cpt)$ for all other $\cpt\in \Counters'$. Otherwise, assume that the property holds for some $k\geq 0$ and consider $(\ellinit^{\mathtt{TS},i,\cpty},v)\transCM^*(\ell^{\mathtt{TS},i,\cpty}_z,v')$ with no other visit of
	$\ell^{\mathtt{TS},i,\cpty}_z$ in between, and $v(\overline \cpty_i)=k+1$. Again, since $v(\overline \cpty_i)=k+1$, Proposition~\ref{prop:test-swap} and the construction
	 of the procedural-\NBCM~ensure that $(\ellinit^{\mathtt{TS},i,\cpty},v)\transCM^*(\ell^{\mathtt{TS},i,\cpty}_{nz},v)\transCM(\ellinit^{\mathtt{R},i+1},v)\transCM^*(\ellinit^{\mathtt{TS},i,\cptz},v_1)\transCM^* (\ell^{\mathtt{TS},i,\cptz}_z,v'_1)\transCM (\ellinit^{\mathtt{TS},i,\cpty},v'_1)$, with $v_1(\overline \cpty_i)=v(\overline \cpty_i)-1=k$, $v_1({\cpty_i})=v({\cpty_i})+1$, $v_1(\overline \cptz_i)=v(\overline \cptz_i)-1=2^{2^i}-1$, $v_1({\cptz_i})=v({\cptz_i})+1=1$, for all $\cpt\in Y_{i+1}\cup\overline Y_{i+1}$, $v_1(\cpt)=\max(0,v(\cpt)-1)$, and for all other $\cpt\in\Counters'$, $v_1(\cpt)=v(\cpt)$. By Property ($\ast$), $v'_1(\overline \cptz_i)=2^{2^i}$, $v'_1({\cptz_i})=0$, for all $\cpt\in Y_{i+1}\cup\overline Y_{i+1}$, $v'_1(\cpt)=\max(0, v(\cpt)-2^{2^i})$, and $v'_1(\cpt)=v_1(\cpt)$ for all other $\cpt\in \Counters'$. Induction hypothesis allows to conclude that 
	 since $(\ellinit^{\mathtt{TS},i,\cpty},v'_1)\transCM^* (\ell^{\mathtt{TS},i,\cpty}_z,v')$, $v'(\overline \cpty_i)=2^{2^i}$, $v'({\cpty_i})=0$, for all $\cpt\in Y_{i+1}\cup\overline Y_{i+1}$, $v'(\cpt)=\max(0, v'_1(\cpt)- k.2^{2^i}) = \max(0, v(\cpt) - (k+1).2^{2^i})$, and $v'(\cpt)=v'_1(\cpt)=v(\cpt)$ for all other $\cpt\in \Counters'$.
	
	Since $(\ellinit^{\mathtt{R},i}, v_0) \transNbCM^+ (\ell^{\mathtt{R},i}_{out},v_f)$, we know that $(\ellinit^{\mathtt{R},i}, v_0) \transNbCM^* (\ellinit^{\mathtt{TS},i,\cptz},v)\transCM^*(\ell^{\mathtt{TS},i,\cptz}_z,v')\transCM(\ellinit^{\mathtt{TS},i,\cpty},v')\transCM^*(\ell^{\mathtt{TS},i,\cpty}_z,v'')\transCM (\ell^{\mathtt{R},i}_{out},v_f)$.
	By construction, $v(\overline \cpty_i)=2^{2^i}-1$, $v(\overline \cptz_i)=2^{2^i}-1$, $v({\cptz_i})=1$, $v({\cptz_i})=1$, for all $\cpt\in Y_{i+1}\cup\overline Y_{i+1}$,
	$v(\cpt)=\max(0,v_0(\cpt)-1)$, and for all other counter $\cpt$, $v(\cpt)=v_0(\cpt)$. By Property ($\ast$), $v'(\overline \cptz_i)=2^{2^i}=v_0(\overline\cptz_i)$, $v'({\cptz_i})=0=v_0({\cptz_i})$,
	for all $\cpt\in Y_i\cup \overline{Y_{i+1}}$, $v'(\cpt)=\max(0, v_0(\cpt)-2^{2^i})$ and for all other $\cpt\in\Counters'$, $v'(\cpt)=v(\cpt)$. By Property ($\ast\ast)$,
	$v''(\overline\cpty_i)=2^{2^i}=v_0(\overline\cpty_i)$, $v''({\cpty_i})=0=v_0({\cpty_i})$, for all $\cpt\in Y_i\cup\overline{Y_{i+1}}$, 
	$v''(\cpt)=\max(0, v_0(\cpt)-2^{2^i} - (2^{2^i}-1).2^{2^i})=\max(0, v_0(\cpt)-2^{2^i}.2^{2^i})=\max(0, v_0(\cpt)-2^{2^{i+1}})$, and for all other $\cpt\in\Counters'$, $v''(\cpt)=v'(\cpt)=v_0(\cpt)$.
%	First, we show that if there exist $v,v'\in\mathbb{N}^{\Counters'}$ such that $(\ellinit^{\mathtt{R},i},v)\transCM^*(\ell^{\mathtt{TS},i,z}_z,v')$ during the
%	execution, then $v'(\mathtt{z}_i)=2^{2^i}$, $v'(\overline{z_i})=0$, for all $\cpt\in Y_{i+1}\cup\overline Y_{i+1}$, $v'(\cpt)=\max(0, v(\cpt)-2^{2^i}+1)$, and $v'(\cpt)=v(\cpt)$ for all other $\cpt\in \Counters'$. First observe that $v(\mathtt{z}_i)=2^{2^i}-1$. If it is the first time that $\ell_7^{\mathtt{R},i+1}$ is visited, then $v(\mathtt{z}_i)=v_0(\mathtt{z}_i)-1= 2^{2^i}-1$. If it is not, 	
	\end{proof}
	
	We get the immediate corollary:
\begin{lemma}\label{lemma:cover:expspace-hard:nbcm:rst-spec-2}
Let $0\leq i\leq n$, and $v\in\mathbb{N}^{\Counters'}$ satisfying (PreRst1) for $\mathtt{Rst}_i$. If $v$ is $i$-bounded, then 
	the unique configuration such that $(\ell^{\mathtt{R},i}_{in},v) \transNbCM^+ (\ell^{\mathtt{R},i}_{out}, v')$ in $\mathtt{Rst}_i$ is defined $v'(\cpt) = 0$ for all 
	$\cpt \in Y_i \cup \overline Y_i$ and $v'(\cpt) = v(\cpt)$ for all $\cpt \notin Y_i \cup \overline Y_i$.
	
%	Moreover, for all $0 \leq j \leq n$, and any counter $\cpt_j \in Y_j \cup \overline Y_j$, its value never goes above $2^{2^j}$ during the execution. \lug{rajouté ici pour ne pas avoir une propriété juste pour ça}
\end{lemma}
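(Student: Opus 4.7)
The plan is to derive this lemma as a direct corollary of Proposition~\ref{lemma:cover:expspace-hard:nbcm:rst-spec}. Since $v$ satisfies (PreRst1), that proposition applies: any valuation $v'$ with $(\ell^{\mathtt{R},i}_{in},v) \transNbCM^+ (\ell^{\mathtt{R},i}_{out}, v')$ must satisfy (PostRst1) and (PostRst2). The extra assumption here is that $v$ is $i$-bounded, i.e.\ $v(\cpt)\leq 2^{2^i}$ for every $\cpt\in Y_i\cup\overline Y_i$. Plugging this into (PostRst1) turns $v'(\cpt)=\max(0,v(\cpt)-2^{2^i})$ into $v'(\cpt)=0$ for each such $\cpt$, while (PostRst2) gives $v'(\cpt)=v(\cpt)$ for all other counters. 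This fixes $v'$ completely and yields the announced form.

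The only remaining point is the uniqueness claim, namely that there is one and only one such configuration. Uniqueness of the valuation $v'$ is immediate: the characterisation above leaves no freedom in the final counter values. To argue that the configuration itself (i.e.\ the triple location plus valuation) is unique, I would observe that $\ell^{\mathtt{R},i}_{out}$ is the chosen location, and the proposition already characterises every possible reaching valuation. For the existence direction, which is needed to ensure this unique $v'$ is indeed reached, I would note that the outer loop of $\mathtt{Rst}_{i+1}$ is guarded only by blocking decrements of $\overline\cpty_i$ and $\overline\cptz_i$, which are strictly positive at every entry of the loop thanks to Proposition~\ref{prop:test-swap} applied to the intermediate $\mathtt{TestSwap}$ calls, so the computation described in the proof of Proposition~\ref{lemma:cover:expspace-hard:nbcm:rst-spec} does indeed run to $\ell^{\mathtt{R},i}_{out}$.

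The main (and really minor) obstacle is carefully checking that, under $i$-boundedness, no execution in $\mathtt{Rst}_i$ can fork off to another final valuation: this amounts to noting that at every choice point inside a $\mathtt{TestSwap}_j(\cpt)$ submachine, the landing location ($\ell^{\mathtt{TS},j,\cpt}_z$ versus $\ell^{\mathtt{TS},j,\cpt}_{nz}$) is fully determined by the value of $\overline\cpt$, which in turn depends only on the number of previously executed outer loop iterations; non-blocking decrements are deterministic by definition, so no additional branching is introduced. Together with Proposition~\ref{lemma:cover:expspace-hard:nbcm:rst-spec}, this closes the proof.
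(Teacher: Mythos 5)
Your proposal is correct and matches the paper exactly: the paper derives this lemma as an ``immediate corollary'' of Proposition~\ref{lemma:cover:expspace-hard:nbcm:rst-spec}, using precisely the observation that $i$-boundedness collapses $\max(0,v(\cpt)-2^{2^i})$ to $0$ in (PostRst1) while (PostRst2) fixes the remaining counters. Your additional remarks on existence and uniqueness are sound but go beyond what the paper spells out.
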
	

%%%%%%%%%%%%%%%%%%%%%%%%%%%%%%%%
\begin{proposition}\label{lem:rst-bounded}
	Let $0\leq i \leq n$, and let $v\in\mathbb{N}^{\Counters'}$ satisfying (PreRst1) %such that (PreRst1) holds 
	%and (PreRst2) holds 
	for $\mathtt{Rst_i}$. If for all $0\leq j \leq n$, $v$ is $j$-bounded, then  for all $(\ell,v')\in\Loc^{\mathtt{R},i}\times\mathbb{N}^{\Counters'}$ such that 
	$(\ell^{\mathtt{R},i}_{in},v) \transNbCM^* (\ell, v')$ in $\mathtt{Rst}_i$, $v'$ is $j$-bounded for all $0\leq j \leq n$.
\end{proposition}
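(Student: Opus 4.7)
The plan is to prove Proposition \ref{lem:rst-bounded} by induction on $i$, using the already established specification of $\mathtt{TestSwap}_i$ (Proposition \ref{prop:test-swap}) to handle the recursive structure.

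\textbf{Base case $i=0$.} Inspecting Figure~\ref{fig:rst}, the machine $\mathtt{Rst}_0$ only fires non-blocking decrements on counters in $Y_0\cup\overline{Y_0}$ and leaves every other counter untouched. Non-blocking decrements can only weakly decrease a counter's value, so for every intermediate valuation $v'$ and every counter $\cpt$, we have $v'(\cpt)\leq v(\cpt)$. Consequently, $j$-boundedness of $v$ (for every $0\leq j\leq n$) is inherited by $v'$.

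\textbf{Inductive step.} Assume the statement for $i$; consider $\mathtt{Rst}_{i+1}$. Enumerate the transition types appearing in Figure~\ref{fig:rst2}: (a) the outer non-blocking decrements on counters in $Y_{i+1}\cup\overline{Y_{i+1}}$; (b) the paired blocking transitions $\dec{\overline{\cpty}_i}/\inc{\cpty_i}$ and $\dec{\overline{\cptz}_i}/\inc{\cptz_i}$; (c) the $\nop$ transitions entering and leaving the $\mathtt{TestSwap}_i$ subprocedures; (d) the internal steps of $\mathtt{TestSwap}_i(\overline{\cpty}_i)$ and $\mathtt{TestSwap}_i(\overline{\cptz}_i)$. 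Type (a) can only decrease counters, so preserves $j$-boundedness. Type (c) does not modify any counter. Type (b) is the delicate one: along any prefix of the execution, between two visits to $\ellinit^{\mathtt{R},i+1}$ the pair $(\cpty_i,\overline{\cpty}_i)$ either stays unchanged or undergoes $\dec{\overline{\cpty}_i}$ followed by $\inc{\cpty_i}$, and symmetrically for $(\cptz_i,\overline{\cptz}_i)$ between two visits to $\ell^{\mathtt{R},i+1}_2$. Since (PreRst1) at entry gives $\cpty_i=\cptz_i=0$ and $\overline{\cpty}_i=\overline{\cptz}_i=2^{2^i}$, and since each $\mathtt{TestSwap}_i(\overline{\cpty}_i)$ returning via $\ell^{\mathtt{TS},i,\cpty}_z$ swaps values and each return via $\ell^{\mathtt{TS},i,\cpty}_{nz}$ leaves them unchanged (Proposition \ref{prop:test-swap}), the sums $\cpty_i+\overline{\cpty}_i$ and $\cptz_i+\overline{\cptz}_i$ remain equal to $2^{2^i}$ at the initial/intermediate checkpoint locations, and in between they cannot exceed $2^{2^i}$ either (the only increment is preceded by a decrement of the complement). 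This keeps $i$-boundedness of $Y_i\cup\overline{Y_i}$ throughout type (b).

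For type (d), at the moment a $\mathtt{TestSwap}_i$ subprocedure is entered, the current valuation satisfies (PreRst1) for index $i$ (handed down from the hypothesis on $v$) as well as (PreTest2), (PreTest3) (coming from the invariant on $\cpty_i,\overline{\cpty}_i,\cptz_i,\overline{\cptz}_i$ established just above). Thus the "Moreover" clause of Proposition~\ref{prop:test-swap} applies and guarantees that no counter in $\bigcup_{0\leq j\leq n}Y_j\cup\overline{Y_j}$ ever exceeds its bound during the subprocedure. Combining (a)--(d) and iterating over the whole execution of $\mathtt{Rst}_{i+1}$, every intermediate valuation $v'$ is $j$-bounded for all $0\leq j\leq n$.

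The main obstacle is the bookkeeping of type (b): one must be careful that the non-atomic pair $\dec{\overline{\cpty}_i},\inc{\cpty_i}$ never transiently pushes $\cpty_i$ above $2^{2^i}$. This is handled by observing that the only entry point to that pair is $\ellinit^{\mathtt{R},i+1}$ and that at every such visit $\cpty_i+\overline{\cpty}_i=2^{2^i}$ with $\overline{\cpty}_i\geq 1$ (otherwise the blocking $\dec{\overline{\cpty}_i}$ could not fire), so after the increment $\cpty_i\leq 2^{2^i}$; the same argument applies to $\cptz_i$.
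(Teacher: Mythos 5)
Your proof is correct and follows essentially the same route as the paper's: both establish the invariants that lower-level counters stay initialized and that $\cpty_i+\overline{\cpty}_i$ and $\cptz_i+\overline{\cptz}_i$ equal $2^{2^i}$ at the checkpoint locations, and then invoke the ``Moreover'' clause of Proposition~\ref{prop:test-swap} to bound the counters inside the $\mathtt{TestSwap}_i$ calls (the paper organises this as an induction on the number of $\mathtt{TestSwap}_i$ invocations rather than your case split on transition types, but the content is the same). One cosmetic remark: your outer ``induction on $i$'' is inert, since $\mathtt{Rst}_{i+1}$ never calls $\mathtt{Rst}_i$ and your inductive step never uses the hypothesis for $i$ --- all the recursive weight is carried by Proposition~\ref{prop:test-swap}, exactly as in the paper.
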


\begin{proof}%[Proof of~\cref{lem:rst-spec-bounded}]
	
	We will prove the statement of the property along with some other properties: (1) if $\ell$ is not a state of $\mathtt{TestSwap}_i(\overline{\cptz}_i)$ or  $\mathtt{TestSwap}_i(\overline{\cpty}_i)$, then for all $0 \leq j < i$, for all $\cpt \in \overline{Y_j}$, $v'(\cpt) = 2^{2^j}$ and for all $\cpt \in Y_j$, $v'(\cpt) = 0$, and $v'(\overline{\cpts_i}) =2^{2^i}$ and $v'(\cpts_i) = 0$.
	(2) if $\ell$ is not a state of $\mathtt{TestSwap}_i(\overline{\cptz}_i)$ or  $\mathtt{TestSwap}_i(\overline{\cpty}_i)$ and if $\ell \ne \ell_1^{\mathtt{R}, i+1}$, then $v'(\cpty_i) + v'(\overline{\cpty}_i) = 2^{2^i}$, and if $\ell \ne \ell_3^{\mathtt{R}, i+1}$, then $v'(\cptz_i) + v'(\overline{\cptz}_i) = 2^{2^i}$.
	
	For $\mathtt{Rst_0}$, the property is trivial. Let $0\leq i <n$, and a valuation $v\in\mathbb{N}^{\Counters'}$
	such that for all $0 \leq j \leq i$, for all $\overline \cpt \in \overline Y_j$, $v(\overline \cpt) = 2^{2^j}$ and for all $\cpt \in Y_j$, $v(\cpt) = 0$, and such that, for all $0\leq j\leq n$, $v$ is $j$-bounded. Let now $(\ell,v')$ such that  $(\ell^{\mathtt{R},{i+1}}_{in},v) \transNbCM^* (\ell, v')$ in $\mathtt{Rst}_{i+1}$. 
	We prove the property by induction on the number of occurences of $\ellinit^{\mathtt{TS},i,z}$ and  $\ellinit^{\mathtt{TS},i,y}$. If there is no occurence of such state between in $(\ell^{\mathtt{R},{i+1}}_{in},v) \transNbCM^* (\ell, v')$, then, for all $\cpt \in Y_j \cup \overline{Y_j} \cup \{\cpts_i, \overline{\cpts_i}\} $ and $j \ne i$, $j \ne i+1$, then $v'(\cpt) = v(\cpt)$ and so $v'$ is $j$-bounded. Furthermore, for $ \cpt \in Y_i \cup Y_{i+1} \cup \overline{Y_{i+1}}$, $v'(\cpt) \leq v(\cpt)$,  and for all $\cpt \in \overline{Y_{i}}$, $v'(\cpt) \leq v(\cpt) + 1 = 1$. The property (2) is easily verified. 
	Hence the properties hold.
	
	Assume now we proved the properties for $k$ occurrences of $\ellinit^{\mathtt{TS},i,z}$ and  $\ellinit^{\mathtt{TS},i,y}$, and let us prove the clam for $k+1$ such occurrences. Note $\ell_{k+1} \in \{\ellinit^{\mathtt{TS},i,z},\ellinit^{\mathtt{TS},i,y} \}$ the last occurence such that: 
	$(\ell^{\mathtt{R},{i+1}}_{in},v) \transNbCM^+ (\ell_k, v_k) \transNbCM (\ell_{k+1}, v_{k+1}) \transNbCM^* (\ell, v')$. By induction hypothesis, $v_k$ is $j$-bounded for all $0 \leq j \leq n$ and it respects (1) and (2), and by construction, $(\ell_k, \nop, \ell_{k+1})$ and $\ell_k \ne \ell_1^{\mathtt{R},i+1}$, $\ell_k \ne \ell_3^{\mathtt{R}, i+1}$, hence $v_{k+1}$ is $j$-bounded for all $0 \leq j \leq n$ and respects (PreTest1), (PreTest2), and (PreTest3) for $\mathtt{TestSwap}_i(\overline{\cptz}_i)$ and $\mathtt{TestSwap}_i(\overline{\cpty}_i)$. As a consequence, if $\ell$ is a state of one of this machine such that $(\ell_{k+1}, v_{k+1})\transNbCM^* (\ell, v')$, then by \cref{prop:test-swap},  for all $0 \leq j \leq n$, as $v_{k+1}$ is $j$-bounded, so is $v'$.
	
	Assume now $\ell$ to not be a state of one of the two machines. And keep in mind that $v_{k+1}$ respects (1) and (2). Then, either $\ell = \ell_{out}^{\mathtt{R}, i+1}$ and so $v'(\cpt) = v_{k+1}(\cpt)$ for all $\cpt \in Y_j \cup \overline{Y}_j$ for all $j \ne i$, and $v'(\overline{\cpty_i}) = 2^{2^i}$ and $v'(\cpty_i) = 0$ and so the claim holds,  either $\ell \in \{\ell_{in}^{\mathtt{R,i+1}}, \ell_{j'}^{\mathtt{R}, i+1}\}_{j' = 1, 2, 3, 4, 5, 6, \dots, r}$. In this case, the execution is such that: $(\ell_{k+1}, v_{k+1}) \transNbCM^+ (\ell_{nz, k+1}, v_{k+1}) \transNbCM^* (\ell, v')$, where if $\ell_{k+1} =\ellinit^{\mathtt{TS},i,z}$, $\ell_{nz, k+1} = \ell^{\mathtt{TS}, i ,z}_{nz}$ and otherwise $\ell_{nz, k+1} = \ell^{\mathtt{TS}, i ,y}_{nz}$.
	In any cases, for all $j \ne i$, $j \ne i+1$, $\cpt \in Y_{j} \cup \bar Y_j \cup \{\cpts_i, \overline{\cpts_i}\}$, $v'(\cpt) = v_{k+1}(\cpt)$, hence (1) holds and $v'$ is $j$-bounded for all $j < i$ and $j > i+1$.
	
	Observe as well that for all $\cpt \in Y_{i+1} \cup \overline{Y}_{i+1}$, $v'(\cpt) \leq v_{k+1}(\cpt)$, and so $v'$ is $i+1$-bounded. The last thing to prove is that (2) holds. This is direct from the fact that $v_{k+1}$ respects (2).

\end{proof}

%%%%%%%%%%%

%\paragraph*{Procedural \NBCM~$\mathtt{Inc}_i$}
About the procedural \NBCM~$\mathtt{Inc}_i$, we use this proposition from \cite{lipton76reachability,esparza98decidability}.
\begin{proposition}[\cite{lipton76reachability,esparza98decidability}]\label{proposition:inc}
For all $0\leq i< n$, for all $v,v'\in\mathbb{N}^{\Counters'}$, $(\ellinit^{\mathtt{Inc}, i},v) \transNbCM^* (\ell_{out}^{\mathtt{Inc}, i}, v')$ in $\mathtt{Inc}_i$ if and only if:
\begin{itemize}
	\item (PreInc1) for all $0 \leq j < i$, for all $\cpt \in \overline Y_j$, $v(\cpt) = 2^{2^j}$ and for all $\cpt \in  Y_j$, $v(\cpt) = 0$;
	\item (PreInc2) for all $\cpt \in \overline Y_i$, $v( \cpt) = 0$,
	\item (PostInc1) for all $ \cpt \in \overline Y_i$, $v'(\cpt) = 2^{2^i}$;
	\item (PostInc2) for all $\cpt \not \in Y_i $, $v'(\cpt) = v(\cpt)$.
\end{itemize}	
Moreover, if for all $0\leq j \leq n$, $v$ is $j$-bounded, then  for all $(\ell,v'')$ such that 
	$(\ell^{\mathtt{Inc},i}_{in},v) \transNbCM^* (\ell, v'')$ in $\mathtt{Inc}_i$, then $v''$ is $j$-bounded for all $0\leq j\leq n$.
\end{proposition}

\paragraph*{Procedural \NBCM~$\mathtt{RstInc}$.}
We shall now prove the properties in the procedural \NBCM~$\mathtt{RstInc}$ defined in \cref{sec:cover-nb-machines}.
The next proposition establishes the correctness of the construction $\mathtt{RstInc}$.
\begin{proposition}\label{prop:cover:proof:expspace-hard:rstinc}
	Let $v \in \mathbb{N}^{\Counters'}$ be a valuation such that for all $0\leq i \leq n$ and for all $\cpt \in Y_i \cup \overline Y_i$, $v(\cpt) \leq 2^{2^i}$. Then the
	unique valuation $v' \in \mathbb{N}^{\Counters'}$ such that $(\ell_a, v) \transNbCM^* (\ell_b, v')$ in $\mathtt{RstInc}$ satisfies the following: for all $0\leq i \leq n$,
	for all $\cpt \in \overline Y_i $, $v'(\cpt) = 2^{2^i}$ and for all $\cpt \in Y_i$, $v'(\cpt) = 0$. 
	Moreover, for all $(\ell,v'')$ such that $(\ell_a, v) \transNbCM^* (\ell, v'')$ in $\mathtt{RstInc}$, for all $0\leq i\leq n$, $v''$ is $i$-bounded. 	\end{proposition}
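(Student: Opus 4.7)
The plan is to proceed by induction on the index $i$ of the subprocedures composing $\mathtt{RstInc}$, tracking after each sub-execution an invariant on the counter values. Specifically, I will establish by induction on $i$, for $0 \leq i \leq n$, that the unique valuation $v^{(i)}$ reached at the location $\ell^{\mathtt{Inc},i}_{out}$ (or $\ell^{\mathtt{R},n}_{out}$ when $i=n$) starting from $(\ell_a, v)$ satisfies: for all $0 \leq j \leq i$ and $\cpt \in \overline Y_j$, $v^{(i)}(\cpt) = 2^{2^j}$; for all $0 \leq j \leq i$ and $\cpt \in Y_j$, $v^{(i)}(\cpt) = 0$; and for all other $\cpt \in \Counters'$, $v^{(i)}(\cpt) = v(\cpt)$. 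From this, taking $i = n$, the statement on $v'$ follows (note that for $i=n$, $\overline Y_n = \emptyset$, so only the reset of $Y_n = \Counters$ is needed at the last step).

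For the base case $i=0$, the hypothesis (PreRst1) of Proposition~\ref{lemma:cover:expspace-hard:nbcm:rst-spec} is vacuously satisfied by $v$, and since $v$ is $0$-bounded, $v(\cpt) \leq 2^{2^0}$ for all $\cpt \in Y_0 \cup \overline Y_0$, hence after $\mathtt{Rst}_0$ we reach a valuation where every counter in $Y_0 \cup \overline Y_0$ is $\max(0, v(\cpt) - 2^{2^0}) = 0$ and all other counters are unchanged. Then (PreInc1) and (PreInc2) of Proposition~\ref{proposition:inc} for $\mathtt{Inc}_0$ are satisfied, and this procedural machine sets counters in $\overline Y_0$ to $2^{2^0}$ while leaving all other counters (except possibly those in $Y_0$) unchanged; by (PostInc1) and (PostInc2), we exactly get $v^{(0)}$.

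For the inductive step, assume the invariant holds at the exit of $\mathtt{Inc}_i$ with valuation $v^{(i)}$. Then $v^{(i)}$ satisfies (PreRst1) for $\mathtt{Rst}_{i+1}$, and one checks that $v^{(i)}$ is still $j$-bounded for every $0 \leq j \leq n$: for $j \leq i$, the values are exactly $0$ or $2^{2^j}$; for $j > i$, no counter in $Y_j \cup \overline Y_j$ has been modified, so the original bound on $v$ still holds. Proposition~\ref{lemma:cover:expspace-hard:nbcm:rst-spec} applied to $\mathtt{Rst}_{i+1}$ then yields that counters in $Y_{i+1} \cup \overline Y_{i+1}$ are set to $0$ (since $v^{(i)}(\cpt) \leq 2^{2^{i+1}}$) and other counters are unchanged. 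Proposition~\ref{proposition:inc} applied to $\mathtt{Inc}_{i+1}$ (whose preconditions now hold) sets counters in $\overline Y_{i+1}$ to $2^{2^{i+1}}$, yielding $v^{(i+1)}$. Uniqueness at each step follows from the uniqueness clauses in Proposition~\ref{lemma:cover:expspace-hard:nbcm:rst-spec} and Proposition~\ref{proposition:inc}, plus the fact that the $\nop$ transitions linking the submachines are deterministic.

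For the second part of the statement, I would establish in parallel that any intermediate valuation $v''$ reached along this execution remains $j$-bounded for every $0 \leq j \leq n$. This is where Proposition~\ref{lem:rst-bounded} and the boundedness clause of Proposition~\ref{proposition:inc} are key: at the entry of each $\mathtt{Rst}_{i+1}$ or $\mathtt{Inc}_i$, the current valuation satisfies the required preconditions and is $j$-bounded for every $j$, so the boundedness is preserved throughout the sub-execution. The main delicate point, and in my view the only real obstacle, is bookkeeping: ensuring that at the entry of each submachine, the preconditions (PreRst1) or (PreInc1)–(PreInc2) hold, and that $j$-boundedness is preserved for all $j$ (not just the index of the current submachine) so that the invariant composes cleanly from one stage to the next. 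Once this bookkeeping is done carefully, the proof amounts to a straightforward chaining of Propositions~\ref{lemma:cover:expspace-hard:nbcm:rst-spec}, \ref{lem:rst-bounded}, and~\ref{proposition:inc}.
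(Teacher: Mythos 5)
Your proposal is correct and follows essentially the same route as the paper: a stage-wise induction over the alternating $\mathtt{Rst}_i$/$\mathtt{Inc}_i$ submachines, maintaining exactly the invariants the paper calls $P_1(i)$, $P_2(i)$ and $P_3(i)$ (counters in $\bigcup_{j\leq i} Y_j$ reset to $0$, counters in $\bigcup_{j\leq i}\overline Y_j$ set to $2^{2^j}$, remaining counters untouched, and $j$-boundedness of every intermediate valuation), by chaining Proposition~\ref{lemma:cover:expspace-hard:nbcm:rst-spec} (via its corollary), Proposition~\ref{lem:rst-bounded} and Proposition~\ref{proposition:inc}. The bookkeeping point you flag as the only delicate step is precisely what the paper's induction carries along, so nothing is missing.
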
%\nas{c'est le lemme le plus important : il etablit la correction de RstInc, il faut le mettre + en valeur}

\begin{proof}[Proof of~\cref{prop:cover:proof:expspace-hard:rstinc}]
We can split the execution in $(\ell_a,v)\transCM (\ellinit^{\mathtt{R},0},v)\transCM^*(\ell^{\mathtt{R},0}_{out}, v_0)\transCM (\ellinit^{\mathtt{Inc},0},v_0)\transCM^*
(\ell_{out}^{\mathtt{Inc},0},v'_0)\transCM
(\ellinit^{\mathtt{R},1},v'_0)\transCM^*(\ell^{\mathtt{R},1}_{out},v_1)\transCM^*(\ellinit^{\mathtt{Inc},n-1}, v_{n-1})\transCM^*(\ell^{\mathtt{Inc},n-1}_{out}, v'_{n-1})\transCM 
(\ellinit^{\mathtt{R},n}, v'_{n-1})\transCM^*(\ell_{out}^{\mathtt{R},n},v_n)\transCM(\ell_b,v')$, with $v'=v_n$ and $v=v'_{-1}$. We show that for all $0\leq i\leq n$:
\begin{itemize}
\item $P_1(i)$: For all $\cpt\in Y_i\cup\overline Y_i$, $v_i(\cpt)=0$, and for all $\cpt\notin (Y_i\cup\overline Y_i)$, $v_i(\cpt)=v'_{i-1}(\cpt)$. 
\item $P_2(i)$: For all $0\leq j <i$, for all $\cpt\in Y_j$, $v'_{i-1}(\cpt)=0$ and for all $\cpt\in \overline Y_j$, $v'_{i-1}(\cpt)=2^{2^j}$, and
for all other $\cpt\in\Counters'$, $v'_i(\cpt)=v_i(\cpt)$. 
\item $P_3(i)$: For all $v''$ such that $(\ell_a, v) \transNbCM^* (\ell, v'')\transNbCM^* (\ell^{\mathtt{R},i}_{out}, v_{i})$, $v''$ is $i$-bounded, for all $0\leq i\leq n$.
\end{itemize}

For $k=0$, \cref{lemma:cover:expspace-hard:nbcm:rst-spec-2} implies that for all $\cpt\in Y_0\cup\overline Y_0$, $v_0(\cpt)=0$, and that for all other $\cpt\in\Counters'$,
$v_0(\cpt)=v(\cpt)$. Moreover, for all $v''$ such that $(\ellinit^{\mathtt{R},0},v)\transCM^*(\ell, v'')\transCM^*(\ell_{out}^{\mathtt{R},0},v_0)$, \cref{lem:rst-bounded} 
ensures that $v''$ is $i$-bounded, for all $0\leq i\leq n$. $P_2(0)$ is trivially true. 
%By \cref{proposition:inc}, we have that for all $v''$ such that $(\ell_{out}^{\mathtt{R},0},v_0)\transCM(\ellinit^{\mathtt{Inc},0},v_0)\transCM^*(\ell, v'')\transCM^*(\ell_{out}^{\mathtt{Inc},i},v'_0)$, $v''$ is $i$-bounded, for all $0\leq i\leq n$. 
%for all $0\leq j\leq n$, $v''$ is $j$-bounded. 

%Now by \cref{proposition:inc}, we deduce then that $v'_0(\cpt)=2^{2^0}=2$, for all $\cpt\in\overline Y_0$, and that $v'_0(\cpt)=v_0(\cpt)$ for all other $\cpt\in\Counters'$. So $v'_0(\cpt)=0$ for $\cpt\in Y_0$, and $v'_0(\cpt)=v(\cpt)$ for all other $\cpt\in \Counters'$.
%\cref{proposition:inc} also allows to conclude that for all $v''$ such that $(\ellinit^{\mathtt{Inc},0},v_0)\transCM^*(\ell, v'')\transCM^*(\ell_{out}^{\mathtt{Inc},i},v'_0)$,
%for all $0\leq j\leq n$, $v''$ is $j$-bounded. 

 Let $0\leq k< n$, and assume that $P_1(k)$, $P_2(k)$ and $P_3(k)$ hold. $P_1(k)$ and $P_2(k)$ and \cref{proposition:inc} imply that for all $\cpt\in \overline Y_k$, $v'_k(\cpt)= 2^{2^k}$, and that for all other counter $\cpt\in\Counters'$, $v'_k(\cpt)=v_k(\cpt)$. Thanks to $P_1(k)$, $P_2(k+1)$ holds. Moreover, we also know by~\cref{proposition:inc} that for all $v''$ such that $(\ell_{out}^{\mathtt{R},k},v_k)\transCM (\ellinit^{\mathtt{Inc},k}, v_k)\transCM^*(\ell, v'')\transCM^*(\ell_{out}^{\mathtt{Inc},k}, v'_k)$, $v''$ is $i$-bounded for all $0\leq i\leq n$. Since $v'_k$ is then $i$-bounded for all $0\leq i\leq n$, and since $P_2(k)$ holds, \cref{lemma:cover:expspace-hard:nbcm:rst-spec-2}
 implies that $v_{k+1}(\cpt)=0$ for all $\cpt\in Y_{k+1}\cup\overline Y_{k+1}$, and that, for all other $\cpt\in\Counters'$, $v_{k+1}(\cpt)=v'_k\cpt)$. So $P_1(k+1)$ holds.
 Moreover, by~\cref{lem:rst-bounded}, for all $v''$ such that $(\ell_{out}^{\mathtt{Inc},k}, v'_k)\transCM(\ellinit^{\mathtt{R},k+1},v'_k)\transCM^*(\ell,v'')\transCM^*
 (\ell_{out}^{\mathtt{R},k+1},v_{k+1})$, $v''$ is $i$-bounded for all $0\leq i\leq n$. Hence $P_3(k+1)$ holds. 
 
 By $P_1(n)$, $v'(\cpt)=0$ for all $\cpt\in Y_n$, and since $\overline Y_n=\emptyset$, $v'(\cpt)=2^{2^n}$ for all $\cpt\in \overline Y_n$. Let $\cpt\notin (Y_n\cup
 \overline Y_n)$. Then $v'(\cpt)=v'_{n-1}(\cpt)$, and by $P_2(n)$, for all $0\leq i <n$, for all $\cpt\in \overline Y_i$, $v'(\cpt)=2^{2^i}$, and for all $\cpt\in Y_i$,
 $v'(\cpt)=0$. By $P_3(n)$, for all $(\ell,v'')$ such that $(\ell_a, v) \transNbCM^* (\ell, v'')$ in $\mathtt{RstInc}$, for all $0\leq i\leq n$, $v''$ is $i$-bounded.
\end{proof}

\subsubsection{Proofs of the Reduction}
We are now ready to prove \cref{th:expspace-hard}, i.e.\ that the reduction is sound and complete.
For some subset of counters $Y$, we will note $v_{| Y}$ for the valuation $v$ on counters $Y$, formally, $v_{| Y} : Y \rightarrow \mathbb{N}$ and is equal to $v$ on its domain.
\begin{lemma}
	If there exists $v \in \mathbb{N}^\Counters$ such that $(\ellinit, \mathbf{0}_\Counters) \transCM^*_M (\ell_f, v)$, then there exists $v' \in \mathbb{N}^{\Counters'}$ such that $(\ellinit', \mathbf{0}_{\Counters'}) \transNbCM^*_N (\ell_f, v')$.
\end{lemma}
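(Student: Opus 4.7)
The plan is to build the execution of $N$ in two phases: first use the $\mathtt{RstInc}$ submachine to prepare the auxiliary counters to their intended initial values, and then replay the execution of $M$ inside $N$ using the fact that the transitions of $M$ are present in $N$ unchanged. No restore transition of $N$ will be used.

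First, I would start from $(\ellinit', \mathbf{0}_{\Counters'})$ and take the $\nop$ transition into $\ell_a$, the initial location of $\mathtt{RstInc}$. The valuation $\mathbf{0}_{\Counters'}$ is trivially $i$-bounded for every $0\leq i\leq n$, so \cref{prop:cover:proof:expspace-hard:rstinc} applies: there is a unique $w\in\mathbb{N}^{\Counters'}$ such that $(\ell_a,\mathbf{0}_{\Counters'}) \transNbCM^*_N (\ell_b,w)$, and this $w$ satisfies $w(\cpt)=2^{2^i}$ for all $\cpt\in\overline Y_i$, and $w(\cpt)=0$ for all $\cpt\in Y_i$, for every $0\leq i\leq n$. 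In particular, since $Y_n=\Counters$, we have $w(\cpt)=0$ for all $\cpt\in\Counters$, i.e.\ $w_{|\Counters}=\mathbf{0}_\Counters$. From $\ell_b$, the $\nop$ transition of $N$ allows us to reach $(\ellinit, w)$.

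Second, I would replay the initial execution of $M$ inside $N$. By hypothesis, there is a sequence $(\ellinit,\mathbf{0}_\Counters)=(\ell_0,v_0) \xtransCM{\delta_1}_M (\ell_1,v_1) \xtransCM{\delta_2}_M \cdots \xtransCM{\delta_k}_M (\ell_k,v_k) = (\ell_f,v)$ in $M$. Each $\delta_j$ is either a $\nop$, an increment $\inc{\cpt}$ or a (blocking) decrement $\dec{\cpt}$ on some counter $\cpt\in \Counters$, since $M$ is a \testfreeCM. By construction of $N$ (see \cref{fig:nrvass}), every such transition of $M$ is also a transition in $\Delta_b$ of $N$, and it only reads and writes counters from $\Counters\subseteq \Counters'$. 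Define inductively the valuations $w_j\in\mathbb{N}^{\Counters'}$ by $w_0=w$ and, for $0\leq j<k$, $w_{j+1}$ coincides with $w_j$ on $\Counters'\setminus\Counters$ and with $v_{j+1}$ on $\Counters$. A straightforward induction, using the fact that the transitions $\delta_j$ leave the counters in $\Counters'\setminus\Counters$ untouched, gives $(\ell_j, w_j) \xtransCM{\delta_{j+1}}_N (\ell_{j+1}, w_{j+1})$; the fact that the blocking decrements remain enabled in $N$ follows because $w_j$ and $v_j$ agree on $\Counters$. Concatenating all the steps produces $(\ellinit,w) \transNbCM^*_N (\ell_f, w_k)$.

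Putting the two phases together yields $(\ellinit',\mathbf{0}_{\Counters'}) \transNbCM^*_N (\ell_f, w_k)$, so the required $v'=w_k$ exists. There is no genuine obstacle here: the construction separates cleanly the initialization of auxiliary counters (handled by~\cref{prop:cover:proof:expspace-hard:rstinc}) from the faithful simulation of $M$'s transitions, which is immediate because the transitions of $M$ appear verbatim in $N$ and touch only counters in $\Counters$.
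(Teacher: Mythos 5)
Your proof is correct and follows essentially the same route as the paper's: initialize the auxiliary counters via $\mathtt{RstInc}$ using \cref{prop:cover:proof:expspace-hard:rstinc}, then replay the execution of $M$ verbatim inside $N$ without ever using a restore transition. The paper states this more tersely, but the decomposition and the key facts invoked are identical.
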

\begin{proof}
	From \cref{prop:cover:proof:expspace-hard:rstinc}, we have that $(\ellinit', \mathbf{0}_\Counters') \transNbCM^*_N (\ellinit, v_0)$ where $v_0$ is such that, for all $0 \leq j \leq n$, for all $\cpt \in \overline Y_j$, $v_0(\cpt) = 2^{2^j}$ and for all $ \cpt \in  Y_j$, $v_0( \cpt) = 0$. By construction of $N$, $(\ellinit, v_0)\transCM^*_N (\ell_f,v')$ with $v'$ defined by: for all $0\leq i <n$, for all $\cpt \in\overline Y_j$, $v'(\cpt) = 2^{2^j}$, for all $\cpt \in  Y_j$, $v'(\cpt) = 0$, and,  for all $\cpt \in \Counters$, $v'(\cpt) = v(\cpt)$. Note that in this path, there is no restore step.
\end{proof}

\begin{lemma}
	If there exists $v' \in \mathbb{N}^{\Counters'}$ such that $(\ellinit', \mathbf{0}_{\Counters'}) \transNbCM^*_N (\ell_f, v')$, then there exists $v \in \mathbb{N}^\Counters$ such that $(\ellinit, \mathbf{0}_\Counters) \transCM^*_M (\ell_f, v)$.
\end{lemma}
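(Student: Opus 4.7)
The plan is to exploit the structure of $N$: every state inside $\mathtt{RstInc}$ and in the $M$-part carries a restore transition to $\ellinit'$, so any execution of $N$ decomposes at restore points, and restore transitions do not alter counter values. The strategy is to show that the suffix of the given execution starting at the \emph{last} visit to $\ellinit'$ contains a clean simulation of $M$ starting from $(\ellinit, \mathbf{0}_\Counters)$.

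The key preliminary step I would prove is the following boundedness invariant: for every configuration $(\ell, v)$ reachable from $(\ellinit',\mathbf{0}_{\Counters'})$ in $N$, the valuation $v$ is $i$-bounded for every $0 \leq i \leq n$. The proof would proceed by induction on the length of the execution leading to $(\ell, v)$. For the inductive step, I would consider the last visit to $\ellinit'$ along this execution, reached with some valuation $v_0$; by the inductive hypothesis $v_0$ is $i$-bounded for every $i$. From $(\ellinit', v_0)$ onward, the execution takes the $\nop$-transition to $\ell_a$ and then moves inside $\mathtt{RstInc}$; by \cref{prop:cover:proof:expspace-hard:rstinc}, all intermediate valuations inside $\mathtt{RstInc}$ remain $i$-bounded, and once $\ell_b$ is reached the valuation is the \emph{initialized} one. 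If the execution continues further, it proceeds to $\ellinit$ via $\nop$ and then uses only $M$-transitions (restores being excluded by the choice of the last visit). These transitions coincide with those of $M$ and modify only counters in $\Counters=Y_n$. Since $M$ is $2\textsc{Exp}$-bounded and we enter $\ellinit$ with the initialized valuation restricted to $\Counters$ equal to $\mathbf{0}_\Counters$, all visited configurations keep counters in $\Counters$ below $2^{2^n}$, while the remaining counters of $\Counters'$, untouched by $M$-transitions, retain their initialized (and hence $i$-bounded) values.

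With the invariant in hand, the main argument is short. Take the given execution $(\ellinit', \mathbf{0}_{\Counters'}) \transNbCM^*_N (\ell_f, v')$ and consider its last visit to $\ellinit'$, with valuation $v_0$ (such a visit exists since $\ell_f \neq \ellinit'$). By the invariant, $v_0$ is $i$-bounded for every $i$, so by \cref{prop:cover:proof:expspace-hard:rstinc} the suffix of the execution has the form
\[(\ellinit', v_0) \transNbCM_N (\ell_a, v_0) \transNbCM^*_N (\ell_b, v_1) \transNbCM_N (\ellinit, v_1) \transNbCM^*_N (\ell_f, v'),\]
with $v_1$ the initialized valuation; in particular $v_1|_\Counters = \mathbf{0}_\Counters$. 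The last sub-execution from $(\ellinit, v_1)$ to $(\ell_f, v')$ avoids restore transitions, uses only transitions inherited from $\Delta$, and modifies only counters in $\Counters$; it therefore corresponds step by step to an execution $(\ellinit, \mathbf{0}_\Counters) \transCM^*_M (\ell_f, v)$ of $M$ with $v = v'|_\Counters$, which is exactly the statement to prove.

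The main obstacle is the boundedness invariant: it has to simultaneously control the behaviour of $\mathtt{RstInc}$ on inputs that may come from previous restore points, guarantee that the $M$-part is always entered with zero counters, and use the $2\textsc{Exp}$-boundedness of $M$ to keep counters in $\Counters$ in check throughout its simulation. Once the invariant is in place, \cref{prop:cover:proof:expspace-hard:rstinc} does the rest by ensuring that every completed run of $\mathtt{RstInc}$ yields an initialized valuation, and hence a faithful simulation of $M$ from its true initial configuration.
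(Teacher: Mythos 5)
Your proposal is correct and follows essentially the same route as the paper's proof: both establish, by induction along the execution decomposed at restart points, that the counters stay $i$-bounded (using the boundedness part of \cref{prop:cover:proof:expspace-hard:rstinc} inside $\mathtt{RstInc}$ and the $2\textsc{Exp}$-boundedness of $M$ in the $M$-part), then invoke the correctness of $\mathtt{RstInc}$ to conclude that the last entry into $\ellinit$ happens with the initialized valuation, so the restore-free suffix projects to a genuine run of $M$ from $(\ellinit,\mathbf{0}_\Counters)$. The only cosmetic difference is that you anchor the decomposition at the last visit to $\ellinit'$ while the paper phrases its invariant as ``every visit to $\ellinit$ carries the initialized valuation'' and anchors at the last visit to $\ellinit$; the content is the same.
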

\begin{proof}
We will note $v_0$ the function such that for all $0\leq i \leq n$, and for all $\cpt \in \overline Y_i $, $v_0(\cpt) = 2^{2^i}$ and for all $ \cpt \in  Y_i$, $v_0(\cpt) = 0$.
Observe that there might be multiple visits of location $\ellinit$ in the execution of $N$, because of the restore transitions. The construction of $\mathtt{RstInc}$ 
ensures that, every time a configuration $(\ellinit,v)$ is visited, $v=v_0$.
	Formally, we show that for all $(\ellinit, v)$ such that $(\ellinit',\mathbf{0}_{\Counters'})\transCM^*_N(\ellinit,v)$, we have that $v=v_0$. First let $(\ellinit',w)\transCM^*_N(\ellinit',w')$, with $w(\cpt)\leq 2^{2^i}$, and $\ellinit'$, $\ellinit$ not visited in between. Then for all $0\leq i\leq n$, for all $\cpt\in Y_i\cup\overline Y_i$,  $w'(\cpt)\leq 2^{2^i}$. Indeed, let $(\ell,\overline{w})$ be such that $
	(\ellinit',w)\transCM^*_N(\ell, \overline{w})\transCM_N(\ellinit',w')$. By \cref{prop:cover:proof:expspace-hard:rstinc}, we know that, for 
	all $0\leq i\leq n$, for all $\cpt\in Y_i\cup\overline Y_i$, $\overline{w}(\cpt)\leq2^{2^i}$. Since the last transition is a restore transition, we deduce that,
	for all $0\leq i\leq n$, for all $\cpt\in Y_i\cup\overline Y_i$,  $w'(\cpt)=\overline{w}(\cpt)\leq 2^{2^i}$. 
	\begin{itemize}
	\item Let $v\in\mathbb{N}^{\Counters'}$
	be such that $(\ellinit',\mathbf{0}_{\Counters'})\transCM^*_N(\ellinit,v)$, and $(\ellinit,v)$ is the first configuration where $\ellinit$ is visited. The
	execution is thus of the form  $(\ellinit',\mathbf{0}_{\Counters'})\transCM^*_N(\ellinit',w)\transCM^*_N(\ellinit,v)$, with $(\ellinit',w)$ the last time $\ellinit'$ is
	visited. We have stated above that  $w(\cpt)\leq 2^{2^i}$. Then, we have that $(\ellinit',\mathbf{0}_{\Counters'})
	\transCM^*_N(\ellinit',w)\transCM_N(\ell_a,w)\transCM^*_N(\ell_b,v)\transCM_N(\ellinit,v)$, and by \cref{prop:cover:proof:expspace-hard:rstinc}, $v=v_0$. 	
	\item Let now $v_k,v_{k+1}\in\mathbb{N}^{\Counters'}$ be such that $(\ellinit',\mathbf{0}_{\Counters'})\transCM^*_N(\ellinit,v_k)\transCM^*_N(\ellinit,v_{k+1})$,
	and $v_k$ and $v_{k+1}$ are respectively the $k^\textrm{th}$ and the $(k+1)^\textrm{th}$ time that $\ellinit$ is visited, for some $k\geq 0$. Assume that 
	$v_k=v_0$. We have $(\ellinit, v_k)\transCM^*_N(\ell,v)\transCM_N(\ellinit',v)\transCM^*_N(\ellinit',\overline v)\transCM_N(\ell_a,\overline v)\transCM_N^*(\ell_b,v_{k+1})
	\transCM_N(\ellinit, v_{k+1})$. Since the \testfreeCM~$M$ is 2EXP-bounded, and $v_k=v_0$, we obtain that for all $\cpt\in\Counters=Y_n$, 
	$v(\cpt)\leq 2^{2^n}$. For all $0\leq i<n$, for all $\cpt\in Y_i\cup\overline Y_i$, $v(\cpt)=v_0(\cpt)$, then for all $0\leq i\leq n$, for all $\cpt\in Y_i\cup\overline Y_i$,
	$v(\cpt)\leq 2^{2^i}$. Then, as proved above, $\overline v(\cpt)\leq 2^{2^i}$ for all $0\leq i\leq n$, for all $\cpt\in Y_i\cup\overline Y_i$. By \cref{prop:cover:proof:expspace-hard:rstinc}, $v'=v_0$.
	\end{itemize}
	Consider now the execution $(\ellinit',\mathbf{0}_{\Counters'})\transNbCM^*_N(\ellinit,v)\transNbCM^*_N(\ell_f,v')$, where $(\ellinit,v)$ is the last time the location
	$\ellinit$ is visited. Then, as proved above, $v=v_0$. From the execution $(\ellinit,v)\transNbCM^*_N(\ell_f,v')$, we can deduce an execution 
	$(\ellinit, v_{|\Counters})\transCM^*_M (\ell_f, v'_{|\Counters})$. Since $v=v_0$ and for all $\cpt\in\Counters=Y_n$, $v(\cpt)=0$, we can conclude the proof.
%	
%	We proved that for all reachable configuration $(\ellinit, v)$, $v = v_0$, and then by \cref{lemma:cover:expspace-hard:m'-to-m}, for a reachable configuration $(\ell, v')$ in $N$, if $\ell \in \Loc$, then there is a path from $(\ellinit, \mathbf{0}_\Counters)$ to $(\ell_f, v_{|\Counters})$ in $M$, which concludes the proof.
\end{proof}

The two previous lemmas prove that the reduction is sound and complete. By \cref{th:expspace-hard-lipton}, we proved the \Expspace-hardness of the problem, and so \cref{th:expspace-hard}.

%%%%%%%%%%%%%%%%%%%%

\section{Proofs of~\cref{sec:cover-rdv-protocols}}
In this section, we present proofs omitted in \cref{sec:cover-rdv-protocols}.
\subsection{Proof of \cref{cor:ccover-expspace}}}
We present here the proof of \cref{cor:ccover-expspace}. The two lemmas of this subsection prove the soundness and completeness of the reduction presented in \cref{subsec:rdv-to-nrcm}. Put together with \cref{thm:cover-nbcm-in-expspace}, we prove \cref{cor:ccover-expspace}.
\begin{lemma}
	Let $C_0 \in \Init$, $C_f \geq C_F$. If $C_0 \arrowP{\PP}^* C_f$, then there exists $v\in\mathbb{N}^Q$ such that $(\ellinit, \mathbf{0}_\Counters)\transNbCM^*(\ell_f, v)$.
	\end{lemma}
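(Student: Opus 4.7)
The plan is to simulate the execution $C_0 \arrowP{\PP}^* C_f$ in $M$ step by step, using the fact that counters of $M$ are meant to encode the multiset of processes across control states of $\PP$. Concretely, for a configuration $D \in \CC(\PP)$, let $v_D \in \mathbb{N}^Q$ be defined by $v_D(q) = D(q)$ for every $q \in Q$. I will show that $(\ellinit, \mathbf{0}_\Counters) \transNbCM^* (\ellinit, v_{C_f})$, and then use the verification gadget of \cref{fig:cover:translation-RDVtoMC-verif} to reach $\ell_f$.

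First I would set up the initial counter values. Let $n = \|C_0\|$, so $C_0 = \mset{n \cdot \qinit}$. Using the self-loop $(\ellinit, \inc{\qinit}, \ellinit)$ of \cref{fig:cover:translation-RDVtoMC-qinit} $n$ times, we obtain $(\ellinit, \mathbf{0}_\Counters) \transNbCM^* (\ellinit, v_{C_0})$. Next, I would prove by induction on the length of the execution in $\PP$ that whenever $C_0 \arrowP{\PP}^* D$, we have $(\ellinit, v_{C_0}) \transNbCM^* (\ellinit, v_D)$. For the inductive step, assume $(\ellinit, v_{C_0}) \transNbCM^* (\ellinit, v_D)$ and $D \arrowP{\PP} D'$; we examine the three possible types of transition in $\PP$:
\begin{itemize}
\item \textbf{Internal} ($D \arrowPlab{\PP}{\tau} D'$ via $(q,\tau,q') \in T$): Since $D(q) > 0$, we have $v_D(q) > 0$, so the gadget of \cref{fig:cover:translation-RDVtoMC-tau} yields $(\ellinit, v_D) \transNbCM^2 (\ellinit, v_{D'})$ using $\dec{q}$ followed by $\inc{q'}$.
\item \textbf{Rendez-vous} ($D \arrowPlab{\PP}{m} D'$ via $(q_1,!m,q_1'), (q_2,?m,q_2') \in T$): Since $D(q_1) > 0$, $D(q_2) > 0$, and $D(q_1) + D(q_2) \geq 2$, the blocking decrements $\dec{q_1}$ and $\dec{q_2}$ of \cref{fig:cover:translation-RDVtoMC-rdv} are enabled in sequence (even when $q_1 = q_2$), after which we perform $\inc{q_1'}$ and $\inc{q_2'}$ to reach $(\ellinit, v_{D'})$.
\item \textbf{Non-blocking request} ($D \arrowPlab{\PP}{\mathbf{nb}(m)} D'$ via $(q,!m,q') \in T$): By definition of this transition, $(D - \mset{q})(p) = 0$ for every $p \in \Read{m}$, hence $v_D(p) = 0$ for every such $p$ (note that $q \notin \Read{m}$ would not matter, as $v_D(q) \geq 1$ and only $q$ is decremented first). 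The gadget of \cref{fig:cover:translation-RDVtoMC-nb} first performs $\dec{q}$ (enabled since $v_D(q) > 0$) and then a sequence of non-blocking decrements $\nbdec{p_1}, \ldots, \nbdec{p_k}$, each of which is always enabled and leaves the corresponding counter at $0$ since it was already $0$; finally, $\inc{q'}$ produces $(\ellinit, v_{D'})$.
\end{itemize}

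Applying the induction to the full trace $C_0 \arrowP{\PP}^* C_f$, we reach $(\ellinit, v_{C_f})$. Since $C_f \geq C_F = \mset{\mathbf{q}_1} + \cdots + \mset{\mathbf{q}_s}$, we have $v_{C_f}(\mathbf{q}_i) \geq |\{j : \mathbf{q}_j = \mathbf{q}_i\}|$ for every $i$, so the successive decrements $\dec{\mathbf{q}_1}, \ldots, \dec{\mathbf{q}_s}$ of \cref{fig:cover:translation-RDVtoMC-verif} are all enabled, yielding an execution $(\ellinit, v_{C_f}) \transNbCM^s (\ell_f, v)$ for some $v \in \mathbb{N}^Q$, as required.

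The only subtle point is the simulation of non-blocking requests: one must check that the preconditions of the $\PP$-rule ensure that the corresponding counters in $M$ are exactly zero, so the simulation is faithful. This is precisely where the semantic asymmetry pointed out in \cref{subsec:rdv-to-nrcm} lies, but it cuts in our favour here: the direction (protocol $\to$ NBCM) only ever takes non-blocking decrement transitions whose counters are already $0$, so no spurious execution is introduced.
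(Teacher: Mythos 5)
Your proof is correct and follows essentially the same route as the paper's: initialise the counter $\qinit$ via the self-loop, simulate each protocol step with the corresponding gadget (treating the three transition types exactly as the appendix does), and finish with the verification chain $\dec{\mathbf{q}_1},\dots,\dec{\mathbf{q}_s}$. The only wording slip is in the non-blocking case, where "$(D-\mset{q})(p)=0$ for every $p\in\Read{m}$, hence $v_D(p)=0$" is not literally valid when $q\in\Read{m}$; but, as your parenthetical and the description of the gadget's order of operations make clear, what matters is that the counters equal $0$ \emph{after} the initial blocking $\dec{q}$, which is precisely how the paper argues it.
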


\begin{proof}

For all $q\in Q$, we let $v_q(q)=1$ and $v_q(q')=0$ for all $q'\in \Counters$ such that $q'\neq q$. Let $n=||C_0||=C_0(\qinit)$, and let $C_0C_1\cdots C_mC_f$ be the configurations visited in \PP. Then, applying
the transition $(\ellinit, \inc{\qinit}, \ellinit)$, we get $(\ellinit, \mathbf{0}_\Counters)\transNbCM (\ellinit, v^1) \transNbCM \dots \transNbCM (\ellinit, v^n)$ with $v_0 = v^n$ and $v_0(\qinit)=n$ and $v_0(\cpt)=0$ for all $\cpt\neq \qinit$. Let $i\geq 0$ and assume that $(\ellinit,\mathbf{0}_\Counters)\transNbCM^*(\ellinit, C_i)$. We show that
$(\ellinit, C_i)\transNbCM^*(\ellinit, C_{i+1})$. 
\begin{itemize}
\item If $C_i\arrowPlab{\PP}{m} C_{i+1}$, let $t=(q_1,!m,q'_1), t'=(q_2, ?m, q'_2)\in T$ such that $C_i(q_1)>0$, $C_i(q_2)>0$, $C_i(q_1)+C_i(q_2)\geq 2$, and $C_{i+1}=
C_i - \mset{q_1,q_2}+\mset{q'_1,q'_2}$. Then $(\ellinit, C_i)\transNbCM
(\ell_{(t,t')}^1, v_i^1)\transNbCM(\ell_{(t,t')}^2, v_i^2)\transNbCM(\ell_{(t,t')}^3, v_i^3)\transNbCM(\ellinit, v_i^4)$, with $v_i^1= C_i - v_{q_1}$, $v_i^2=v_i ^1 - v_{q_2}$,
$v_i^3 = v_i^2 + v_{q'_1}$, $v_i^4 = v_i^3+v_{q'_2}$. Observe that $v_i^4=C_{i+1}$ and then $(\ellinit, C_i)\transNbCM^*(\ellinit, C_{i+1})$.
\item If $C_i\arrowPlab{\PP}{\tau} C_{i+1}$, let $t=(q,\tau,q')$ such that $C_i(q)>0$ and $C_{i+1}=C_i-\mset{q}+\mset{q'}$. Then,  $(\ellinit, C_i)\transNbCM (\ell_q, v_i^1)\transNbCM (\ellinit, v_i^2)$ with $v_i^1=C_i- v_q$ and $v_i^2 = v_i^1+ v_{q'}$. Observe that $v_i^2 = C_{i+1}$, then $(\ellinit, C_i)\transNbCM^*(\ellinit, C_{i+1})$. 

\item If $C_i\arrowPlab{\PP}{\nb{m}} C_{i+1}$, let $t=(q,!m,q')$ such that $C_{i+1}=C_i-\mset{q}+\mset{q'}$, and $\Read{m} = \{q_1,\dots, q_k\}$. Then $C_i(p_j)=0$ for all $1\leq j\leq k$. We then have that $(\ellinit, C_i)\transNbCM (\ell_t, v_i^1)\transNbCM (\ell_{t,q_1}^m, v_i^1)\transNbCM\cdots \transNbCM(\ell_{t,q_k}^m, v_i^1) \transNbCM (\ellinit, v_i^2)$ with $v_i^1= C_i - v_q$ and $v_i^2= v_i^1 + v_{q'}$. 
Indeed, $v_i^1(q_j)=0$ for all $q_j\in \Read{m}$, so the transitions 
$(\ell^m_{t,q_j},
\nbdec{q_{j+1}}), \ell^m_{t,q_{j+1}})$ do not change the value of the counters. Hence, $v_i^2= C_{i+1}$ and $(\ellinit, C_i)\transNbCM^* (\ellinit, C_{i+1})$. 
\end{itemize}

So we know that $(\ellinit, \mathbf{0}_\Counters)\transNbCM^* (\ellinit, C_f)$. 
Moreover, since $C_f \geq C_F$, it holds that $C_f \geq v_{\mathbf{q}_1} + v_{\mathbf{q}_2} + \dots + v_{\mathbf{q}_s}$. Then $(\ellinit, C_f)\transNbCM^s (\ell_f, v)$ with $v=C_f-(v_{\mathbf{q}_1} + v_{\mathbf{q}_2} + \dots + v_{\mathbf{q}_s})$. 
\end{proof}

%\begin{sketch-proof}
%	Starting from $(\ellinit, \mathbf{0})$, one can mimick the execution between $C_0$ and $C_f$ in the following manner: at the begining, increment the counter $q_0$ as many times as the number of processes participating in the path from $C_0$ to $C_f$. Then, for every $C_i \arrowPlab{}{} C_{i+1}$ in the path of the rendez-vous protocol, we build a path in the \NBCM~such that $(\ell_0, v_i) \transNbCM^+ (\ell_0, v_{i+1})$ where $v_i = C_i$ and $v_{i+1} = C_{i+1}$\lug{à revoir en fonction de comment est définie une configuration}. At step $i$, if a rendez-vous occurs between states $q$ and $p$ leading to states $q'$ and $p'$, take the corresponding cycle which decrements $q$ and then $p$, and increments $p'$ and then $q'$. If it is a non-blocking sending which occurs from $q$ to $q'$ with a message $a$, take the corresponding cycle which decrements $q-1$ and does a non-blocking decrement for every state $p\in R(a)$, by hypothesis, all those counters are equal to $0$. If a process takes an internal transition between states $q$ and $q'$, then take the corresponding cycle starting from $\ell_0$ which decrements counter $q$ and increments counter $q'$. When the execution of the \NBCM~reaches $(\ell_0, v_f)$, then the execution takes its final step $(\ell_0, v_f) \transNbCM^{q_f-1} (\ell_f,v)$ which is possible because $v_f = C_f$ and so $v_f(q_f) > 0$. 
%\end{sketch-proof}

\begin{lemma}
	Let $v\in\nat^Q$. If $(\ellinit, \mathbf{0}_\Counters)\transNbCM^*(\ell_f, v)$, then there exists $C_0 \in \Init$, $C_f \geq C_F$ such that $C_0 \arrowPlab{\PP}{}^* C_f$.
\end{lemma}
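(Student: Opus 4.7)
The plan is to proceed by induction on the structure of the $M$-execution, maintaining the invariant that the current $\PP$-configuration dominates componentwise the current $M$-valuation at each visit of $\ellinit$. The first step is a preparatory observation: NBCM-transitions are monotonic (if $v\transNbCM^* v'$ and $w\geq v$ then $w\transNbCM^* w'$ for some $w'\geq v'$), so the self-loop $(\ellinit,\inc{\qinit},\ellinit)$ commutes with any subsequent gadget in the sense that prepending it to a valid execution yields another valid execution whose final valuation dominates the original one. Hence, letting $n$ denote the total number of occurrences of this self-loop along the given execution, I may assume without loss of generality that the $M$-execution has the form $(\ellinit,\mathbf{0}_\Counters)\transNbCM^n(\ellinit, n\cdot v_{\qinit}) \transNbCM^* (\ellinit, v^*)\transNbCM^s (\ell_f, v)$, where the middle part is a sequence of gadget excursions that return to $\ellinit$ and the last part is the verification path. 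I then set $C_0 = n\cdot\mset{\qinit}\in\Init$, which matches $n\cdot v_{\qinit}$ as a multiset, and begin the inductive simulation.

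The inductive step processes each gadget excursion. For the internal-transition gadget $(q,\tau,q')$, the decrement of $q$ in $M$ requires $v_i(q)\geq 1$, so by the invariant $C_i(q)\geq 1$ and firing $(q,\tau,q')$ in $\PP$ gives $C_{i+1}\geq v_{i+1}$. The rendez-vous gadget for $(q,!a,q')$ and $(p,?a,p')$ is analogous, with the extra check that $C_i(q)+C_i(p)\geq 2$ when $q=p$. The delicate case is the non-blocking gadget for $(q,!a,q')$ with $\Read{a}=\set{p_1,\dots,p_k}$: here $M$ non-blocking-decrements $p_1,\dots,p_k$ regardless of their values, but in $\PP$ this transition is only available when $\Read{a}$ is empty after removing the sender. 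I split into two cases. If $(C_i-\mset{q})(p_j)=0$ for every $j$, then componentwise domination forces $v_i(p_j)=0$ for those $p_j$, and firing the non-blocking send in $\PP$ preserves the invariant trivially. Otherwise, some $p_j$ has a process available in $C_i-\mset{q}$, and I fire instead the blocking rendez-vous $(q,!a,q')$ with $(p_j,?a,p_j')$ for some $(p_j,?a,p_j')\in T$. The key componentwise verification is then $C_{i+1}(p_j)=C_i(p_j)-1\geq\max(0,v_i(p_j)-1)=v_{i+1}(p_j)$ (using $C_i(p_j)\geq v_i(p_j)$ and $C_i(p_j)\geq 1$), while $C_{i+1}(p_j')=C_i(p_j')+1\geq v_i(p_j')+1>v_i(p_j')=v_{i+1}(p_j')$, since $M$ does not touch $p_j'$ in this gadget.

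Finally, the verification path consists of $s$ successive blocking decrements on $\mathbf{q}_1,\dots,\mathbf{q}_s$, so the valuation $v^*$ reached at the last visit of $\ellinit$ must satisfy $v^*\geq v_{\mathbf{q}_1}+\cdots+v_{\mathbf{q}_s}=C_F$ (viewed as a multiset). By the invariant, the $\PP$-configuration $C^*$ reached at the same point satisfies $C^*\geq v^*\geq C_F$, so taking $C_f=C^*$ completes the proof. The main obstacle is the non-blocking-send gadget, where $M$ and $\PP$ must be matched carefully because they may take fundamentally different transitions, and where the componentwise verification of the invariant requires a case analysis also covering the edge cases in which $q,q',p_j,p_j'$ coincide.
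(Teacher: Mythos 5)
Your simulation of the individual gadgets (internal, rendez-vous, and the two-case analysis of the non-blocking gadget, where a positive counter in $\Read{a}$ forces you to fire a blocking rendez-vous in $\PP$ instead) is exactly the argument the paper uses, and your componentwise verification there is sound. The gap is in your preparatory ``WLOG'' step. You claim that, by monotonicity of \NBCM{}s, all occurrences of the self-loop $(\ellinit,\inc{\qinit},\ellinit)$ can be moved to the front of the execution, yielding a valid execution whose final valuation dominates the original one. Monotonicity gives you that $(\text{inc};E)$ ends at least as high as $(E)$, but that is not the comparison you need: to replace $E_1;\text{inc};E_2$ by $\text{inc};E_1;E_2$ you must show that the valuation reached after $\text{inc};E_1$ dominates the one reached after $E_1;\text{inc}$, i.e.\ that the extra token on $\qinit$ survives $E_1$. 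It need not: $\qinit$ may itself be a receiving state ($\qinit\in\Read{a}$ is allowed, and occurs e.g.\ in Figure~1 of the paper), so a non-blocking gadget inside $E_1$ contains a $\nbdec{\qinit}$ that eats the prematurely created token. Concretely, with transitions $(\qinit,!a,q')$ and $(\qinit,?a,p)$, the execution $\inc{\qinit}$; nb-gadget for $!a$; $\inc{\qinit}$ ends with $\qinit\mapsto 1,\ q'\mapsto 1$, whereas the rearranged $\inc{\qinit};\inc{\qinit}$; nb-gadget ends with $\qinit\mapsto 0,\ q'\mapsto 1$. If the verification path (or a later gadget) performs $\dec{\qinit}$, the rearranged sequence is not even a valid execution reaching $\ell_f$, so your induction is carried out on an execution that may not witness the hypothesis.

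This is precisely the point where the paper's proof does its real work: when an increment $(\ellinit,\inc{\qinit},\ellinit)$ occurs at some intermediate visit of $\ellinit$, it does not commute it away but \emph{rebuilds} the protocol execution constructed so far from a larger initial configuration $C'_0$ with $C'_0(\qinit)=C_0(\qinit)+N+1$ (where $N$ is the length of the current protocol execution), proving by a sub-induction the invariant $C'_j\geq C_j$ together with $C'_j(\qinit)>C_j(\qinit)+N-j$. The second part of the invariant quantifies exactly how many of the extra $\qinit$-processes can be consumed (at most one per step, when a formerly non-blocking send turns into a rendez-vous answered by one of them), guaranteeing that at least one survives to account for the increment. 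You need either this bookkeeping or some other mechanism that protects the deferred increments; the naive commutation does not provide it.
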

\begin{proof}
	%Observe first that from $(\ell_0,\mathbf{0}_\Counters)$ the only transitions that can be taken in $M$ are $(\ell_0,\nop,\ell_0)$ and $(\ell_0, q_0++, \ell_0)$. 
	%Moreover, since we assume that $(\ell_0, \mathbf{0}_\Counters)\transNbCM^*(\ell_f, v)$, the transition $(\ell_0, q_0++, \ell_0)$ has been taken at least once.
	%Hence the execution of $M$ looks like $(\ell_0, \mathbf{0}_\Counters)\transNbCM^*(\ell_0, \mathbf{0}_\Counters)\transNbCM(\ell_0, v_0)\transNbCM^*(\ell_f,v)$, with $v_0=v_{q_0}$. 
	Let $(\ellinit, v_0), (\ellinit, v_1) \dots (\ellinit, v_n)$ be the projection of the execution of $M$ on $\{\ellinit\} \times \mathbb{N}^\Counters$.
	%, starting from the first occurrence of $(\ell_0, v_0)$.
	 We prove that, for all 
	$0\leq i\leq n$, there exists $C_0 \in \Init$, and $C\geq v_i$ 
	%(i.e for all $q \in Q$, $C_n(q) \geq v_n(q)$) 
	such that $C_0 \arrowP{\PP}^* C$. For $i = 0$, we let $C_0$ be the empty multiset, and the property is trivially true. Let $0\leq i < n$, and assume that there exists $C_0 \in \Init$, $C\geq v_i$ such that $C_0 \arrowP{\PP}^* C$. 
	%We prove the existence of $C_{n+1}\succeq v_{n+1}$ depending on what happens in the \NBCM~execution between $(\ell_0, v_n)$ and $(\ell_0, v_{n+1})$.
	\begin{itemize}
		%\item If $(\ellinit, v_i)\xtransNbCM{\delta} (\ellinit, v_{i+1})$ with $v_i=v_{i+1}$ (i.e. $\delta=(\ellinit, \nop, \ellinit)$), we let $C'=C$ and
		%$C_0\arrowP{\PP}^*C'$ with $C' \geq v_{i+1}$.
		% if the step between $(\ell_0, v_n)$ and $(\ell_0, v_{n+1})$ is a restore set, then the induction is trivial;
	
		\item \label{proof:rdvToCM:enum:rdv}If $(\ellinit, v_i)\xtransNbCM{\delta}(\ellinit, v_{i+1})$ with $\delta=(\ellinit, \inc{\qinit}, \ellinit)$, then $v_{i+1} = v_i +v_{\qinit}$. The
		execution $C_0\arrowP{\PP}^* C$ built so far cannot be extended as it is, since it might not include enough processes. Let
		$N$ be such that $C_0\arrowP{\PP} C_1\arrowP{\PP} \dots \arrowP{\PP} C_N = C$, and let $C'_0\in\Init$ with $C'_0(\qinit)=C_0(\qinit)+N+1$. We build, for all $0\leq j \leq N$, a configuration 
		$C'_j$ such that $C'_0\arrowP{\PP}^j C'_j$, 
		$C'_j\geq C_j$ and $C'_j(\qinit)>C_j(\qinit)+N-j$. For $j=0$ it is trivial. Assume now that, for $0\leq j < N$, $C'_j\geq C_j$ and that $C'_j(\qinit) > C_j(\qinit)+N-j$.
		
		If $C_j\arrowPlab{\PP}{m} C_{j+1}$ for $m\in\Sigma$, with $t_1=(q_1,!m, q'_1)$ and $t_2=(q_2,?m,q'_2)$. Then, $C_{j+1}=C_j - \mset{q_1,q_2} + \mset{q'_1,q'_2}$. Moreover, $C'_j(q_1) \geq C_j(q_1)>0$ and $C'_j(q_2) \geq C_j(q_2) >0$ and $C'_j(q_1) + C'_j(q_2)\geq C_j(q_1) + C_j(q_2) \geq 2$. We let
		$C'_{j+1} = C'_j - \mset{q_1,q_2} + \mset{q'_1,q'_2}$, and $C'_j\arrowPlab{\PP}{m} C'_{j+1}$. It is easy to see that $C'_{j+1}\geq C_{j+1}$. Moreover,
		$C'_{j+1}(\qinit) > C_{j+1}(\qinit) +N -j > C_{j+1} + N -j-1$. 
		
		If $C_j\arrowPlab{\PP}{\nb{m}} C_{j+1}$ and 
		for all $q\in\Read{m}$, $C'_j-\mset{q_1}(q)=0$,  with $t=(q_1,!m,q_2)$, (respectively $C_j\arrowPlab{\PP}{\tau} C_{j+1}$ with $t=(q_1,\tau,q_2)$), we let 
		$C'_{j+1}=C'_j - \mset{q_1}+\mset{q_2}$, and $C'_j\arrowPlab{\PP}{\nb{m}}C'_{j+1}$ (respectively $C'_j\arrowPlab{\PP}{\tau} C'_{j+1}$).
		Again, thanks to the induction hypothesis, we get that $C'_{j+1} \geq C_{j+1}$, and $C'_{j+1} (\qinit)> C_{j+1}(\qinit) + N - j >  C_{j+1}(\qinit) + N - j-1$. 
		
		If now $C_j\arrowPlab{\PP}{\nb{m}} C_{j+1}$, with $t_1=(q_1,!m,q_2)$ and there exists $q'_1\in \Read{m}$ such that $C'_j - \mset{q_1}(q'_1) >0$.
		Let $(q'_1,?m,q'_2)\in T$, and then $C'_{j+1}=C'_j - \mset{q_1,q'_1} + \mset{q_2, q'_2}$. Since $C'_j\geq C_j$, $C'_j(q_1)\geq 1$, and since
		$C'_j-\mset{q_1}(q'_1) >0$, $C'_j(q'_1)\geq 1$ and $C'_j(q_1) + C'_j(q'_1) \geq 2$. Hence, $C'_j\arrowPlab{\PP}{{m}} C'_{j+1}$. We have that $C'_j(q'_1) > C_j(q'_1)$, so $C'_{j+1}(q'_1) \geq C_{j+1}(q'_1)$ and $C'_{j+1}(q)\geq C_{j+1}(q)$ for all other $q\in Q$. Hence $C'_{j+1} > C_{j+1}$. Also, $C_{j+1}(\qinit) = C_j(\qinit) + x$, with $x\in\{0,1\}$. If $q'_1\neq \qinit$, then $C'_{j+1}(\qinit) = C'_j(\qinit) + y$, with $y\geq x$. Hence, since $C'_j(\qinit) > C_j(\qinit) + N - j$,
		we get $C'_{j+1}(\qinit) > C_{j+1}(\qinit) + N - j > C_{j+1}(\qinit) + N -j - 1$. If $q'_1 = \qinit$, then we can see that $C'_{j+1}(\qinit) = C'_j(\qinit) +y$, with
		$x-1\leq y \leq x$. In that case, $C'_{j+1}(\qinit) > C_j(\qinit) + N-j+y\geq C_j(\qinit) + N- j + x-1 \geq C_{j+1}(\qinit) + N-j-1$.  	
		
		So we have built an execution $C'_0 \arrowP{\PP}^* C'_N$ such that $C'_N\geq C_N$ and $C'_N(\qinit) > C_N(\qinit)$. Hence, $C'_N\geq v_{i+1}$.

		%		$(C'_j - \mset{q_1})(\qinit)>0$, then let $(\qinit, ?m, q')\in T$, we let $C'_{j+1}=C'_j - \mset{q_1,\qinit}+\mset{q_2,q'}$. Then $C'_{j+1}\geq C_{j=1}$. If 
%		$\qinit\neq q_1$, then 
%		$C'_{j+1}(\qinit)= (C'_j+\mset{q_2})(\qinit)-1$, and since $C'_j(\qinit) > C_j(\qinit) + N- j$ by induction hypothesis, we get that $C'_{j+1}(\qinit)> C_j(\qinit) + N- j-1$. 
%		If $\qinit=q_1$, then $C'_{j+1}(\qinit)\geq C'_j(\qinit)-2$
				
%		is an increment of the state $\qinit$, then, note $M$ the number of steps between $C_0$ and $C_n$. Consider $C'_0 \in \Init$ such that $C'_0(\qinit) = C_0(\qinit) + M +1$. Then, there exists $C'_{n+1}$ such that there is an execution between $C'_0$ and $C'_{n+1}$ with the same steps as between $C_0$ and $C_f$, but some new receptions (which were non-blocking sendings in the original execution) may occur from $\qinit$. As at most $M$ new receptions occur, $C'_{n+1}(\qinit) = C_n(\qinit) + 1$, which concludes this case;
		\item If $(\ellinit,v_i)\transNbCM (\ell_{(t,t')}^1, v_i^1)\transNbCM (\ell_{(t,t')}^2, v_i^2)\transNbCM (\ell_{(t,t')}^3, v_i^3)\transNbCM (\ellinit, v_{i+1})$, with
		$t= (q_1,!m,q_2)$ and $t'=(q'_1, ?m, q'_2)$, then
		$v_i^1 = v_i - v_{q_1}$, $v_i^2= v_i^1 - v_{q'_1}$, $v_i^3 = v_i^2 + v_{q_2}$, and $v_{i+1} = v_i^3+ v_{q'_2}$. Then by induction hypothesis, 
		$C(q_1)\geq 1$, $C(q'_1)\geq 1$, and $C(q_1) + C(q'_1) \geq 2$. We let $C' = C - \mset{q_1, q'_1} + \mset{q_2, q'_2}$. We have
		$C\arrowPlab{\PP}{m} C'$ and $C' \geq v_{i+1}$. 
		%\item  if a cycle is executed between $(\ellinit, v_n)$ and $(\ellinit, v_{n+1})$ and the cycle corresponds to a rendez-vous, by induction hypothesis, this rendez-vous is feasible from $C_n$ and by executing it, we find $C_{n+1}$ satisfying the property;
		\item If $(\ellinit, v_i)\transNbCM (\ell_q, v_i^1) \transNbCM (\ellinit, v_{i+1})$ with $(q,\tau, q')\in T$ and $v_i^1 = v_i - v_q$ and $v_{i+1} = 
		v_i^1 + v_{q'}$, then by induction hypothesis, $C\geq 1$,
		and if we let $C'=C- \mset{q}+\mset{q'}$, then $C\arrowPlab{\PP}{\tau}C'$, and $C'\geq v_{i+1}$.
		%\item  if a cycle is executed between $(\ellinit, v_n)$ and $(\ellinit, v_{n+1})$ and the cycle corresponds to an internal transition, by induction hypothesis, this transition is feasible from $C_n$ and by executing it, we find $C_{n+1}$ satisfying the property;
		
		\item If $(\ellinit, v_i)\transNbCM (\ell_t, v_i^1)\transNbCM (\ell_{t,p_1}^m, v_i^2)\transNbCM\dots\transNbCM (\ell_{t,p_k}^m, v_i^{k+1})\transNbCM
		(\ellinit, v_{i+1})$ with $t=(q,!m,q')$ and $\Read{m} = \{p_1,\dots,p_k\}$, and $(C-\mset{q})(p)=0$ for all $p\in\Read{m}$. We let 
		$C' = C- \mset{q}+\mset{q'}$, hence $C\arrowPlab{\PP}{\nb{m}} C'$. Moreover, $v_i^1 = v_i - v_q$,
		and, for all $1\leq j <k$, it holds that $v_i^{j+1}(p_j) = \max(0, v_i^j(p_j) - 1)$ and $v_i^{j+1}(p)=v_i^j(p)$ for all $p\neq p_j$. By induction hypothesis, $C\geq v_i$,
		hence $v_i^j(p)=0$ for all $p\in\Read{m}$, for all $1\leq j\leq k+1$. Hence, $v_{i+1} = v_i^{k+1} + v_{q'} = v_i^1 + v_{q'}$, and $C' \geq v_{i+1}$. 

		\item If $(\ellinit, v_i)\transNbCM (\ell_t, v_i^1)\transNbCM (\ell_{t,p_1}^m, v_i^2)\transNbCM\dots\transNbCM (\ell_{t,p_k}^m, v_i^{k+1})\transNbCM
		(\ellinit, v_{i+1})$ with $t=(q,!m,q')$ and $\Read{m} = \{p_1,\dots,p_k\}$, and $(C-\mset{q})(p_j)>0$ for some $p_j\in\Read{m}$. Let 
		$(p_j,?m,p'_j)\in T$ and
		$C' = C - \mset{q,p_j}+\mset{q',p'_j}$. Obviously, $C\arrowPlab{\PP}{m} C'$. It remains to show that $C'\geq v_{i+1}$. This is due to the fact that in 
		the \NRCM{}
		$M$, the counter $p'_j$ will not be incremented, unlike $C(p'_j)$. Moreover, in the protocol \PP, only $p_j$ will lose a process, whereas in $M$,
		other counters corresponding to processes in $\Read{m}$ may be decremented. Formally, by definition and by induction hypothesis, $C-\mset{q} \geq v_i^1$. Also, for all $p\in 
		\Read{m}$, either $v_i^1(p)=v_i^{k+1}(p) = 0$, or $v_i^{k+1}(p) = v_i^1(p)-1$. Remark
		that since $C\geq v_i$, then $C-\mset{q}\geq v_i-v_q = v_i^1$, hence $(C-\mset{q,p_j})(p_j) = (C-\mset{q})(p_j) - 1 \geq v_i^1(p_j)-1$. Also, $(C-\mset{q})(p_j) - 1\geq 0$, hence $(C-\mset{q})(p_j) - 1\geq \max(0,v_i^1(p_j)-1)=v_i^{k+1}(p_j)$. Observe also that, for all $p\neq p_j\in\Read{m}$, if $v_i^1(p)>0$, then $(C-\mset{q,p_j})(p)=
		(C-\mset{q})(p) \geq v_i^1(p) > v_i^{k+1}(p)$. If $v_i^1(p) = 0$, then  $(C-\mset{q,p_j})(p)\geq v_i^1(p)= v_i^{k+1}(p)$. For all other $p\in Q$, 
		$(C-\mset{q,p_j})(p) = (C-\mset{q})(p) \geq v_i^1(p)= v_i^{k+1}(p)$. Hence, $C-\mset{q,p_j} \geq v_i^{k+1}$. By definition, $v_{i+1} = v_i^{k+1} + v_{q'}$.
		Hence, $(C-\mset{q,p_j}+\mset{q',p'_j})(p)\geq v_{i+1}(p)$, for all $p\neq p'_j$, and $(C-\mset{q,p_j}+\mset{q',p'_j})(p'_j)> v_{i+1}(p'_j)$.
		So, $C'> v_{i+1}$.

	\end{itemize}
	
	Now we know that the initial execution of $M$ is: $(\ellinit, \mathbf{0}_\Counters)\transNbCM^\ast(\ellinit, v_n)\transNbCM^\ast (\ell_f, v_f)$ with $v_f = v_n - (v_{\mathbf{q}_1} + v_{\mathbf{q}_2} + \dots + v_{\mathbf{q}_s})$. Thus 
	$v_n>v_{\mathbf{q}_1} + v_{\mathbf{q}_2} + \dots + v_{\mathbf{q}_s}$. We have proved that we can build an initial execution of $P$: $C_0\arrowP{\PP}^*C_n$ and that $C_n\geq v_{\mathbf{q}_1} + v_{\mathbf{q}_2} + \dots + v_{\mathbf{q}_s}$. Hence $C_n \geq C_F$.
\end{proof}

%%%%%%%%%%%%%%%%
\subsection{Proofs of \cref{th:ccover-expspace-complete}}
To prove \cref{th:ccover-expspace-complete}, we shall use \cref{cor:ccover-expspace}~along with the reduction presented in \cref{subsec:nrcm-to-rdv}. If the reduction is sound and complete, it will prove that \Cover~is \Expspace-hard. As \Cover~is a particular instance of the \CCover~problem, this is sufficient to prove \cref{th:ccover-expspace-complete}.
The two lemmas of this subsection prove the soundness and completeness of the reduction presented in \cref{subsec:nrcm-to-rdv}, put together with \cref{th:expspace-hard}, it proves that \Cover~is \Expspace-hard.

%\nastext{j'ai l'impression qu'on n'a pas besoin de cette hypothèse.}
%We can suppose wlog that there is no transition leading to the location $\ellinit$ that modifies the counters (we can add, if needed, a new initial location with an only internal transition leading to the original initial location).
\begin{lemma}
	For all $v\in\mathbb{N}^d$, if $(\ellinit, \mathbf{0}_\Counters)\transCM_M^*(\ell_f, v)$, then there exists $C_0 \in \Init$, $C_f \in \FinalE$ such that $C_0 \arrowPlab{\PP}{}^* C_f$.
\end{lemma}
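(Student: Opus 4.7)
The plan is to exhibit an initial execution of $\PP$ that faithfully simulates the given execution of $M$ via a single ``leader'' process. Let the given execution be $(\ellinit,\mathbf{0}_\Counters)=(\ell_0,v_0)\transCM_M\cdots\transCM_M(\ell_k,v_k)=(\ell_f,v)$, and set $V=\max_{0\leq i\leq k}\sum_{\cpt\in\Counters} v_i(\cpt)$. I would fix any $N>V+2$ and start from the initial configuration $C_0=\mset{N\cdot\qinit}\in\Init$. From $C_0$, since every process sits in $\qinit$ and therefore no process can answer $?L$ nor $?R$, one chosen process takes $!L$ and then $!R$ as two consecutive non-blocking requests, ending in $\ellinit$; call the resulting configuration $D_0$.

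Next, I would proceed by induction on $i$ to build configurations $D_0,\ldots,D_k$ reachable from $C_0$ and satisfying the invariant that $D_i(\ell_i)\geq 1$ (the leader is in $\ell_i$), $D_i(1_\cpt)\geq v_i(\cpt)$ for every $\cpt\in\Counters$, and all remaining processes lie in $\qinit$ (in particular, no process lies in any intermediate state $q$, $q_\cpt$, $q'_\cpt$, or $\ell_\delta$ for $\delta\in\Delta_b$). The inductive step splits on the transition $\delta=(\ell_i,\textit{op},\ell_{i+1})$ of $M$. A blocking increment is simulated by two successive rendez-vous on $\textrm{inc}_\cpt$ then $\overline{\textrm{inc}}_\cpt$ through $q_\cpt$, consuming a fresh process from $\qinit$ (available since $N$ is large enough) that ultimately lands in $1_\cpt$. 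A blocking decrement is simulated symmetrically through $q'_\cpt$, consuming a process from $1_\cpt$ (available since $v_i(\cpt)\geq 1$) that returns to $\qinit$. A non-blocking decrement is simulated by the single request $!\textrm{nbdec}_\cpt$ issued by the leader, which is a rendez-vous with a process in $1_\cpt$ when $v_i(\cpt)>0$ and a non-blocking request otherwise, exactly mirroring the semantics of $\nbdec{\cpt}$. Any $\nop$-transition (internal step or restore) is simulated by a single $\tau$ step of the leader, which leaves all counters and all other processes untouched.

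The main obstacle is verifying that each sending step has the expected nature under the non-blocking semantics of $\PP$. The invariant guarantees that between simulated steps the only processes present are the leader, fresh processes in $\qinit$, and processes encoding counter values in $\{1_\cpt\mid \cpt\in\Counters\}$, so each acknowledgement $!\overline{\textrm{inc}}_\cpt$ or $!\overline{\textrm{dec}}_\cpt$ sent by the intermediate process finds the leader in $\ell_\delta$ as its unique possible partner and becomes a genuine rendez-vous (since the non-blocking-request rule requires no partner at all). Similarly, $!\textrm{nbdec}_\cpt$ can only ever be absorbed by a process in $1_\cpt$, so its two possible behaviours coincide precisely with the two cases of $\nbdec{\cpt}$. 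After the $k$ inductive steps are verified, $D_k$ satisfies $D_k(\ell_f)\geq 1$, so $C_f:=D_k\in\FinalE$ witnesses the claim with $C_0\arrowP{\PP}^*C_f$.
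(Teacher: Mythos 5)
Your proposal is correct and follows essentially the same route as the paper's proof: take an initial configuration with enough processes (the paper uses $N+1$ with $N=\sum_{\cpt}N_\cpt$, the maximal counter values), send a single leader into $\PP(M)$ via the two non-blocking requests $!L$ and $!R$, and then simulate each transition of $M$ step by step under the invariant that processes in $1_\cpt$ encode the counter values, with the same case analysis (rendez-vous pairs for increments/decrements, rendez-vous or non-blocking request for $\nbdec{\cpt}$ depending on whether the counter is zero, and $\tau$ for $\nop$ including restores). The only cosmetic difference is that your invariant uses $D_i(1_\cpt)\geq v_i(\cpt)$ where the paper maintains exact equality, which changes nothing since only $D_k(\ell_f)\geq 1$ is needed at the end.
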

\begin{proof}
For all $\cpt\in\Counters$, we let $N_\cpt$ be the maximal value taken by $\cpt$ in the initial execution $(\ellinit, \mathbf{0}_\Counters)\transCM^*(\ell_f, v)$, and $N=\Sigma_{\cpt\in\Counters} N_\cpt$. Now, we let $C_0\in \Init\cap C_{N+1}$ be the initial configuration with $N+1$ processes. In the initial execution of $\PP$ that we will build, one of the processes will evolve in the $\PP(M)$ part of the protocol, simulating the execution of the \NRCM, the others will simulate the values of the counters in the execution.

Now, we show by induction on $k$ that, for all $k\geq 0$, if $(\ellinit, \mathbf{0}_\Counters)\transCM^k (\ell, w)$, then $C_0\arrowP{}^* C$, with $C(1_\cpt)=w(\cpt)$
for all $\cpt\in\Counters$, $C(\ell)=1$, $C(\qinit)=N-\Sigma_{\cpt\in\Counters} w(\cpt)$, and $C(s)=0$ for all other $s\in Q$.

$C_0\arrowPlab{}{\nb{L}} C_0^1\arrowPlab{}{\nb{R}} C_0^2$, and $C_0^2(\qinit)=N$, $C_0^2(\ellinit)=1$, and $C_0^2(s)=0$ for all other $s\in Q$. So the property
holds for $k=0$. Suppose now that the property holds for $k\geq 0$ and consider $(\ellinit, \mathbf{0}_\Counters)\transCM^k (\ell,w)\xtransNbCM{\delta} (\ell',w')$.

\begin{itemize}
\item if $\delta=(\ell,\inc{\cpt},\ell')$, then $C\arrowPlab{\PP}{\textrm{inc}_\cpt}C_1$ with $C_1=C-\mset{\ell, \qinit}+\mset{\ell_\delta,q_\cpt}$. Indeed, by induction 
hypothesis, $C(\ell)=1> 0$, and $C(\qinit)>0$, otherwise $\Sigma_{\cpt\in\Counters} w(\cpt)=N$ and $w(\cpt)$ is already the maximal value taken by $\cpt$ so
no increment  of $\cpt$ could have happened at that point of the execution of $M$. We also have $C_1\arrowPlab{\PP}{\overline{\textrm{inc}}_\cpt}C'$, since 
$C_1(\ell_\delta)>0$ and $C_1(q_\cpt)>0$ by construction, and $C'=C_1-\mset{\ell_\delta,q_\cpt}+\mset{\ell', 1_\cpt}$. So $C'(\ell')=1$, for all $\cpt\in\Counters$, 
$C'(1_\cpt)=w'(\cpt)$, and $C'(\qinit)=N-\Sigma_{\cpt\in\Counters} w'(\cpt)$.
\item if $\delta=(\ell,\dec{\cpt},\ell')$, then $C(\ell)=1>0$ and $C(1_\cpt)>0$ since $w(\cpt)>0$. Then $C\arrowPlab{\PP}{\textrm{dec}_\cpt}C_1$ with 
$C_1=C-\mset{\ell,1_\cpt}+\mset{\ell_\delta,q'_\cpt}$. Then $C_1\arrowPlab{\PP}{\overline{\textrm{dec}_\cpt}}C'$, with $C'=C_1-\mset{q'_\cpt, \ell_\delta}+\mset{\qinit, \ell'}$. So $C'(\ell')=1$, $C'(1_\cpt)=C(1_\cpt)-1$, $C'(\qinit)=C(\qinit)+1$.
\item if $\delta=(\ell,\nbdec{\cpt},\ell')$ and $w(\cpt)>0$ then $C\arrowPlab{\PP}{\textrm{nbdec}_\cpt}C'$, and $C'=C-\mset{\ell, 1_\cpt}+\mset{\ell',\qinit}$ and
the case is proved.
\item if  $\delta=(\ell,\nbdec{\cpt},\ell')$ and $w(\cpt)=0$ then by induction hypothesis, $C(1_\cpt)=0$ and $C\arrowPlab{\PP}{\nb{\textrm{nbdec}_\cpt}}C'$,
with $C'=C-\mset{\ell}+\mset{\ell'}$. Then, $C'(1_\cpt)=0=w'(\cpt)$, and $C'(\ell')=1$. 
\item if $\delta=(\ell,\nop,\ell')$, then $C\arrowPlab{\PP}{\tau}C'$, 
avec $C'=C-\mset{\ell}+\mset{\ell'}$. This includes the restore transitions. 

\end{itemize}
Then $C_0\arrowP{}^* C$ with $C(\ell_f)=1$ and $C\in\FinalE$.
\end{proof}

\begin{lemma}\label{lem:correction}
	Let $C_0 \in \Init$, $C_f \in \FinalE$ such that $C_0 \arrowPlab{\PP}{}^* C_f$, then $(\ell_0, \mathbf{0}_\Counters)\transCM^*_M(\ell_f, v)$ for some $v\in\mathbb{N}^\Counters$. 
\end{lemma}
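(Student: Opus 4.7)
The plan is to track the unique ``leader'' process in $\PP(M)$ throughout the $\PP$-execution and extract from its trajectory a matching initial execution of $M$ ending at $\ell_f$. I would begin by proving, by induction on the execution length, two structural invariants on any configuration of $\PP$ reachable from an initial one: (i) at most one process sits in $\{q\}\cup Q_M$ (the \emph{leader}), because any $(\qinit,!L,q)$ firing must rendez-vous with an existing leader ($q_\bot$, $\qinit$, and all helper states lack a $?L$ transition), dispatching it to $q_\bot$; and (ii) at most one process sits in $\bigcup_{\cpt\in\Counters}\{q_\cpt,q'_\cpt\}$ (the \emph{pending helper}), because the leader enters an $\ell_\delta$ state only by spawning exactly one helper through $!\textrm{inc}_\cpt$ or $!\textrm{dec}_\cpt$, leaves $\ell_\delta$ only via the helper's acknowledgement (or by $?L$), and any replacement leader's $(q,!R,\ellinit)$ rendez-vous resets an orphan helper back to $\qinit$.

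Since $C_f(\ell_f)>0$ with $\ell_f\in Q_M$, the witnessing process $p^\ast$ is the final leader. Let $t_0$ be the step at which $p^\ast$ fires $(\qinit,!L,q)$: no $(\qinit,!L,q)$ can fire after $t_0$, for by~(i) $p^\ast$ would be the unique candidate receiver of $L$ and would be sent to $q_\bot$, contradicting $p^\ast\in\ell_f$ in $C_f$. The $\PP$-execution therefore decomposes into a finite sequence of ``epochs'', each starting when a leader enters $\PP(M)$ via $(q,!R,\ellinit)$ and ending either when that leader dies via $?L$ or at the final configuration.

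I would then build the $M$-execution epoch by epoch. Within an epoch, each completed increment gadget $(\ell,!\textrm{inc}_\cpt,\ell_\delta)\cdot(\ell_\delta,?\overline{\textrm{inc}}_\cpt,\ell')$ contributes the $M$-transition $(\ell,\inc{\cpt},\ell')$; each completed decrement gadget contributes $(\ell,\dec{\cpt},\ell')$; each $(\ell,!\textrm{nbdec}_\cpt,\ell')$ contributes $(\ell,\nbdec{\cpt},\ell')$; each $\tau$-transition contributes $(\ell,\nop,\ell')$. Between consecutive epochs I would insert a restore transition $(\ell,\nop,\ellinit)$, available in the \NRCM~$M$ from every state. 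The subtle case is when a leader dies inside an $\ell_\delta$-state of an increment gadget and the orphan helper fires $(q_\cpt,!\overline{\textrm{inc}}_\cpt,1_\cpt)$ non-blockingly before being reset by the next $!R$: in that case the corresponding $M$-increment is still appended to that epoch's contribution (immediately before the restore), so the counter bookkeeping stays consistent with $C(1_\cpt)$.

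The main obstacle will be verifying counter validity for the $\dec{\cpt}$ transitions we append. I would maintain, along the epoch-by-epoch construction, the invariant $v(\cpt)=C(1_\cpt)$ after processing each $\PP$-step; this invariant is preserved by the matched gadget transitions, by the restore transitions (which do not change counters), and by the orphan-helper case just described. Invariant~(ii) then yields the key property: whenever a leader completes a $\textrm{dec}_\cpt$-gadget, the initiating $!\textrm{dec}_\cpt$ must have been a rendez-vous with a process in $1_\cpt$ (a non-blocking send would create no helper in $q'_\cpt$, precluding any later $?\overline{\textrm{dec}}_\cpt$ step), so $C(1_\cpt)\geq 1$ at the moment the $\dec{\cpt}$ is appended, i.e.\ $v(\cpt)\geq 1$ as required. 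Putting these pieces together yields a valid initial execution $(\ellinit,\mathbf{0}_\Counters)\transCM^\ast_M(\ell_f,v)$ for some $v\in\mathbb{N}^\Counters$, completing the proof.
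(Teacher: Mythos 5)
Your proposal follows essentially the same route as the paper's proof: the two structural invariants (a unique leader in $\{q\}\cup Q_M$, at most one pending helper in $\bigcup_{\cpt}\{q_\cpt,q'_\cpt\}$), the decomposition into leader lifetimes separated by restore transitions of the \NRCM, the gadget-to-transition translation with the orphan acknowledgement folded in just before the restore, and the observation that a completed $\textrm{dec}_\cpt$-gadget must have opened with a genuine rendez-vous on $1_\cpt$ are exactly the ingredients of the paper's induction over $M$-compatible configurations. The one point to tighten is the symmetric orphan case for decrements: since $1_\cpt$ loses its process already at the \emph{first} step of that gadget, an interrupted $\textrm{dec}_\cpt$-gadget must also contribute its $\dec{\cpt}$ transition before the restore, or else your stated invariant must be relaxed to $v(\cpt)\geq C(1_\cpt)$ (which still suffices, as blocking decrements in $M$ are only appended when the protocol has just witnessed $C(1_\cpt)\geq 1$).
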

Before proving this lemma we establish the following useful result. 

	\begin{lemma}\label{lem:leader}
		Let $C_0 \in \Init$. For all $C\in \CC$ such that $C_0\arrowPlab{\PP}{}^+ C$, we have $\Sigma_{p\in \{q\} \cup Q_M} C(p)= 1$. 
	\end{lemma}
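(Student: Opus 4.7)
The plan is to prove the invariant by induction on the length of the execution from $C_0$. The key structural facts about $\PP$ that I will exploit are:
\begin{enumerate}
\item From $\qinit$, the only sending primitive is $!L$, and $\qinit$ has no $\tau$-transition.
\item Every state in $\{q\}\cup Q_M$ has an outgoing $?L$-transition (to $q_\bot$), and conversely only states in $\{q\}\cup Q_M$ can receive $L$.
\item Every other message $m \in (\Sigma_M\cup\{R\})\setminus\{L\}$ has the property that both its senders lie entirely inside $\{q\}\cup Q_M$ and its receivers lie entirely outside $\{q\}\cup Q_M$, or vice versa; and in all cases the sender's source and target are either both in $\{q\}\cup Q_M$ or both outside it (and similarly for the receiver).
\end{enumerate}

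First, for the base case, I show that any transition $C_0 \arrowP{\PP} C_1$ with $C_0\in\Init$ is a non-blocking request $!L$ that lands one process in $q$. Indeed, $C_0$ only puts processes in $\qinit$. No $\tau$ or rendez-vous is possible from $C_0$: $\qinit$ has no $\tau$-transition, its only outgoing send is $!L$, and no process at $\qinit$ can perform $?L$; for any other message $m$, the senders are in $Q_M$ and so cannot fire. Hence the only applicable transition is $!L$, and since $\{q\}\cup Q_M$ is empty in $C_0$ no process can answer, so it fires as a non-blocking request. One process moves from $\qinit$ to $q$, so $\Sigma_{p\in\{q\}\cup Q_M}C_1(p)=1$.

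For the inductive step, suppose $C_0\arrowP{\PP}^+ C$ with $\Sigma_{p\in\{q\}\cup Q_M}C(p)=1$, and let $C\arrowPlab{\PP}{a}C'$. I analyze every possible label $a$ using the structural facts above. For $a=L$: the invariant guarantees at least one process in $\{q\}\cup Q_M$, and by fact~(2) that process can receive $L$, so the non-blocking request case is impossible; thus $a$ is a rendez-vous, the sender contributes $+1$ (moving from $\qinit$ to $q$) and the receiver contributes $-1$ (moving from $\{q\}\cup Q_M$ to $q_\bot$), yielding a net change of $0$. For $a=R$: the sender moves from $q$ to $\ellinit$, both in $\{q\}\cup Q_M$, and any receiver is among $\{q_\cpt,q'_\cpt\}$, neither in $\{q\}\cup Q_M$—again net $0$. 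For every $a\in\Sigma_M$ and every internal $\tau$-transition (which by construction stays inside $Q_M$): a direct inspection of the transitions in $T_M$ and the gadget shows that the sender's source and target lie on the same side of the partition $(\{q\}\cup Q_M)\uplus(Q\setminus(\{q\}\cup Q_M))$, and similarly for the receiver, regardless of whether the action is a rendez-vous or a non-blocking request. So the sum is preserved.

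The main potential obstacle is making sure no case is missed, especially concerning the non-blocking semantics on $!L$: one must verify carefully that the universal presence of $?L$-transitions in $\{q\}\cup Q_M$ rules out non-blocking firings once the invariant holds. The remaining cases are a straightforward case split over the transitions pictured in Figure~\ref{fig:cover:protocol-MCtoRDV}, and each is settled by inspecting the source and target of the corresponding sender and receiver.
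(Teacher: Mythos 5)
Your proof is correct: the three structural facts you isolate do hold for the protocol of Figure~\ref{fig:cover:protocol-MCtoRDV} (in particular, every message other than $L$ has its sender's source and target on the same side of the partition, and likewise for its receiver, while $L$ moves $+1$ via the sender $\qinit\to q$ and $-1$ via the receiver falling into $q_\bot$, and the invariant itself blocks a non-blocking firing of $!L$ once it holds). The paper states this lemma without proof, but your invariant-preservation induction is exactly the argument its surrounding informal discussion relies on, so there is nothing to add.
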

		
%		\begin{proof}
%	By induction of the execution.  For $C_0$ it is obvious. Let now $C,C'\in\CC$ such that $C_0\arrowPlab{\PP}{}^*C\arrowPlab{\PP}{} C'$. By induction hypothesis, $\Sigma_{p\in \{q\} \cup \PP(M)} C(p)\leq 1$. If $\Sigma_{p\in \{q\} \cup \PP(M)} C(p)=0$, the only possible transitions from $C$ to $C'$ are $C\arrowPlab{\PP}{\nb{L}} C'$, $C\arrowPlab{\PP}{\nb{\overline{!\textrm{inc}_\cpt}}} C'$ or $C\arrowPlab{\PP}{\nb{\overline{!\textrm{dec}_\cpt}}} C'$.
%In the first case, $C'(q)=1$, and for all other $p\in Q_M$, $C'(q)=0$. In the other cases, $\Sigma_{p\in \{q\} \cup \PP(M)} C'(p)=0$.
%If $\Sigma_{p\in \{q\} \cup \PP(M)} C(p)= 1$. 
%	\end{proof}

\begin{proof}[Proof of~\cref{lem:correction}]
	Note $C_0\arrowPlab{}{}C_1\arrowPlab{}{}\dots\arrowPlab{}{}C_n = C_f$. 
	%We deduce the followig from the previous observation: we can suppose that for any $1 \leq i \leq n$, $C_i(q) + \sum_{p \in Q_M} C_i(p)=  1$. 
	Now, thanks to~\cref{lem:leader}, for all $1\leq i\leq n$, we can note $\mathsf{leader}(C_i)$ the unique state $s$ in $\{q\} \cup Q_M$ such that $C_i(s) = 1$. In particular, note that $\mathsf{leader}(C_n) = \ell_f$. We say that a configuration $C$ is
	$M$-compatible if $\mathsf{leader}(C)\in \Loc$. 
	For any $M$-compatible configuration $C\in\CC$, we define the configuration of the \NRCM~$\pi(C_i)=(\mathsf{leader}(C), v)$ with $v=C(1_\cpt)$ for all $\cpt\in\Counters$. 
	
	We let $C_{i_1}\cdots C_{i_k}$ be the projection of $C_0C_1\dots C_n$ onto the $M$-compatible configurations. 
	
	We show by induction on $j$ that: 
	
	$P(j)$: For all $1\leq j\leq k$, $(\ellinit,\mathbf{0}_\Counters)\transNbCM^*_M \pi(C_{i_j})$, and $\Sigma_{\cpt\in\Counters}C_{i_j}(q_\cpt)+C_{i_j}(q'_\cpt)=0$. Moreover, for all $C$ such that $C_0\arrowPlab{\PP}{}^*C\arrowPlab{\PP}{}C_{i_j}$, $\Sigma_{\cpt\in\Counters}C(q_\cpt)+C(q'_\cpt)\leq 1$.
	
	By construction of the protocol, $C_0\arrowPlab{}{\nb{L}} C_1(\arrowPlab{}{L})^k C_2\arrowPlab{}{\nb{R}} C_{i_1}$ for some $k \in \mathbb{N}$. So $\pi(C_{i_1})=(\ellinit, \mathbf{0}_\Counters)$, and for all  $C$ such that $C_0\arrowPlab{\PP}{}^*C\arrowPlab{\PP}{}C_{i_1}$, $\Sigma_{\cpt\in\Counters}C(q_\cpt)+C(q'_\cpt)=0$, so $P(0)$ holds true.
	
	Let now $1\leq j <k$, and suppose that $(\ellinit,\mathbf{0}_\Counters)\transNbCM^*_M \pi(C_{i_j})$, and $\Sigma_{\cpt\in\Counters}C_{i_j}(q_\cpt)+C_{i_j}(q'_\cpt)=0$. We know that $C_{i_j}\arrowPlab{}{}^+C_{i_{j+1}}$.
\begin{itemize}
	\item If there is no $C\in\CC$ such that $C(q)=1$ and $C_{i_j}\arrowPlab{}{}^+C\arrowPlab{}{}^*C_{i_{j+1}}$, the only possible transitions from $C_{i_j}$ are in $T_M$. Let $\pi(C_{i_j})=(\ell,v)$.
	\begin{itemize}
		\item if $C_{i_j}\arrowPlab{}{\textrm{inc}_\cpt}C$ then $C=C_{i_j}-\mset{\ell,\qinit}+\mset{\ell_\delta,q_\cpt}$ for $\delta=(\ell,\inc{\cpt},\ell')\in \Delta_b$. $\Sigma_{\cpt\in\Counters}C(q_\cpt)+C(q'_\cpt)=1$. Note that the
	message $\textrm{inc}_\cpt$ is necessarily received by some process, otherwise $C(q_\cpt)=0$ and $C$ has no successor, which is in contradiction with the fact the the execution reaches $C_f$. Moreover, the only
	possible successor configuration is $C\arrowPlab{}{\overline{\textrm{inc}}_\cpt} C_{i_{j+1}}$, with $C_{i_{j+1}}=C-\mset{q_\cpt, \ell_\delta}+\mset{1_\cpt, \ell'}$. Hence, obviously, $\pi(C_{i_j})\transNbCM\pi(C_{i_{j+1}})$.
	
		\item if $C_{i_j}\arrowPlab{}{\textrm{dec}_\cpt}C$ then $C=C_{i_j}-\mset{\ell,1_\cpt}+\mset{\ell_\delta,q'_\cpt}$ for $\delta=(\ell,\dec{\cpt},\ell')\in \Delta_b$. $\Sigma_{\cpt\in\Counters}C(q_\cpt)+C(q'_\cpt)=1$. Note that the
	message $\textrm{dec}_\cpt$ is necessarily received by some process, otherwise $C(q'_\cpt)=0$ and $C$ has no successor, which is in contradiction with the fact the the execution reaches $C_f$. Besides, $C_{i_j}(1_\cpt)>0$ hence $v(\cpt)>0$. Moreover, the only
	possible successor configuration is $C\arrowPlab{}{\overline{\textrm{dec}}_\cpt} C_{i_{j+1}}$, with $C_{i_{j+1}}=C-\mset{q'_\cpt, \ell_\delta}+\mset{\qinit, \ell'}$. Hence, obviously, $\pi(C_{i_j})\transNbCM\pi(C_{i_{j+1}})$.
	
		\item if $C_{i_j}\arrowPlab{}{{\textrm{nbdec}_\cpt}}C_{i_{j+1}}$ then $C_{i_{j+1}}=C_{i_j}-\mset{\ell,1_\cpt}+\mset{\ell',\qinit}$ for $\delta=(\ell,\nbdec{\cpt},\ell')\in \Delta_{nb}$. $\Sigma_{\cpt\in\Counters}C(q_\cpt)+C(q'_\cpt)=0$. Besides, $C_{i_j}(1_\cpt)>0$ hence $v(\cpt)>0$.  Hence, obviously, $\pi(C_{i_j})\transNbCM\pi(C_{i_{j+1}})$.
	\item if $C_{i_j}\arrowPlab{}{{\mathbf{nb}(\textrm{nbdec}_\cpt)}}C_{i_{j+1}}$ then $C_{i_{j+1}}=C_{i_j}-\mset{\ell}+\mset{\ell'}$ for $\delta=(\ell,\nbdec{\cpt},\ell')\in \Delta_{nb}$. $\Sigma_{\cpt\in\Counters}C(q_\cpt)+C(q'_\cpt)=0$. Besides, $C_{i_j}(1_\cpt)=0$ hence $v(\cpt)=0$.  Hence, obviously, $\pi(C_{i_j})~\xrsquigarrow{{\nb{\dec{\cpt}}}} \pi(C_{i_{j+1}})$.
\item if $C_{i_j}\arrowPlab{}{\tau}C_{i_{j+1}}$ then $C_{i_{j+1}}=C_{i_j}-\mset{\ell}+\mset{\ell'}$ for $\delta=(\ell,\nop,\ell')\in \Delta_{nb}$. $\Sigma_{\cpt\in\Counters}C(q_\cpt)+C(q'_\cpt)=0$. Besides, $C_{i_j}(1_\cpt)=C'_{i_{j+1}}(1_\cpt)$ for
all $\cpt\in\Counters$.  Hence, obviously, $\pi(C_{i_j})\xtransNbCM{\nop}\pi(C_{i_{j+1}})$.
	\end{itemize}
	
	\item Otherwise, let $C$ be the first configuration such that $C(q)=1$ and $C_{i_j}\arrowPlab{}{}^+C\arrowPlab{}{}^*C_{i_{j+1}}$. The transition leading to $C$ is necessarily 
	a transition where the message $L$ has been sent. Remember also that by induction hypothesis, 
	$\Sigma_{\cpt\in\Counters}C_{i_j}(q_\cpt)+C_{i_j}(q'_\cpt)=0$.
	\begin{itemize}
		\item if $C_{i_j}\arrowPlab{}{L}C$, then $C(q)=1$, and by induction hypothesis, $\Sigma_{\cpt\in\Counters}C(q_\cpt)+C(q'_\cpt)=0$. Then the only possible successor configuration is $C\arrowPlab{}{\nb{R}}C_{i_{j+1}}$, with $\Sigma_{\cpt\in\Counters}C_{i_{j+1}}(q_\cpt)+C_{i_{j+1}}(q'_\cpt)=0$, and $\pi(C_{i_{j+1}})=(\ellinit, v)$, so $\pi(C_{i_j})\xtransNbCM{\nop}\pi(C_{i_{j+1}})$, by a restore transition.

		\item if $C_{i_j}\arrowPlab{}{\textrm{inc}_\cpt}C_1\arrowPlab{}{L}C$ then $C_1=C_{i_j}-\mset{\ell,\qinit}+\mset{\ell_\delta,q_\cpt}$ for $\delta=(\ell,\inc{\cpt},\ell')\in \Delta_b$ and $\Sigma_{\cpt\in\Counters}C_1(q_\cpt)+C_1(q'_\cpt)=1$. Now, $C=C_1 - \mset{\ell_\delta, \qinit} +
	\mset{q_\bot, q}$, so $C(q)=1=C(q_\cpt)$, and $\Sigma_{\cpt\in\Counters}C(q_\cpt)+C(q'_\cpt)=1$.
		\begin{itemize}
			\item If $C\arrowPlab{}{R}C_{i_{j+1}}$, then $C_{i_{j+1}} = 
	C - \mset{q,q_\cpt}+\mset{\ellinit, \qinit}$, then $\Sigma_{\cpt\in\Counters}C_{i_{j+1}}(q_\cpt)+C_{i_{j+1}}(q'_\cpt)=0$ and $\pi(C_{i_{j+1}})=(\ellinit, v)$, hence
	$\pi(C_{i_j})\xtransNbCM{\nop}\pi(C_{i_{j+1}})$ by a restore transition. 
			
			\item Now $C(q_\cpt)=1$ so it might be that $C\arrowPlab{}{\nb{\overline{\textrm{inc}_\cpt}}} C'$, with $C'=C - \mset{q_\cpt}+\mset{1_\cpt}$. Here, $\Sigma_{\cpt\in\Counters}C'(q_\cpt)+C'(q'_\cpt)=0$. However, $\mathtt{leader}(C')=\{q\}$ so $C'$ is not $M$-compatible. The only
	possible transition from $C'$ is now $C'\arrowPlab{}{\nb{R}} C_{i_{j+1}}$ with $C_{i_{j+1}}=
	C'-\mset{q}+\mset{\ellinit}$. Hence, $C_{i_{j+1}}(1_\cpt)= C'(1_\cpt)=C_{i_j}(1_\cpt)+1=v(\cpt)+1$, and 
	$C_{i_{j+1}}(1_\cpty)=C'(1_\cpty)=C_{i_j}(1_\cpty)=v(\cpty)$ for all $\cpty\neq\cpt$. 
	So $\pi(C_{i_j})=(\ell,v)\xtransNbCM{\delta} (\ell',v+v_{\cpt})\xtransNbCM{\nop}(\ellinit, v+v_\cpt)=\pi(C_{i_{j+1}})$, the last step being a restore transition. Finally, $\Sigma_{\cpt\in\Counters}C_{i_{j+1}}(q_\cpt)+C_{i_{j+1}}(q'_\cpt)=0$.
		\end{itemize}
	
		\item if $C_{i_j}\arrowPlab{}{\textrm{dec}_\cpt}C_1\arrowPlab{}{L} C$, then $C_1=C_{i_j}-\mset{\ell,1_\cpt}+\mset{\ell_\delta,q'_\cpt}$ for $\delta=(\ell,\dec{\cpt},\ell')\in \Delta_b$ and $\Sigma_{\cpt\in\Counters}C_1(q_\cpt)+C_1(q'_\cpt)=1$. Now, $C=C_1 - \mset{\ell_\delta, \qinit} +
	\mset{q_\bot, q}$, so $C(q)=1=C(q'_\cpt)$, and $\Sigma_{\cpt\in\Counters}C(q_\cpt)+C(q'_\cpt)=1$. Again, two transitions are available:
		\begin{itemize}
			\item If $C\arrowPlab{}{R}C_{i_{j+1}}$, then $C_{i_{j+1}} = 
	C - \mset{q,q'_\cpt}+\mset{\ellinit, \qinit}$, then $\Sigma_{\cpt\in\Counters}C_{i_{j+1}}(q_\cpt)+C_{i_{j+1}}(q'_\cpt)=0$ and $\pi(C_{i_{j+1}})=(\ellinit, v)$, hence
	$\pi(C_{i_j})\xtransNbCM{\nop}\pi(C_{i_{j+1}})$ by a restore transition. 
	
			\item Now $C(q'_\cpt)=1$ so it might be that $C\arrowPlab{}{\nb{\overline{\textrm{dec}_\cpt}}} C'$, with $C'=C - \mset{q'_\cpt}+\mset{\qinit}$. Here, $\Sigma_{\cpt\in\Counters}C'(q_\cpt)+C'(q'_\cpt)=0$. However, $\mathtt{leader}(C')=\{q\}$ so $C'$ is not $M$-compatible. The only
	possible transition from $C'$ is now $C'\arrowPlab{}{\nb{R}} C_{i_{j+1}}$ with $C_{i_{j+1}}=
	C'-\mset{q}+\mset{\ellinit}$. Hence, $C_{i_{j+1}}(1_\cpt)= C'(1_\cpt)=C_{i_j}(1_\cpt)-1=v(\cpt)-1$, and 
	$C_{i_{j+1}}(1_\cpty)=C'(1_\cpty)=C_{i_j}(1_\cpty)=v(\cpty)$ for all $\cpty\neq\cpt$. 
	So $\pi(C_{i_j})=(\ell,v)\xtransNbCM{\delta} (\ell',v-v_{\cpt})\xtransNbCM{\nop}(\ellinit, v+v_\cpt)=\pi(C_{i_{j+1}})$, the last step being a restore transition. Finally, $\Sigma_{\cpt\in\Counters}C_{i_{j+1}}(q_\cpt)+C_{i_{j+1}}(q'_\cpt)=0$.
		\end{itemize}
		
		\item If $C_{i_j}\arrowPlab{}{\nb{\textrm{inc}_\cpt}} C_1$ then, it means that $C_{i_j}(\qinit)=0$. In
	that case, let $\delta=(\ell,\inc{\cpt},\ell')\in \Delta_b$, and $C_1=C_{i_j} -\mset{\ell}+\mset{\ell_\delta}$. Since, by induction hypothesis, $C_1(q_\cpt)=C_{i_j}(\cpt)=0$, 
	the only possible transition from $C_1$ would be $C_1\arrowPlab{}{L}C_{i_{j+1}}$. However,
	 $C_{i_j}(\qinit)=C_1(\qinit)=0$, so this transition is not possible, and $C_1$ is a deadlock
	 configuration, a contradiction with the hypothesis that $C_{i_j}\arrowPlab{}{}C_{i_{j+1}}$.
	 
		 \item If $C_{i_j}\arrowPlab{}{\nb{\textrm{dec}_\cpt}} C_1$ then it means that $C_{i_j}(1_\cpt)=0$. In
	that case, let $\delta=(\ell,\dec{\cpt},\ell')\in \Delta_b$, and $C_1=C_{i_j} -\mset{\ell}+\mset{\ell_\delta}$. Since, by induction hypothesis, $\Sigma_{\cpt\in\Counters}C_1(q_\cpt)+C_1(q'_\cpt) = \Sigma_{\cpt\in\Counters}C_{i_j}(q_\cpt)+C_{i_j}(q'_\cpt) = 0$, the only
	possible transition from $C_1$ is $C_1\arrowPlab{}{L}C$, with $C=C_1 - \mset{\qinit,\ell_\delta} + \mset{q, q_\bot}$. Again, $\Sigma_{\cpt\in\Counters}C(q_\cpt)+C(q'_\cpt) = 0$,
	and $C(\ell)=$ for all $\ell\in Q_M$, so the only possible transition is 
	$C\arrowPlab{}{\nb{R}} C_{i_{j+1}}$. Observe that $C_{i_{j+1}}$ is $M$-compatible,
	with $C_{i_{j+1}}(\ellinit)=1$, and $C_{i_{j+1}}(1_\cpt)=C_{i_j}(1_\cpt)$ for all $\cpt\in\Counters$. Hence $\pi(C_{i_{j+1}})=(\ellinit, v)$, and $\pi(C_{i_j})\xtransNbCM{\nop} \pi(C_{i_{j+1}})$, thanks to a restore transition of $M$.
	\end{itemize}
\end{itemize}
We then have, by $P(k)$, that $(\ellinit,\mathbf{0}_\Counters)\transNbCM^*_M \pi(C_{i_k})$, 
with $C_{i_k}$ $M$-compatible and such that $C_{i_k}\arrowPlab{}{}^* C_f$, and $C_{i_k}$ is the
last $M$-compatible configuration. 
Then, by definition of an $M$-compatible configuration, $C_{i_k}=C_f$, and $\pi(C_{i_k})=(\ell_f,v)$
for some $v\in\mathbb{N}^\Counters$. 
\end{proof}

 \section{Proof of Section \ref{sec:wo}}
 We present here omitted proofs of \cref{sec:wo}.
\subsection{Technical Lemma}
We provide here a lemma which will be useful in different parts of this section.

    \begin{lemma}\label{lem:monotonicity}
 Let $\PP$ be rendez-vous protocol and $C,C' \in \CC$ such that $C=C_0 \arrowP{} C_1 \cdots \arrowP{}
 C_\ell=C'$. Then we have the two following properties.
 \vspace{-0.5em}
 \begin{enumerate}
 \item For all $q \in Q$ verifying
  $C(q)=2.\ell+a$ for some $a \in \nat$, we have $C'(q)\geq a$.\label{it:lem-1}
 \item For all $D_0 \in \CC$ such that $D_0 \geq C_0$, there exist $D_1,\ldots,D_\ell$ such that $D_0 \arrowP{} D_1 \cdots \arrowP{}
 D_\ell$ and $D_i \geq C_i$ for all $1 \leq i \leq \ell$.\label{it:lem-2}
 \end{enumerate}
\end{lemma}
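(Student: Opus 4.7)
The plan is to prove each item by induction on the length $\ell$ of the execution, with the base case $\ell=0$ being trivial in both cases.

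\textbf{Item \ref{it:lem-1}.} It suffices to show the one-step bound: if $C \arrowP{} C'$ then $C'(q) \geq C(q) - 2$ for every state $q$. Indeed, the only rule that may decrease $C(q)$ by more than $1$ is the \textbf{(rendez-vous)} rule in the special case $q_1 = q_2 = q$, where two processes leave $q$ at once, and even then the change is exactly $-2$ (modulo the possible increment from $q'_1$ or $q'_2$). Internal actions affect one process, and non-blocking requests likewise. Iterating this bound over $\ell$ steps yields $C'(q) \geq C(q) - 2\ell = a$.

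\textbf{Item \ref{it:lem-2}.} The inductive step is to show that whenever $D_i \geq C_i$ and $C_i \arrowP{} C_{i+1}$, one can choose $D_{i+1}$ with $D_i \arrowP{} D_{i+1}$ and $D_{i+1} \geq C_{i+1}$. For internal steps and rendez-vous steps the same transition rule may be fired in $D_i$, since enabling conditions are upward-closed, and monotonicity of the update is immediate. The only delicate case is a non-blocking request $C_i \arrowPlab{\PP}{\mathbf{nb}(m)} C_{i+1}$ using $(q_1,!m,q'_1)$: the extra processes in $D_i$ may populate a receiver state of $m$, so the non-blocking rule need not be enabled in $D_i$.

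I would handle this case by a dichotomy. If $(D_i - \mset{q_1})(q_2) = 0$ for every $(q_2,?m,q'_2) \in T$, we fire the same non-blocking request from $D_i$ and monotonicity is immediate. Otherwise, pick such a $(q_2,?m,q'_2) \in T$ with $(D_i - \mset{q_1})(q_2) > 0$ and fire the genuine rendez-vous $D_{i+1} = D_i - \mset{q_1,q_2} + \mset{q'_1,q'_2}$, which is enabled in $D_i$ (and still enabled when $q_1 = q_2$, since then $D_i(q_1) \geq 2$). Compared to $C_{i+1} = C_i - \mset{q_1} + \mset{q'_1}$, the only state where $D_{i+1}$ could fall behind $C_{i+1}$ is $q_2$ (when $q_2 \neq q'_2$); but the non-blocking hypothesis for $C_i$ gives $(C_i - \mset{q_1})(q_2) = 0$, so $C_i(q_2) \leq 1$ with equality only if $q_1 = q_2$. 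A small case analysis on whether $q_1 = q_2$ then shows that the extra process consumed from $D_i(q_2)$ is exactly absorbed by the slack $D_i(q_2) - C_i(q_2)$ guaranteed by the choice of $q_2$, so $D_{i+1} \geq C_{i+1}$.

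The main (minor) obstacle is precisely this bookkeeping in the non-blocking case, in particular making sure the rewriting of a non-blocking request into a proper rendez-vous still preserves the simulation of $q_1$ and $q_2$ when they coincide; everything else is a direct one-step verification lifted to $\ell$ steps by induction.
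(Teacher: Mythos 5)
Your proposal is correct and follows essentially the same route as the paper: item 1 via the observation that each step removes at most two processes from any given state, and item 2 via a one-step simulation where the only delicate case is a non-blocking request that must be replaced in $D_i$ by a genuine rendez-vous, with the slack $D_i(q_2)-C_i(q_2)\geq 1$ (forced by $(C_i-\mset{q_1})(q_2)=0$ and $(D_i-\mset{q_1})(q_2)>0$) absorbing the extra consumed process. No gaps.
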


\begin{proof}
According to the semantics associated to (non-blocking) rendez-vous
protocols, each step in the execution from $C$ to $C'$ consumes at
most two processes in each control state $q$, hence the result of the
first item.

Let $C,C' \in \CC$ such that $C \arrowP{} C'$. Let $D \in \CC$ such
that $D \geq C$. We reason by a case analysis on the operation
performed to move from $C$ to $C'$ and show that there exists $D'$
such that $D \arrowP{} D'$ and $D'\geq C'$. (To obtain the final
result, we repeat $k$ times this reasoning).
\begin{itemize}
  \item Assume $C \arrowPlab{\PP}{m}
    C'$ then there exists $(q_1, !m, q_1') \in T$ and
	$(q_2, ?m, q_2')\in T$ such that $C(q_1)>0$ and $C(q_2)>0$ and $C(q_1)+C(q_2)\geq 2$ and
	  $C' = C - \mset{q_1, q_2} + \mset{q_1', q_2'}$. But since $D
      \geq C$, we have as 
      well $D(q_1)>0$ and $D(q_2)>0$ and $D(q_1)+D(q_2)\geq 2$ and as
      a matter of fact $D \arrowPlab{\PP}{m}
    D'$ for $D' = D - \mset{q_1, q_2} + \mset{q_1', q_2'}$. Since $D\geq C$, we
    have $D' \geq C'$.
  \item The case $C \arrowPlab{\PP}{\tau} C'$ can be treated in a
    similar way.
 \item Assume $C \arrowPlab{\PP}{\mathbf{nb}(m)} C'$, then  there exists $(q_1, !m, q_1') \in T$, such 
	that $C(q_1)>0$ and $(C-\mset{q_1})(q_2)=0$ for all $(q_2, ?m, q_2') \in T$ and 
	$C' = C - \mset{q_1} + \mset{q'_1}$. We have as well that
    $D(q_1)>0$. But we need to deal with two cases:
    \begin{enumerate}
    \item If  $(D-\mset{q_1})(q_2)=0$ for all $(q_2, ?m, q_2') \in
      T$. In that case we have $D \arrowPlab{\PP}{\mathbf{nb}(m)} D'$
      for 	$D' = D - \mset{q_1} + \mset{q'_1}$ and $D' \geq C'$.
     \item If  there exists $(q_2, ?m, q_2') \in
      T$ such that  $(D-\mset{q_1})(q_2)>0$. Then we have that   $D \arrowPlab{\PP}{m}
    D'$ for $D' = D - \mset{q_1, q_2} + \mset{q_1', q_2'}$. Note that
    since $(C-\mset{q_1})(q_2)=0$ and $D \geq C$, we have here again
    $D' \geq C'$.
    \end{enumerate}
\end{itemize}
\end{proof}

\subsection{Properties of Consistent Abstract Sets of Configurations}

\subsubsection{Proof of Lemma \ref{lem:interp-cover-check}}

\begin{proof}
 Let $C' \in \Interp{\gamma}$ such that $C' \geq C$. Let $q \in Q$
 such that $C(q)>0$. Then we have $C'(q)>0$. If $q \notin S$, then $q
 \in \starg{\Toks}$ and   $C'(q)=1$ and $C(q)=1$ too.  Furthermore for all $q' \in \starg{\Toks} \setminus\set{q}$ such
  $C(q')=1$, we have that $C'(q')=1$ and $q$ and $q'$ are
  conflict-free. This allows us to conclude that $C \in
  \Interp{\gamma}$. Checking whether $C$ belongs to $\Interp{\gamma}$ can be done in
 polynomial time applying the definition of $\Interp{\cdot}$.
\end{proof}

\subsubsection{Building Configurations from a Consistent Abstract Set}

\begin{lemma}\label{lem:consistent-reach}
  Let $\gamma$ be a consistent abstract set of configurations. Given a
subset of states $U \subseteq Q$, if for all $N \in \nat$ and for all
$q \in U$ there exists $C_q \in \Interp{\gamma}$ and $C'_q \in \CC$ such
  that $C_q \arrowP{}^\ast C'_q$ and $C'_q(q)\geq N$, then for all $N
  \in \nat$, there exists $C \in \Interp{\gamma}$ and $C' \in \CC $ such that $C \arrowP{}^\ast
  C'$  and $C'(q) \geq N$ for all $q \in U$.
\end{lemma}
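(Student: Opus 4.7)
The plan is to prove this by induction on $|U|$. The base cases $|U|=0$ (take any $C \in \Interp{\gamma}$ with the empty execution) and $|U|=1$ (direct use of the hypothesis) are immediate. For the inductive step, I would write $U = U' \cup \set{q^\ast}$ with $|U'|=k$, fix the target $N$, and build a single execution from a configuration in $\Interp{\gamma}$ that covers all of $U$ with multiplicity at least $N$, by combining a witness for $U'$ obtained from the inductive hypothesis with a witness for $q^\ast$ obtained from the hypothesis, and running them sequentially.

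The first key step would be a normalisation: any witness $(C_q, C'_q)$ can, without loss of generality, be replaced by one $(\tilde C_q, \tilde C'_q)$ in which $\tilde C_q \in \Interp{\gamma}$ is supported only on $S$. This is where the consistency condition $(i)$ is crucial: every token $(p,m) \in \Toks$ is reachable through an explicit path from a state of $S$, whose receptions can be answered by processes sent from $S$. Starting from $\tilde C_q := C_q|_S + P \cdot \sum_{s \in S} \mset{s}$ with $P$ sufficiently large, I would execute those production paths in sequence to reach some $\bar C_q \geq C_q$ (the $P$-padding ensuring that enough $S$-processes remain at each production), and then apply Lemma~\ref{lem:monotonicity}(\ref{it:lem-2}) to lift the original execution $C_q \arrowP{}^\ast C'_q$ to an execution from $\bar C_q$ ending in a configuration $\geq C'_q$.

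Once the witnesses are normalised, the sizes are chosen sequentially to avoid any circular dependency. Set $N' = N$ and apply the inductive hypothesis to $U'$ to obtain a witness $\tilde C_{U'} \arrowP{}^\ast \tilde C'_{U'}$ of some length $L_{U'}$; then set $M = N + 2 L_{U'}$ and apply the hypothesis to $q^\ast$ to obtain a witness $\tilde C^\ast \arrowP{}^\ast \tilde C'^\ast$ of length $L^\ast$; finally pick $K \geq 2 L^\ast$. Define $C = \tilde C^\ast + \tilde C_{U'} + K \cdot \sum_{s \in S} \mset{s}$; this lies in $\Interp{\gamma}$ because it is supported on $S$. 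I would first lift $\tilde C^\ast \arrowP{}^\ast \tilde C'^\ast$ from $C$ using Lemma~\ref{lem:monotonicity}(\ref{it:lem-2}), reaching some $D_1 \geq \tilde C'^\ast$; the $K$-buffer together with Lemma~\ref{lem:monotonicity}(\ref{it:lem-1}) gives $D_1(s) \geq C(s) - 2 L^\ast \geq \tilde C_{U'}(s)$ for every $s$, so the second lift of $\tilde C_{U'} \arrowP{}^\ast \tilde C'_{U'}$ from $D_1$ produces some $D_2 \geq \tilde C'_{U'}$, which satisfies $D_2(q) \geq N$ for every $q \in U'$ and, by another application of Lemma~\ref{lem:monotonicity}(\ref{it:lem-1}) to the second lifted segment, $D_2(q^\ast) \geq D_1(q^\ast) - 2 L_{U'} \geq M - 2 L_{U'} = N$.

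The main obstacle is the interaction between the two witnesses in the token states $\starg{\Toks}$: naively taking $C = C^\ast + C_{U'}$ would typically violate the at-most-one-process-per-token-state constraint of $\Interp{\gamma}$, or the pairwise conflict-freeness required whenever several token states simultaneously carry a process. The normalisation step sidesteps this issue entirely by shifting all the mass into $S$, where arbitrary multiplicities are allowed, and it is precisely the consistency condition that makes this shift possible. The other delicate point is that the padding $K$ and the hypothesis target $M$ must be selected \emph{after} the lengths $L_{U'}$ and $L^\ast$ are known, in that order, so as to guarantee both the intermediate inequality $D_1 \geq \tilde C_{U'}$ and the final lower bound $D_2(q^\ast) \geq N$.
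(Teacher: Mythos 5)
Your overall strategy is sound and, despite the repackaging, closely parallels the paper's own proof: the paper inducts on $|U\setminus S|$ rather than $|U|$, and instead of pre-normalising every witness it builds the combined initial configuration so that its token-state part coincides with that of the witness for the new state $p$, runs that witness first, and only afterwards re-populates the token states needed by the witness for $U$ via the consistency paths. Your ordering of the constants ($L_{U'}$ first, then $M=N+2L_{U'}$, then $K\geq 2L^\ast$) and the two applications of Lemma~\ref{lem:monotonicity} are handled correctly; that part of the bookkeeping is exactly the paper's.

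There is, however, one genuine under-justification, and it sits at the crux of your normalisation step. You claim that from $\tilde C_q = C_q|_S + P\cdot\sum_{s\in S}\mset{s}$ one can ``execute the production paths in sequence'' to reach $\bar C_q\geq C_q$, and you attribute this solely to consistency condition $(i)$. But condition $(i)$ only guarantees that each occupied token state is individually reachable from $S$; it does not guarantee that launching the path for the $j$-th occupied token state leaves the processes already parked in $q_1,\dots,q_{j-1}$ untouched. Concretely, the first transition of the $j$-th path is a send $!m_j$, and if $m_j\in\Rec{q_{j'}}$ for some already-populated $q_{j'}$ and no other receiver happens to be available, the semantics forbids the non-blocking variant and forces a rendez-vous that ejects the process from $q_{j'}$, breaking $\bar C_q\geq C_q$. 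The missing argument — which the paper supplies explicitly — is that the occupied token states of $C_q$ are pairwise conflict-free (since $C_q\in\Interp{\gamma}$), and that conflict-freeness combined with consistency condition $(ii)$ upgrades to: for \emph{all} tokens $(q_i,m_i),(q_j,m_j)\in\Toks$ over distinct occupied states, $m_i\notin\Rec{q_j}$. Only with this do the sequential productions commute as you need. So your claim that the normalisation ``sidesteps the issue entirely'' is misleading: the token-interaction problem is not avoided, it is relocated into the normalisation, where it still has to be resolved by condition $(ii)$, which your sketch never invokes.
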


\begin{proof}
We suppose $\gamma=(S,\Toks)$ and reason by induction on the number of
elements in $U\setminus S$. The base case is obvious.  Indeed assume $U
\setminus S=\emptyset$ and let $N\in \nat$. We define the configuration $C$
such that $C(q)=N$ for all $q \in S$ and $C(q)=0$ for all $q \in
Q\setminus S$. It is clear that $C \in \Interp{\gamma}$ and that $C(q)
\geq N$ for all $q \in U$ (since $U
\setminus S=\emptyset$, we have in fact $U \subseteq S$).

We now assume that the property holds for a set $U$ and we shall see
it holds for $U \cup \set{p}$, $p\notin S$. We assume hence that for all $N \in \nat$ and for all
$q \in U \cup \set{p}$ there exists $C_q \in \Interp{\gamma}$ and $C'_q \in \CC$ such
  that $C_q \arrowP{}^\ast C'_q$ and $C'_q(q)\geq N$. Let $N \in \nat$. By induction
hypothesis, there exists $C_U \in  \Interp{\gamma}$ and $C'_U \in \CC
$ such that $C_U \arrowP{}^\ast C'_U$  and $C_U'(q) \geq N$ for all $q
\in U$. We denote by $\ell_U$ the minimal number of steps in an execution from
$C_U$ to $C'_U$. We will see that that we can build a configuration $C
\in \Interp{\gamma}$ such that $C \arrowP{}^\ast C''_U$ with $C''_U
\geq C_U$ and $C''_U(p) \geq N+2*\ell_U$. Using Lemma \ref{lem:monotonicity}, we will
then have that $C''_U \arrowP{}^\ast C'$ with $C' \geq C'_U$ and $C'(p)
\geq N$. This will allow us to conclude.

We as well know that there exist $C_p \in \Interp{\gamma}$ and $C'_p \in \CC$ such
  that $C_p \arrowP{}^\ast C'_p$ and $C'_p(p)\geq
  N+2*\ell_U+(k*\ell)$. We denote by $\ell_p$ the minimum number of steps in an execution from
$C_p$ to $C'_p$.  We build the configuration $C$ as follows:
  we have $C(q)=C_U(q)+2*\ell_p+(k*\ell)+C_p(q)$ for all $q \in S$,
  and  we have $C(q)=C_p(q)$ for all $q \in \starg{\Toks}$. Note that
  since $C_p \in \Interp{\gamma}$, we have that $C \in
  \Interp{\gamma}$. Furthermore, we have $C \geq C_p$, hence using
  again Lemma \ref{lem:monotonicity}, we know that there exists a
  configuration $C''_p$ such that $C \arrowP{}^\ast C''_p$ and  $C''_p
  \geq C'_p$ (i.e. $C''_p(p) \geq N+2*\ell_U+(k*\ell)$ and
  $C''_p(q) \geq C_U(q)+(k*\ell) + C_p(q)$ for all $q \in S$ by~\cref{lem:monotonicity},\cref{it:lem-1})
  %\nas{Je crois que c'est $C''_p(q) \geq C_U(q)+(k*\ell)+C_p(q)$, non?}.

Having $C_U \in \Interp{\gamma}$, we name $(q_1, m_1) \dots (q_k,
m_k)$ the
tokens in $\Toks$ such that $C_U(q_j) = 1$ for all $1 \leq j \leq
k$, and for all $q \in \starg{\Toks} \setminus \{q_j\}_{1 \leq j \leq
  k}$, $C_U(q) =0$. Since $\gamma$ is consistent, for each $(q_j,
m_j)$ there exists a path 
$(q_{0,j},!m_j,q_{1,j})(q_{1,j},?m_{1,j},q_{2,j})\ldots(q_{\ell_j,j},?m_{\ell_j,j},q_j)$
in $\PP$ such that $q_{0,j}
\in S$ and  such that there
exists $(q'_{i,j},!m_{i,j},q''_{i,j}) \in T$ with $q'_{i,j} \in S$ for
all $1 \leq i \leq \ell_j$. We denote by $\ell = \max_{1 \leq j\leq
  k}(\ell_j)+1$.
  
  Assume there exists $1\leq i\leq j\leq k$ such that $(q_i,m_i),(q_j,m_j)\in\Toks$ and $C_U(q_i)=C_U(q_j)=1$, and $m_i\in\Rec{q_j}$ and $m_j\in \Rec{q_i}$. 
  Since $C_U$ respects $\Interp{\gamma}$,  $q_i$ and $q_j$ are conflict-free: there exist $(q_i,m), (q_j,m')\in\Toks$ such that $m\notin\Rec{q_j}$ and 
  $m'\notin\Rec{q_i}$. Hence, $(q_i,m_i), (q_i, m), (q_j,m_j), (q_j,m')\in\Toks$, and $m\notin\Rec{q_j}$ and $m_j\in\Rec{q_i}$. Therefore, we have 
  $(q_i,m), (q_j,m_j)\in\Toks$ and $m\notin\Rec{q_j}$ and $m_j\in\Rec{q_i}$, which is in contradiction with the fact that $\gamma$ is consistent. Hence, for all 
  $1\leq i\leq j\leq k$, for all $(q_i,m_i), (q_j,m_j)\in\Toks$, $m_i\notin\Rec{q_j}$ and $m_j\notin\Rec{q_i}$. 

%Since $C_U \in \Interp{\gamma}$ is consistent, for all $q_i, q_j$ with $i \ne j$, there exists $(q_i, m)$ and $(q_j, m')$ such that $m \nin \Rec{q_j}$ and $m' \nin \Rec{q_i}$ (there are conflict-free). As $\gamma$ is consistent, for all tokens $(q_i, m_i)$, $(q_j, m_j) \in \Toks$, either $m_i \nin \Rec{q_j}$ and $m_j \nin \Rec{q_i}$ or $m_i \in \Rec{q_j}$ and $m_j \in \Rec{q_i}$. Assume we are in the latter case, then the pair of tokens $(q_i,m), (q_j,m_j)$ is such that $m  \in \Rec{q_j}$ and $m_j \nin \Rec{q_i}$ which is not coherent with $\gamma$'s consistency, as a consequence, if $q_i$ and $q_j$ are conflict-free, for all tokens $(q_i, m_i), (q_j, m_j) \in \Toks$, it holds that $m_i\nin\Rec{q_j}$ and $m_j \nin \Rec{q_i}$. As a consequence, for all $i$, for all $j \ne i$, $m_j \nin \Rec{q_i}$ and $m_i \nin \Rec{q_j}$.
%

  We shall now explain how from $C''_p$ we reach $C''_U$ in $k*\ell$
  steps, i.e. how we put (at least) one token in
  each state $q_j$ such that $q_j \in \starg{\Toks}$ and $C_U(q_j)=1$
  in order to obtain a configuration $C''_U \geq C_U$.  We begin by
  $q_1$. Let a process on $q_{0,1}$ send the message $m_1$ (remember
  that $q_{0,1}$ belongs to $S$) and let $\ell_{1}$ other processes on
  states of $S$ send the messages needed for the process to reach
  $q_1$ following the path
  $(q_{0,1},!m_1,q_{1,1})(q_{1,1},?m_{1,1},q_{2,1})\ldots(q_{\ell_1,1},?m_{\ell_1,1},q_1)$. At this stage, we have that the number of processes in each state $q$ in $S$
  is bigger than $C_U(q)+((k-1)*\ell)$ and we have (at least) one process in
  $q_1$. We proceed similarly to put a process in $q_2$, note that the
 message $m_2$ sent at the beginning of the path cannot be received by the
 process in $q_1$ since, as explained above, $m_2 \notin \Rec{q_1}$.

 We proceed again to put a process in the
 states $q_1$ to $q_K$ and at the end we obtain the configuration
 $C''_U$ with the desired properties.
 %
% \lugtext{ici il y a un problème : $q_1$ and $q_2$ conflict-free -> il existe deux tokens tels que ...
% 
%les tokens "conflict-free" pour $q_1$ et $q_2$ et pour $q_2$ et $q_3$ ne sont pas forcément les mêmes, c'est à dire que quand on veut mettre quelqu'un sur $q_3$, si on a "choisit" le mauvais tokens, ce n'est pas (directement) vrai que le message ne sera reçu ni par un processus sur $q_1$ ni sur $q_2$.
%
%
%En fait ça l'est mais ça l'est par construction de l'algorithme. on ne peut donc pas le prouver tout de suite. ou alors il faut changer la définition de conflict-free.. sinon il faut changer l'énoncé de ce lemme et ne parler que des états dans $S$, ce lemme est utilisé qu'avec la partie qui parle des états dans $S$ et je reprouve la partie sur les états tokens plus tard grâce à l'algo }
\end{proof}

\subsection{Proof of Lemma \ref{lem:function-F}}

In this subsection, the different items of Lemma \ref{lem:function-F} have been separated in distinct lemmas.

\begin{lemma}\label{lem:F-consistent}
 	$F(\gamma)$ is consistent and can be computed in polynomial time
  for all consistent $\gamma \in \Gamma$.
\end{lemma}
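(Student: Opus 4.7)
The plan is to verify the two consistency conditions of $F(\gamma)=(S',\Toks')$ separately, and then to argue polynomial-time computability by inspection of the rules defining $F$.

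For condition (i), I proceed by induction on the tokens placed in $\Toks''$. The base case $\Toks \subseteq \Toks''$ holds by the consistency hypothesis on $\gamma$. For the inductive step, a fresh token enters $\Toks''$ only via rule \ref{ccover-wo-F-cond-newtok} or rule \ref{ccover-wo-F-cond-tok-step}. Rule \ref{ccover-wo-F-cond-newtok} places $(p',a) \in \Toks''$ because of $(p,!a,p') \in T$ with $p \in S$, which directly supplies the one-transition witness. Rule \ref{ccover-wo-F-cond-tok-step} places $(q',m) \in \Toks''$ from an existing token $(q,m) \in \Toks$ via $(q,?a,q') \in T$, while requiring some $(p,!a,p') \in T$ with $p \in S$; I take the path witnessing $(q,m)$ (supplied by consistency of $\gamma$) and append the reception $(q,?a,q')$, the new intermediate reception being correctly witnessed. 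Since $\Toks' \subseteq \Toks''$, condition (i) transfers to $\Toks'$.

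For condition (ii), I argue by contradiction. Assume two tokens $(q_1,m_1),(q_2,m_2) \in \Toks'$ violate (ii). Up to symmetry, $m_1 \in \Rec{q_2}$ and $m_2 \notin \Rec{q_1}$. When $m_1 \neq m_2$, rule \ref{ccover-wo-F-cond-2toks-1} forces $q_2 \in S'$, contradicting $(q_2,m_2) \in \Toks' = \Toks'' \setminus (S' \times \Sigma)$. For the degenerate case $m_1 = m_2$, a short inspection of the rules shows that every fresh token $(q,m)$ placed in $\Toks''$ by rule \ref{ccover-wo-F-cond-newtok} or \ref{ccover-wo-F-cond-tok-step} satisfies $m \in \Rec{q}$, and a straightforward induction propagates this invariant to tokens inherited from $\Toks$. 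This rules out $(q_1,m),(q_2,m) \in \Toks'$ with $m \in \Rec{q_2}$ and $m \notin \Rec{q_1}$.

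Polynomial-time computability follows by direct inspection. The intermediate sets $S''$ and $\Toks''$ are computed as the least fixed point under rules \ref{ccover-wo-F-cond-internal}--\ref{ccover-wo-F-cond-tok}; these sets have total size $O(|Q|\cdot|\Sigma|)$, each saturation step scans $T$, and the whole fixed-point computation terminates in polynomially many iterations. The set $S'$ is then obtained by saturating under rules \ref{ccover-wo-F-cond-2toks-1}--\ref{ccover-wo-F-cond-3toks-2}, where each test inspects at most triples of tokens and is therefore polynomial. The final assignment $\Toks' = \Toks'' \setminus (S' \times \Sigma)$ is immediate. The main obstacle I anticipate is the $m_1 = m_2$ sub-case of condition (ii), which is not covered by any single rule and has to be handled by exhibiting and maintaining the auxiliary invariant $m \in \Rec{q}$ across the construction.
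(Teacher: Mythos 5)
Your proof is correct and follows essentially the same route as the paper's: condition $(i)$ of consistency is verified by tracing the only two rules that create tokens (rule~\ref{ccover-wo-F-cond-newtok} gives a one-step witness, rule~\ref{ccover-wo-F-cond-tok-step} extends the witness of the source token by one reception), condition $(ii)$ is obtained by contradiction through rule~\ref{ccover-wo-F-cond-2toks-1} and the definition $\Toks'=\Toks''\setminus(S'\times\Sigma)$, and polynomial time is by inspection of the saturation. The one place you diverge is the degenerate sub-case $m_1=m_2$ of condition $(ii)$, which the paper's proof passes over silently (rule~\ref{ccover-wo-F-cond-2toks-1} only fires when $m_1\neq m_2$); your fix via the invariant that every token $(q,m)$ satisfies $m\in\Rec{q}$ is the right idea, and it does hold for all freshly created tokens by rules~\ref{ccover-wo-F-cond-newtok} and~\ref{ccover-wo-F-cond-tok-step}. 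Just note that this invariant is not a consequence of consistency of $\gamma$ alone, so your ``straightforward induction'' for inherited tokens really is an induction along the iteration $\gamma_0, F(\gamma_0),\dots$ starting from $\gamma_0=(\set{\qinit},\emptyset)$ (which is how the lemma is used), rather than something provable from the lemma's stated hypothesis; as literally stated for an arbitrary consistent $\gamma$, both your proof and the paper's leave this corner open.
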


\begin{proof}
  The fact that $F(\gamma)$ can be computed in polynomial time is a direct consequence of the definition of $F$ (see \cref{table:F,table2:F}).
  
% TBD (direct from the definition of the function $F$, rules 3.b and 5.b).
  Assume $\gamma = (S,\Toks) \in \Gamma$ to be consistent. Note $(S'', \Toks'')$ the intermediate sets computed during the computation of $F(\gamma)$, and note $F(\gamma) = (S', \Toks')$.
  
  To prove that $F(\gamma)$ is consistent, we need to argue that (1) for all $(q, m) \in \Toks'' \setminus \Toks$, there exists a finite sequence of transitions $(q_0, a_0, q_1) \dots (q_k, a_k, q)$ such that $q_0 \in S$, and $a_0 = !m$ and for all $1 \leq i\leq k$, we have that $a_i = ?m_i$ and that there exists $(q'_i, !m_i, q'_{i+1}) \in T$ with $q'_i \in S$, and (2) for all $(q,m), (q',m') \in \Toks'$ either $m\in\Rec{q'}$ and $m'\in\Rec{q}$ or $m\notin\Rec{q'}$ and $m'\notin\Rec{q}$.

  We start by proving property (1).
  If $(q, m)$ has been added to $\Toks''$ with rule \ref{ccover-wo-F-cond-newtok}, then by construction, there exists $p \in S$ such that $(p, !a, p') \in T$, and $(q, m) = (p', a)$. The sequence of transition is the single transition is $(p, !a, q)$. 
  %As $S\subseteq S'$, it concludes this case.
  
  If $(q, m)$ has been added to $\Toks''$ with rule \ref{ccover-wo-F-cond-tok-step}, then there exists $(q',m) \in \Toks$, and $(q', ?a, q)$ with $m \ne a$. Furthermore, $m \in \Rec{q}$ and there exists $(p, !a,p') \in T$ with $p \in S$. By hypothesis, $\gamma$ is consistent, hence there exists a finite sequence of transitions $(q_0, q_0, q_1) \dots (q_k, a_k, q')$ such that $q_0 \in S$, and $a_0 = !m$ and for all $1 \leq i\leq k$, we have that $a_i = ?m_i$ and that there exists $(q'_i, !m_i, q'_{i+1}) \in T$ with $q'_i \in S$. By completing this sequence with transition $(q', ?a, q)$ we get an appropriate finite sequence of transitions. 
  %As $S\subseteq S'$, it concludes the proof.
  
  It remains to prove property (2). Assume there exists $(q, m), (q',m') \in \Toks'$ such that $m \in \Rec{q'}$ and $m' \notin \Rec{q}$, then as $\Toks' \subseteq \Toks''$, $(q, m), (q',m') \in \Toks''$. By condition \ref{ccover-wo-F-cond-2toks-1}, $q \in S'$, therefore, as $\Toks' = \{(p, a) \in \Toks'' \mid p \notin S'\}$, we have that $(q, m) \notin \Toks'$, and we reached a contradiction.
\end{proof}

\begin{lemma}\label{lem:F-increase}
 	If $(S',\Toks')=F(S,\Toks)$ then  $S \subsetneq S'$ or
    $\Toks \subseteq \Toks'$.
\end{lemma}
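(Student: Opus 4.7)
The plan is to rely almost entirely on unpacking the definitions of $S'$ and $\Toks'$ from $S''$ and $\Toks''$ and on the consistency condition that $q \notin S$ for every $(q,m)\in \Toks$.

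First I would record the monotonicity built into the construction: by the rules of Table~\ref{table:F}, we have $S \subseteq S''$ and $\Toks \subseteq \Toks''$, and the rules~\ref{ccover-wo-F-cond-2toks-1}--\ref{ccover-wo-F-cond-3toks-2} of Table~\ref{table2:F} only add states, so $S'' \subseteq S'$. In particular, $S \subseteq S'$ always holds; the only content of the lemma is therefore to show that if the inclusion is not strict, then no token is lost.

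The main step is then a one-line contrapositive. Assume $\Toks \not\subseteq \Toks'$ and pick $(q,m) \in \Toks \setminus \Toks'$. Since $\Toks \subseteq \Toks''$, we have $(q,m) \in \Toks''$, so the only way it can fail to belong to $\Toks' = \{(p,a) \in \Toks'' \mid p \notin S'\}$ is that $q \in S'$. On the other hand, by the very definition of an abstract set of configurations $\gamma=(S,\Toks)$, we have $q \notin S$ for every $(q,m) \in \Toks$. Combining, $q \in S' \setminus S$, hence $S \subsetneq S'$.

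I do not anticipate any obstacle here: the statement is essentially a bookkeeping remark tying the definition of $\Toks'$ (which removes exactly those pairs whose first component just moved into $S'$) to the disjointness requirement between $S$ and $\starg{\Toks}$ in the definition of an abstract set of configurations. The only care needed is to remember that $\Toks'$ is obtained from $\Toks''$ by dropping pairs $(q,m)$ with $q \in S'$, and not from $\Toks$ directly.
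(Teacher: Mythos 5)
Your proof is correct and follows essentially the same route as the paper's: both rest on the inclusions $S \subseteq S'' \subseteq S'$ and $\Toks \subseteq \Toks''$, the definition $\Toks' = \{(q,m)\in\Toks'' \mid q\notin S'\}$, and the disjointness $\starg{\Toks}\cap S=\emptyset$; you merely phrase the argument contrapositively where the paper assumes $S=S'$ directly. No gap.
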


\begin{proof}
  From the construction of $F$ (see \cref{table:F,table2:F}), we have $S \subseteq S'' \subseteq S'$.

  Assume now that $S=S'$. First note that $\Toks \subseteq \Toks''$ (see Table \ref{table:F}) and that $\starg{\Toks} \cap S=\emptyset$. But $\Toks'=\set{(q,m) \in \Toks'' \mid q \not\in S'}=\set{(q,m) \in \Toks'' \mid q \not\in S}$. Hence the elements that are removed from $\Toks''$ to obtain $\Toks'$ are not elements of $\Toks$. Consequently  $\Toks \subseteq \Toks'$.
\end{proof}

\begin{lemma}\label{lem:abstract-soundness}
 For all consistent $\gamma \in \Gamma$, if $C \in \Interp{\gamma}$ and $C \arrowP{}C'$ then $C' \in \Interp{F(\gamma)}$.
\end{lemma}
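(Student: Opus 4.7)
The plan is to proceed by a case analysis on the type of transition $C \arrowP{} C'$, checking in each case that every state $q$ with $C'(q) > 0$ satisfies either condition~\ref{ccover-wo-consistency-1} or condition~\ref{ccover-wo-consistency-2} of the definition of $\Interp{F(\gamma)}$. The key observation that organizes the proof is that the rules~\ref{ccover-wo-F-cond-internal}--\ref{ccover-wo-F-cond-tok} defining $(S'', \Toks'')$ correspond exactly to the effects of the three kinds of transitions on the abstract representation: each rule mirrors what a concrete transition from a configuration in $\Interp{\gamma}$ does.

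First I would fix $\gamma=(S,\Toks)$ consistent, $C\in\Interp{\gamma}$, $C \arrowP{} C'$, and write $F(\gamma)=(S',\Toks')$ with intermediate sets $(S'',\Toks'')$. A preliminary remark is that since $\PP$ is wait-only, whenever a state $q$ satisfies $C(q)>0$, if $q\in Q_A$ then necessarily $q\in S$ (because $\starg{\Toks}\subseteq Q_W$). Then I would treat the three cases:
\emph{(i) Internal transition} $(q,\tau,q')$: here $q\in Q_A$, hence $q\in S$, and rule~\ref{ccover-wo-F-cond-internal} forces $q'\in S''\subseteq S'$. All other occupied states of $C$ remain occupied in $C'$ with unchanged counts, so their classification in $F(\gamma)$ is preserved (states in $S\subseteq S'$; tokens in $\Toks$ either survive in $\Toks'$ or get promoted to $S'$, which also satisfies the definition).
\emph{(ii) Rendez-vous} $(q_1,!a,q'_1)$ and $(q_2,?a,q'_2)$: again $q_1\in S$, and $q_2\in S$ or $(q_2,m)\in\Toks$. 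In either subcase rule~\ref{ccover-wo-F-cond-send-S}(a) applies (since there exists $(q_2,?a,q'_2)\in T$ with $q_2\in S$, or $q_2\in\starg{\Toks}$ triggers a symmetric analysis via rule~\ref{ccover-wo-F-cond-reception-S} or~\ref{ccover-wo-F-cond-tok}), so $q'_1\in S''\subseteq S'$; for $q'_2$ use rule~\ref{ccover-wo-F-cond-reception-S} if $q_2\in S$ or $(q_2,a)\in\Toks$, else use rule~\ref{ccover-wo-F-cond-tok} with the token message $m\neq a$, yielding either $q'_2\in S''$ or $(q'_2,m)\in\Toks''$.
\emph{(iii) Non-blocking request} $(q_1,!a,q'_1)$: $q_1\in S$, and no state in $\Read{a}$ is occupied in $C - \mset{q_1}$, so in particular no such state belongs to $S\cup\starg{\Toks}$ that is guaranteed to host a process; rule~\ref{ccover-wo-F-cond-send-S} places $q'_1$ in $S''$ if $a\notin\Rec{q'_1}$ or some receiver of $a$ lies in $S$, otherwise creates the token $(q'_1,a)\in\Toks''$.

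Once each freshly populated state is located in $S''$ or $\Toks''$, I would verify the two remaining obligations. First, when $(q'_1,a)$ or $(q'_2,m)$ is newly added to $\Toks''$, check that $C'$ places exactly one process there; for this we need $C(q'_1)=0$ (resp.\ $C(q'_2)=0$) before the transition, which follows from $q'_1\notin S$ (hence $q'_1\in\starg{\Toks}$ would contradict that the token $(q'_1,a)$ was freshly created by the rule, using the fact that $C\in\Interp{\gamma}$ together with the consistency condition (ii) of $\gamma$). Second, for conflict-freedom: if two distinct tokens $(q'_1,m_1),(q'_2,m_2)\in\Toks'$ both host a process in $C'$, I must exhibit messages witnessing that they are conflict-free in $F(\gamma)$. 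Here rules~\ref{ccover-wo-F-cond-2toks-1}, \ref{ccover-wo-F-cond-3toks-1}, \ref{ccover-wo-F-cond-3toks-2} are crucial: they are designed so that any pair of tokens remaining in $\Toks'$ has incompatible reception patterns; any pair that fails this would have been promoted to $S'$, contradicting $(q'_1,m_1),(q'_2,m_2)\in\Toks'$.

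The main obstacle I anticipate is the bookkeeping in case (iii) together with the token-creation subcases of case (ii): one must carefully argue that a state receiving a new token really hosted no process in $C$, using that $\gamma$ is consistent (so two tokens in $\Toks$ on the same state must be compatible) and the wait-only structure (a process in a waiting state can only leave via a rendez-vous, so if $q'_1\in \starg{\Toks}$ already had a process, rule~\ref{ccover-wo-F-cond-send-S}(a) or promotion by rules \ref{ccover-wo-F-cond-2toks-1}--\ref{ccover-wo-F-cond-3toks-2} would have placed $q'_1$ in $S'$). Once this accounting is in place, the conflict-freedom check reduces to routine verification against the three promotion rules.
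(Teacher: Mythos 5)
Your overall skeleton is the same as the paper's: a case analysis on the transition, first showing that every state occupied in $C'$ lands in $S'\cup\starg{\Toks'}$, and then checking multiplicities and conflict-freedom. The first half of your argument (membership in $S'\cup\starg{\Toks'}$, via the correspondence between rules \ref{ccover-wo-F-cond-internal}--\ref{ccover-wo-F-cond-tok} and the three transition types) is essentially the paper's and is fine. The second half, however, rests on a claim that is false and on a missing case, and this is precisely where all the difficulty of the lemma lies.

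The false claim is that rules \ref{ccover-wo-F-cond-2toks-1}, \ref{ccover-wo-F-cond-3toks-1} and \ref{ccover-wo-F-cond-3toks-2} ``are designed so that any pair of tokens remaining in $\Toks'$ has incompatible reception patterns,'' i.e.\ that every pair of states in $\starg{\Toks'}$ is conflict-free. The paper's own running example refutes this: for $\PP_1$, the tokens $(q_1,a),(q_1,b),(q_3,a),(q_3,b)$ all survive in $\Toks'$ after two applications of $F$, and $q_1,q_3$ are \emph{not} conflict-free (every token message of $q_1$ lies in $\Rec{q_3}$). What the promotion rules actually buy (via Lemma~\ref{lem:F-consistent}) is consistency condition $(ii)$: for any two surviving tokens $(q,m),(q',m')$, either $m\in\Rec{q'}$ and $m'\in\Rec{q}$, or neither. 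Non-conflict-free pairs survive in $\Toks'$; the content of the lemma is that they cannot both become occupied in one step. So conflict-freedom of the two \emph{occupied} token states must be derived from the dynamics: from the fact that the occupied token states of $C$ were already pairwise conflict-free (since $C\in\Interp{\gamma}$), from what the fired transition certifies (e.g.\ after $C\arrowPlab{}{\nb{a}}C'$, no occupied state $p$ has $a\in\Rec{p}$, which combined with consistency of $F(\gamma)$ yields the conflict-free witnesses), and, in the rendez-vous case, from a targeted contradiction with rules \ref{ccover-wo-F-cond-3toks-1} and \ref{ccover-wo-F-cond-3toks-2}. This is the bulk of the paper's proof and it is absent from your plan.

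The missing case is in the rendez-vous step: $C'=C-\mset{q_1,q_2}+\mset{q'_1,q'_2}$ can put \emph{two} new processes on token states -- in particular $q'_1=q'_2$ would give $C'(q'_1)=2$, and even $q'_1\neq q'_2$ with both in $\starg{\Toks'}$ must be excluded. Arguing that $C(q'_1)=0$ (which, incidentally, follows simply from ``$C$ respects $\gamma$'' plus $\starg{\Toks'}\cap S'=\emptyset$, not from the somewhat garbled token-freshness argument you give) does not address this. The paper resolves it by showing that at least one of $q'_1,q'_2$ is promoted to $S'$: if both stayed in $\Toks'$ one would have the three tokens $(q'_1,a),(q_2,m),(q'_2,m)$ with $(q_2,?a,q'_2)\in T$, and rule~\ref{ccover-wo-F-cond-3toks-1} would force $q'_1\in S'$. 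Without this step, and without a correct conflict-freedom argument, the proof does not go through.
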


\begin{proof}
Let $\gamma = (S,\Toks)\in\Gamma$ be a consistent abstract set of configurations, and $C \in \CC$ such that $C \in  \Interp{\gamma}$ and $C \arrowP{} C'$. Note $F(\gamma) = (S', \Toks')$ and $\gamma' = (S'', \Toks'')$ the intermediate sets used to compute $F(\gamma)$.
We will first prove that for all state $q$ such that $C'(q) > 0$, $q \in S'$ or $q \in \mst(\Toks')$, and then we will prove that for all states $q$ such that $q \in \mst(\Toks')$ and $C'(q)>0$, $C'(q) = 1$ and for all other state $p\in \mst(\Toks')$ such that $C'(p) >0$, $p$ and $q$ are conflict-free. 

 Observe that $S \subseteq S'' \subseteq S'$, $\Toks \subseteq \Toks'' $, and $\mst(\Toks'') \subseteq \mst(\Toks') \cup S'$. 
 %We shall prove that for all $q \in Q$ such that $C'(q) > 0$, either (1) $q \in S'$ or (2) $q\in \mst(\Toks')$ and $C'(q) = 1$ and for all $q' \in \mst(\Toks') \setminus \{q\}$ such that $C'(q') = 1$, we have that $q$ and $q'$ are conflict-free. 

First, let us prove that for every state $q$ such that $C'(q)>0$, it holds that $q \in S' \cup \mst(\Toks')$. Note that for all $q$ such that $C(q) > 0$, because $C$ respects $\gamma$, $q \in \mst(\Toks) \cup S$. As $\mst(\Toks) \cup S \subseteq \mst(\Toks') \cup S'$, the property holds for $q$.
Hence, we only need to consider states $q$ such that $C(q) = 0$ and $C'(q) > 0$. If $C \arrowPlab{}{\tau} C'$ then $q$ is such that there exists $(q', \tau, q) \in T$, $q'$ is therefore an active state and so $q' \in S$, (recall that $\Toks \subseteq Q_W \times \Sigma$). Hence, $q$ should be added to $\mst(\Toks'') \cup S''$ by condition \ref{ccover-wo-F-cond-internal}. As $\mst(\Toks'') \cup S'' \subseteq \mst(\Toks') \cup S'$, it concludes this case. If $C \arrowPlab{}{\nb{a}} C'$ then $q$ is such that there exists $(q', !a, q) \in T$, with $q'$ an active state. With the same argument, $q' \in S$ and so $q$ should be added to $\mst(\Toks'') \cup S''$ by condition \ref{ccover-wo-F-cond-send-S}~or \ref{ccover-wo-F-cond-newtok}. If $C \arrowPlab{}{a} C'$, then $q$ is either a state such that $(q', !a, q) \in T$ and the argument is the same as in the previous case, or it is a state such that $(q', ?a, q) \in T$, and it should be added to $\mst(\Toks'')\cup S''$ by condition \ref{ccover-wo-F-cond-reception-S}, \ref{ccover-wo-F-cond-tok-end}, or \ref{ccover-wo-F-cond-tok-step}. Therefore, we proved that for all state $q$ such that $C'(q) >0$, it holds that $q \in \mst(\Toks') \cup S'$.

It remains to prove that if $q \in \mst(\Toks)$, then $C'(q) = 1$ and for all $q' \in \mst(\Toks') \setminus \{q\}$ such that $C'(q') = 1$, we have that $q$ and $q'$ are conflict-free. Note that if $q \in \mst(\Toks)$ and $C(q) = C'(q) = 1$, then for every state $p$ such that $p \in \mst(\Toks)$ and $C(p) = C'(p) = 1$, it holds that $q$ and $p$ are conflict-free.

Observe that if $C \arrowPlab{}{\tau} C'$, then note $q$ the state such that $(q', \tau ,q)$, it holds that $\{p \mid p \in \mst(\Toks') \textrm{ and } C'(p) > 0\} \subseteq \{p \mid p \in \mst(\Toks) \textrm{ and } C(p) = 1\}$: $q'$ is an active state, $q$ might be in $\mst(\Toks)$ but it is added to $S'' \subseteq S'$ with rule \ref{ccover-wo-F-cond-internal}, and for all other states, $C'(p) = C(p)$. If $p \in \mst(\Toks')$ and $C(p) > 0$, it implies that $C'(p)= C(p) = 1$ and $p\in \mst(\Toks)$ (otherwise $p$ is in $S \subseteq S'$). Hence, there is nothing to do as $C$ respects $\gamma$.

Take now $q \in \mst(\Toks') \setminus \mst(\Toks)$ with $C'(q) > 0$, we shall prove that $C'(q) =1$ and for all $p \in \mst(\Toks')$ and $C'(p) > 0$, $q$ and $p$ are conflict-free. If $q \in \mst(\Toks') \setminus \mst(\Toks)$, it implies that $C(q) = 0$ because $C$ respects $\gamma$. Hence: either (1) $C \arrowPlab{}{\nb{a}} C'$ with transition $(q', !a, q) \in T$, either (2) $C \arrowPlab{}{a} C'$ with transitions $(q_1, !a, q'_1) \in T$ and $(q_2, ?a, q'_2) \in T$ and $q = q'_1$ or $q=q'_2$. In the latter case, we should be careful as we need to prove that $q'_2 \ne q'_1$, otherwise, $C'(q) = 2$.

\textbf{Case (1):} Note that as only one process moves between $C$ and $C'$ and $C(q)= 0$, it is trivial that $C'(q) = 1$. In this first case, as it is a non-blocking request on $a$ between $C$ and $C'$, it holds that: for all $p \in \mst(\Toks)$ such that $C(p) = 1$, $a \notin \Rec{p}$. Take $p  \in\mst(\Toks')$, such that $p\ne q$ and $C'(p) = 1$, then $C'(p) = C(p) = 1$ and so $p \in \mst(\Toks)$, and $a \notin \Rec{p}$. Suppose $(p, m) \in \Toks'$ such that $m \in \Rec{q}$, then we found two tokens in $\Toks'$ such that $m \in \Rec{q}$ and $a \notin \Rec{p}$ which contradicts $F(\gamma)$'s consistency. Hence, $p$ and $q$ are conflict-free.
%Suppose $p$ and $q$ are not conflict-free in $F(\gamma)$, then for all $(p, m), (q, m') \in \Toks'$, $m\in \Rec{q}$ or $m'\in \Rec{p}$. As $F(\gamma)$ is consistent by Lemma \ref{lem:F-consistent}, for all $(p, m), (q, m') \in \Toks'$, $m\in \Rec{q}$ \emph{and} $m'\in \Rec{p}$. 

%As $q \in \mst(\Toks')$ and $q' \in S$, it should be from condition \ref{ccover-wo-F-cond-newtok}~that $(q, a) \in \Toks''$. Note that $\Toks' \subseteq \Toks''$ and if $(q,a) \notin \Toks'$, it should be that $q \in S'$. Hence, $(q,a)\in \Toks'$. Furthermore, for all $p \in \Toks$ such that $C(p) > 1$ and $p \notin q$, it holds that $a \notin \Rec{p}$ as the rendez-vous is not answered. By construction $C'(p) = C(p)$, hence, we found two tokens $(p,m), (q,a)$ such that $a \notin \Rec{p}$ and $m \in \Rec{q}$, which is absurd given $F(\gamma)$'s consistency. Hence for all $p  \mst(\Toks')$, such that $p\ne q$, it holds that $p$ and $q$ are conflict-free.

\textbf{Case (2):} Note that if $q'_2 \in \mst(\Toks')$, then $q_2 \in \mst(\Toks)$ (otherwise, $q'_2$ should be in $S'$ by condition \ref{ccover-wo-F-cond-reception-S}), and note $(q_2, m) \in \Toks$, with $(q'_2, m) \in \Toks'$. Note as well that if $q'_1 \in \mst(\Toks')$, then $a \in \Rec{q'_1}$ (otherwise, $q'_1$ should be in $S'$ by condition \ref{ccover-wo-F-cond-send-S}) and $(q'_1 ,a) \in \Toks'$ by condition \ref{ccover-wo-F-cond-newtok}. Furthermore, if $q'_1 \in \mst(\Toks')$, $q_2 \in \mst(\Toks)$ as well as otherwise $q'_1$ should be added to $S'$ by condition \ref{ccover-wo-F-cond-send-S}.

We first prove that either $q'_1 \in S'$, or $q'_2 \in S'$. For the sake of contradiction, assume this is not the case, then there are three tokens $(q'_1, a), (q_2, m), (q'_2, m) \in \Toks' \subseteq \Toks''$, such that $(q_2, ?a, q'_2) \in T$. From condition \ref{ccover-wo-F-cond-3toks-1}, $q'_1$ should be added to $S'$ and so $(q'_1, a) \notin \Toks'$. Note that, as a consequence $q'_1 \ne q'_2$ or $q'_1 = q'_2 \in S'$. Take $q \in \mst(\Toks') \setminus \mst(\Toks)$ such that $C'(q) >0$, if such a $q$ exists, then $q = q'_1$ or $q = q'_2$ and $q'_1 \ne q'_2$. As a consequence, $C'(q) = 1$ (note that if $q'_1 = q_2$, $C(q_2) = 1$). 

Take $p \in \mst(\Toks') \setminus \{q\}$ such that $C'(p) > 0$, it is left to prove that $q$ and $p$ are conflict-free. If $p \ne q$ and $p \in \mst(\Toks')$, then $C'(p) = C(p)$ (because $q'_1 \in S'$ or $q'_2 \in S'$). Hence, $p \in \mst(\Toks)$ and $C'(p) = 1$.

Assume $q = q'_1$ and assume $q$ and $p$ are not conflict-free. Remember that we justified that $q_2 \in \mst(\Toks)$, and therefore, $C(q_2) = 1$. Hence, either $C'(q_2) = 0$, or $q_2 = q'_2$ and in that case $q_2,q_2' \in S'$ or $q_2' = q_1'$ and then $q_2=q$. In any case, $p \ne q_2$. As $C$ respects $\gamma$, there exists $(p, m_p)$ and $(q_2, m) \in \Toks$ such that $m_p \notin \Rec{q_2}$ and $m \notin \Rec{p}$ ($q_2$ and $p$ are conflict-free). As $p \in\mst(\Toks')$, $(p,m_p) \in \Toks'$ and so $m_p\in \Rec{q}$ or $a \in \Rec{p}$ ($q$ and $p$ are not conflict-free).
% If $m_p \notin \Rec{q}$ or $a \notin \Rec{p}$, then the two tokens $(p,m_p)$ and $(q, a)$ satisifies condition \ref{ccover-wo-F-cond-2toks-1}, and so either $p$ or $q$ should be added to $S'$, which can not be as both states are in $\mst(\Toks')$. 
As $F(\gamma)$ is consistent, $m_p\in \Rec{q}$ and $a \in \Rec{p}$.
Note that $a \ne m_p$ because $a \in \Rec{q_2}$, $a \ne m$ because $m \notin \Rec{p}$, and obviously $m \ne m_p$. Note also that if $m \notin \Rec{q}$, then we found two tokens $(q,a)$ and $(q_2,m)$ in $\Toks'$ such that $a \in \Rec{q_2}$ and $m \notin \Rec{q}$, which contradicts the fact that $F(\gamma)$ is consistent (Lemma \ref{lem:F-consistent}). Hence, $m\in \Rec{q}$.
%and from conditon \ref{ccover-wo-F-cond-2toks-1}, $q$ should be added to $S'$, which is absurd as $q \in \mst(\Toks')$. 
%
Note that even if $q_2$ is added to $S''$, it still is in $\Toks''$.  As $\Toks' \subseteq \Toks''$ we found three tokens $(p, m_p), (q_2,m)$, $(q, a)$ in $\Toks''$, satisfying condition \ref{ccover-wo-F-cond-3toks-2}, and so $p$ should be added to $S'$, which is absurd as $p \in \mst(\Toks')$. We reach a contradiction and so $q$ and $p$ should be conflict-free.

Finally assume $q = q_2'$. If $q = q_2$, then, because $C$ respects $\gamma$, $q$ and $p$ are conflict-free. Otherwise, as $q_2$ is conflict-free with $p$, there exists $(q_2, m )$ and $(p, m_p)$ in $\Toks$ such that $m \notin \Rec{p}$ and $m_p \notin \Rec{q_2}$. Note that $(q,m) \in \Toks''$ from condition \ref{ccover-wo-F-cond-tok-step}~(otherwise, $q \in S''$ which is absurd). Hence, $(q, m) \in \Toks'$ and, as $p \in \mst(\Toks')$, $(p,m_p)$ is conserved from $\Toks$ to $\Toks'$. It remains to show that $m_p \notin \Rec{q}$. Assume this is not the case, then there exists $(p,m_p)$ and $(q,m) \in \Toks'$ such that $m\notin \Rec{p}$ and $m_p\in \Rec{q}$ which is absurd given $F(\gamma)$'s consistency.
As a consequence, $q$ and $p$ are conflict-free.

We managed to prove that for all $q$ such that $C'(q) >0$, $q \in S' \cup \mst(\Toks')$, and if $q \in \mst(\Toks')$, then $C'(q) = 1$ and for all others $p\in \mst(\Toks')$ such that $C'(p) = 1$, $p$ and $q$ are conflict-free.
\end{proof}

\begin{lemma}\label{lem:abstract-correctness}
 For all consistent $\gamma \in \Gamma$, if $C' \in
 \Interp{F(\gamma)}$, then there exists $C'' \in \CC$ and $C \in
 \Interp{\gamma}$ such that $C'' \geq C'$ and $C \arrowP{}^\ast C''$.
\end{lemma}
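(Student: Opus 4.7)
The plan is to reduce the statement to~\cref{lem:consistent-reach} by first establishing a pumping property for every state placed into $S'$. Writing $F(\gamma) = (S', \Toks')$ with intermediate sets $(S'', \Toks'')$ from~\cref{table:F}, and fixing $C' \in \Interp{F(\gamma)}$, the central ingredient I plan to prove is that for every $q \in S'$ and every $N \in \nat$, there exist $C_q \in \Interp{\gamma}$ and $D_q \in \CC$ with $C_q \arrowP{}^\ast D_q$ and $D_q(q) \geq N$. Once this is done, \cref{lem:consistent-reach} applied with $U = \{q \in S' \mid C'(q) > 0\}$ simultaneously produces such a configuration for all $S'$-states occurring in $C'$.

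I would prove the pumping property by induction on the rule of~\cref{table:F,table2:F} used to place $q$ into $S'$. The base case $q \in S$ is immediate. For rules~1--5 of~\cref{table:F} (internal transition, send without concurrent receiver, reception with a sender in $S$, token-progress rules), the rule isolates a predecessor in $S$ or in $\starg{\Toks}$ for which the induction hypothesis supplies arbitrarily many processes; iterating the associated transition then produces $N$ copies of $q$, using~\cref{lem:monotonicity}(2) to lift a short witness execution to a larger starting configuration. When the predecessor is only a token of $\Toks$, the consistency of $\gamma$ provides a witness path that is re-used to replenish the token between successive rounds, since all messages fired along that path originate in $S$-states that can be kept arbitrarily populated.

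The difficult cases are rules~6, 7, 8 of~\cref{table2:F}, which promote a state $q$ from $\starg{\Toks''}$ up to $S'$. Each of these rules exhibits two or three pairwise \emph{clashing} tokens whose joint presence can be used to absorb the message that would otherwise keep $q$ bounded. For rule~6, with $(q_1,m_1),(q_2,m_2)\in\Toks''$, $m_2 \notin \Rec{q_1}$ and $m_1 \in \Rec{q_2}$, the plan is to first place one process in $q_1=q$ and one in $q_2$ via the paths given by the consistency of $F(\gamma)$ (\cref{lem:F-consistent}), and then iterate: (i)~send $!m_1$ from an $S$-state, routing the rendez-vous non-deterministically onto the process in $q_2$ rather than the one in $q_1$, which deposits a second process in $q_1$; (ii)~restore the $q_2$-absorber along its witness path, which is harmless because $m_2 \notin \Rec{q_1}$ and so does not disturb the growing pile in $q_1$. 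Rules~7 and~8 are handled analogously using two or three absorbers obeying the same conflict discipline.

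Finally, with the pumping property established and~\cref{lem:consistent-reach} yielding $C^\star \in \Interp{\gamma}$ and a reachable $D^\star$ covering $C'$ on all of $S'\cap\{q\mid C'(q)>0\}$, I would insert one process into each token state $q$ with $C'(q)=1$ and $(q,m)\in\Toks'$ by appending a witness path of consistency of $F(\gamma)$, after enlarging $C^\star$ with additional $S$-processes (which preserves membership in $\Interp{\gamma}$). Condition~$(ii)$ of the consistency of $F(\gamma)$ ensures that the tokens inserted successively are mutually conflict-free in the resulting configuration, so no previously placed token is evicted, exactly as in the closing step of the proof of~\cref{lem:consistent-reach}. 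The main obstacle I anticipate is the pumping argument for rules~6, 7, 8: the choice of non-deterministic routing and the scheduling of the decoy replenishments have to preserve, round after round, both the invariants of the iteration and the membership of the starting configuration in $\Interp{\gamma}$.
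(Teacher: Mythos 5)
Your plan matches the paper's proof essentially step for step: establish the pumping property for each $q\in S'$ by case analysis on the rule that placed it there (with the $\Toks''$-absorber constructions for rules~6--8 relying on witness paths and the conflict conditions exactly as the paper does), combine via \cref{lem:consistent-reach}, and finally append witness paths to seed the $\Toks'$-states, using consistency condition~$(ii)$ of $F(\gamma)$ to keep the placed tokens from evicting one another. The scheduling subtlety you flag for rules~6--8 is resolved in the paper by completing each witness path to $q_1$ before firing the replenishment messages for the absorber, which is precisely the discipline your sketch anticipates.
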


\begin{proof}
  Let $\gamma$ be a consistent abstract set of configurations and $C'\in \Interp{F(\gamma)}$. We suppose that $\gamma=(S,\Toks)$ and $F(\gamma)=\gamma'=(S',\Toks')$. We will first show that for all $N \in \nat$, for all $q \in S'$ there exists a configuration $C_q \in \Interp{\gamma}$ and a configuration $C_q' \in \CC$ such that $C_q \arrowP{}^\ast C_q'$ and $C'_q(q) \geq N$. This will allow us to rely then on Lemma \ref{lem:consistent-reach} to conclude. 

Take $N \in \nat$ and $q \in S'$, if $q \in S$, then take $C_q \in \Interp{\gamma}$ to be $\mset{N \cdot q}$. Clearly $C_q \in \Interp{F(\gamma)}$, $C_q(q) \geq N$ and $C_q \arrowP{}^\ast C_q$. Now let $q \in S' \setminus S$. Note $(\Toks'', S'')$ the intermediate sets of $F(\gamma$)'s computation.\\

\textbf{Case 1:} $q \in S''$.  As a consequence $q$ was added to $S''$ either by one of the conditions \ref{ccover-wo-F-cond-internal}, \ref{ccover-wo-F-cond-send-S}, \ref{ccover-wo-F-cond-reception-S}~or \ref{ccover-wo-F-cond-tok-end}. In cases \ref{ccover-wo-F-cond-internal}~and \ref{ccover-wo-F-cond-send-S}~when $a \notin \Rec{q}$, note $q'$ the state such that $(q', \tau, q)$ or $(q', !a, q)$, and consider the configuration $C_q = \mset{N \cdot q'}$. By doing $N$ internal transitions or non-blocking requests, we reach $C'_q= \mset{N \cdot q}$. Note that the requests on $a$ are non-blocking as $q' \in Q_A$ and $a \notin \Rec{q}$. $C'_q \in \Interp{F(\gamma)}$.

In cases \ref{ccover-wo-F-cond-send-S} with $a\in \Rec{q}$~and in case \ref{ccover-wo-F-cond-reception-S}, note $(q_1, !a, q_1')$ and $(q_2, ?a, q_2')$ the two transitions realizing the conditions. As a consequence $q_1, q_2 \in S$. Take the configuration $C_q =\mset{N \cdot q_1, N \cdot q_2}$. $C_q \in \Interp{\gamma}$ and by doing $N$ successive rendez-vous on the letter $a$, we reach configuration $C'_q = \mset{N\cdot q'_1} + \mset{N \cdot q'_2}$. $C'_q \in \Interp{F(\gamma)}$, and as $q \in \{q'_1, q'_2\}$, $C'_q(q) \geq N$.

In case \ref{ccover-wo-F-cond-tok-end}, there exists $(q', m) \in \Toks$ such that $(q', ?a, q) \in T$,  $m \notin \Rec{q}$, and there exists $p \in S$ such that $(p, !a,p') \in T$. Remember that $\gamma$ is consistent, and so there exists a finite sequence of transitions $(q_0, !m, q_1) (q_1, a_1, q_2) \dots (q_k, a_k, q')$ such that $q_0 \in S$ and for all $1 \leq i \leq k$, $a_i = ?m_i$ and there exists $(q'_i , !m_i, q''_i) \in T$ with $q'_i \in S$.
Take $C_q = \mset{(N-1) \cdot q_0} + \mset{(N-1) \cdot q'_1 } + \dots + \mset{(N-1) \cdot q'_k} + \mset{N \cdot p} + \mset{q'}$. Clearly $C_q \in \Interp{\gamma}$ as all states except $q'$ are in $S$ and $q' \in \mst(\Toks)$, $C_q(q') = 1$. We shall show how to put 2 processes on $q$ from $C_q$ and then explain how to repeat the steps in order to put $N$. Consider the following execution: $C_q \arrowPlab{}{a} C_1 \arrowPlab{}{x_m} C_2 \arrowPlab{}{m_1} \dots \arrowPlab{}{m_k} C_{k+2} \arrowPlab{}{a} C_{k+3}$. The first rendez-vous on $a$ is made with transitions $(p, !a, p')$ and $(q', ?a, q)$. Then either $m \notin \Rec{p'}$ and $x_m = \nb{m}$, otherwise, $x_m = m$, in any case, the rendez-vous or non-blocking sending is made with transition $(q_0, !m, q_1)$ and  the message is not received by the process on $q$ (because $m \notin \Rec{q}$) and so $C_2 \geq \mset{q} + \mset{q_1}$. Then, each rendez-vous on $m_i$ is made with transitions $(q'_i, !m_i,q''_i)$ and $(q_i, ?m_i, q_{i+1})$ ($q_{k+1} = q'$), . Hence $C_{k+3} \geq \mset{(N-2)\cdot q_0}+ \mset{(N-2) \cdot q'_1 } + \dots + \mset{(N-2) \cdot q'_k} + \mset{(N-2) \cdot p} + \mset{2 \cdot q}$. We can reiterate this execution (without the first rendez-vous on $a$) $N-2$ times to reach a configuration $C'_q$ such that $C'_q \geq \mset{N \cdot q}$.\\

\textbf{Case 2:} $q \notin S''$. Hence, $q$ should be added to $S'$ by one of the conditions \ref{ccover-wo-F-cond-2toks-1}, \ref{ccover-wo-F-cond-3toks-1}, and \ref{ccover-wo-F-cond-3toks-2}.
If it was added with condition \ref{ccover-wo-F-cond-2toks-1}, let $(q_1, m_1), (q_2, m_2) \in \Toks''$ such that $q =q_1$, $m_1 \ne m_2$, $m_2 \notin \Rec{q_1}$ and $m_1 \in \Rec{q_2}$.
From the proof of Lemma \ref{lem:F-consistent}, one can actually observe that all tokens in $\Toks''$ correspond to "feasible" paths regarding states in $S$, i.e there exists a finite sequence of transitions $(p_0, !m_1, p_1) (p_1, a_1, p_2) \dots (p_k, a_k, q_1)$ such that $p_0 \in S$ and for all $1 \leq i \leq k$, $a_i = ?b_i$ and there exists $(p'_i , !b_i, p''_i) \in T$ with $p'_i \in S$. The same such sequence exists for the token $(q_2, m_2)$, we note the sequence $(s_0, !m_2, s_1)\dots (s_\ell, a_\ell, q_2)$ such that $s_0 \in S$ and for all $1 \leq i \leq \ell$, $a_i = ?c_i$ and there exists $(s'_i , !c_i, s''_i) \in T$ with $s'_i \in S$. Take $C_q = \mset{N \cdot p_0} + \mset{N \cdot s_0} + \mset{N p'_1 } + \dots + \mset{N p'_k} + \mset{N \cdot s'_1 } + \dots + \mset{N \cdot s'_\ell}$. Clearly, $C_q \in \Interp{\gamma}$, as all states are in $S$. Consider the following execution: $C_q \arrowPlab{}{\nb{m_1}} C_1 \arrowPlab{}{b_1} \dots \arrowPlab{}{b_k} C_{k+1}$, the non-blocking sending of $m_1$ is made with transition $(p_0, !m_1, p_1)$ and each rendez-vous on letter $b_i$ is made with transitions $(p'_i, !b_i, p_i'')$ and $(p_i, ?b_i, p_{i+1})$ ($p_{k+1} = q_1$). Hence, $C_{k+1}$ is such that $C_{k+1} \geq \mset{q_1}$. From $C_{k+1}$, consider the following execution: $C_{k+1} \arrowPlab{}{x_{m_2}} C_{k+2} \arrowPlab{}{c_1} \dots \arrowPlab{}{c_\ell} C_{k+\ell +2} \arrowPlab{}{m_1}C_{k+\ell +3}$, where $x_{m_2} = \nb{m_2}$ if no process is on a state in $R(m_2)$, or $x_{m_2} = m_2$ otherwise. In any case, as $m_2 \notin \Rec{q_1}$, $C_{k+2} \geq \mset{q_1}$. And each rendez-vous on letter $c_i$ is made with transitions $(s'_i, !c_i, s_i'')$ and $(s_i, ?c_i, s_{i+1})$ ($s_{k+1} = q_2$), the last rendez-vous on $m_1$ is made with transitions $(p_0, !m_1, p_1)$ and $(q_2, ?m_1, q_2')$ (such a $q_2'$ exists as $m_1 \in \Rec{q_2}$). Hence, $C_{k+\ell +3} \geq \mset{p_1} + \mset{q_1}$. By repeating the two sequences of steps (without the first non-blocking sending of $m_1$) $N-1$ times (except for the last time where we don't need to repeat the second execution), we reach a configuration $C'_q$ such that $C'_q\geq \mset{N \cdot q_1}$.

If it was added with condition \ref{ccover-wo-F-cond-3toks-1}, then let $(q_1, m_1), (q_2,m_2), (q_3,m_2) \in \Toks''$ such that $m_1 \ne m_2$ and $(q_2, ?m_1, q_3) \in T$ with $q =q_1$. From the proof of Lemma \ref{lem:F-consistent}, $\Toks''$ is made of "feasible" paths regarding $S$ and so there exists a finite sequence of transitions $(p_0, !m_2, p_1) (p_1, a_1, p_2) \dots (p_k, a_k, q_2)$ such that $p_0 \in S$ and for all $1 \leq i \leq k$, $a_i = ?b_i$ and there exists $(p'_i , !b_i, p''_i) \in T$ with $p'_i \in S$. The same sequence exists for the token $(q_1, m_1)$, we note the sequence $(s_0, !m_1, s_1)\dots (s_\ell, a_\ell, q_1)$ such that $s_0 \in S$ and for all $1 \leq i \leq \ell$, $a_i = ?c_i$ and there exists $(s'_i , !c_i, s''_i) \in T$ with $s'_i \in S$. Take $C_q = \mset{N \cdot p_0} + \mset{N \cdot s_0} + \mset{N p'_1 } + \dots + \mset{N p'_k} + \mset{N \cdot s'_1 } + \dots + \mset{N \cdot s'_\ell}$. Clearly, $C_q \in \Interp{\gamma}$, as all states are in $S$. We do the same execution from $C_q$ to $C_{k+1}$ as in the previous case: $C_q \arrowPlab{}{\nb{m_2}} C_1 \arrowPlab{}{a_1} \dots \arrowPlab{}{a_k} C_{k+1}$. Here $C_{k+1}$ is then such that $C_{k+1} \geq \mset{q_2}$. Then, from $C_{k+1}$ we do the following: $C_{k+1} \arrowPlab{}{m_1} C_{k+2} \arrowPlab{}{c_1} \dots \arrowPlab{}{c_\ell} C_{k+\ell+2} \arrowPlab{}{m_2} C_{k+\ell+3}$: the rendez-vous on letter $m_1$ is made with transitons $(s_0, !m_1, s_1)$ and $(q_2, ?m_1, q_3)$. Then, each rendez-vous on letter $c_i$ is made with transitions $(s'_i, !c_i, s_i'')$ and $(s_i, ?c_i, s_{i+1})$ ($s_{k+1} = q_1$), and the last rendez-vous on letter $m_2$ is made with transitions $(p_0, !m_2, p_1)$ and $(q_3, ?m_2,q_3')$ (such a state $q_3'$ exists as $(q_3, m_2) \in \Toks''$ and so $m_2\in \Rec{q_3}$). Hence, $C_{k+\ell+3}$ is such that $C_{k+\ell +3} \geq \mset{q_1} + \mset{p_1}$. We can repeat the steps from $C_1$ $N-1$ times (except for the last time where we don't need to repeat the second execution), to reach a configuration $C'_q$ such that $C'_q\geq \mset{N \cdot q_1}$.

\nas{pas encore relu condition 8}If it was added with condition \ref{ccover-wo-F-cond-3toks-2}, then let $(q_1, m_1), (q_2, m_2), (q_3, m_3) \in \Toks''$, such that $m_1\ne m_2$, $m_2\ne m_3$, $m_1 \ne m_3$, and $m_1 \notin \Rec{q_2}$, $m_1 \in \Rec{q_3}$, and $m_2 \notin \Rec{q_1}$, $m_2 \in \Rec{q_3}$ and $m_3 \in \Rec{q_2}$ and $m_3 \in \Rec{q_1}$, and $q_1 = q$. Then there exists three finite sequences of transitions $(p_0, !m_1, p_1) (p_1, ?b_1, p_2) \dots (p_k, ?b_k, p_{k+1})$, and $(s_0, !m_2, s_1) (s_1, ?c_1, s_2)$ $ \dots (s_\ell, ?c_k, s_{\ell +1})$, and $(r_0, !m_3, r_1) (r_1, ?d_1, r_2) \dots (r_j, ?d_j, r_{j+1})$ such that $p_{k+1} = q_1$, $s_{\ell +1} = q_2$ and $r_{j+1} = q_3$, and for all messages $a \in \{ b_{i_1}, c_{i_2}, d_{i_3}\}_{1 \leq i_1 \leq k, 1 \leq i_2 \leq \ell, 1 \leq i_3 \leq j} = M$, there exists $q_{a} \in S$ such that $(q_a, !a, q'_a)$. Take $C_q = \mset{Np_0} + \mset{Ns_0} + \mset{Nr_0} + \sum_{a \in M}\mset{Nq_{a}}$. From $C_q$ there exists the following execution: $C_q \arrowPlab{}{\nb{m_1}} C_1 \arrowPlab{}{b_1} \dots \arrowPlab{}{b_k} C_{k +1} $ where the non-blocking sending is made with the transition $(p_0, !m_1, p_1)$ and each rendez-vous with letter $b_i$ is made with transitions $(q_{b_i}, !b_i, q'_{b_i})$ and $(p_i, ?b_i, p_{i+1})$. Hence, $C_{k+1} \geq \mset{q_1}$.
Then, we continue the execution in the following way:
$C_{k+1} \arrowPlab{}{x_{m_2}} C_{k+2} \arrowPlab{}{c_1} \dots \arrowPlab{}{c_\ell} C_{k+ \ell +2} $ where
$x_{m_2} = \nb{m_2}$ if there is no process on $R(m_2)$, and $x_{m_2} = m_2$ otherwise. In any case, the rendez-vous is not answered by a process on state $q_1$ because $m_2 \notin \Rec{q_1}$.
Furthermore, each rendez-vous with letter $c_i$ is made with transitions $(q_{c_i}, !c_i, q'_{c_i})$ and $(s_i, ?c_i, s_{i+1})$. Hence, $C_{k +\ell+2} \geq \mset{q_2} + \mset{q_1}$. From $C_{k+\ell +2}$ we do the following execution: $C_{k+\ell +2} \arrowPlab{}{m_3} C_{k+\ell +3} \arrowPlab{}{d_1} \dots \arrowPlab{}{d_j} C_{k +\ell + j +3}$ where the rendez-vous on letter $m_3$ is made with transitions $(r_0, !m_3, r_1)$ and $(q_2, ?m_3, q_2')$ (this transition exists as $m_3 \in \Rec{q_2}$). Each rendez-vous on $d_i$ is made with transitions $(q_{d_i}, !d_i, q'_{d_i})$ and $(r_i, ?d_i, r_{i+1})$. Hence, the configuration $C_{k+ \ell +j+3}$ is such that $C_{k+\ell +j +3} \geq \mset{q_3} + \mset{q_1}$. Then from $C_{k+\ell +j +3}$: $C_{k+\ell + j +3} \arrowPlab{}{m_1} C_{k+\ell + j +4}$ where the rendez-vous is made with transitions $(p_0, !m_1, p_1)$ and $(q_3, ?m_1, q'_3)$ (this transition exists as $m_1 \in \Rec{q_3}$). By repeating $N-1$ times the execution from configuration $C_1$, we reach a configuration $C'_q$ such that $C'_q(q_1) \geq N$.
\\

Hence, for all $N \in \mathbb{N}$, for all $q \in S'$, there exists $C_q \in \Interp{\gamma}$, such that $C_q\arrowPlab{}{}C'_q$ and $C'_q(q) \geq N$. From Lemma \ref{lem:consistent-reach}, there exists $C'_N$ and $C_N \in \Interp{\gamma}$ such that $C_N \arrowP{}^\ast C'_N$ and for all $q \in S'$, $C_N(q) \geq N$.

Take $C' \in \Interp{F(\gamma)}$, we know how to build for any $N \in \nat$, a configuration $C'_N$ such that $C'_N(q) \geq N$ for all states $q \in S'$ and there exists $C_N \in \Interp{\gamma}$, such that $C_N \arrowPlab{}{}^\ast C'_N$, in particular for $N$ bigger than the maximal value $C'(q)$ for $q \in S'$, $C'_N$ is greater than $C'_N$ on all the states in $S'$.  

To conclude the proof, we need to prove that from a configuration $C'_{N'}$ for a particular $N'$, we can reach a configuration $C''$ such that $C''(q) \geq C'(q)$ for $q \in S' \cup \mst(\Toks')$. As $C'$ respects $F(\gamma)$, remember that for all $q \in \mst(\Toks')$, $C'(q) = 1$. The execution is actually built in the manner of the end of the proof of Lemma \ref{lem:consistent-reach}. 

Note $N_{\max}$ the maximum value for any $C'(q)$. 
We enumerate states $q_1, \dots, q_m$ in $ \mst(\Toks')$ such that $C'(q_i) = 1$. As $C'$ respects $F(\gamma)$, for $i \ne j$, $q_i$ and $q_j$ are conflict free.

From Lemma \ref{lem:F-consistent}, $F(\gamma)$ is consistent, and so we note $(p^j_0, !m^j, p^j_1) $ $(p^j_1, ?m^j_1, p^j_2)$ $ \dots $ $(p^j_{k_j}, ?m^j_{k_j}, p^j_{k_j+1})$ the sequence of transitions associated to state $q_j$ such that: $p^j_{k_j+1} = q_j$, $(q_j, m^j) \in \Toks$ and for all $m^j_i$, there exists $(q_{m^j_i}, !m_i^j, q'_{m^j_i})$ with $q_{m^j_i}\in S'$.
Note that for all $i \ne j$, $q_i$ and $q_j$ are conflict-free and so there exists $(q_i, m), (q_j,m') \in \Toks'$ such that $m \notin \Rec{q_j}$ and $m' \notin \Rec{q_i}$. As $F(\gamma)$ is consistent, it should be the case for all pairs of tokens $(q_i, a), (q_j, a')$. Hence $m^j \notin \Rec{q_i}$ and $m^i \notin \Rec{q_j}$. 

%%Indeed otherwise, one of the two states should have been to $S'$ during the computation of $F(\gamma)$ by condition \ref{ccover-wo-F-cond-2toks-1}.\lug{en faire un lemme? }
%
Note $\ell_j = k_j + 1$.
For $N' =  N_{\max} + \sum_{1\leq j \leq m} \ell_j$, there exists a configuration $C'_{N'}$ such that there exists $C_{N'} \in \Interp{\gamma}$, $C_{N'}\arrowPlab{}{}^*C'_{N'}$, and $C'_{N'}(q) \geq N'$ for all $q \in S'$. In particular, for all $q \in S'$, $C'_{N'}(q) \geq C'(q)  + \sum_{1\leq j \leq m} \ell_j$.

Then, we still have to build an execution leading to a configuration $C''$ such that for all $q \in \mst(\Toks')$, $C''(q) \geq C'(q)$. We then use the defined sequences of transitions for each state $q_j$. With $\ell_1$ processes we can reach a configuration $C_1$ such that $C_1(q_1) \geq 1$: $C_1 \arrowPlab{}{x_{m^1}} C_2 \arrowPlab{}{m_1^1} \dots \arrowPlab{}{m_{k_1}^1} C_{\ell_1+ 1 }$. $x_{m^1} = \nb{m^1}$ if there is no process on $R(m^1)$, and $x_{m^1} = m^1$ otherwise.	Each rendez-vous on $m_i^1$ is made with transitions $(p_i^1, ?m_i^1, p_{i+1}^1)$ and $(q_{m_i^1}, ! m_i^1, q'{m_i^1})$. As a result, for all $q \in S'$, $C_{\ell_1+1}(q) \geq C'(q) +\sum_{2\leq j \leq m} \ell_j$ and $C_{\ell_1 +1}(q_1) \geq 1$. We then do the following execution form $C_{\ell_1 + 1}$: $C_{\ell_1 +1} \arrowPlab{}{x_{m^2}} C_{\ell_1+2} \arrowPlab{}{m_1^2} \dots  \arrowPlab{}{m_{k_2}^2} C_{\ell_1+ \ell_2+ 2 }$. $x_{m^2} = \nb{m^2}$ if there is no process on $R(m^2)$, and $x_{m^2} = m^2$ otherwise. Remember that we argued that $m^2 \notin \Rec{q_1}$, and therefore $C_{\ell_1 + 2}(q_1) \geq C_{\ell_1 +1}(q_1) \geq 1$.
Each rendez-vous on $m_i^2$ is made with transitions $(p_i^2, ?m_i^2, p_{i+1}^2)$ and $(q_{m_i^2}, ! m_i^2, q'{m_i^2})$. As a result, $C_{\ell_1+\ell_2 +2}(q) \geq C'(q) +\sum_{3\leq j \leq m} \ell_j$ for all $q \in S'$ and $C_{\ell_1+ \ell_2 + 2} \geq \mset{q_1} + \mset{q_2}$. We can then repeat the reasoning for each state $q_i$ and so reach a configuration $C''$ such that $C''(q) \geq C'(q)$ for all $q \in S'$ and, $C'' \geq \mset{q_1} + \mset{q_2} + \dots \mset{q_m}$.
We built the following execution: $C_{N'} \arrowP{}^\ast C'_{N'} \arrowP{}^\ast C''$, such that $C''\geq C'$, and $C'_{N'} \in \Interp{\gamma}$.

\end{proof}

\subsection{Proof of Lemma \ref{lem:correct-abstraction}}

  \begin{proof}
Assume that there exists $C_0 \in \Init$ and $C' \geq C$
such that $C_0 \arrowP{} C_1 \arrowP{} \ldots\arrowP{} C_\ell =C'$. Then
using the Lemma \ref{lem:abstract-soundness} iteratively, we get that $C'
\in \Interp{\gamma_\ell}$. From the definition of $F$ and $\Interp{\cdot}$,
one can furthermore easily check that $\Interp{\gamma} \subseteq
\Interp{F(\gamma)}$ for all $\gamma \in \Gamma$. Hence we have
$\Interp{\gamma_\ell} \subseteq \Interp{\gamma_f}$ and $C' \in
\Interp{\gamma_f}$.

Before proving the other direction, we first prove by induction
that for all $i \in \nat$ and for all $D \in \Interp{\gamma_i}$, there
exists $C_0 \in \Init$ and $D' \geq D$ such that $C_0 \arrowP{}^\ast
D'$.  The base case for $i=0$ is obvious. Assume the property holds
for $\gamma_i$ and let us show it is true for $\gamma_{i+1}$. Let $E \in
\Interp{\gamma_{i+1}}$. Since $\gamma_{i+1}=F(\gamma_i)$, using Lemma
\ref{lem:abstract-correctness}, we get that there exists $E' \in \CC$ and $D \in
 \Interp{\gamma_i}$ such that $E' \geq E$ and $D \arrowP{}^\ast
 E'$. By the induction hypothesis, there exist $C_0 \in \Init$ and $D'
 \geq D$ such that $C_0 \arrowP{}^\ast D'$. Using the monotonicity
 property stated in Lemma \ref{lem:monotonicity}, we deduce that there
 exists $E'' \in \CC$ such that $E'' \geq E' \geq E$ and $C_0
 \arrowP{}^\ast D' \arrowP{}^\ast E''$.

 Suppose now that there exists $C'' \in
\Interp{\gamma_f}$ such  that $C'' \geq C$. By the previous
reasoning, we get that there exist $C_0 \in \Init$ and $C' \geq C''
\geq C$
such that $C_0 \arrowP{}^\ast C'$.
\end{proof}

\end{document}